 \newcommand{\be}{\begin{equation}}
\newcommand{\ee}{\end{equation}} 
\colorlet{darkred}{red!70!black}
\newcommand{\ex}[1]{\mathrm{e}^{#1}}
\newcommand{\fr}{\frac}
\newcommand{\ca}[1]{\mathcal{#1}}
\newcommand{\bb}[1]{\mathbb{#1}}
\newcommand{\abs}[1]{\left|#1\right|}
\def\tr{{\mathrm{Tr}}}
\newcommand{\taust}{\tau_{\mathrm{strong}}}
\newcommand{\tot}{\mathrm{tot}}
\newcommand{\VS}{V^{\mathrm{sym}}}
\newcommand{\VNS}{V^{\mathrm{n-sym}}}
\newcommand\Ssa{A_{G,\mathrm{strong}}}
\newcommand\Sa{A_{G}}
\newcommand\Tr{\mathrm{Tr}}
\newcommand\calB{{\cal B}}
\newcommand\calF{{\cal F}}
\newcommand\calG{{\cal G}}
\newcommand\calH{{\cal H}}
\newcommand\calI{{\cal I}}
\newcommand\calO{{\cal O}}
\newcommand\calQ{{\cal Q}}
\newcommand\calR{{\cal R}}
\newcommand\calS{{\cal S}}
\newcommand\calT{{\cal T}}
\newcommand\calU{{\cal U}}
\newcommand\calV{{\cal V}}
\newtheorem{definition}{Definition}
\newtheorem{theorem}{Theorem}
\newtheorem{lemma}{Lemma}
\newcommand{\eq}[1]{\begin{align} #1 \end{align}}
\newcommand{\ii}{\mathrm{i}}
\newenvironment{proofof}[1]{%
  \begin{proof}[\textbf{Proof of #1}]%
}{%
  \end{proof}%
}
\let\cat@comma@active\@empty
\begin{document}
\preprint{KYUSHU-HET-344}
\preprint{RIKEN-iTHEMS-Report-26}

\title{Resource-Theoretic Quantifiers of Weak and Strong Symmetry Breaking:\\ Strong Entanglement Asymmetry and Beyond}

\author{Yuya Kusuki}
\email[]{kusuki.yuya@phys.kyushu-u.ac.jp}
\affiliation{Institute for Advanced Study, 
Kyushu University, Fukuoka 819-0395, Japan.}
\affiliation{Department of Physics, 
Kyushu University, Fukuoka 819-0395, Japan.}
\affiliation{RIKEN Interdisciplinary Theoretical and Mathematical Sciences (iTHEMS),
Wako, Saitama 351-0198, Japan.}

\author{Sridip Pal}
\email[]{sridip@ihes.fr}
\affiliation{Institut des Hautes Études Scientifiques (IHES), 91440 Bures-sur-Yvette, France}

\author{Hiroyasu Tajima}
\email[]{hiroyasu.tajima@inf.kyushu-u.ac.jp}
\thanks{\\Three authors contributed equally to this work.}
\affiliation{Department of Informatics, Faculty of Information Science and Electrical Engineering, 
Kyushu University, Fukuoka 819-0395, Japan.}
\affiliation{JST, FOREST, 4-1-8 Honcho, Kawaguchi, Saitama, 332-0012, Japan.}

\begin{abstract}
Quantifying \emph{how much} a quantum state breaks a symmetry is essential for characterizing phases, nonequilibrium dynamics, and open-system behavior.
Quantum resource theory provides a rigorous operational framework to define and characterize such quantifiers of symmetry-breaking. As a starter, we exemplify the usefulness of resource theory by noting that \emph{second-R\'enyi} entanglement asymmetry can \emph{increase} under symmetric operations, and hence is not a resource monotone, and should not \emph{solely} be used to capture Quantum Mpemba effect. More importantly, 
motivated by mixed-state physics where \emph{weak} and \emph{strong} symmetries are inequivalent, we formulate a new resource theory tailored to strong symmetry, identifying free states and strong-covariant operations.
This framework systematically identifies quantifiers of strong symmetry breaking for a broad class of symmetry groups, including a strong entanglement asymmetry. A particularly transparent structure emerges for U(1) symmetry, where the resource theory for the strong symmetry breaking has a completely parallel structure to the entanglement theory: the variance of the conserved quantity fully characterizes the asymptotic manipulation of strong symmetry breaking. By connecting this result to the knowledge of the geometry of quantum state space, we obtain a quantitative framework to track how weak symmetry breaking is irreversibly converted into strong symmetry breaking in open quantum systems.
We further propose extensions to generalized symmetries and illustrate the qualitative impact of strong symmetry breaking in analytically tractable QFT examples and applications.
\end{abstract}

\maketitle

\section{Introduction \& Summary}

Symmetry breaking is a paradigmatic concept in modern physics, underpinning our understanding of phases of matter, non-equilibrium dynamics, and high-energy physics.
In quantum many-body systems in and out of equilibrium, 
it is often not enough to know whether a symmetry is present or broken: 
one would like to quantify {\it how much} a given state breaks a symmetry.  
This seemingly simple question immediately leads to a proliferation of possible quantifiers,
from correlation functions and order parameters to entropic and information-theoretic measures.  
A natural challenge is then to identify, among this vast landscape of candidates, 
those measures that admit a clear physical and operational meaning.  

Over the past decade, 
the resource theory of asymmetry has emerged as a powerful framework for addressing this problem.  
Within this framework, 
symmetric operations are regarded as ``free'', 
and states that break the symmetry are viewed as ``resources'' enabling tasks that would otherwise be impossible.  
This perspective singles out distinguished asymmetry measures by requiring them to obey natural axioms 
such as monotonicity under symmetric operations and additivity under composition.  
A paradigmatic example is the {\it relative entropy of asymmetry}, 
or {\it G-asymmetry}, 
which quantifies how far a state is from the set of symmetric states in an information-theoretic sense \cite{Vaccaro2008,Gour2009}. 
Beyond finite-size characterizations, recent progress has further identified asymmetry measures that play a distinguished role in the many-copy (i.i.d.) limit, where the central question is how efficiently one resource state can be converted into another~\cite{Gour2008,Marvian2020,Marvian2022,Shitara2023,Yamaguchi2024}.
In close analogy with entanglement theory--where the entanglement entropy uniquely determines asymptotic interconversion rates between pure states--such i.i.d.-complete measures provide the fundamental quantifiers of resource at the macroscopic and operational level.

In parallel with these resource-theoretic developments, a major line of progress in many-body physics has been the introduction of the so-called {\it entanglement asymmetry} \cite{Ares2022},
which has been widely adopted in both high-energy and condensed-matter contexts as a convenient quantifier of symmetry breaking in non-equilibrium settings \cite{Murciano2023,Ares2023,Lastres2024,Fossati2024a,Kusuki2024,Chen2024,Ares2023a,Russotto2024} 
and near or at equilibrium \cite{Capizzi2023,Capizzi2023a,Fossati2024,Chen2023,Lastres2024,Fossati2024a,Kusuki2024},  in anomalous breaking \cite{Florio2025}, in holographic setting \cite{Benini:2024xjv},
including prominent applications to the quantum Mpemba effect \cite{Ares2022,Murciano2023,Ares2024,Rylands2023,Yamashika2024,Turkeshi2024,Liu2024,Liu2024a,Chalas2024,Caceffo2024,Rylands2024,Yamashika2024a,Joshi2024,DiGiulio:2025ems} (see the review article \cite{Ares:2025onj} and the references there in as well.) 
Initially, the entanglement asymmetry was proposed as a quantifier, motivated mainly by physical understanding, without explicit reference to the resource-theoretic framework.
It was later appreciated that entanglement asymmetry in fact coincides with the relative entropy of asymmetry, one of the central quantities in the resource theory of asymmetry with a rigorous operational interpretation.
This identification—namely, that a quantity introduced from many-body considerations coincides with a distinguished resource-theoretic measure—has served as a key impetus for making resource-theoretic structures explicit in recent studies. For instance, Ref.~\cite{Summer2025} investigated Mpemba effects from a resource-theoretic perspective, clarifying the relation between thermal and weak-asymmetry Mpemba effects. Other works pursuing this general direction include, for example, \cite{Aditya2025,Mazzoni2025,Yamashika2025}.
Our focus here is complementary: we formulate a resource theory tailored to strong symmetry and use it to identify corresponding monotones and a strong-Mpemba crossover.

A central aim of the present work is to bring this structure to the forefront and to argue that ``good'' measures of symmetry breaking can be systematically derived from resource-theoretic axioms.

From the resource-theoretic viewpoint, a meaningful quantifier of symmetry breaking is required to satisfy basic structural principles, most notably monotonicity under symmetric operations and a clear operational interpretation.
These requirements become particularly stringent when one is interested in macroscopic or long-time behavior, where only quantities with a clear asymptotic and operational meaning can faithfully characterize symmetry breaking.

As a simple illustration of the power of this viewpoint, we show that the 
{\it 2nd-R\'enyi asymmetry}, in spite of mimicking some properties of entanglement asymmetry, should not be solely used as a proxy for entanglement asymmetry to capture quantum Mpemba effect because it is not an asymmetry monotone (can potentially increase even under symmetric operations) and is therefore an inappropriate measure of the amount of symmetry breaking.

The resource-theoretic perspective becomes especially important once we move beyond closed systems and conventional symmetry scenarios.  
Recent developments in the theory of open quantum systems have highlighted that, for mixed states,  
there are in fact two natural notions of symmetry \cite{Buca2012,Groot2021,Ma2022,Lee2022,Zhang2022,Lee2023,Ogunnaike2023,Ma2023,Lessa2024,Sala2024}.  
The first, often referred to as {\it weak symmetry},  imposes invariance of the density matrix under conjugation by the unitary representation of the symmetry group,
\begin{equation}
U_g \rho U_g^\dagger = \rho, \ \ \ g \in G.
\end{equation}
The second, known as {\it strong symmetry},  
requires instead that the state transform as
\begin{equation}
U_g \rho = e^{i\theta_g} \rho, \ \ \ g \in G.
\end{equation}
Any strong symmetric state is automatically weak symmetric, but the converse does not hold.  
Intuitively, weak symmetry allows for the exchange of conserved charges with an environment,  
whereas strong symmetry forbids such exchange and therefore encodes a more stringent notion of symmetry for mixed states. We further note that for non-abelian group, there might not be any strong symmetric state, for that purpose, later we define \textit{single sector states}, a weakened notion of strong symmetry for the non-abelian group.

Existing asymmetry measures, including entanglement asymmetry, 
are tailored to weak symmetry and are, in fact, insensitive to whether strong symmetry is broken.  
From the viewpoint of mixed states, 
this is a serious limitation: 
two states can be equally weak symmetric yet differ dramatically in whether they possess strong symmetry.
This observation raises a fundamental question:  
\emph{Is it possible to construct a resource theory for strong symmetry,  
and within it,  
identify a principled measure that quantifies strong symmetry breaking?}  
Such a measure would automatically quantify weak symmetry breaking as well,  
since the set of strong symmetric states is a strict subset of the weak symmetric ones,  
but it would be capable of distinguishing situations that conventional asymmetry measures cannot.

In this work, we address this gap by developing a resource theory explicitly tailored to strong symmetry.
Within this framework, we systematically identify quantifiers of strong symmetry breaking for a broad class of symmetry groups, including an extension of entanglement asymmetry to strong symmetry as well as variance- and covariance-matrix-based measures with clear operational and geometric meanings.
These quantities allow one to characterize strong symmetry breaking across different symmetry settings in a unified manner.

A particularly transparent structure emerges for $U(1)$ symmetry.
In this case, the variance of the conserved quantity plays a role directly analogous to that of the entanglement entropy in entanglement theory: it uniquely determines the optimal i.i.d.-asymptotic distillation and dilution of strong symmetry breaking.
More generally, by connecting this resource-theoretic quantifier to the geometry of quantum state space—specifically, to inner products and the induced information-geometric metrics—we obtain a quantitative framework to track how weak symmetry breaking is irreversibly converted into strong symmetry breaking in open quantum systems.

The structure of this paper is as follows:

\begin{itemize}
  \item \textbf{Clarifying the resource-theoretic origin of (weak) entanglement asymmetry [\S\ref{sec:basic}]:}
  We emphasize that the \emph{entanglement asymmetry}, widely used in QFT, in fact coincides with the \emph{relative entropy of asymmetry} in the resource theory of asymmetry.
  This makes it clear that entanglement asymmetry is not merely a convenient diagnostic, but a principled quantity selected by operational meaning and axioms (such as monotonicity), thereby justifying our subsequent extension to strong symmetry.
  
  \item \textbf{Limitations of \emph{second R\'enyi asymmetry} [\S\ref{Sec:RRWS}]:}
  We show that the \emph{second R\'enyi asymmetry}, despite its computational simplicity, can increase under symmetric operations, i.e., it is not a resource monotone.
  This clarifies that conclusions based \emph{solely} on such proxies can be misleading in discussions of symmetry breaking (especially in dynamical settings and in connection with the quantum Mpemba effect), and highlights the need for principled measures based on a resource theory.

  \item \textbf{Constructing a resource theory of strong symmetry [\S\ref{sec:RStrong}]:}
  We define the basic ingredients needed to quantify \emph{strong} symmetry breaking---namely, the free states and free operations---and show that they form a consistent resource theory. We provide simple examples of free states. We also point out that for non-abelian symmetry group, one can define a middle-tier between strong symmetric states and weak symmetric states, which we call ``single-sector states". This requires satisfying a weaker condition than what is required for the state being strong symmetric; but a stronger condition than what is required for the state being weak symmetric. {We also clarify how the resource-theoretic structure of the resulting resource theory differs from that of the standard resource theory of (weak) asymmetry.} In summary, in this section, we establish a systematic approach to treating strong symmetry breaking. 

  \item \textbf{New quantifiers of strong symmetry breaking [\S\ref{sec:RM}]:}
  We propose new measures that quantify strong symmetry breaking and demonstrate that they are resource monotones.
  The measures can distinguish mixed states that are indistinguishable from the viewpoint of weak symmetry, and provide a new axis for analyzing physics in which strong symmetry plays a role (e.g., open systems and nonequilibrium dynamics).
  In particular, for U(1) symmetry, the structure becomes especially transparent:
the variance of the conserved quantity fully characterizes strong symmetry breaking at the macroscopic level, playing a role directly analogous to that of the entanglement entropy in entanglement theory.

  \item \textbf{Extension to generalized symmetries [\S\ref{sec:general}]:}
  We propose a natural extension of our resource monotones to \emph{strong} generalized symmetries.

  \item \textbf{Representative examples and qualitative changes in strong symmetry breaking [\S\ref{sec:ex} \& \S\ref{sec:so}]:}
  In analytically tractable examples, we show that the new measure captures situations where weak symmetry is preserved while strong symmetry is substantially broken.
  We further support the usefulness of our framework by applying it to strong-to-weak spontaneous symmetry breaking and to the cross-over, or ``Strong-Mpemba" effect, demonstrating that the concept is not merely formal but provides meaningful diagnostics in concrete QFT settings.
\end{itemize}

\section{Preliminary: Basics of the Resource Theory of Asymmetry}\label{sec:basic}

When one is asked to ``quantify how strong a quantum state breaks a given symmetry'',
a natural first step in many-body and high-energy physics is to look at an order parameter.  
For a symmetry group $G$,
one might consider the expectation value of a symmetry-breaking operator $O$,
\begin{equation}
  m_O(\rho) := \langle O \rangle_\rho = \tr(\rho\, O),
\end{equation}
or closely related correlation functions.
While such quantities are extremely useful for diagnosing phases and symmetry breaking,
they are not satisfactory as general measures of ``how much'' symmetry is broken.
They depend on the particular choice of operator $O$:
two different symmetry-breaking operators may induce different orderings of states, and there is no canonical way to decide which one should define the degree of symmetry breaking.
Moreover, they need not be monotone under symmetric operations:
for instance, coupling to a symmetric environment or coarse-graining can increase $|m_O|$, 
even though no additional  ``symmetry-breaking'' has been supplied.

These limitations motivate a structural approach in which one quantifies how much
``asymmetry'' a state possesses with respect to a symmetry group $G$
by treating asymmetry itself as a resource.
This is the viewpoint of the \emph{resource theory of asymmetry}.

In this section, we therefore briefly review the basic notions of resource theories,
specialized to symmetry breaking.
Many of these concepts are standard in the quantum information literature, but they are perhaps less familiar in the high-energy and condensed-matter communities.
For this reason, we include here a self-contained overview.
In the present work, the ``resource'' will be symmetry breaking.
Our goal in this section is to explain how resource theory provides a principled framework for defining
quantifiers of symmetry breaking that are selected by clear operational and axiomatic requirements, and therefore, all of this section serves as a review of the previous results, unlike other sections.

\subsection{Resource theories}\label{sec:RT_basic}
Let $\ca{H}$ be a Hilbert space and let $\ca{B}(\ca{H})$ denote the set of bounded operators acting on $\ca{H}$.
We write
\begin{equation}
  \ca{S}(\ca{H}) := \left\{ \rho \in \ca{B}(\ca{H}) \,\middle|\, \rho \ge 0,\ \tr\rho = 1 \right\}
\end{equation}
for the set of density operators on $\ca{H}$.
The allowed operations on $\ca{H}$ are described by quantum channels,
i.e.\ completely positive trace-preserving (CPTP) maps
\begin{equation}
  \Phi : \ca{B}(\ca{H}) \to \ca{B}(\ca{H}') ,
\end{equation}
which {\color{black}satisfy the following two properties
\begin{itemize}
  \item[(i)] \emph{Completely positivity:} For any additional Hilbert space $\calH_R$ and any  operator $A\in\calB(\calH\otimes\calH_R)$, 
  \begin{align}
A\ge0\Rightarrow\Phi\otimes\mathrm{id}_R(A)\ge0,
  \end{align}
  where $\mathrm{id}_R$ is the identity map on $\calB(\calH_R)$.
  \item[(ii)] \emph{Trace preserving:} For any $A\in\calB(\calH)$,
  \begin{equation}
    \Tr\Phi(A)=\Tr A .
  \end{equation}
\end{itemize}

Because of these properties, any CPTP map satisfies
$\Phi\otimes\mathrm{id}_R\bigl(\ca{S}(\ca{H}\otimes \calH_R)\bigr) \subseteq \ca{S}(\ca{H}'\otimes \calH_R)$.
}
By the Kraus representation theorem, any CPTP map
$\Phi : \ca{B}(\ca{H}) \to \ca{B}(\ca{H}')$ can be written as
\begin{equation}
  \Phi(X) = \sum_{\alpha} K_\alpha X K_\alpha^\dagger,
  \qquad X \in \ca{B}(\ca{H}),
\end{equation}
where the {\it Kraus operator} $K_\alpha : \ca{H} \to \ca{H}'$ obey
\begin{equation}
  \sum_{\alpha} K_\alpha^\dagger K_\alpha = \mathbf{1}.
\end{equation}

\medskip
\noindent\textbf{Free states and free operations.}
A resource theory specifies which states are considered ``free'' (containing no resource)
and which physical operations are allowed to be performed at no cost.

\medskip
\noindent\emph{Free states.}
A set of free states is a subset
\begin{equation}
  \ca{F}(\ca{H}) \subseteq \ca{S}(\ca{H})
\end{equation}
representing those states that are accessible without consuming the resource.
For instance, in the resource theory of entanglement, $\ca{F}$ consists of separable states;
in the resource theory of asymmetry, $\ca{F}$ will be the set of symmetric states.

\medskip
\noindent\emph{Free operations.}
A set of free operations is a subset $\ca{O}$ of CPTP maps that satisfy
\begin{itemize}
  \item[(i)] \emph{Closure:} $\ca{O}$ is closed under finite composition and contains the identity channel.
  \item[(ii)] \emph{Free preservation:} For any $\Phi \in \ca{O}$ and any $\sigma \in \ca{F}$,
  \begin{equation}
    \Phi(\sigma) \in \ca{F}.
  \end{equation}
\end{itemize}
In words, free operations cannot generate resourceful states from free ones.

\medskip
\noindent\textbf{Resource monotones.}
A central object in a resource theory is a numerical quantity that measures how much resource
a state contains.

For the resource theory $(\ca{F},\ca{O})$,
let $M$ be a function
\begin{equation}
  M : \ca{S}(\ca{H}) \to \mathcal{X},
\end{equation}
where $\mathcal{X}$ is a partially ordered set, typically $\mathcal{X}=[0,\infty)$,
while matrix-valued monotones arise naturally when $\mathcal{X}$ is taken to be a space of positive semidefinite matrices.
The function $M$ is called a \emph{resource monotone} if it satisfies the following conditions:
\begin{itemize}
  \item[(i-a)] \emph{Positivity and vanishing on free states:} 
  $M(\rho) \ge 0$ for all $\rho \in \ca{S}(\ca{H})$, and
  \begin{equation}
    M(\sigma) = 0 \qquad \text{for all } \sigma \in \ca{F}.
  \end{equation}
  \item[(i-b)] \emph{Monotonicity under free operations:} For all $\rho \in \ca{S}(\ca{H})$
  and all $\Phi \in \ca{O}$,
  \begin{equation}
    M\bigl( \Phi(\rho) \bigr) \le M(\rho).
  \end{equation}
\end{itemize}
Here, when $\mathcal{X} = [0,\infty)$, the inequalities appearing in these conditions
are understood as inequalities between real numbers.
When $\mathcal{X}$ is a space of positive semidefinite matrices, the inequalities
are understood as matrix inequalities, where $A \ge B$ means that $A-B$ is positive semidefinite.

Beyond these minimal axioms, additional structural requirements are often imposed.

\begin{itemize}
\item[(ii)]\emph{Faithfulness}:  A resource monotone $M$ is \emph{faithful} if, in addition to (i-a),
\begin{equation}
  M(\rho) = 0 \quad \Longrightarrow \quad \rho \in \ca{F}.
\end{equation}
\item[(iii)]\emph{Convexity:}
\begin{equation}
  M\Bigl(\sum_i \lambda_i \rho_i\Bigr) \le \sum_i \lambda_i M(\rho_i),
  \qquad \lambda_i \ge 0,\ \sum_i \lambda_i = 1,
\end{equation}
expressing that classical mixing of states cannot increase the resource on average.

\item[(iv)]\emph{Strong monotonicity:}

Let $\{\Phi_k\}_{k\in K}$ be completely positive maps such that the associated
quantum--classical (QC) measurement channel $\sum_k\Phi_k(\cdot)\otimes\ket{k}\bra{k}_C$ belongs to the set of free operations $\mathcal{O}$, where $C$ is a ``classical"
register with an orthonormal basis $\{\ket{k}\}$.
Such QC measurement channels may not be allowed as free operations in every resource theory. Indeed, in some resource theories, e.g. quantum thermodynamics, no such free channels are allowed at all. When they are allowed, however, we call the measurement $\{\Phi_k\}_{k\in K}$ \emph{free}.
With respect to free measurements, strong monotonicity is defined as follows.
Starting from a state $\rho$, outcome $k$ occurs with probability
\begin{equation}
  p_k := \tr\bigl[\Phi_k(\rho)\bigr],
\end{equation}
and post-measurement state
\begin{equation}
  \rho_k := \frac{\Phi_k(\rho)}{p_k}.
\end{equation}
Strong monotonicity requires
\begin{equation}
   \sum_k p_k\, M(\rho_k) \le M(\rho).
\end{equation}
We remark that a measure $M$ satisfies strong monotonicity if it satisfies the following two properties for an arbitrary QC state $\sum_kq_k\eta_k\otimes\ket{k}\bra{k}_C$: (a) \textit{Linear convexity}
$M(\sum_kq_k\eta_k\otimes\ket{k}\bra{k}_C)=\sum_kq_kM(\eta_k\otimes\ket{k}\bra{k}_C)$, and (b) \textit{Invariance under classical flags}: $M(\eta_k\otimes\ket{k}\bra{k}_C)=M(\eta_k)$.

\item[(v)]\emph{Additivity.}
\begin{equation}
  M(\rho \otimes \sigma) = M(\rho) + M(\sigma),
\end{equation}
which in particular implies $M(\rho^{\otimes n}) = n\, M(\rho)$.

\item[(vi)]\emph{Completeness for i.i.d.\ convertibility.}
Assume the resource theory $(\ca{F}, \ca{O})$
is consistently defined on tensor-product systems so that free operations can act on $N$ copies of the
system and produce $\lfloor rN \rfloor$ copies of another system.
For two resource states $\rho,\sigma \not\in \ca{F}$, define the optimal asymptotic
conversion rate $R(\rho \to \sigma)$ as the largest real number $r \ge 0$ such that there exists
a sequence of free operations
\begin{equation}
  \Phi_N : \ca{S}(\ca{H}^{\otimes N}) \longrightarrow \ca{S}(\ca{H}^{\otimes \lfloor rN \rfloor}),
  \qquad N \in \bb{N},\label{CPTPs}
\end{equation}
satisfying
\begin{equation}
  \lim_{N\to\infty}
  \Bigl\|
    \Phi_N\bigl(\rho^{\otimes N}\bigr)
    - \sigma^{\otimes \lfloor rN \rfloor}
  \Bigr\|_1
  = 0,
\end{equation}
where $\|\cdot\|_1$ denotes the trace norm.
We say that a set of resource monotone $\{M_i\}$ is \emph{i.i.d.\ complete}
if for any resource states $\rho,\sigma$ in some subclass of states, e.g. pure states,
\begin{equation}
\enskip\enskip\enskip\enskip\enskip\enskip R(\rho \to \sigma) =\sup\{r| \forall i,\ M_i(\rho)\ge rM_i(\sigma)\}.
\end{equation}
In particular, when each $M_i$ is a real number, i.e. $\mathcal{X}=[0,\infty)$, the above condition becomes
\begin{align}
R(\rho\rightarrow\sigma)=\min_i\frac{M_i(\rho)}{M_i(\sigma)},
\end{align}
and furthermore, when $\{M_i\}$ contains only single real-number measure $M$, 
\begin{equation}
R(\rho \to \sigma) =\frac{M(\rho)}{M(\sigma)}.
\end{equation}
Then $M$ alone determines the optimal rate at which many copies of $\rho$
can be reversibly converted into copies of $\sigma$ by free operations.
A familiar example is the entanglement entropy: the entanglement entropy determines the optimal rate for the i.i.d.\ convertibility between any pure states in the entanglement theory~\cite{Bennett1996}.
Another example is the quantum Fisher information in the resource theory of asymmetry for the $U(1)$-symmetry, which plays an analogous role to the entanglement entropy in the entanglement theory~\cite{Gour2008,Marvian2020,Marvian2022}.

We also define the optimal exact conversion rate $R_{\mathrm{ex}}(\rho\rightarrow\sigma)$ as the largest real number $r\ge0$ such that there exists a sequence of free operations \eqref{CPTPs} satisfying 
\begin{equation}
\Phi_N(\rho^{\otimes N})=\sigma^{\otimes \lfloor rN\rfloor}\,.
\end{equation}

There are two main advantages to introducing i.i.d.\ complete resource measure(s).
The first advantage is purely operational.
For example, in the entanglement theory, one often considers distillation protocols, in which many copies of a weakly entangled state are processed to produce copies of a maximally entangled state, namely a Bell pair.
If an i.i.d.\ complete measure exists, it allows one to determine the ultimate efficiency limit of such distillation procedures.
Indeed, in entanglement theory, the entanglement entropy is well known to quantify precisely this optimal distillation rate~\cite{Bennett1996}.

The second advantage is that resource measures enable a characterization of macroscopic states.
Two states $\psi$ and $\phi$ satisfying $R(\psi\rightarrow\phi)=R(\phi\rightarrow\psi)=1$ are equivalent in the sense that $\psi^{\otimes N}$ and $\phi^{\otimes N}$ can be approximately interconverted by free operations.
An i.i.d.\ complete resource measure implies that the equivalence classes of macroscopic states defined by such asymptotic free interconvertibility are fully characterized by a single numerical quantity.
\end{itemize}

In this language, a quantity is a ``good'' measure for a given resource
precisely when it obeys such axioms for an appropriate choice of free states and free operations.
For instance, the entanglement entropy is a good entanglement monotone under LOCC (local operations and classical communication).
In the following, we specialize this general structure to the case where the resource is symmetry-breaking.

\subsection{Resource theory of asymmetry for weak symmetry}

We now review the standard formulation of the resource theory of asymmetry~\cite{Marvian_thesis2012,Marvian2013}, a.k.a. the resource theory of quantum reference frames~\cite{Gour2008}.
Let $G$ be a (compact) group with unitary representation $U : G \to \mathrm{U}(\ca{H})$,
$g \mapsto U_g$.

\medskip
\noindent\textbf{Free states: symmetric states:}
A state $\rho \in \ca{S}(\ca{H})$ is \emph{$G$-symmetric} (or weak symmetric)
if it is invariant under the adjoint action of $U_g$:
\begin{equation}
  U_g \rho U_g^\dagger = \rho \qquad \forall\, g \in G.
\end{equation}
The set of free states in the resource theory of asymmetry is
\begin{equation}
  \ca{F}_G := \bigl\{ \rho \in \ca{S}(\ca{H}) \,\big|\, U_g \rho U_g^\dagger = \rho\ \forall g \in G \bigr\}.
\end{equation}

A useful object is the \emph{group twirling} (or $G$-twirling) channel $\ca{G}$ defined by
\begin{equation}\label{eq:twir}
  \ca{G}(X)
  := \int_G \mathrm{d}g\, U_g X U_g^\dagger,
  \qquad X \in \ca{B}(\ca{H}),
\end{equation}
where $\mathrm{d}g$ denotes the normalized Haar measure on $G$.
This defines a CPTP map $\ca{G} : \ca{B}(\ca{H}) \to \ca{B}(\ca{H})$.
For states $\rho \in \ca{S}(\ca{H})$ we have $ \ca{G}(\rho) \in \ca{F}_G$, and
\begin{equation}
  \ca{G}(\sigma) = \sigma \ \ \text{for all }\sigma \in \ca{F}_G.
\end{equation}

\medskip
\noindent\textbf{Free operations: covariant channels:}
The standard free operations in the resource theory of asymmetry are
$G$-covariant quantum channels.

A CPTP map
\begin{equation}
  \Phi : \ca{B}(\ca{H}) \to \ca{B}(\ca{H}')
\end{equation}
is \emph{$G$-covariant} {\color{black}(or  \emph{$(U,U')$-covariant})
if $\Phi$ satisfies the following condition for a unitary representation $U' : G \to \mathrm{U}(\ca{H}')$}
\begin{equation}
  \Phi\bigl(U_g X U_g^\dagger\bigr)
  = U'_g\, \Phi(X)\, {U'_g}^\dagger
  \qquad \forall\, X \in \ca{B}(\ca{H}),\ \forall\, g \in G.
\end{equation}
We denote the set of $G$-covariant channels by $\ca{O}_G$.
By construction, such channels map symmetric states to symmetric states:
if $\sigma \in \ca{F}_G$ then $\Phi(\sigma) \in \ca{F}_G$.
Thus $(\ca{F}_G,\ca{O}_G)$ forms a resource theory  
in which any deviation from $G$-symmetry is interpreted as a resource.

{\color{black}Intuitively, a covariant operation can be understood as an operation that is realized by interacting the system with an environment prepared in a symmetric state, through an interaction that respects the symmetry. 
Indeed, the following theorem holds:
\begin{theorem}\label{thm:SD_weak_StoS'}
Let $G$ be a group, and let $U:G\rightarrow U(\calH)$ and $U':G\rightarrow U'(\calH')$ be unitary representations acting on $\calH$ and $\calH'$.
Let a CPTP map $\Lambda:\calB(\calH)\rightarrow \calB(\calH')$ be a $(U,U')$-covariant operation.
Then, there exist two auxiliary Hilbert spaces $\mathcal{H}_E$ and $\mathcal{H}_{E'}$ satisfying $\calH\otimes\calH_{E} \cong \calH'\otimes\calH_{E'}$, a unitary operator $V: \mathcal{H}_S \otimes \mathcal{H}_E \longrightarrow \mathcal{H}_{S'} \otimes \mathcal{H}_{E'}$, two (projective) unitary representations $\{U_g^{(E)}\}_{g\in G}$ and $\{U_g^{(E')}\}_{g\in G}$ acting on $\mathcal{H}_E$ and $\mathcal{H}_{E'}$, and a symmetric state $\sigma_E$ on the ancillary system $E$ such that
\eq{
    (U'_g \otimes U_g^{(E')})\, V
    &= V\, (U_g \otimes U_g^{(E)})
    \quad \forall g \in G
\label{eq:covariant_unitary_general},\\
    \sigma_E&=U_g^{(E)} \sigma_E \big(U_g^{(E)}\big)^\dagger
    \quad \forall g \in G,\\
    \Lambda(\rho)
    &= \Tr_{E'}\!\left[\,V (\rho \otimes \sigma_E) V^\dagger \,\right].\label{eq:covariant_unitary_general3}
}
Conversely, when a CPTP map $\Lambda:\calB(\calH)\rightarrow \calB(\calH')$ can be realized by $(V,\sigma_E)$ satisfying \eqref{eq:covariant_unitary_general}--\eqref{eq:covariant_unitary_general3}, the map is $(U,U')$-covariant.
\end{theorem}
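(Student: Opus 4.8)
The plan is to prove the two implications separately, doing the converse first since it is a short direct computation that fixes the notation, and then constructing the covariant dilation for the forward direction, which is where the real work lies.

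For the converse I assume $V$, $\sigma_E$, and the (projective) representations $U_g^{(E)}$, $U_g^{(E')}$ are given and satisfy \eqref{eq:covariant_unitary_general}--\eqref{eq:covariant_unitary_general3}, and I simply evaluate $\Lambda(U_g\rho U_g^\dagger)$. Using the invariance $\sigma_E=U_g^{(E)}\sigma_E (U_g^{(E)})^\dagger$ I rewrite $U_g\rho U_g^\dagger\otimes\sigma_E=(U_g\otimes U_g^{(E)})(\rho\otimes\sigma_E)(U_g\otimes U_g^{(E)})^\dagger$, push $V$ through with the intertwining relation \eqref{eq:covariant_unitary_general} to extract a factor $U'_g\otimes U_g^{(E')}$, and finally use that the partial trace $\Tr_{E'}$ is invariant under conjugation by $\mathbf{1}\otimes U_g^{(E')}$ (cyclicity on the $E'$ factor) and commutes with $U'_g$ acting on the surviving factor. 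This yields $\Lambda(U_g\rho U_g^\dagger)=U'_g\Lambda(\rho){U'_g}^\dagger$, i.e.\ $(U,U')$-covariance.

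The forward direction I would carry out in three steps. First, fix a Kraus representation $\Lambda(X)=\sum_\alpha K_\alpha X K_\alpha^\dagger$; covariance means that $\{{U'_g}^\dagger K_\alpha U_g\}_\alpha$ and $\{K_\alpha\}_\alpha$ are Kraus sets for the same channel, so they are related by a unitary $u(g)$ on the index space, ${U'_g}^\dagger K_\alpha U_g=\sum_\beta u_{\alpha\beta}(g)K_\beta$, and a standard continuity/measurability argument lets me take $g\mapsto u(g)$ to be a (generally projective) unitary representation. Second, introduce the output environment $\calH_{E'}=\mathrm{span}\{\ket{\alpha}\}$ carrying $U_g^{(E')}:=u(g)$ together with the isometry $W\ket{\psi}=\sum_\alpha K_\alpha\ket{\psi}\otimes\ket{\alpha}$, so that $\Lambda(\rho)=\Tr_{E'}[W\rho W^\dagger]$; the relation from the first step is exactly the intertwining $(U'_g\otimes U_g^{(E')})W=W U_g$, giving a covariant isometric Stinespring dilation. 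Third, promote $W$ to a unitary $V$ with a symmetric input ancilla: I take $\calH_E$ with a genuinely invariant unit vector $\ket{\eta}$ (so that $\sigma_E=\ket{\eta}\!\bra{\eta}$ is symmetric), set $V(\ket{\psi}\otimes\ket{\eta})=W\ket{\psi}$, and extend $V$ to a $G$-intertwining unitary on all of $\calH\otimes\calH_E$, after which tracing out $E'$ reproduces $\Lambda$.

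The main obstacle is precisely this covariant completion. Extending $W$ to a unitary intertwiner requires a $G$-equivariant isomorphism between the orthogonal complement of $W\calH$ inside $\calH'\otimes\calH_{E'}$ and the complementary subspace $\calH\otimes(\calH_E\ominus\ket{\eta})$; that is, one must realize the complementary representation $\pi_c$ in the form $U\otimes\tau$, which is impossible for generic $\pi_c$. I would resolve this by exploiting the freedom to enlarge the environments: appending extra (zero) Kraus operators enlarges $\calH_{E'}$ and lets me adjoin an arbitrary representation to $\pi_c$, while enlarging $\calH_E$ enlarges $\tau$. Invoking the absorbing property of the regular representation $R$ of the compact group $G$ (for finite $G$, $U\otimes R\cong R^{\oplus\dim U}$ and every representation embeds in a multiple of $R$, with the Peter--Weyl theorem supplying the compact analogue), I can arrange both complements to be isomorphic multiples of $R$ and hence identify them $G$-equivariantly, so that the required unitary intertwiner exists by Schur's lemma; this simultaneously secures the dimension match $\calH\otimes\calH_E\cong\calH'\otimes\calH_{E'}$. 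The remaining manipulations are routine bookkeeping, and I note that allowing $\sigma_E$ to be mixed rather than the pure $\ket{\eta}\!\bra{\eta}$ above provides additional slack that can further simplify the matching.
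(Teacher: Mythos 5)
Your converse direction is correct and is the only computation anyone could do: conjugation-invariance of $\sigma_E$, the intertwining relation for $V$, and invariance of $\Tr_{E'}$ under $\mathbf{1}\otimes U_g^{(E')}$ give covariance in three lines. For the forward direction, note that the paper itself does not prove this theorem: it appears in the review section (\S II) as a known result from the literature on the resource theory of asymmetry, and only the \emph{strong}-covariant analogue (Theorem 3) is proved in the Supplemental Material. Your proposal is essentially the standard proof of the weak-covariant dilation theorem, and it is instructive to compare it with the paper's proof of the strong case: there, the authors take a \emph{minimal} Stinespring dilation of $\Lambda^\dagger$, obtain the intertwining isometry $W$, and complete the partial isometry $W'=\sum_m K_m\otimes\ket{m}\bra{0}$ to a unitary via $V=W'+W'^\dagger+Q$ with $Q=I-W'^\dagger W'-W'W'^\dagger$. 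In the strong case this completion is automatically equivariant because the environment carries the trivial representation; in the weak case the equivariant completion is precisely the crux, and your resolution --- enlarging the environments (appending zero Kraus operators), realizing both complements as multiples of the regular representation via Fell absorption $U\otimes R\cong R^{\oplus\dim U}$ and Peter--Weyl, and placing the invariant vector $\ket{\eta}$ in a trivial summand of $\calH_E$ --- is the correct and standard device, and it simultaneously delivers the isomorphism $\calH\otimes\calH_E\cong\calH'\otimes\calH_{E'}$ demanded by the statement.

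One step of your first stage needs repair. From an \emph{arbitrary} Kraus representation, the unitaries $u(g)$ relating $\{{U'_g}^\dagger K_\alpha U_g\}$ to $\{K_\alpha\}$ are not unique, so the cocycle property $u(g)u(h)\propto u(gh)$ does not follow, and no continuity or measurable-selection argument will produce it; a measurable choice of $u(g)$ need not multiply correctly. The fix is to work with the minimal Stinespring dilation (exactly as the paper does in the strong case): minimality makes the intertwiner on $\calH_{E'}$ unique for each $g$, and uniqueness forces $u(gh)=u(g)u(h)$ (a genuine, not merely projective, representation when $U,U'$ are genuine), after which your steps two and three go through unchanged. With that substitution the proposal is a complete and correct proof, and for infinite compact $G$ one accepts an infinite-dimensional $L^2(G)$ environment, with the constant functions supplying the invariant vector $\ket{\eta}$ --- consistent with the theorem, which places no dimension restriction on $\calH_E$, $\calH_{E'}$.
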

}

{\color{black}
\textbf{Covariant measurements:} To discuss strong monotonicity, we also introduce QC states and free measurements.
We define a classical register as a system on which the unitary representation of the group $G$
is always given by the identity operator $I$.
A QC state is then taken to be of the form
\[
\sum_k p_k\, \rho_k \otimes \ket{k}\bra{k}_C .
\]

We define a $(U,U')$-covariant measurement
$\{\Psi_k : \mathcal{B}(\mathcal{H}) \to \mathcal{B}(\mathcal{H}')\}$
as a collection of completely positive maps such that the associated QC channel
\[
\sum_k \Psi_k(\cdot)\otimes \ket{k}\bra{k}_C
\]
is a $(U,\, U'\otimes I_C)$-covariant operation.
Note that $\{\Psi_k\}$ is a $(U,U')$-covariant measurement iff each CP map $\Psi_k$ is individually $(U,U')$-covariant.

}

\subsection{Asymmetry monotones for weak symmetry}

Within this framework, a good quantifier of symmetry breaking is a resource monotone
for the resource theory $(\ca{F}_G,\ca{O}_G)$.

\subsubsection{Relative entropy of asymmetry}
We now recall the relative entropy of asymmetry,
which plays a central role in our work.

\medskip
\noindent\textbf{Relative entropy of asymmetry (a.k.a. Entanglement asymmetry):}
Given a symmetry group $G$ and its twirling channel $\ca{G}$,  
the \emph{relative entropy of asymmetry \cite{Vaccaro2008,Gour2009}} is defined by
\eq{
  A_G(\rho)
:=\min_{\sigma\in\calF_{G}}S(\rho\|\sigma),
}
where $S(\rho \Vert \sigma):=  \tr\bigl[ \rho (\log \rho - \log \sigma) \bigr]$ is the relative entropy.
Equivalently, it is written as~\cite{Gour2009}
\begin{equation}
  A_G(\rho)=S\bigl( \rho \,\Vert\, \ca{G}(\rho) \bigr)
  = S\bigl(\ca{G}(\rho)\bigr) - S(\rho),\label{alt_def_EA}
\end{equation}
with $S(\rho) := - \tr[\rho \log \rho]$ the von Neumann entropy.
The relative entropy of asymmetry $A_G$ satisfies:
(i) \emph{the minimal requirements (i-a) and (i-b) as a resource monotone}~\cite{Vaccaro2008}, 
(ii) \emph{Faithfulness}~\cite{Vaccaro2008}, (iii) \emph{Convexity}~\cite{Marvian2016}, and (iv) \emph{Strong monotonicity}~\cite{Vaccaro2008}. 

\medskip
The quantity $A_G$ thus provides a mathematically well-defined and operationally meaningful measure
of symmetry breaking with respect to $G$:
it vanishes exactly on symmetric states,
does not increase under $G$-covariant operations.
In this way, resource theory singles out canonical measures such as the relative entropy
of asymmetry, and also provides a systematic language for comparing and selecting among
multiple candidate quantifiers of symmetry breaking.

\subsubsection{Complete monotones for the i.i.d.\ convertibility}

The virtues of the relative entropy of asymmetry are that it is faithful and can be defined for any group $G$ for which the twirling operation is well defined. 
However, as we will discuss later, the rate of this quantity $A_G(\rho^{\otimes N})/N$ vanishes in the independent and identically distributed (i.i.d.)\ asymptotic limit and therefore cannot characterize the optimal ratio of the conversions between the i.i.d.\ states. 
In other words, the measures that play the role analogous to the entanglement entropy in entanglement theory--an i.i.d.\ complete monotone--must be different quantities.
In the case of weak symmetry, it has been clarified for various symmetries which quantities serve as such i.i.d.\ complete monotones. 
In the following, we briefly review these results.

\medskip
\noindent\textbf{U(1): Quantum Fisher information:}
Given a unitary representation $\{e^{-iHt}\}_{t\in[0,2\pi)}$ of $U(1)$ and its generator $H$,
the \emph{Symmetrized Logarithmic Derivative (SLD)-quantum Fisher information}~\cite{Helstrom1969} for the state family $\{e^{-iHt}\rho e^{iHt}\}$ is defined by
\begin{equation}
\calI_H(\rho):=2\sum_{k,l}\frac{(p_k-p_l)^2}{p_k+p_l}|\bra{k}H\ket{l}|^2,
\end{equation}
where $\{p_k,\ket{k}\}$ are the eigenvalues and eigenstates of $\rho$.
This quantity is also called \textit{SLD-skew information}~\cite{Hansen2008}:
\eq{
\tilde{\calI}_H(\rho):=\frac{1}{4}\calI_H(\rho).
}
When $\rho$ is pure, it is equal to 4 times of the variance of $L$:
\begin{equation}
  \calI_H(\ket{\psi}\bra{\psi})
  = 4V_H(\ket{\psi}\bra{\psi}),
\end{equation}
where $V_H(\rho):=\Tr[H^2\rho]-\Tr[H\rho]^2$.
The Fisher information $\calI_H(\rho)$  satisfies:
(i) \emph{the minimal requirements (i-a) and (i-b) as a resource monotone}~\cite{Yadin2016}, 
(ii) \emph{Faithfulness}~\cite{Yadin2016}, (iii) \emph{Convexity}~\cite{Yadin2016}, and (iv) \emph{Strong monotonicity}~\cite{Yadin2016}, and (v) \emph{Additivity}~\cite{Hansen2008}.
Furthermore, it also satisfies (vi) \emph{Completeness for i.i.d.\ convertibility between pure states}~\cite{Gour2008} \emph{and from a pure state to a mixed state}~\cite{Marvian2022}:
\begin{equation}
R(\ket{\psi}\rightarrow\ket{\phi})=\frac{\calI_H(\ket{\psi}\bra{\psi})}{\calI_H(\ket{\phi}\bra{\phi})},
\end{equation}
when $\mathrm{Sym}_{U(1)}(\psi)\subset\mathrm{Sym}_{U(1)}(\phi)$, where $\mathrm{Sym}_G(\rho):=\{g\in G|U_g\rho U^\dagger_g=\rho\}$.
(If $\mathrm{Sym}_{U(1)}(\psi)\not\subset\mathrm{Sym}_{U(1)}(\phi)$, $R(\psi\rightarrow\phi)=0$.)

This quantity is also computationally accessible in many-body physics and can serve as a useful indicator of symmetry breaking.
For instance, in the BCS model, it is proportional to the number of Cooper pairs, and allows us to analyze the quantum Mbemba effect in the thermodynamic limit~\cite{Yamashika2025}.
Another related topic is the conversion theory. In the case of $U(1)$ symmetry, the asymptotic convertibility of pure states beyond the i.i.d.\ regime
(i.e.\ allowing arbitrary correlations)
admits both a necessary condition and a sufficient condition,
given by an extension of the quantum Fisher information~\cite{Yamaguchi2023,Yamaguchi2023a}.

\medskip
\noindent\textbf{Compact Lie group: Quantum geometric tensor}
Given a compact Lie group $G$ and its (projective) unitary representation $U_{g\in G}$,  
the \emph{quantum geometric tensor}~\cite{provost1980,Berry1989} for a pure state $\ket{\psi}$ is defined by
\begin{equation}\calQ^{\psi}_{i,j}:=\bra{\psi}X_i X_j\ket{\psi}-\bra{\psi}X_i \ket{\psi}\bra{\psi}X_j\ket{\psi},
\end{equation}
where $\{X_i\}$ are the Hermitian operators defined as 
\begin{equation}
 X_i:=-i\frac{\partial}{\partial \lambda^i} U_{g(\bm{\lambda})}|_{\bm{\lambda}=\bm{0}}
\end{equation}
and $g(\bm{\lambda})=\ex{i\sum_{i}\lambda^i A_i}$ is a parametrization of elements in the neighborhood of the identity $e\in G$ with a basis $\{A_i\}$ of the Lie algebra $\mathfrak{g}$. 
The quantum geometric tensor $\calQ^\psi$ satisfies:
(i) \emph{the minimal requirements (i-a) and (i-b) as a resource monotone}~\cite{Yamaguchi2024}, 
(ii) \emph{Faithfulness}~\cite{Yamaguchi2024}, (iv) \emph{Strong monotonicity}~\cite{Yamaguchi2024}, and (v) \emph{Additivity}~\cite{Yamaguchi2024}.
Furthermore, it also satisfies (vi) \emph{Completeness for i.i.d.\ convertibility between pure states}~\cite{Yamaguchi2024}: when $\mathrm{Sym}_G(\psi)\subset\mathrm{Sym}_G(\phi)$,
\begin{equation}
R(\ket{\psi}\rightarrow\ket{\phi})=\sup\{r\ge0|\calQ^{\psi}\ge r\calQ^{\phi}\}.
\end{equation}
(If $\mathrm{Sym}_G(\psi)\not\subset\mathrm{Sym}_G(\phi)$, $R(\psi\rightarrow\phi)=0$.)

While the quantum geometric tensor is a powerful quantity, its established definition is limited to pure states.
For mixed states, we can employ the SLD-quantum Fisher information matrix, which satisfies the properties (i)-(v) of resource measure for an arbitrary connected Lie group (and (i) and (iii)-(v) for non-connected Lie group)~\cite{Kudo2023}:
\begin{equation}
\calI^{\rho}_{i,j}:=2\sum_{k,l}\frac{(p_k-p_l)^2}{p_k+p_l}\bra{k}X_i\ket{l}\bra{l}X_j\ket{k},
\end{equation}
where $\{p_k,\ket{k}\}$ are the eigenvalues and eigenstates of $\rho$.
For pure states, the quantum Fisher information matrix coincides with the real part of the quantum geometric tensor.

\medskip
\noindent\textbf{Finite group: logarithmic fidelity-based characteristic function:}
Given a finite group $G$ and its (projective) unitary representation $U$, we define
the \emph{logarithmic fidelity-based characteristic function (LFCF)} as
\begin{equation}
 \label{eq:logCh}
 LF_{G,g}(\rho):=-\log F(\rho,U_g\rho U^\dagger_g),
\end{equation}
where $F(\rho,\sigma):=\Tr[\sqrt{\sqrt{\sigma}\rho\sqrt{\sigma}}]$ is the Uhlmann fidelity.
When $\rho$ is pure, the LFCF coincides with the logarithmic characteristic function (LCF) $L_{G,g}(\rho):=-\log|\Tr[U_g\rho]|$, which is defined only for pure states in Ref.~\cite{Shitara2023}.
For any $g\in G$, $LF_{G,g}(\rho)$ satisfies (i) \emph{the minimal requirements (i-a) and (i-b) as a resource monotone}, (iii) \emph{Convexity}, (iv) \emph{Strong monotonicity}, and (v) \emph{Additivity}.
And the set $\{LF_{G,g}(\rho)\}_{g\in G}$ satisfies (ii) \emph{Faithfulness}: $\rho\in\calF_{G}$ iff $LF_{G,g}(\rho)=0$ for any $g\in G$,
In these properties, (i), (ii) and (v) for pure states are shown in Ref.~\cite{Shitara2023} using the techniques in Ref.~\cite{Marvian2013}, and (i)--(v) for any states are shown in Appendix~\ref{sec:prop_LCF} of this paper.

Furthermore, the set $\{LF_{G,g}(\rho)\}_{g\in G}$ satisfies (vi) \emph{Completeness for i.i.d.\ convertibility between pure states}~\cite{Shitara2023}: when $\mathrm{Sym}_G(\psi)\subset\mathrm{Sym}_G(\phi)$ holds,
\begin{equation}
R_{\mathrm{ex}}(\ket{\psi}\rightarrow\ket{\phi})=\min_{g\in G}\frac{LF_{G,g}(\psi)}{LF_{G,g}(\phi)}.
\end{equation}
If $\mathrm{Sym}_G(\psi)\not\subset\mathrm{Sym}_G(\phi)$, $R_{\mathrm{ex}}(\psi\rightarrow\phi)=0$.

As an important remark, for finite groups, the usual asymptotic conversion rate satisfies the following property~\cite{Shitara2023}:
\begin{equation}
R(\ket{\psi}\rightarrow\ket{\phi})=\left\{
\begin{array}{ll}
0 &  (\mathrm{Sym}_G(\psi)\not\subset\mathrm{Sym}_G(\phi))\\
\infty & \mathrm{(Otherwise)}
\end{array}
\right.
\end{equation}

\section{Resource-theoretic Results for Weak Symmetry}\label{Sec:RRWS}

Before turning to strong symmetry, it is useful to highlight what one can learn
by applying the resource theory to weak symmetry.
In this section, we briefly summarize two simple but important results.
Both rely on taking the basic axioms of a resource monotone seriously
(faithfulness, monotonicity under free operations, etc.),
and both illustrate how resource theory helps to discriminate between
``good'' and ``merely convenient'' measures of symmetry breaking.

\subsection{Rényi-2 proxies are not resource monotones}

Motivated by the simplicity of R\'enyi entropies, 
a ``second R\'enyi'' analogue  of the relative entropy of asymmetry,
\begin{equation}
  A_G^{(2)}(\rho)
  := S^{(2)}\!\bigl(\ca{G}(\rho)\bigr) - S^{(2)}(\rho),
  \ \ \ 
  S^{(2)}(\rho) := - \log \tr(\rho^2),
\end{equation}
or closely related quantities have been considered \cite{Ares2022,Ares2023,Ferro2023,Capizzi2023a,Chen2023,Lastres2024,Fossati2024a,Chen2024,Ares2023a,Russotto2024,Murciano2023,Yamashika2024,Caceffo2024,Fujimura2025}.
Such quantities are easy to compute and often behave qualitatively like $A_G$
in explicit examples, so they have been widely used in the context of symmetry breaking.

The quantity, $A_G^{(2)}(\rho)=0$ if and only if $\rho$ is weak-symmetric, hence a nice proxy as an indicator of symmetry breaking. However, as a ``quantifier", it fails the crucial test of monotonicity under
free operations (symmetric operation). Hence, one should not use $A_G^{(2)}(\rho)$ \emph{solely} to infer anything about ``amount" of symmetry breaking, e.g.\ to deduce quantum Mpemba effect.

Consider a single qubit with a $\bb{Z}_2$ symmetry generated by the Pauli operator $X$,
so that the twirling map is $\ca{G}_X(\rho)=\tfrac12(\rho+X\rho X)$.
Then the $Z$-basis dephasing channel
\begin{equation}
\Delta_Z(\rho)=\sum_{s=\pm 1} P_s \rho P_s,   \ \ \ \ \ 
P_{\pm}=\tfrac12(\mathbf{1}\pm Z)
\end{equation}
is $\bb{Z}_2$-covariant (hence free), yet it can strictly \emph{increase} the R\'enyi-2 proxy.
For example, for
\begin{equation}
  \rho=\frac12\!\left(\mathbf{1}+\frac12 X+\frac12 Z\right),
\end{equation}
one finds
\begin{equation}
  A_G^{(2)}\bigl(\Delta_Z(\rho)\bigr) > A_G^{(2)}(\rho).
\end{equation}
We present the detailed calculation in Appendix~\ref{app:Renyi2-not-monotone}.

Thus, $A_G^{(2)}$ can be strictly \emph{increased} by free operations, and therefore fails to be a resource monotone, even though it looks natural
and is convenient in many-body calculations.
Accordingly, $A_G^{(2)}$ should be regarded as a
model-dependent proxy whose behavior must be interpreted with care,
rather than as a principled, universally valid quantifier of weak symmetry breaking.
For example, from the results of $A_G^{(2)}$ alone, one cannot determine precisely whether the quantum Mpemba effect is occurring.
This simple but important application to physics clearly illustrates the usefulness of the resource-theoretic framework,
since it is impossible to discuss monotonicity rigorously without resource theory.

\subsection{Relative entropy of asymmetry and the thermodynamic limit}

For weak symmetry with respect to a compact group $G$, the standard resource-theoretic
measure is the relative entropy of asymmetry $A_G$.
For many-body systems, however, resource theory also makes the limitation of $A_G$ transparent.

It is known that the quantity $A_G$ exhibits at most logarithmic growth in the system size $L$ in two cases: 
i.i.d.\ states $\rho^{\otimes L}$ of any state $\rho$~\cite{Gour2009} and translationally invariant matrix product states (MPS) with finite bond dimension~\cite{Capizzi2023a}.
Therefore, in at least these two cases, the asymmetry density vanishes in the thermodynamic limit $L\to\infty$,
\begin{equation}
  \frac{A_G\bigl(\rho_L\bigr) }{L}  \to 0.
\end{equation}
In resource-theoretic language, $A_G$ is not additive and not extensive, and consequently
cannot be a complete measure determining optimal asymptotic conversion rates between many copies
of resource states.
Thus, while $A_G$ is conceptually clean and operationally meaningful at finite size,
it does not define a nontrivial extensive resource measure in the thermodynamic limit
and therefore is not an appropriate thermodynamic quantifier of weak symmetry breaking.
This conclusion naturally emerges once one asks for additivity and completeness within
the resource theory, and it motivates the search for alternative measures
(such as the quantum Fisher information) when dealing with macroscopic systems \cite{Yamashika2025}.

These examples show the practical value of the resource theory.
They do not only generate new candidates; they also provide a systematic way to test
and sometimes rule out quantities that are introduced ad hoc.
In the remainder of the paper, we will adopt the same philosophy for strong symmetry:
we will first specify the appropriate notion of free states and free operations,
and then derive a measure of strong symmetry breaking that satisfies the basic resource-theoretic requirements by construction.

\section{Resource Theory of Asymmetry for Strong Symmetry}\label{sec:RStrong}
Now, let us formulate the resource theory of asymmetry for strong symmetry.
{\color{black}Throughout this section, we restrict attention to finite-dimensional Hilbert spaces and provide proofs under this assumption, as is common in many resource theories.
We {\color{black}expect} that the framework and most of the results themselves do not rely on finite dimensionality, and many of the quantities we introduce remain well defined (and often computable) in physically relevant infinite-dimensional settings.
}See section~\!\ref{sec:ex} for examples, where we compute asymmetry in the settings involving CFT.

\subsection{Free states and free operations}
We first define the free states.
A natural choice is to take the following ``strong symmetric states" as free states.

\medskip
\noindent\textbf{Free states: Strong symmetric states:}
Let $S$ be a quantum system, and let $G$ be a group with a (projective) unitary representation $U$ acting on $S$. 
A state $\rho$ on $S$ is said to be strong symmetric with respect to $U$ if it satisfies
\eq{
U_g\rho=e^{i\theta_{g,\rho}}\rho,\enskip\forall g\in G,
}
where $\theta_{g,\rho}$ is a real-valued function of $g$ and $\rho$. We refer to the whole set of strong symmetric states as ${\cal F}_{G,\mathrm{strong}}$.

The intuition behind this class becomes clear when we consider the case of $U(1)$ symmetry. For example, if $U$ can be expressed as $\{e^{i Q \theta}\}$, then a strong symmetric state is simply an (not necessarily pure) eigenstate of $Q$.
This class of states is a natural extension of symmetric states and is widely used in the field of condensed matter physics. 
Hereafter, we refer to symmetric states for weak symmetry as weak symmetric states.

An important remark on this class is that there are many cases where no strong symmetric state exists. 
As an example of such cases, let us consider the situation where the system of interest is a qubit, $G$ is $SU(2)$, 
and $U$ is the natural irreducible unitary representation of $SU(2)$ acting on the qubit:  
$U_g := e^{-i\frac{\theta_g}{2}\vec{n}_g\cdot\vec{\sigma}}$, 
where $\theta_g$ is a real parameter describing the rotation angle, $\vec{n}_g$ is a real vector parameter defined as 
$\vec{n}_g := (n_x(g), n_y(g), n_z(g))$ describing the rotation axis, and $\vec{\sigma} := (\sigma_x, \sigma_y, \sigma_z)$ 
denotes the Pauli operators. 
Then, for any strong symmetric state $\rho$, the condition $U_g\rho U_g^\dagger = \rho$ must hold 
for all $g\in G$. Since the representation $U$ is irreducible, it follows that 
$\rho = I/2$. However, since $U$ includes $\sigma_x$, we have 
$\sigma_x \rho = \sigma_x/2 \not\propto \rho$. 
This leads to a contradiction, and hence there is no strong symmetric state in this case. We further note that for Abelian group, the strong symmetric states always exist, only in the case of non-Abelian symmetry, the subtlety as mentioned in this paragraph may appear.

To compensate for the limitation, we introduce another class of states as follows:

\medskip
\noindent\textbf{Additional class of states: Single-sector states:}
Let $S$ be a quantum system, and let $G$ be a group with a (projective) unitary representation $U$ acting on $S$. 
We assume that $U_{g\in G}$ admits the following irreducible decomposition:
\eq{
U_g = \bigoplus_{\nu} U_g^{(\nu)} \otimes I_{m_\nu},
}
where $m_{\nu}$ denotes the multiplicity of the irrep labeled by $\nu$.  
Correspondingly, the Hilbert space $\calH$ can be decomposed as
\eq{\label{sum}
\calH = \bigoplus_{\nu} \calH_{\nu} \otimes \mathbb{C}^{m_\nu}.
}
A state $\rho$ on $S$ is said to be \textit{single-sector} with respect to $U$ if it is weak symmetric with respect to $U$ and satisfies
\eq{
\exists\, \nu \text{ such that } P_\nu \rho P_\nu = \rho,
}
where $P_\nu$ is the projection operator onto $\calH_{\nu} \otimes \mathbb{C}^{m_\nu}$.
We refer to the whole set of single-sector states as ${\cal F}_{G,\mathrm{single}}$.

Clearly, any single-sector state $\rho$ must take the following form 
(note that a single-sector state is also weak symmetric):
\eq{\label{51}
\rho = \frac{I_{\nu}}{d_\nu} \otimes \rho_{m_{\nu}},
}
where $I_{\nu}$ is the identity operator on $\calH_{\nu}$, $d_\nu:=\mathrm{Tr}\ I_\nu$ and $\rho_{m_{\nu}}$ is a quantum state on $\mathbb{C}^{m_{\nu}}$.

When $G = U(1)$, the single-sector states coincide with the strong symmetric states. 
For a general group $G$, however, the single-sector states constitute a distinct class from the strong symmetric states. 
They always include all strong symmetric states and exist even in situations where no strong symmetric state exists. 
The relationship among weak symmetric, single-sector, and strong symmetric states is illustrated in Figure \ref{fig_venn}.

\begin{figure}[t]
\begin{center}
\includegraphics[width=.45\textwidth]{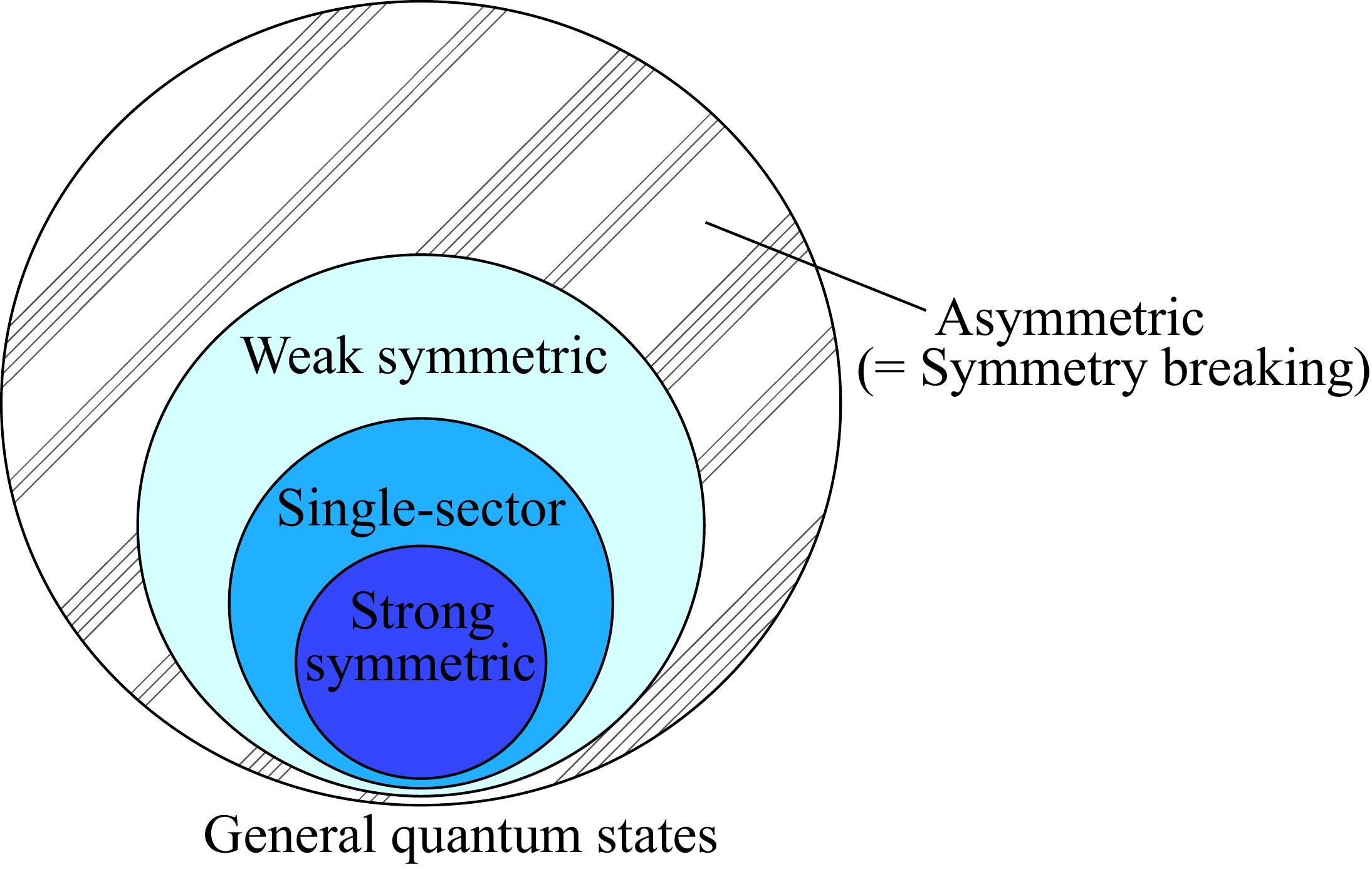}
\caption{Classifications of states. 
In the figure, asymmetric states are indicated by the hatched region, whereas symmetric states are shown as the three shaded regions.
The set of symmetric states exhibits a three-level nested hierarchical structure, i.e. (strong symmetric)$\subset$(single-sector)$\subset$(weak symmetric). The important features of this structure is the following four: (a) There exist cases where (strong symmetric)$\subsetneq$(single-sector) and (single-sector)$\subsetneq$(weak symmetric) hold. (b) When $G=U(1)$, (strong symmetric)$=$(single-sector) holds. (c) When $G$ is some non-Abelian group, sometimes there are no strong symmetric states. (d) For any $G$, there is at least a single-sector state.}
\label{fig_venn}
\end{center}
\end{figure}
\medskip
\noindent
\textbf{Examples:}
{\color{black}
Let us unpack the definitions above for the purpose of illustration. We consider a single qubit system. Using the Bloch-sphere representation, we have
\begin{equation}
    \rho=\frac{1}{2}\left(\mathbf{1}+r_x X+r_yY+r_zZ\right)\,,
\end{equation}
where $\vec{r}=(r_x,r_y,r_z)$ lies on/inside the Bloch-sphere i.e.\ $|\vec{r}|\leq 1$.
There is a natural action of $SU(2)$ on the space of these density matrices: it rotates the vector $\vec{r}$. We further consider the $U(1)$ subgroup, generated by $Z$. 

The weak-symmetric states under $U(1)$ are given by the following family of density matrices, parametrized by $0\leq p\leq 1$.
\begin{equation}
    \rho_{[p]}:=\begin{pmatrix}
        p & 0\\
    0 & 1-p\,.
    \end{pmatrix}
\end{equation}

Among these, there exist only two strong symmetric states under $U(1)$, given by $\rho_{[0]}$ and $\rho_{[1]}$. In terms of Bloch-vectors, $\vec{r}=(0,0,r)$ denotes the weak symmetric ones (with $2p=1+r$), and $(0,0,\pm 1)$ denotes the strong symmetric ones. As previously mentioned, strong symmetry is equivalent single-sector-ness for Abelian group. 

Now let us turn our attention to $SU(2)$, the only vector, invariant under $SU(2)$ is the zero vector corresponding to $\rho=\frac{1}{2}\mathbf{1}=\rho_{[1/2]}$. This is the \textit{only} weak-symmetric state under $SU(2)$. Note that $\rho_{[p]}$, in spite of being diagonal, is not weak-symmetric under $SU(2)$ unless $p=1/2$. Moreover, $\rho_{[1/2]}$ is also a single sector state because it obeys \eqref{51}, with $\nu$ being the spin-$1/2$ irrep of $SU(2)$ and $m_\nu=1$. Furthermore, as mentioned before, there is no strong symmetric state in this case.

To explore further, we consider a two-qubit system. Again, $SU(2)$ has a natural action. Now we have spin-$1$ irrep and spin-$0$ irrep. In the basis adapted to spin-$1$ irrep and spin-$0$ irrep, the weak-symmetric density matrices under $SU(2)$ are given by the following family, parametrized by $0\leq p\leq 1/3$:
\begin{equation}
    \rho_{\text{2-qubit}[p]}:=
    \begin{pmatrix}
        p & 0 & 0 & 0\\
        0 & p & 0 & 0\\
        0 & 0 & p & 0\\
        0 & 0 & 0 & 1-3p
    \end{pmatrix}
\end{equation}
Once again, any diagonal density matrix may not be weak-symmetric. It has to be of the form given above.
The \textit{only} two single-sector states are $\rho_{\text{2-qubit}[0]}$ and $\rho_{\text{2-qubit}[1/3]}$. The \textit{only} strong-symmetric state is $\rho_{\text{2-qubit}[0]}$. \emph{Note that for non-abelian group, the strong-symmetric state exist only if there exists a single-sector state, transforming under dimension $1$ irrep of the relevant group.}

These examples illustrate the Venn-diagram, depicted in the figure.~\!\ref{fig_venn}
}.

As we shall see later, both the strong symmetric states and the single-sector states can serve as free states under strong symmetry. 
Moreover, faithful resource measures can be defined for each of them. 
We remark that, while single-sector states constitute a broader class that includes the strong symmetric states in general, for abelian groups, they are identical.
In this paper, we treat the strong symmetric states as the free states, 
while also discussing the results for the single-sector states.

Next, let us define the free operations:

\medskip
\noindent
\textbf{Free operations: strong covariant operations:}
Let $G$ be a group, and let $U$ and $U'$ be unitary representations of $G$ acting on two Hilbert spaces $\calH$ and $\calH'$, respectively.
A CPTP map $\Lambda: \calB(\calH)\to \calB(\calH')$ is said to be $(U,U')$-strong covariant if it satisfies
\begin{equation}
    \Lambda\!\left(U_g...\right)
    = U'_g \Lambda(...) 
    \quad \forall g \in G.
    \label{eq:s-covariance_def}
\end{equation}
We refer to the whole set of strong covariant operations as ${\cal O}_{G,\mathrm{strong}}$.

The following theorem ensures that the above definitions of free states and free operations give rise to a well-defined resource theory.
\begin{theorem}\label{thm:GR_strong}
Let $G$ be a (compact) group, and let $U$, $U'$ and $U''$ be unitary representations acting on Hilbert spaces $\calH$, $\calH'$ and $\calH''$, respectively.
Then, the following three are valid:$\\$
(i) The identity operation on $\calB(\calH)$ is $(U,U)$-strong covariant. $\\$
(ii) If $\Lambda:\calB(\calH)\rightarrow \calB(\calH')$ and $\Lambda':\calB(\calH')\rightarrow \calB(\calH'')$ are $(U,U')$- and $(U',U'')$-strong covariant, respectively, $\Lambda'\circ\Lambda$ is also $(U,U'')$-strong covariant. $\\$
(iii) If $\Lambda:\calB(\calH)\rightarrow \calB(\calH')$ is a $(U,U')$-strong covariant CPTP map, the following relations hold:
\eq{
\rho\in{\cal F}_{G,\mathrm{strong}}&\Rightarrow\Lambda(\rho)\in{\cal F}_{G,\mathrm{strong}},\label{GR_strong}\\
\rho\in{\cal F}_{G,\mathrm{single}}&\Rightarrow\Lambda(\rho)\in{\cal F}_{G,\mathrm{single}}.\label{GR_single}
}
\end{theorem}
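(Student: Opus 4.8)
The plan is to obtain (i) and (ii) immediately from the definition of strong covariance and to reserve the real work for (iii). For (i), the identity channel obeys $\mathrm{id}(U_g X)=U_g X=U_g\,\mathrm{id}(X)$, so it is $(U,U)$-strong covariant with no computation. For (ii), I would simply chain the two defining relations: for every $X$,
\eq{
(\Lambda'\circ\Lambda)(U_g X)=\Lambda'\!\bigl(U'_g\Lambda(X)\bigr)=U''_g\,(\Lambda'\circ\Lambda)(X),
}
where the first step uses strong covariance of $\Lambda$ and the second that of $\Lambda'$; this is precisely $(U,U'')$-strong covariance of the composite.

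The strong-symmetric half of (iii), Eq.~\eqref{GR_strong}, is equally direct. If $U_g\rho=e^{i\theta_{g,\rho}}\rho$, then strong covariance gives $U'_g\Lambda(\rho)=\Lambda(U_g\rho)=e^{i\theta_{g,\rho}}\Lambda(\rho)$, so $\Lambda(\rho)$ is again strong symmetric with the same phase function, and it is a legitimate state because $\Lambda$ is CPTP.

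The single-sector implication \eqref{GR_single} is the crux, and I would carry it out in two steps. First, I would record that strong covariance forces ordinary (weak) covariance: taking the adjoint of $\Lambda(U_g X)=U'_g\Lambda(X)$ and using that a CPTP map preserves Hermiticity ($\Lambda(A^\dagger)=\Lambda(A)^\dagger$) yields the companion rule $\Lambda(Y U_g^\dagger)=\Lambda(Y)(U'_g)^\dagger$, and combining the two gives $\Lambda(U_g X U_g^\dagger)=U'_g\Lambda(X)(U'_g)^\dagger$. By the weak-covariance property that covariant channels send symmetric states to symmetric states (Sec.~\ref{sec:basic}), $\Lambda(\rho)$ is then weak symmetric, which is one of the two defining conditions. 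Second — the heart of the argument — I would show $\Lambda(\rho)$ is supported in a single irreducible sector of $\calH'$. Using Schur orthogonality for compact $G$, the character-weighted Haar integral is the rescaled isotypic projector, $\int_G \mathrm{d}g\,\overline{\chi_\nu(g)}\,U_g=\frac{1}{d_\nu}P_\nu$ and likewise $\int_G \mathrm{d}g\,\overline{\chi_\nu(g)}\,U'_g=\frac{1}{d_\nu}P'_\nu$ on $\calH'$. For $\rho$ supported in sector $\nu$ one has $P_\nu\rho=\rho$, hence $\int_G \mathrm{d}g\,\overline{\chi_\nu(g)}\,U_g\rho=\frac{1}{d_\nu}\rho$; applying $\Lambda$, passing it through the integral, and invoking strong covariance termwise gives
\eq{
\frac{1}{d_\nu}\Lambda(\rho)=\int_G \mathrm{d}g\,\overline{\chi_\nu(g)}\,U'_g\Lambda(\rho)=\frac{1}{d_\nu}P'_\nu\Lambda(\rho),
}
so that $P'_\nu\Lambda(\rho)=\Lambda(\rho)$. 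Since $\Lambda(\rho)\ge 0$, this left-support statement upgrades to $P'_\nu\Lambda(\rho)P'_\nu=\Lambda(\rho)$, the single-sector support condition; together with weak symmetry from the first step, $\Lambda(\rho)\in\calF_{G,\mathrm{single}}$.

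I expect the obstacles to be organizational rather than deep: justifying the interchange of $\Lambda$ with the Haar integral (routine in finite dimension by linearity and continuity), and noting that the target irrep $\nu$ must actually occur in $\calH'$ — which is automatic here, since $P'_\nu=0$ would force $\Lambda(\rho)=0$, contradicting trace preservation. The one genuine subtlety is that the free-state definitions permit \emph{projective} representations; there I would pass to the appropriate central extension on which $U$ and $U'$ linearize, so that characters and Schur orthogonality apply verbatim, leaving the structure of the argument intact.
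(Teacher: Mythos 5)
Your proposal is correct and follows essentially the same route as the paper's proof: parts (i), (ii) and the strong-symmetric half of (iii) are handled identically, and the single-sector half rests on the same Schur-orthogonality identity $P_\nu = d_\nu\int_G \mathrm{d}g\,\chi_\nu(g)^*\,U_g$ combined with strong covariance (the paper phrases it as preservation of all sector probabilities, $\Tr[P'_\nu\Lambda(\rho)]=\Tr[P_\nu\rho]$, while you derive the equivalent operator-level support statement $P'_\nu\Lambda(\rho)=\Lambda(\rho)$ and upgrade it via positivity), together with the observation that strong covariance implies weak covariance. Your closing remark on projective representations also matches the paper's supplemental treatment, which linearizes on the central extension $U(1)\times G$ after first proving that the cocycles of $U$ and $U'$ must coincide whenever a $(U,U')$-strong covariant channel exists --- the one detail your central-extension sketch leaves implicit.
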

Namely, either the combination of $(\calF_{G,\mathrm{strong}},\calO_{G,\mathrm{strong}})$ or $(\calF_{G,\mathrm{single}},\calO_{G,\mathrm{strong}})$ satisfies the minimal requirements of the resource theory.

\subsection{Physical realization and Kraus representation of strong covariant operations}
Intuitively, a strong covariant operation can be understood as an operation realized by dynamics that do not exchange any conserved quantity with the environment. 
Indeed, the following theorem is valid:
\begin{theorem}\label{thm:SD_strong_StoS}
Let $G$ be a (compact) group, and let $U$ be a unitary representation acting on a Hilbert space $\calH$.
Let $\Lambda:\calB(\calH)\rightarrow \calB(\calH)$ be a $(U,U)$-strong covariant CPTP map.
Then, there exists a auxiliary Hilbert space $\mathcal{H}_E$, a unitary operator $V$ on $\calH\otimes\calH_{E}$, and a state $\sigma_E\in\calS(\calH_E)$ such that
\eq{
    (U_g \otimes I^{(E)})\, V
    &= V\, (U_g \otimes I^{(E)})
    \quad \forall g \in G
    \label{eq:covariant_unitary_general_strong},\\
    \Lambda(\rho)
    &= \Tr_E\!\left[\,V (\rho \otimes \sigma_E) V^\dagger \,\right].\label{eq:covariant_unitary_general_strong2}
}
Conversely, when a CPTP map $\Lambda:\calB(\calH)\rightarrow \calB(\calH)$ can be realized by $(V,\sigma_E)$ satisfying \eqref{eq:covariant_unitary_general_strong}  and \eqref{eq:covariant_unitary_general_strong2}, the map is $(U,U)$-strong covariant.
\end{theorem}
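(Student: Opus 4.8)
The plan is to prove both directions by pinning down the Kraus structure that strong covariance enforces. The converse (realizability implies strong covariance) is a direct computation: given data $(V,\sigma_E)$ with $(U_g\otimes I^{(E)})V=V(U_g\otimes I^{(E)})$, I would extend the defining formula \eqref{eq:covariant_unitary_general_strong2} from states to all of $\calB(\calH)$ by linearity, write $\Lambda(U_g X)=\Tr_E[V\,(U_g\otimes I^{(E)})(X\otimes\sigma_E)\,V^\dagger]$, commute $U_g\otimes I^{(E)}$ through $V$ using \eqref{eq:covariant_unitary_general_strong}, and pull $U_g$ outside the partial trace, which is legitimate because $U_g$ acts on the system factor only. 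This yields $\Lambda(U_g X)=U_g\Lambda(X)$ for all $X$, i.e.\ $(U,U)$-strong covariance, for an arbitrary (possibly mixed) $\sigma_E$.

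For the forward direction, the key structural step is to show that strong covariance forces the Kraus operators into the commutant of the representation. I would fix a \emph{minimal} Kraus representation $\Lambda(X)=\sum_\alpha K_\alpha X K_\alpha^\dagger$, in which the $\{K_\alpha\}$ are linearly independent. Writing $\Lambda(U_g X)=U_g\Lambda(X)$ in Kraus form gives $\sum_\alpha [K_\alpha,U_g]\,X\,K_\alpha^\dagger=0$ for every $X\in\calB(\calH)$. Vectorizing this relation (equivalently, testing against $X=\ket{k}\bra{l}$) identifies it with the single operator identity $\sum_\alpha [K_\alpha,U_g]\otimes\overline{K_\alpha}=0$; since $\{K_\alpha\}$—and hence $\{\overline{K_\alpha}\}$—are linearly independent, applying dual functionals in the second tensor slot forces $[K_\alpha,U_g]=0$ for every $\alpha$ and every $g\in G$. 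Thus each $K_\alpha$ lies in the commutant $\bigoplus_\nu I_{\calH_\nu}\otimes\calB(\bb{C}^{m_\nu})$.

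Given commuting Kraus operators, I would build the dilation in the standard way. Introduce $\mathcal{H}_E=\bb{C}^{d}$ ($d$ the number of Kraus operators) with orthonormal basis $\{\ket{e_\alpha}\}$ and a reference vector $\ket{0}$, and define the Stinespring isometry $V_0\ket{\psi}=\sum_\alpha K_\alpha\ket{\psi}\otimes\ket{e_\alpha}$ on $\calH\cong\calH\otimes\ket{0}$; it is an isometry because $\sum_\alpha K_\alpha^\dagger K_\alpha=\mathbf 1$, and since each $K_\alpha$ commutes with $U_g$ it intertwines $U_g\otimes I^{(E)}$. Taking $\sigma_E=\ket{0}\bra{0}$ then reproduces $\Lambda(\rho)=\Tr_E[V_0(\rho\otimes\sigma_E)V_0^\dagger]$. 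The remaining task is to extend the partial isometry $V_0$ to a genuine \emph{unitary} $V$ on $\calH\otimes\mathcal{H}_E$ that still commutes with $U_g\otimes I^{(E)}$.

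This equivariant extension is the step I expect to be the main obstacle, and it is where compactness of $G$ is essential. The source of the extension is $\calH\otimes\ket{0}^\perp$ and the target is $(\mathrm{ran}\,V_0)^\perp$, both regarded as subrepresentations of $U\otimes I^{(E)}$ on $\calH\otimes\mathcal{H}_E$. By complete reducibility of finite-dimensional representations of a compact group, these complements are isomorphic as $G$-representations—each is unitarily equivalent to $(d-1)$ copies of $U$, since $\mathrm{ran}\,V_0\cong U$ while the total space carries $d$ copies of $U$—so a $G$-intertwining unitary between them exists. Defining $V$ to agree with $V_0$ on $\calH\otimes\ket{0}$ and with this intertwiner on the orthogonal complement produces a unitary commuting with $U_g\otimes I^{(E)}$ that realizes $\Lambda$, completing the proof. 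Equivalently, one can argue intrinsically: because every $K_\alpha$ lies in $\bigoplus_\nu I_{\calH_\nu}\otimes\calB(\bb{C}^{m_\nu})$, the channel acts only on the multiplicity spaces, on which $G$ acts trivially, so the ordinary symmetry-free unitary dilation on those factors automatically commutes with $U_g\otimes I^{(E)}$.
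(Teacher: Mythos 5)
Your proof is correct, but it takes a genuinely different route from the paper's at both key steps, so a comparison is in order. (The converse direction is the same direct computation in both.) For the Kraus-commutation step, the paper first establishes (as its separate Kraus-representation theorem) that \emph{every} Kraus representation of a strong covariant map intertwines the symmetry, by taking a \emph{minimal Stinespring} dilation of $\Lambda^\dagger$ and using the spanning property of minimality to extract $U_g^\dagger W^\dagger = W^\dagger (U_g^\dagger\otimes 1_{E'})$, then setting $K_m=\bra{m}W$; your vectorization argument $\sum_\alpha [K_\alpha,U_g]\otimes\overline{K_\alpha}=0$ with a linearly independent Kraus family is a more elementary route that yields commutation only for the chosen minimal family, which is all the dilation needs. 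The larger divergence is in completing the Stinespring isometry to a unitary. The paper never invokes representation theory here: it sets $W'=\sum_{m=1}^{d_\Lambda} K_m\otimes\ket{m}\bra{0}_E$ on a $(d_\Lambda+1)$-dimensional environment, checks $W'W'=0$ and $QW'=W'Q=0$ for the projection $Q=I_{SE}-W'^\dagger W'-W'W'^\dagger$, and takes the Halmos-type completion $V=W'+W'^\dagger+Q$, which is unitary and commutes with $U_g\otimes I_E$ simply because each summand does. You instead identify the two deficiency subspaces $\calH\otimes\ket{0}^\perp$ and $(\mathrm{ran}\,V_0)^\perp$ as $G$-subrepresentations, each equivalent to $U^{\oplus(d-1)}$, and invoke complete reducibility of compact-group representations to obtain an equivariant unitary between them. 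Both arguments are sound in the paper's finite-dimensional setting, but note that the step you single out as ``where compactness of $G$ is essential'' is precisely the one place the paper's argument is compactness-free: $V=W'+W'^\dagger+Q$ works verbatim for arbitrary (including non-compact or projective) symmetry actions, at the mild cost of one extra environment dimension, whereas your construction buys a cleaner structural picture---the Kraus operators live in the commutant $\bigoplus_\nu I_{\calH_\nu}\otimes\calB(\bb{C}^{m_\nu})$, so the dilation can be performed sector-by-sector on the multiplicity spaces---at the cost of genuinely using complete reducibility.
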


By comparing this theorem with Theorem \ref{thm:SD_weak_StoS'}, one can gain an intuitive understanding of the difference between weak covariant and strong covariant operations. 
As an illustrative example, consider a thermodynamic process involving multiple conserved quantities. 
Suppose that the system of interest has both particle number $N$ and energy $E$ as conserved quantities, but that it exchanges only energy $E$ with a heat bath, while no exchange of particle number occurs. 
Assume that the heat bath is prepared in a Gibbs state with respect to its Hamiltonian $H_B$, which is a weak symmetric state. 

In such a setting, Theorem \ref{thm:SD_strong_StoS} guarantees that the CPTP map induced on the system $S$ is strong covariant with respect to the unitary representation $e^{iN_S\theta}$ generated by the particle-number operator $N_S$. 
At the same time, it is only weak covariant with respect to the unitary representation $e^{iH_St}$ generated by the Hamiltonian, reflecting the fact that energy is exchanged with the heat bath (Figure \ref{fig_st_imp}).

\begin{figure}[t]
\begin{center}
\includegraphics[width=.45\textwidth]{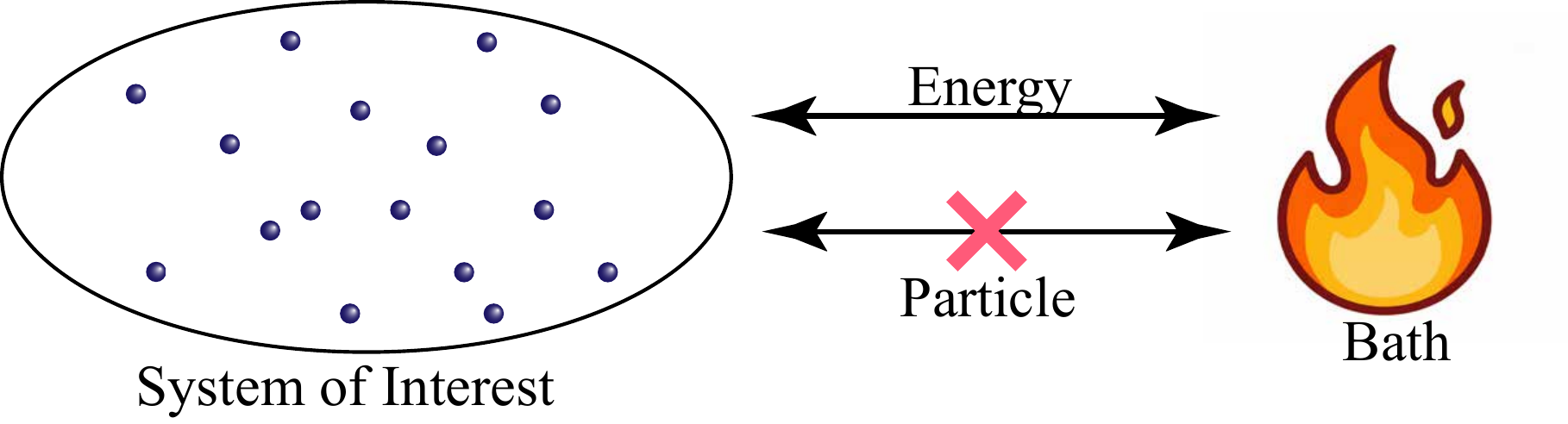}
\caption{Schematic diagram of physical realization of operations that are strong covariant with respect to particle number
and weak covariant with respect to energy.
The system exchanges energy with an external heat bath, while no particles are exchanged.
}
\label{fig_st_imp}
\end{center}
\end{figure}

Furthermore, the Kraus operators of a strong covariant operation always commute with the unitary representation $U_g$ of the symmetry. 
This statement can be established by the following theorem:
\begin{theorem}\label{thm:Kraus_strong}
Let $G$ be a (compact) group, and let $U$ and $U'$ be unitary representations acting on Hilbert spaces $\calH$ and $\calH'$, respectively.
Let $\Lambda:\calB(\calH)\rightarrow \calB(\calH')$ be a $(U,U')$-strong covariant CPTP map.
Then, any Kraus representation $\{K_m\}$ of $\Lambda$ satisfies
\eq{
    K_mU_g=U'_gK_m,\enskip \forall g\in G.\label{eq:K_comm}
}
Conversely, when a CPTP map $\Lambda:\calB(\calH)\rightarrow \calB(\calH')$ has a Kraus representation satisfying \eqref{eq:K_comm}, it is $(U,U')$-strong covariant.
\end{theorem}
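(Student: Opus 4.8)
The converse is the easy direction, so I would dispatch it first: if a Kraus family $\{K_m\}$ of $\Lambda$ satisfies $K_mU_g=U'_gK_m$, then for every $X$ and $g$ one has $\Lambda(U_gX)=\sum_m K_mU_gXK_m^\dagger=\sum_m U'_gK_mXK_m^\dagger=U'_g\Lambda(X)$, which is exactly $(U,U')$-strong covariance. The forward direction carries all the content. Writing $\Lambda(X)=\sum_m K_mXK_m^\dagger$, strong covariance gives $\sum_m(K_mU_g-U'_gK_m)XK_m^\dagger=0$ for all $X$. I want to stress at the outset that this identity does \emph{not} by itself force each $K_mU_g-U'_gK_m$ to vanish: if, say, two Kraus operators coincide, the bracketed terms can cancel in the sum while being individually nonzero. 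A term-by-term argument is therefore unavailable, and one must bring in the global positivity structure of the channel.

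My plan (working in finite dimension, per the standing assumption of this section) is to pass to the Choi matrix $J:=\sum_{ij}\Lambda(\ket{i}\bra{j})\otimes\ket{i}\bra{j}\in\calB(\calH'\otimes\calH)$, where the second factor is a reference copy of $\calH$. A short computation---expanding $U_g\ket{i}$ in the basis and recognizing $\sum_i(U_g)_{ki}\ket{i}=U_g^{T}\ket{k}$ on the reference---shows that the Choi matrices of the maps $X\mapsto\Lambda(U_gX)$ and $X\mapsto U'_g\Lambda(X)$ are $(I'\otimes U_g^{T})J$ and $(U'_g\otimes I)J$ respectively, where $U_g^{T}$ is the transpose in the chosen basis acting on the reference factor. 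Since the Choi correspondence is a bijection between linear maps and matrices, strong covariance is then equivalent to the single operator identity $(U'_g\otimes I-I'\otimes U_g^{T})\,J=0$ for all $g\in G$.

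The decisive step is to exploit $J\ge0$ (complete positivity). For any Kraus representation the vectors $\ket{v_m}:=(K_m\otimes I)\ket{\Omega}$, with $\ket{\Omega}=\sum_i\ket{i}\otimes\ket{i}$, satisfy $J=\sum_m\ket{v_m}\bra{v_m}$, whence $\mathrm{range}(J)=\mathrm{span}\{\ket{v_m}\}$ and in particular every $\ket{v_m}$ lies in $\mathrm{range}(J)$. The operator identity above says precisely that $\mathrm{range}(J)\subseteq\ker(U'_g\otimes I-I'\otimes U_g^{T})$, so each $\ket{v_m}$ obeys $(U'_g\otimes I)\ket{v_m}=(I'\otimes U_g^{T})\ket{v_m}$. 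Expanding both sides in the reference basis turns the left-hand side into $\sum_i U'_gK_m\ket{i}\otimes\ket{i}$ and the right-hand side into $\sum_i K_mU_g\ket{i}\otimes\ket{i}$; matching coefficients of the orthonormal $\ket{i}$ yields $U'_gK_m=K_mU_g$ for every $m$. Because this only used $\ket{v_m}\in\mathrm{range}(J)$, it applies verbatim to any Kraus representation, including non-minimal ones, which is what the statement demands.

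I expect the main obstacle to be exactly the passage from the single relation $(\cdots)J=0$ to a statement about each Kraus operator individually: the naive Kraus-level identity is genuinely too weak, and the whole point is that positivity of $J$ collapses all Kraus vectors into one common support subspace on which the intertwining must hold. The Choi computation and the transpose bookkeeping are routine once the setup is fixed, so the crux is recognizing and deploying this support/positivity mechanism.
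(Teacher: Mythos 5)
Your proof is correct, and it takes a genuinely different route from the paper's. The paper works with the \emph{minimal} Stinespring dilation of $\Lambda^\dagger$: writing $\Lambda^\dagger(Y)=W^\dagger(Y\otimes 1_{E'})W$, it uses the minimality (density) property of the dilation to upgrade the operator identity $U_g^\dagger W^\dagger(Y\otimes 1)W=W^\dagger(U_g'^\dagger Y\otimes 1)W$ to the intertwining relation $WU_g=(U_g'\otimes 1_{E'})W$, reads off one canonical Kraus family $K_m=\bra{m}W$ obeying \eqref{eq:K_comm}, and then covers \emph{arbitrary} Kraus representations by invoking the standard theorem that any two Kraus families of the same channel are related by an isometric mixing $K'_l=\sum_m u_{l,m}K_m$. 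Your Choi-matrix argument reaches the same conclusion in one pass: the equivalence of strong covariance with $(U'_g\otimes I-I\otimes U_g^{T})J=0$ is correct (your transpose bookkeeping checks out, including the identity $(I\otimes U_g^{T})\ket{\Omega}=(U_g\otimes I)\ket{\Omega}$), and since every decomposition $J=\sum_m\ket{v_m}\bra{v_m}$ with $\ket{v_m}=(K_m\otimes I)\ket{\Omega}$ has each $\ket{v_m}\in\ker(J)^\perp=\mathrm{range}(J)$, the intertwining holds for \emph{every} Kraus representation directly, with no appeal to the unitary-equivalence theorem or to dilation minimality. You also correctly flag, as the crux, that the naive Kraus-level identity $\sum_m(K_mU_g-U'_gK_m)XK_m^\dagger=0$ is too weak term by term; positivity of $J$ is what collapses all Kraus vectors onto a common support where the relation must hold. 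What each approach buys: yours is more self-contained and handles non-minimal Kraus families uniformly; the paper's yields the symmetric Stinespring isometry as a byproduct, which it then reuses to build the unitary dilation in Theorem~\ref{thm:SD_strong_StoS}. Both are confined to finite dimensions, consistent with the paper's standing assumption.
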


This is consistent with the definition of strong symmetry in cond-mat literature, see e.g.\ \cite{Lessa2024,Sala2024,Zapusek2025}, in particular, sec. $II.B$ of \cite{Sala2024}.
As a consequence of Theorem \ref{thm:Kraus_strong}, it follows that if the time evolution generated by the Gorini–Kossakowski–Sudarshan–Lindblad (GKSL) equation is strong covariant, then all jump operators necessarily commute with the symmetry:
\begin{theorem}
Let $S$ be a quantum system whose dynamics obeys the following GKSL equation:
\eq{
\partial_t \rho=-i[H,\rho]+\sum_k\left(L_k\rho L^\dagger_k-\frac{1}{2}\{L^\dagger_kL_k,\rho\}\right).
}
Let $G$ and $U$ be a group and its (projective) unitary representation on $\calH$, the Hilbert space of $S$.
We also assume that the CPTP map $\Lambda_t:\calB(\calH)\rightarrow\calB(\calH)$ realized by the master equation is $(U,U)$-strong covariant for any $t$, and that $[H,U_g]=0$.
Then, any jump operators $\{L_k\}$ of the GKSL equation satisfy
\eq{
[L_k,U_g]=0, \enskip\forall g\in G.\label{com_GKSL}
}
\end{theorem}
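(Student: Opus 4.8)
The plan is to push the strong covariance from the finite-time channels $\Lambda_t$ down onto the GKSL generator and then read off the constraint on the jump operators. The quickest heuristic uses Theorem~\ref{thm:Kraus_strong} directly: for infinitesimal $t$ the flow admits the Kraus operators $K_0 = I - t\!\left(iH + \tfrac12\sum_k L_k^\dagger L_k\right)$ and $K_k = \sqrt{t}\,L_k$ to leading order. Since every $\Lambda_t$ is $(U,U)$-strong covariant, Theorem~\ref{thm:Kraus_strong} forces each Kraus operator to commute with $U_g$; reading off the $\sqrt{t}$ piece of $[K_k,U_g]=0$ gives $[L_k,U_g]=0$, while $[K_0,U_g]=0$ combined with $[H,U_g]=0$ yields the consistent companion relation $[\sum_k L_k^\dagger L_k,U_g]=0$.

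To make this rigorous—and to bypass the fact that the above $K_m$ reproduce $\Lambda_t$ only to leading order in $t$—I would argue at the level of the generator. Differentiating the covariance identity $\Lambda_t(U_g X)=U_g\Lambda_t(X)$ at $t=0$ gives generator covariance, $\calL(U_gX)=U_g\calL(X)$ for all $X$. The Hamiltonian part is automatically covariant because $[H,U_g]=0$, so the content reduces to the purely dissipative identity
\begin{equation}
\sum_k [L_k,U_g]\,X\,L_k^\dagger=\tfrac12\,[C,U_g]\,X\qquad\forall X,
\end{equation}
with $C:=\sum_k L_k^\dagger L_k$. Vectorizing with $AXB\leftrightarrow(A\otimes B^{T})\,\mathrm{vec}\,X$ converts this into the single operator equation $\sum_k[L_k,U_g]\otimes\overline{L_k}=\tfrac12[C,U_g]\otimes I$ on $\calH\otimes\calH$.

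I would then extract the result by working in the canonical gauge in which the jump operators are traceless and linearly independent. Tracing out the second tensor factor kills the left-hand side (each $L_k$ is traceless), leaving $\tfrac12(\dim\calH)[C,U_g]=0$, hence $[C,U_g]=0$; the vectorized identity therefore collapses to $\sum_k[L_k,U_g]\otimes\overline{L_k}=0$, and linear independence of $\{\overline{L_k}\}$ forces $[L_k,U_g]=0$ for every $k$. Finally, because the property ``all jump operators commute with $U_g$'' is invariant under the full GKSL reparametrization freedom—unitary mixing $L_k\mapsto\sum_j u_{kj}L_j$ and identity shifts $L_k\mapsto L_k+c_kI$ both leave the commutators $[L_k,U_g]$ unchanged—the conclusion holds for every admissible choice of jump operators, in particular the one in the statement.

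The step I expect to be the main obstacle is the non-uniqueness of the GKSL representation, which enters through the identity component of the jump operators: the vectorized equation yields $[L_k,U_g]=0$ cleanly only once the $\overline{L_k}$ are linearly independent and carry no $I$ admixture, and the traceless gauge required for the contraction must be reconciled with the hypothesis $[H,U_g]=0$, since shifting to a traceless gauge can redistribute the symmetry between $H$ and the $L_k$. This is exactly what the gauge-invariance argument settles—one establishes the commutation in the canonical gauge and transports it back—so no genuine obstruction survives. As a byproduct the same contraction returns $[C,U_g]=0$, the companion relation already anticipated by the short-time Kraus computation.
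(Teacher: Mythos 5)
Your proposal is correct, but it takes a genuinely different route from the paper's. The paper's entire proof \emph{is} your first, ``heuristic'' paragraph: it applies Theorem~\ref{thm:Kraus_strong} to the infinitesimal-time channel $\Lambda_{dt}$ with the Kraus family $K_{0,dt}=I-\bigl(iH+\tfrac12\sum_k L_k^\dagger L_k\bigr)dt$, $K_{k,dt}=\sqrt{dt}\,L_k$, and reads off $[L_k,U_g]=0$ from the $\sqrt{dt}$ piece---glossing over exactly the point you flag, namely that this family reproduces $e^{dt\,\calL}$ only to leading order in $dt$ and is therefore not literally an exact Kraus representation. Your generator-level argument (differentiate the covariance identity to get $\calL(U_gX)=U_g\calL(X)$, vectorize, pass to the traceless and linearly independent canonical gauge, transport back through the affine gauge freedom $L_k\mapsto\sum_j u_{kj}L'_j+c_kI$) closes this gap rigorously, at the cost of invoking the GKSL canonical-form structure theory; as a bonus, the transport-back step gives an actual proof of the paper's remark after the theorem that \eqref{com_GKSL} holds for \emph{every} admissible choice of jump operators, which the paper's short proof does not obviously deliver.

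One step to tighten. Your trace-out-the-second-factor contraction uses the dissipator-only identity $\sum_k[L_k,U_g]XL_k^\dagger=\tfrac12[C,U_g]X$, which you derived in the \emph{original} gauge using $[H,U_g]=0$; after passing to the canonical gauge, the compensated Hamiltonian $H'$ need not commute with $U_g$ a priori, so that identity is not available there, and gauge invariance of the \emph{conclusion} does not by itself restore the \emph{premise}. The clean fix is already implicit in your linear-independence remark: in the canonical gauge keep the full covariance identity, whose vectorization reads
\begin{equation}
\Bigl(-i[H',U_g]-\tfrac12[C',U_g]\Bigr)\otimes I+\sum_k[L'_k,U_g]\otimes\overline{L'_k}=0,
\end{equation}
and note that $\{I\}\cup\{\overline{L'_k}\}$ is linearly independent because the $L'_k$ are traceless and linearly independent. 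Every coefficient then vanishes, giving $[L'_k,U_g]=0$ with no assumption on $H'$ (and, a posteriori, $[C',U_g]=0$ and hence $[H',U_g]=0$), after which your transport-back argument finishes the proof.
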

Although there may exist multiple sets of jump operators $\{L_k\}$ that generate the same time evolution, \eqref{com_GKSL} holds for any such choice.

\subsection{Relation between strong covariant and weak covariant}
We make several remarks on the relation between the strong covariant operations and the weak covariant operations. First, by definition, any strong covariant operation is also weak covariant.
Second, the converse is not true. Two illustrative examples of weak covariant operations which are not always strong covariant are as follows:
\begin{description}
\item[The partial trace]{The partial trace is not always a strong covariant operation. 
As an example, consider two qubits $S$ and $S'$, and equip the composite system with a $U(1)$ symmetry represented by
\begin{equation}
U_t := e^{-i (H_S + H_{S'}) t},
\end{equation}
where $H_S = \ket{1}\bra{1}_S$ and $H_{S'} = \ket{1}\bra{1}_{S'}$.
In this setting, the state
\begin{equation}
\ket{\psi}_{SS'} := \frac{1}{\sqrt{2}}(\ket{01} + \ket{10})
\end{equation}
is a strong symmetric state.
However, tracing out $S'$ yields the reduced state
\begin{equation}
    \rho_S = \frac{1}{2}\bigl(\ket{0}\bra{0} + \ket{1}\bra{1}\bigr),
\end{equation}
which is not strong symmetric.
Therefore, in this example, taking the partial trace over $S'$ is not a strong covariant operation.

We stress that, sometimes the partial trace is strong covariant. Indeed, for two Hilbert spaces $\calH$ and $\calH_E$, let us assume that $U:G\rightarrow U(\calH\otimes\calH_E)$ and $U':G\rightarrow U(\calH)$ satisfy
\eq{
U_g&=V_g\otimes I_E\\
U'_g&=V_g,
}
where $V:G\rightarrow U(\calH)$ is a unitary representation acting on $\calH$.
Then, we can show that the partial trace $\Tr_E$ is $(U,U')$-strong covariant as follows:
\eq{
Tr_E[U_g(\cdot)]&=V_g\sum_j\bra{j}_E(\cdot)\ket{j}_E\nonumber\\
&=U'_g\Tr_E[\cdot].
}
Therefore, the partial trace taken in Theorem \ref{thm:SD_strong_StoS} is a free (=strong covariant) operation.
} 
\item[Appending a strong symmetric state]{Appending a strong symmetric state may also fail to be a strong covariant operation, depending on the situation. 
This can be seen by revisiting the two-qubit example above with systems $S$ and $S'$. 
Define a map $\Lambda:S\to SS'$ by
\begin{equation}
\Lambda(\cdot)= (\cdot)\otimes \ket{1}\bra{1}_{S'} .
\end{equation}
Then,
\begin{align}
e^{-i(H_S+H_{S'})t}\,\Lambda(\ket{0}\bra{0})&={\color{black}e^{-i(H_S+H_{S'})t}\ket{0}\bra{0}_S\otimes\ket{1}\bra{1}_{S'}}\nonumber\\
&= e^{-it}\,\Lambda(\ket{0}\bra{0})\nonumber\\
&\ne \Lambda\!\left(e^{-iH_S t}\ket{0}\bra{0}\right),
\end{align}
and hence $\Lambda$ is not strong covariant.

We stress that, as the partial trace, sometimes the map appending a strong symmetric state is strong covariant. Indeed, for two Hilbert spaces $\calH$ and $\calH_E$, let us assume that $U:G\rightarrow U(\calH)$ and $U':G\rightarrow U(\calH\otimes\calH_E)$ satisfy
\eq{
U_g&=V_g\\
U'_g&=V_g\otimes I_E,
}
where $V:G\rightarrow U(\calH)$ is a unitary representation acting on $\calH$.
Now, let us define a CPTP map $\Lambda_{\mathrm{add}}:\calB(\calH)\rightarrow\calB(\calH\otimes\calH_E)$ as $\Lambda_{\mathrm{add}}(\cdot):=\cdot\otimes\sigma_E$, where $\sigma_E$ is an arbitrary state $\sigma_E\in\calS(\calH_E)$, which is a strong symmetric state with respect to the trivial representation on $\calH_E$. 
Then, we can easily see that 
\eq{
\Lambda_{\mathrm{add}}(U_g\cdot)&=(V_g\cdot)\otimes\sigma_E\nonumber\\
&=U'_g\Lambda_{\mathrm{add}}(\cdot).
}
Therefore, the operation appending $\sigma_E$ which is taken in Theorem \ref{thm:SD_strong_StoS} is a free (=strong covariant) operation.
}
\end{description}

These operations are usually regarded as free operations in standard resource theories,
even though they are not included in the minimal requirements for free states and free operations.
As a result, readers familiar with resource-theoretic frameworks may find this somewhat unnatural.
However, this behavior is a direct consequence of the required property of strong covariant operations: 
namely, that the channel does not exchange any conserved quantity with the external environment. 
This is precisely the property of ``strong symmetric-preserving'' dynamics often emphasized in previous studies of strong symmetry.

Indeed, the two operations discussed above--the partial trace and the appending of a strong symmetric state--both
create an opportunity either to leak conserved quantities to an external system or to inject conserved quantities
from the outside into the system of interest.
For this reason, neither operation qualifies as strong covariant under our definition.

The restriction that forbids any opportunity for leakage or injection of conserved quantities is very strong, 
and it imposes several additional constraints on strong covariant operations beyond those discussed above.
For example, when a Hilbert space $\calH$ has a larger dimension than another Hilbert space $\calH'$, 
a strong covariant operation from $\calB(\calH)$ to $\calB(\calH')$ does not necessarily exist. 
An illustrative example is in the case of $U(1)$ symmetry, as shown in the following lemma:
\begin{lemma}\label{lemm:no_exist}
Let $\calH$ and $\calH'$ be Hilbert spaces whose dimensions are $d$ and $d'$, respectively, 
and assume $d > d'$. 
Let $U:U(1)\rightarrow U(\calH)$ and $U':U(1)\rightarrow U(\calH')$ be unitary representations which are defined as $U_t:=e^{-iHt}$ and $U'_t:=e^{-iH't}$, respectively. 
If $H$ has strictly more distinct eigenvalues than $H'$, then there exists no $(U,U')$-strong covariant operation 
$\Lambda: \calB(\calH) \rightarrow \calB(\calH')$.
\end{lemma}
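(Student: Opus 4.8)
The plan is to convert the nonexistence claim into an obstruction on Kraus operators, using the characterization of strong covariant maps in Theorem~\ref{thm:Kraus_strong}. Suppose, for contradiction, that a $(U,U')$-strong covariant CPTP map $\Lambda:\calB(\calH)\to\calB(\calH')$ exists. By Theorem~\ref{thm:Kraus_strong}, any Kraus representation $\{K_m\}$ of $\Lambda$ satisfies the intertwining relation $K_m U_t = U'_t K_m$, i.e. $K_m e^{-iHt}=e^{-iH't}K_m$ for all $t$. Differentiating at $t=0$ (equivalently, acting on an $H$-eigenvector and matching phases for all $t$) yields the operator equation $H'K_m = K_m H$ for every $m$, using that $H$ and $H'$ are the Hermitian generators.

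First I would read off the spectral content of this relation. If $\ket{\psi}$ is an eigenvector of $H$ with eigenvalue $\lambda$, then $H'(K_m\ket{\psi}) = K_m(H\ket{\psi}) = \lambda\,(K_m\ket{\psi})$, so $K_m\ket{\psi}$ is either zero or an eigenvector of $H'$ with the same eigenvalue $\lambda$. Hence each $K_m$ maps the $\lambda$-eigenspace of $H$ into the $\lambda$-eigenspace of $H'$, and in particular it annihilates any eigenspace of $H$ whose eigenvalue does not lie in the spectrum of $H'$.

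The key step is to produce a ``missing'' eigenvalue. Writing $n_H$ and $n_{H'}$ for the numbers of distinct eigenvalues of $H$ and $H'$, the hypothesis $n_H>n_{H'}$ forces at least one eigenvalue $\lambda_0$ of $H$ to lie outside the spectrum of $H'$: the eigenvalues of $H$ that also belong to $\mathrm{spec}(H')$ form a subset of $\mathrm{spec}(H')$ and so number at most $n_{H'}<n_H$. By the previous paragraph, every Kraus operator then obeys $K_m P_{\lambda_0}=0$, where $P_{\lambda_0}$ is the spectral projection of $H$ onto its $\lambda_0$-eigenspace.

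Finally I would invoke trace preservation. The completeness relation $\sum_m K_m^\dagger K_m = I_\calH$, evaluated on any nonzero $\ket{\psi}$ in the $\lambda_0$-eigenspace, gives $\sum_m \|K_m\ket{\psi}\|^2 = \braket{\psi|\psi}>0$; yet each summand vanishes because $K_m\ket{\psi}=K_m P_{\lambda_0}\ket{\psi}=0$, a contradiction. Therefore no $(U,U')$-strong covariant operation exists. I expect no genuinely hard step here beyond bookkeeping; the only points needing care are the passage from the one-parameter intertwining relation to $H'K_m=K_m H$ (immediate upon differentiation), and the observation that the operative assumption is the eigenvalue count, the dimension inequality $d>d'$ playing no direct role in the argument.
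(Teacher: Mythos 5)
Your proof is correct, and it takes a different route from the paper's. The paper argues directly at the channel level: it picks an eigenvalue $E$ of $H$ absent from $\mathrm{spec}(H')$, applies strong covariance to the eigenprojector to get $e^{-iH't}\Lambda(\ket{E}\bra{E})=e^{-iEt}\Lambda(\ket{E}\bra{E})$, differentiates to obtain $H'\Lambda(\ket{E}\bra{E})=E\,\Lambda(\ket{E}\bra{E})$, and then extracts an eigenvector of the (nonzero, trace-one) output state to conclude $E\in\mathrm{spec}(H')$, a contradiction. You instead descend to the Kraus level via Theorem~\ref{thm:Kraus_strong}, derive the intertwining relation $H'K_m=K_mH$, show each $K_m$ annihilates the eigenspace of the missing eigenvalue $\lambda_0$, and contradict the completeness relation $\sum_m K_m^\dagger K_m=I$ on that eigenspace. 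The trade-off: the paper's argument is fully self-contained, using only the definition \eqref{eq:s-covariance_def} and elementary differentiation, whereas yours imports Theorem~\ref{thm:Kraus_strong} (whose proof runs through a minimal Stinespring dilation); in exchange, your argument exposes finer structure — every Kraus operator maps each $H$-eigenspace into the matching $H'$-eigenspace, so the charge distribution is preserved operator-by-operator — which is the same mechanism underlying Eq.~\eqref{eq_SM:preserve_probability} and generalizes beyond this lemma. Your closing remark is also accurate and matches the paper: the dimension hypothesis $d>d'$ is never used in either proof; the operative assumption is solely the count of distinct eigenvalues.
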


\subsection{Resource measures}\label{sec:RM}
In this subsection, we introduce resource measures of strong symmetry breaking, which is one of the main goals of this paper.
Unless otherwise stated, in this section we treat the strong symmetric states as the free states. 
When the single-sector states are taken as the free states, it will be explicitly mentioned.

Since the symmetry depends on the choice of the (projective) unitary representation, 
all quantities introduced in this section are functions of the representation. 
For simplicity, we omit the explicit dependence on the unitary representation 
whenever it is clear from the context, and specify it only when necessary. 
In particular, we write a measure $M$ as $M(\rho)$ when the representation is omitted, 
and write it as $M(\rho\,\|U)$ when the representation is explicitly indicated, e.g. $\Ssa(\rho\|U)$, $L(\rho\|U)$, etc.

\subsubsection{Measures for general-group symmetry}
In this subsection, we introduce two measures which are applicable to a general group $G$.

\begin{definition}[Entanglement asymmetry of strong symmetry breaking]
Let $G$ be a group, and let $U$ be its (projective) unitary representation acting on a system $S$.
We assume that $G$ has Haar measure $\int_G dg=1$ and that $U$ has the irreducible decomposition 
\eq{
U_g = \bigoplus_{\nu} U_g^{(\nu)} \otimes I_{m_\nu}.\label{eq:decom_U}
}
Then, we define $\Ssa(\rho)$ as
\eq{
\Ssa(\rho)&:=H\{p_\nu(\rho)\}+S(\ca{G}(\rho))-S(\rho),\nonumber\\
&=H\{p_\nu(\rho)\}+\Sa(\rho)
}
where
\eq{
S(\rho)&:=-\Tr[\rho\log\rho],\\
\ca{G}(\rho)&:=\int_Gdg U_g\rho U^\dagger_g,\\
p_{\nu}(\rho)&:=\Tr[P_\nu\ca{G}(\rho)]=\Tr[P_\nu\rho],\\
H\{p_\nu\}&:=-\sum_{\nu}p_\nu\log p_{\nu}\,,
}
and $P_\nu$ is the projection to $\calH_{\nu}\otimes \mathbb{C}^{m_\nu}$ in the decomposition of $\calH$ under $U$:
\eq{
\calH = \bigoplus_{\nu} \calH_{\nu} \otimes \mathbb{C}^{m_\nu}.
}
\end{definition}

\medskip
\noindent
\textbf{Examples:}
{\color{black}
Let us consider the one qubit example, mentioned earlier:
$$\rho=\frac{1}{2}\left(\mathbf{1}+r_xX+r_yY+r_zZ\right)\,.$$
For $\mathbb{Z}_2$ or $U(1)$, generated by $Z$, we have 
$$\mathcal{G}(\rho)=\frac{1}{2}\left(\mathbf{1}+r_zZ\right)\,,$$
so that we have 
\begin{equation}
   \begin{aligned}p_\pm &=(1\pm r_z)/2\,,\\
   \Ssa^{\mathbb{Z}_2/U(1)}(\rho)&=\log 2-2f(r_z)+f(|\vec{r}|) \,,  \end{aligned}
\end{equation}
with 
$f(x)=\frac{1+x}{2}\log (1+x)+\frac{1-x}{2}\log (1-x)$. Note that 
\begin{equation}
   0\leq x\leq y \Leftrightarrow \frac{f(x)}{x^2}\leq \frac{f(y)}{y^2}\,,
\end{equation}
(the proof of the inequality can be found in the appendix, in particular, see \eqref{eq0},\eqref{eq1},\eqref{ineq1}) from which it follows that
\begin{equation}
\begin{aligned}
  \Ssa^{\mathbb{Z}_2/U(1)}(\rho)= & \log 2-2 f(r_z)+f(|\vec{r}|)\\
   &\geq \log2- f(r_z)+f(|\vec{r}|)\left(1-\frac{r_z^2}{|\vec{r}|^2}\right)\geq 0\,,
\end{aligned}
\end{equation}
The first inequality above is saturated if $r_x=r_y=0$ and the second inequality is saturated if $r_z=1$ (and by evenness if $r_z=-1$). The Bloch vector $\vec{r}=(0,0,\pm 1)$ precisely corresponds to strong symmetric states under $\mathbb{Z}_2$ and $U(1)$.

Hence, $\Ssa^{\mathbb{Z}_2/U(1)}(\rho)$ , in this example, is faithful i.e.\ it is zero iff the state is strong-symmetric. This is indeed the case when the group is Abelian. However, as noted in the next example and the theorem, the resource monotone is not faithful when the symmetry group is non-abelian.

For $SU(2)$, let us consider the same density matrix $\rho$, now the relevant irrep is spin-$1/2$ representation and $\mathcal{G}(\rho)=\frac{1}{2}\mathbf{1}$, resulting in $p_{\nu}=1$, $H\{p_\nu(\rho)\}=0$ and 
\begin{equation}
    \Ssa^{SU(2)}=f(|\vec{r}|)
\end{equation}
As mentioned previously, there is no strong-symmetric state in this case, however, $f(0)=0$, hence the resource monotone is not faithful.
}
\begin{theorem}\label{thm:strong_EA}
The quantity $\Ssa(\rho)$ satisfies the following features
\begin{description}
\item[(A)] $\Ssa(\rho)$ is a resource measure. Namely, (i-a) it is non-negative and when $\rho$ is strong symmetric, it is zero, and (i-b) when a CPTP map $\Lambda:\calB(\calH)\rightarrow\calB(\calH')$ is $(U,U')$-strong covariant, \eq{\Ssa(\rho\|U)\ge\Ssa(\Lambda(\rho)\|U')
}
\item[(B)] It is faithful when we employ $\calF_{G,\mathrm{single}}$ as free states. In other words, $\Ssa(\rho)=0$ if and only if $\rho$ is a single-sector state.
\item[(C)] It is \textit{not} necessarily faithful (=property (ii)) when we employ $\calF_{G,\mathrm{strong}}$ as free states. In other words, there exists a resource state $\rho$ (=state which is not strong symmetric) satisfying $\Ssa(\rho)=0$.
\item[(D)] When we employ collective representation, it is \textit{not} additive for product states. In other words, for two systems $A$ and $B$, there exists unitary representations $U^{A}$ and $U^{B}$ and states $\rho_A$ and $\sigma_B$ satisfying 
\eq
{&\Ssa(\rho_A\otimes\sigma_B\|U^A\otimes U^B)\nonumber\\
&\ne\Ssa(\rho_A\|U^A)+\Ssa(\sigma_B\|U^B).}
\item[(E)] When $S$ is a finite-dimension system and $G$ is a compact Lie group or a finite group, for any $\rho$ on $S$ and any representation $U$ of $G$ acting on $S$, 
\eq{
\lim_{n\rightarrow \infty}\frac{\Ssa(\rho^{\otimes n}\|U^{\otimes n})}{n}=0.
}
\end{description}
\end{theorem}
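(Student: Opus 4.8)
The five claims are essentially independent, and I would attack them by exploiting throughout the defining split $\Ssa(\rho)=H\{p_\nu(\rho)\}+\Sa(\rho)$, where the second piece is the weak relative entropy of asymmetry $\Sa(\rho)=S(\ca{G}(\rho))-S(\rho)=S(\rho\|\ca{G}(\rho))$.

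For \textbf{(A)} the cornerstone is that a strong covariant $\Lambda$ preserves the sector populations \emph{exactly}. By Theorem~\ref{thm:Kraus_strong} every Kraus operator obeys $K_m U_g=U'_g K_m$, and since the isotypic projector admits the character representation $P_\nu=d_\nu\int_G dg\,\overline{\chi_\nu(g)}\,U_g$ (and likewise $P'_\nu$ on $\calH'$), intertwining immediately gives $K_m P_\nu=P'_\nu K_m$ for all $m,\nu$; Schur's lemma guarantees the input and output labels $\nu$ are matched, with absent sectors contributing zero. Plugging this into $p_\nu(\Lambda(\rho))=\Tr[P'_\nu\sum_m K_m\rho K_m^\dagger]$ and using $\sum_m K_m^\dagger K_m=I$ yields $p_\nu(\Lambda(\rho))=p_\nu(\rho)$, so the term $H\{p_\nu\}$ is not merely monotone but invariant. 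For the second term I would use that a strong covariant map is in particular weakly $(U,U')$-covariant, whence $\Lambda\circ\ca{G}=\ca{G}'\circ\Lambda$ and the data-processing inequality give $\Sa(\Lambda(\rho)\|U')\le\Sa(\rho\|U)$; adding the invariant and the non-increasing pieces proves (i-b). Non-negativity in (i-a) is immediate from $H\{p_\nu\}\ge0$ and $\Sa(\rho)=S(\rho\|\ca{G}(\rho))\ge0$, and for strong symmetric $\rho$ one notes that $U_g\rho=e^{i\theta_g}\rho$ forces the range of $\rho$ into a one-dimensional-irrep sector, so $\{p_\nu\}$ is a point mass ($H=0$) and $\rho$ is weak symmetric ($\Sa=0$), giving $\Ssa(\rho)=0$.

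For \textbf{(B)} and \textbf{(C)}, since both summands are non-negative, $\Ssa(\rho)=0$ forces $H\{p_\nu(\rho)\}=0$ and $\Sa(\rho)=0$ simultaneously: the former holds iff $\{p_\nu\}$ is a point mass, i.e.\ $P_{\nu_0}\rho=\rho$ for a single sector, and the latter iff $\rho=\ca{G}(\rho)$, i.e.\ $\rho$ is weak symmetric. Together these are exactly the defining conditions of a single-sector state, proving the ``iff'' of (B). Claim (C) is then immediate, because by (B) the zero set of $\Ssa$ is $\calF_{G,\mathrm{single}}$, which strictly contains $\calF_{G,\mathrm{strong}}$ whenever they differ; I would simply invoke the already-computed $SU(2)$ single-qubit example, where $\Ssa^{SU(2)}(\rho)=f(|\vec r|)$ vanishes at $\rho=I/2$, a single-sector state that is not strong symmetric (indeed no strong symmetric state exists there).

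For \textbf{(D)} I would exhibit an explicit $U(1)$ counterexample with the collective (diagonal) representation. Taking a qubit with charge $Q=\ket{1}\bra{1}$ and $\rho_A=\sigma_B=\ket{+}\bra{+}$, a direct calculation gives $\Ssa(\rho_A)=2\log2$. Under the collective charge $Q_A+Q_B$ the total-charge populations of $\ket{++}$ are the convolution $(\tfrac14,\tfrac12,\tfrac14)$, so $H=\tfrac32\log2$, and after collective twirling $\Sa=\tfrac32\log2$, giving $\Ssa(\rho_A\otimes\sigma_B)=3\log2\neq4\log2$. I would state the conceptual reason: under the collective action the sector distribution of a product is the convolution of its marginals, and neither $H\{p_\nu\}$ nor $\Sa$ is additive under convolution. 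For \textbf{(E)} I would bound the two pieces separately. Writing $\ca{N}_n$ for the number of distinct irreps occurring in $U_g^{\otimes n}$, one has $H\{p_\nu(\rho^{\otimes n})\}\le\log\ca{N}_n$; for a finite group $\ca{N}_n\le|\hat G|$ is bounded, while for a compact Lie group of rank $r$ the dominant weights appearing lie in a region of linear size $O(n)$, so $\ca{N}_n=O(n^r)$, and in both cases $\log\ca{N}_n=O(\log n)$. For the second piece I would cite the known logarithmic bound $\Sa(\rho^{\otimes n})=O(\log n)$ for i.i.d.\ states. Dividing by $n$ and sending $n\to\infty$ kills both contributions, yielding (E).

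The main technical hurdle is the sector-invariance identity $p_\nu(\Lambda(\rho))=p_\nu(\rho)$ underlying (A): one must correctly combine the Kraus commutation of Theorem~\ref{thm:Kraus_strong} with the character formula for isotypic projectors, and verify via Schur's lemma that input and output sector labels are matched with vanishing populations on absent sectors. Once this is secured, the monotonicity of $\Sa$ is the standard weak-asymmetry argument, and claims (B)--(E) reduce to the non-negative split together with explicit examples and elementary representation-counting.
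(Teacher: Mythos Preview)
Your proposal is correct, and for parts (A), (B), (C) it follows essentially the same path as the paper: the split $\Ssa=H\{p_\nu\}+\Sa$, the sector-population invariance $p_\nu(\Lambda(\rho))=p_\nu(\rho)$ obtained from the character formula $P_\nu=d_\nu\int\overline{\chi_\nu(g)}U_g\,dg$ together with strong covariance, monotonicity of $\Sa$ under weak covariance, and the $SU(2)$ single-qubit example for (C).

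For (D) and (E) your route differs from the paper's and is arguably cleaner. The paper treats (D) and (E) together by first establishing, for \emph{pure} states only, the inequality $\Ssa(\psi)\le 2\Sa(\psi)$ (using $S(\ca{G}(\psi))=H\{p_\mu\}+\sum_\mu p_\mu S(\rho_\mu)\ge H\{p_\mu\}$), then invoking the known subextensivity $\Sa(\rho^{\otimes n})=o(n)$; non-additivity (D) is deduced indirectly as a consequence of (E). Your approach instead bounds $H\{p_\nu(\rho^{\otimes n})\}\le\log\ca{N}_n$ with $\ca{N}_n$ polynomial in $n$ (bounded for finite $G$), which handles (E) uniformly for \emph{all} states without the pure-state detour, and you settle (D) by an explicit $U(1)$ two-qubit computation rather than as a corollary. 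The trade-off: the paper's pure-state bound $\Ssa\le 2\Sa$ is itself a useful structural inequality linking the strong and weak measures, whereas your representation-counting argument is more elementary, self-contained, and does not leave the mixed-state case to be filled in.
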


The second one is given by the logarithmic characteristic function.
\begin{definition}[Averaged logarithmic characteristic function]
Let $G$ be a group, and let $U$ be its (projective) unitary representation acting on a Hilbert space $\calH$.
We assume that $G$ has Haar measure $\int_G dg=1$ and that $U_g$ has the irreducible decomposition \eqref{eq:decom_U}.
Then, we define $L(\rho)$ as
\eq{
L(\rho):=\int_GdgL_{g,\mathrm{strong}}(\rho),\label{def_SLCF}
}
where 
\eq{\label{eq:logCh2}
L_{g,\mathrm{strong}}(\rho):=-\log|\Tr[U_g\rho]|,
}
and when $G$ is finite, $\int_Gdg$ is replaced by $\frac{1}{|G|}\sum_g$.
\end{definition}
\medskip
\noindent
\textbf{Examples:}
{\color{black} Let us go back to the one qubit example: $\rho=\frac{1}{2}\left(\mathbf{1}+r_xX+r_yY+r_zZ\right)$. One can compute
\begin{equation}
\begin{aligned}
    L^{\mathbb{Z}_2}(\rho)&=-\frac{1}{2}\log |r_z|\,,\\
    L^{U(1)}(\rho)&=- \log \frac{1+|r_z|}{2}\,.
\end{aligned}
\end{equation}
We can see that $\rho$ is strong-symmetric state with respect to $\mathbb{Z}_2$ or/and $U(1)$  iff $r_z=\pm 1$.

Now we consider the asymmetry due to $SU(2)$. Earlier we have pointed out that the $\vec{r}=0$ state is not strong symmetric even if 
$\Ssa^{SU(2)}=0$ for this state. In contrast, now we have
\begin{equation}
    L^{SU(2)}\left(\frac{1}{2}\mathbf{1}\right)=\frac{1}{2}+\log 2>0\,,
\end{equation}
indicating this is indeed not strong-symmetric. In the next theorem, we are going to show that this resource monotone is indeed faithful.}

\begin{theorem}\label{thm:SLCF}
The quantity $L(\rho)$ satisfies the following features
\begin{description}
\item[(A)] $L(\rho)$ is a resource measure. Namely, (i-a) it is non-negative and when $\rho$ is strong symmetric, it is zero, and (i-b) when $\Lambda$ is $(U,U')$-strong covariant, the inequality $L(\rho)\ge L(\Lambda(\rho))$ holds. More precisely, $L(\rho)=L(\Lambda(\rho))$.
\item[(B)] It is faithful for $\calF_{G,\mathrm{strong}}$. In other words, $L(\rho)=0$ if and only if $\rho$ is strong symmetric.
\item[(C)] When we employ collective representation, it is additive for product states. In other words, for two systems $A$ and $B$ and (projective) unitary representations $U^{A}$ and $U^{B}$ on them, any states $\rho_A$ and $\sigma_B$ satisfy
\eq{
L(\rho_A\otimes\sigma_B\|U^A\otimes U^B)=L(\rho_A\|U^A)+L(\sigma_B\|U^B).
}

Consequently, 
\eq{
\lim_{n\rightarrow \infty} \fr{L(\rho^{\otimes n}\|U^{\otimes n})}{n}=L(\rho)
}
holds.
\end{description}
\end{theorem}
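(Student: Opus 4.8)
The plan is to reduce all three claims to properties of the single-group-element quantity $L_{g,\mathrm{strong}}(\rho)=-\log|\Tr[U_g\rho]|$ and then integrate over $G$, since each statement concerns the Haar (or uniform) average. For part (A)(i-a) I would first note that $|\Tr[U_g\rho]|\le 1$ for any unitary $U_g$ and any density matrix $\rho$ (e.g.\ by $|\Tr[U_g\rho]|\le\|U_g\|_\infty\|\rho\|_1=1$), so each $L_{g,\mathrm{strong}}(\rho)\ge0$ and hence $L(\rho)\ge0$; and if $\rho$ is strong symmetric then $U_g\rho=e^{i\theta_{g,\rho}}\rho$ gives $\Tr[U_g\rho]=e^{i\theta_{g,\rho}}$, so $|\Tr[U_g\rho]|=1$ and $L(\rho)=0$. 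The crucial step in (A) is the invariance $L(\rho)=L(\Lambda(\rho))$ under a strong covariant $\Lambda$, which I would obtain directly from Theorem~\ref{thm:Kraus_strong}: writing $\Lambda(\rho)=\sum_m K_m\rho K_m^\dagger$ with $K_mU_g=U'_gK_m$, I compute $\Tr[U'_g\Lambda(\rho)]=\sum_m\Tr[U'_gK_m\rho K_m^\dagger]=\sum_m\Tr[K_mU_g\rho K_m^\dagger]=\Tr[(\sum_mK_m^\dagger K_m)U_g\rho]=\Tr[U_g\rho]$, using the commutation relation, cyclicity of the trace, and $\sum_mK_m^\dagger K_m=\mathbf{1}$. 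Thus $L_{g,\mathrm{strong}}(\Lambda(\rho)\|U')=L_{g,\mathrm{strong}}(\rho\|U)$ for every $g$, and integrating over $G$ yields $L(\Lambda(\rho)\|U')=L(\rho\|U)$, which is stronger than monotonicity.

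For part (B) the nontrivial direction is $L(\rho)=0\Rightarrow\rho\in\calF_{G,\mathrm{strong}}$. Since the integrand is nonnegative, $L(\rho)=0$ forces $|\Tr[U_g\rho]|=1$ for Haar-almost every $g$ (and for every $g$ when $G$ is finite). I would then invoke the equality condition of the Cauchy--Schwarz inequality in the Hilbert--Schmidt inner product: writing $\Tr[U_g\rho]=\Tr[(\sqrt{\rho})^\dagger(U_g\sqrt{\rho})]$ and noting $\|\sqrt{\rho}\|_{HS}=\|U_g\sqrt{\rho}\|_{HS}=1$, the bound $|\Tr[U_g\rho]|\le1$ is saturated if and only if $U_g\sqrt{\rho}=e^{i\theta_g}\sqrt{\rho}$, which upon right-multiplication by $\sqrt{\rho}$ gives $U_g\rho=e^{i\theta_g}\rho$. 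For a compact Lie group I would upgrade ``almost every $g$'' to ``every $g$'' using continuity of $g\mapsto\Tr[U_g\rho]$ together with the fact that a full-measure set is dense in the support of the Haar measure, so the closed set on which the modulus equals $1$ is all of $G$; for a finite group no such step is needed. This establishes strong symmetry of $\rho$, and the converse was already shown in (A).

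Part (C) is a direct consequence of multiplicativity of the trace over tensor products. Under the collective representation $g\mapsto U^A_g\otimes U^B_g$ I have $\Tr[(U^A_g\otimes U^B_g)(\rho_A\otimes\sigma_B)]=\Tr[U^A_g\rho_A]\,\Tr[U^B_g\sigma_B]$, so taking $-\log|\cdot|$ splits the logarithm and $L_{g,\mathrm{strong}}(\rho_A\otimes\sigma_B)=L_{g,\mathrm{strong}}(\rho_A)+L_{g,\mathrm{strong}}(\sigma_B)$ pointwise in $g$; integrating over the single Haar measure gives the claimed additivity. Specializing to $\rho_A=\sigma_B=\rho$ and iterating yields $L(\rho^{\otimes n}\|U^{\otimes n})=nL(\rho)$ exactly, so the asymptotic rate equals $L(\rho)$ on the nose. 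I expect the only genuinely delicate point to be the measure-theoretic and continuity argument in (B) needed to pass from almost-every-$g$ to every-$g$ for continuous groups; everything else is a short computation once Theorem~\ref{thm:Kraus_strong} and the Cauchy--Schwarz equality condition are in hand.
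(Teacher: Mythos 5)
Your proposal is correct, and parts (A) and (C) coincide with the paper's argument essentially line by line: the same Kraus-operator computation $\Tr[U'_g\Lambda(\rho)]=\sum_m\Tr[K_m^\dagger K_m U_g\rho]=\Tr[U_g\rho]$ via Theorem~\ref{thm:Kraus_strong} gives exact invariance, and the same multiplicativity of the trace under tensor products gives additivity. Where you genuinely diverge is in the faithfulness part (B). The paper works with a spectral decomposition $\rho=\sum_j p_j\ket{\psi_j}\bra{\psi_j}$: saturation of $|\Tr[U_g\rho]|\le\sum_j p_j|\bra{\psi_j}U_g\ket{\psi_j}|\le 1$ forces each $\ket{\psi_j}$ to be an eigenvector of $U_g$ with unimodular eigenvalue $c_{j,g}$, and then the modulus-one condition on $\sum_j p_j c_{j,g}$ forces all phases to coincide, yielding $U_g\rho=c_g\rho$. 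You instead apply the Cauchy--Schwarz equality condition in the Hilbert--Schmidt inner product to $\Tr[U_g\rho]=\langle\sqrt{\rho},U_g\sqrt{\rho}\rangle_{HS}$, getting $U_g\sqrt{\rho}=e^{i\theta_g}\sqrt{\rho}$ and hence $U_g\rho=e^{i\theta_g}\rho$ in one stroke; this is more compact and avoids the explicit phase-alignment bookkeeping, at the cost of being slightly less transparent about the eigenstructure. Moreover, your measure-theoretic refinement is a point the paper silently glosses over: for a continuous group, $L(\rho)=0$ only forces $|\Tr[U_g\rho]|=1$ for Haar-almost-every $g$, whereas the paper asserts it for all $g$; your upgrade via continuity of $g\mapsto\Tr[U_g\rho]$ and full support of the Haar measure (the set where the modulus equals one is closed and contains a dense set) closes this gap, provided one notes that continuity of the representation is being assumed, as the paper implicitly does for compact Lie groups. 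So your route is not only valid but marginally more careful than the published proof on this point.
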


We remark that each $L_\rho(g)$ is also a resource measure and additive, but it is not faithful.

\begin{table*}[htbp]
\centering
\begin{tabular}{l||c|c|c|c|c|c|c|}
    & Group & (i-a) for $\calF_{G,\mathrm{strong}}$ & (i-a) for $\calF_{G,\mathrm{single}}$ & (i-b)  & (ii) for $\calF_{G,\mathrm{strong}}$ & (ii) for $\calF_{G,\mathrm{single}}$ & (v) \\\hline\hline
$\Ssa$ & arbitrary & \checkmark & \checkmark & \checkmark & $\times$ & \checkmark  & $\times$ \\
$L$ & arbitrary & \checkmark & $\times$ & \checkmark & \checkmark  & $\times$  & \checkmark \\
$\VNS$ & compact Lie & \checkmark & $\times$ & \checkmark & \checkmark  & $\times$  & \checkmark \\
$\VS$ & compact Lie & \checkmark & $\times$ & \checkmark & \checkmark  & $\times$  & \checkmark \\\hline
\end{tabular}
\caption{Table of measures of strong symmetry breaking
}
\end{table*}

\subsubsection{Measures for compact-Lie-group symmetry}
Next, we introduce a resource measure that works particularly well when the group $G$ is a compact Lie group:
\begin{definition}[Non-symmetrized and symmetrized covariance matrices]
Let $G$ be a compact Lie group, and let  $\dim G$ denote the dimension of $G$ as a smooth manifold. Then, elements in the neighborhood of the identity $e\in G$ can be parametrized as $g(\bm{\lambda})=e^{\ii \sum_{j=1}^{\dim G}\lambda^j A_j}$ with a basis $\{A_j\}_{j=1}^{\dim G}$ of the Lie algebra $\mathfrak{g}$. 
Let $U$ be a unitary representation of $G$ acting on a Hilbert space $\calH$.
We introduce Hermitian operators 
\eq{
    X_j:=  -\ii \frac{\partial}{\partial \lambda^j} U(g(\bm{\lambda}))\biggl|_{\bm{\lambda}=\bm{0}}\quad \label{eq:hermitian_operators}
}
for $j=1,\cdots,\dim G$, which corresponds to $L(A_j)$, where $L$ is the Lie algebra representation defined as $L(A):= -\ii \left.\frac{d}{dt }U(e^{\ii t A})\right|_{t=0}$.
Then, we define two types of covariance matrices associated with $\rho$.
\begin{enumerate}
    \item \textbf{Non-symmetrized covariance matrix.}
    The non-symmetrized covariance matrix $\VNS(\rho)$ is defined by
    \begin{equation}
        (\VNS)_{i,j}
        := \Tr[\rho\, X_i X_j]
           - \langle X_i \rangle_\rho \langle X_j \rangle_\rho ,
    \end{equation}
    where $\langle X \rangle_\rho := \Tr[\rho X]$.

    \item \textbf{Symmetrized covariance matrix.}
    The symmetrized covariance matrix $\VS(\rho)$ is defined by
    \begin{equation}
        (\VS(\rho))_{i,j}
        := \frac{1}{2}\Tr[\rho\,\{X_i, X_j\}]
           - \langle X_i \rangle_\rho \langle X_j \rangle_\rho ,
    \end{equation}
    where $\{X_i, X_j\} := X_i X_j + X_j X_i$ denotes the anti-commutator.
\end{enumerate}
\end{definition}

We remark that when the symmetry is $U(1)$, both covariance matrix becomes the variance.

\begin{theorem}\label{thm:VM}
The covariance matrices $\VNS(\rho)$ and $\VS(\rho)$ satisfy the following features
\begin{description}
\item[(A)] $\VNS(\rho)$ and $\VS(\rho)$ are resource measures. Namely, (i-a) they are positive-semidefinite matrices, and when $\rho$ is strong symmetric, they are zero, and (i-b) when $\Lambda$ is $(U,U')$-strong covariant, the inequalities $\VNS(\rho)\ge \VNS(\Lambda(\rho))$ and $\VS(\rho)\ge \VS(\Lambda(\rho))$ hold. More precisely, $\VNS(\rho)=\VNS(\Lambda(\rho))$ and $\VS(\rho)= \VS(\Lambda(\rho))$ are valid.
\item[(B)] When $G$ is connected, they are faithful for $\calF_{G,\mathrm{strong}}$. In other words, 
\eq{
\VNS(\rho)=0&\Leftrightarrow\VS(\rho)=0\nonumber\\
&\Leftrightarrow\rho\in\calF_{G,\mathrm{strong}}.
}
Using (i-a) of $\VS$ and $\VNS$, we can equivalently state
\eq{
\sum_i\VNS(\rho)_{ii}=0\Leftrightarrow\sum_i\VS(\rho)_{ii}=0\Leftrightarrow\rho\in\calF_{G,\mathrm{strong}}.
}
\item[(C)] When we employ collective representation, they are additive for product states. In other words, for two systems $A$ and $B$ and unitary representations $U^{A}$ and $U^{B}$ on them, any states $\rho_A$ and $\sigma_B$ satisfy 
\eq{
&\VNS(\rho_A\otimes\sigma_B\|U^A\otimes U^B)\nonumber\\
&=\VNS(\rho_A\|U^A)+\VNS(\sigma_B\|U^B),\\
&\VS(\rho_A\otimes\sigma_B\|U^A\otimes U^B)\nonumber\\
&=\VS(\rho_A\|U^A)+\VS(\sigma_B\|U^B)
}
Consequently, $\lim_{n\rightarrow \infty}\VNS(\rho^{\otimes n}\|U^{\otimes n})/n=\VNS(\rho)$ and $\lim_{n\rightarrow \infty}\VS(\rho^{\otimes n}\|U^{\otimes n})/n=\VS(\rho)$ hold.
\end{description}
\end{theorem}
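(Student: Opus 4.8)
The plan is to reduce all three parts to statements about the infinitesimal generators $X_j = L(A_j)$ and $X'_j = L'(A_j)$ attached to the \emph{same} Lie-algebra basis $\{A_j\}$, so that $\VNS(\rho\|U)$ and $\VNS(\Lambda(\rho)\|U')$ are compared index-by-index. For part (A)(i-a), to see positivity I would fix a vector $(c_j)$, set $Y := \sum_j c_j X_j$, and observe that for real $c_j$ one gets $\sum_{i,j} c_i c_j (\VS(\rho))_{i,j} = \Tr[\rho Y^2] - \langle Y\rangle_\rho^2 = V_Y(\rho)\ge 0$, so $\VS(\rho)\ge 0$, while for complex $c_j$ the non-symmetrized form gives $\langle Y^\dagger Y\rangle_\rho - |\langle Y\rangle_\rho|^2\ge 0$ by Cauchy--Schwarz. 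To get vanishing on strong-symmetric states I would differentiate $U(e^{\ii t A_j})\rho = e^{\ii\phi_j(t)}\rho$ at $t=0$ to obtain the eigen-relation $X_j\rho = c_j\rho$ with $c_j = \phi_j'(0)\in\mathbb{R}$; taking adjoints gives $\rho X_j = c_j\rho$, whence $\Tr[\rho X_iX_j] = c_ic_j = \langle X_i\rangle_\rho\langle X_j\rangle_\rho$ and both matrices vanish.

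For the monotonicity in (A)(i-b) I would invoke Theorem~\ref{thm:Kraus_strong}: any Kraus family $\{K_m\}$ of a $(U,U')$-strong covariant $\Lambda$ obeys $K_mU_g = U'_gK_m$. Differentiating at $g=e^{\ii t A_j}$ yields $K_mX_j = X'_jK_m$, and its adjoint gives $X_jK_m^\dagger = K_m^\dagger X'_j$. Using these two identities to push the primed generators through the Kraus operators, together with $\sum_m K_m^\dagger K_m = \mathbf{1}$, I expect to find $\langle X'_i\rangle_{\Lambda(\rho)} = \langle X_i\rangle_\rho$ and $\Tr[\Lambda(\rho)X'_iX'_j] = \Tr[\rho X_iX_j]$. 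Substituting into the definitions then gives $\VNS(\Lambda(\rho)\|U') = \VNS(\rho\|U)$, and likewise for $\VS$, i.e.\ exact equality rather than a mere inequality, which is what the statement claims.

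Part (B) is where the real work lies. First, since $\VNS$ is Hermitian positive semidefinite and $\VS = \tfrac12(\VNS+\VNS^{T}) = \Re\,\VNS$ shares its diagonal, the standard $2\times2$ principal-minor argument (a positive semidefinite matrix with a vanishing diagonal entry has the whole corresponding row and column zero) would give $\VNS(\rho)=0 \Leftrightarrow \VS(\rho)=0 \Leftrightarrow V_{X_j}(\rho)=0\ \forall j$. Assuming the variances vanish, I would write $\Tr[\rho(X_j-\langle X_j\rangle_\rho)^2] = \|(X_j-\langle X_j\rangle_\rho)\rho^{1/2}\|_{\mathrm{HS}}^2 = 0$, hence $X_j\rho = c_j\rho$ with $c_j=\langle X_j\rangle_\rho$. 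The main obstacle is lifting this infinitesimal eigen-relation to the full group condition $U_g\rho = e^{\ii\theta_g}\rho$: here I would use connectedness of $G$, so that every $g$ is a finite product of one-parameter elements $e^{\ii t A}$ with $A=\sum_j\mu^jA_j$. Since $\bigl(\sum_j\mu^jX_j\bigr)\rho = \bigl(\sum_j\mu^jc_j\bigr)\rho$, exponentiating gives $U(e^{\ii tA})\rho = e^{\ii t\sum_j\mu^jc_j}\rho$, so each factor acts as a pure phase and their product does too, yielding strong symmetry. The reverse implication is already covered by (A). Connectedness is precisely what makes this lift valid and is why (B) is restricted to connected $G$.

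Finally, part (C) I expect to be a short direct computation. Under the collective representation the generator attached to $A_j$ is $X_j^{AB}=X_j^A\otimes\mathbf{1} + \mathbf{1}\otimes X_j^B$. Expanding $X_i^{AB}X_j^{AB}$ in the state $\rho_A\otimes\sigma_B$ and subtracting $\langle X_i^{AB}\rangle\langle X_j^{AB}\rangle$, the $A$--$B$ cross terms $\langle X_i^A\rangle_{\rho_A}\langle X_j^B\rangle_{\sigma_B}+\langle X_j^A\rangle_{\rho_A}\langle X_i^B\rangle_{\sigma_B}$ produced by the two-point function cancel exactly against those produced by the product of expectations, leaving $\VNS(\rho_A\otimes\sigma_B) = \VNS(\rho_A)+\VNS(\sigma_B)$, and identically for $\VS$. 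Additivity then gives $\VNS(\rho^{\otimes n}\|U^{\otimes n})/n = \VNS(\rho\|U)$ for every finite $n$, so the $n\to\infty$ limit is immediate.
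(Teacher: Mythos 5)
Your proposal is correct and follows essentially the same route as the paper's proof: vanishing on strong-symmetric states by differentiating $U_g\rho=e^{\ii\theta_g}\rho$, exact invariance of first and second moments via the Kraus commutation relation $K_mU_g=U'_gK_m$ of Theorem~\ref{thm:Kraus_strong} (differentiated to $K_mX_j=X'_jK_m$), the variance-vanishing argument $\Delta X_j\sqrt{\rho}=0\Rightarrow X_j\rho=\langle X_j\rangle_\rho\,\rho$ lifted to all of $G$ by writing group elements as finite products of exponentials (connectedness), and direct additivity for product states. The only additions beyond the paper are details it leaves implicit --- the quadratic-form proof of positive semidefiniteness and the principal-minor argument for $\VNS(\rho)=0\Leftrightarrow\VS(\rho)=0\Leftrightarrow$ all diagonal variances vanish --- both of which are sound.
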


\subsubsection{i.i.d.\ complete measures for $U(1)$-symmetry}

Next, following the notions of entanglement distillation and dilution, we consider the distillation and dilution of strong symmetry breaking under $U(1)$-symmetry.
As discussed in Section~\ref{sec:RT_basic}, these problems can be understood as instances of the i.i.d.\ convertibility problem.

We consider quantum systems with finite-dimensional Hilbert spaces. On each system, which is a copy of system $S$ with Hilbert space $\calH$, we consider an identical unitary representation $U:=\{e^{-iHt}\}$ of $U(1)$, and assume that the smallest eigenvalue of $H$ is equal to zero.
Hereafter, we define the period of a state $\rho$ on $S$ as
\eq{
\tau(\rho):=\inf\{t>0:e^{-iHt}\rho e^{iHt}=\rho\}.
}
Furthermore, we formulate
\eq{
\taust(\rho):=\inf\{t>0:\exists\theta\in\mathbb{R},\enskip e^{-iHt}\rho=e^{i\theta}\rho\}.
}
To formulate the i.i.d.\ conversion, we consider $n$ copies of the system $S$ as $S^{(n)}$, whose Hilbert space is $\calH^{\otimes n}$, and consider the collective unitary representation $U^{(n)}:\{e^{-iH^{(n)}_{\mathrm{tot}}t}\}$ acting on $\calH^{\otimes n}$, where  
\eq{
H^{(n)}_{\mathrm{tot}}:=\sum^{n-1}_{j=0}I^{\otimes j}\otimes H\otimes I^{\otimes n-j-1}.
}
In other words, $U^{(n)}=U^{\otimes n}$. 

Under the setup described above, we define the optimal conversion rate for i.i.d.\ states. To begin with, there is one important caveat.
As shown in Lemma~\ref{lemm:no_exist}, 
a strong covariant operation from $S^{(N)}$ to $S^{(\lfloor rN\rfloor)}$ does not exist when 
$N > \lfloor rN\rfloor$.  
Thus, if we consider the optimal rate for  conversions by free operations
\begin{equation}
  \Phi_N : \ca{S}(\ca{H}^{\otimes N}) \longrightarrow \ca{S}(\ca{H}^{\otimes \lfloor rN \rfloor}),
  \qquad N \in \bb{N},\label{CPTPs_again}
\end{equation}
satisfying
\begin{equation}
  \lim_{N\to\infty}
  \Bigl\|
    \Phi_N\bigl(\rho^{\otimes N}\bigr)
    - \sigma^{\otimes \lfloor rN \rfloor}
  \Bigr\|_1
  = 0,
\end{equation} 
the optimal rate is subject to a non-essential restriction: it can never be smaller than~1.

To avoid this artificial limitation, we define the optimal conversion rate $R(\rho\rightarrow\sigma)$ by allowing an additional reference system in which we can store a free state after the state conversion.
To be concrete, we define the optimal rate $R(\rho\rightarrow\sigma)$ as the supremum of the following achievable rate $r$: the rate $r$ is achievable when there are sequence of additional Hilbert spaces $\{\calH_{A_N}\}_{n\in\mathbb{N}}$, sequence of unitary representations $\{U_{A_N}:=\{e^{-iH_{A_N}t}\}_{t}\}_{n\in\mathbb{N}}$ of $U(1)$ acting on the Hilbert spaces, and the sequence of $(U^{(N)},U^{(\lfloor rN\rfloor)}\otimes U_{A_N})$-strong covariant operations $\{\Phi_N\}$ such as
\eq{
\Phi_N:\calB(\calH^{\otimes N}\otimes\calH_{A_N} )\rightarrow\calB(\calH^{\otimes\lfloor rN\rfloor}\otimes\calH_{A_N}),\enskip N\in\mathbb{N}
}
and 
\begin{equation}
  \lim_{N\to\infty}
  \Bigl\|
    \Phi_N\bigl(\rho^{\otimes N}\bigr)
    - \sigma^{\otimes \lfloor rN \rfloor}\otimes\eta_{A_N}
  \Bigr\|_1
  = 0,
\end{equation} 
where $\eta_{A_N}$ is a strong symmetric state in $\calS(\calH_{A_N})$.

The above definition allows us to consider, instead of a direct conversion from
$\rho^{\otimes N}$ to $\sigma^{\otimes \lfloor rN \rfloor}$, a conversion to the tensor product
$\sigma^{\otimes \lfloor rN \rfloor} \otimes \eta_{A_N}$ with some free state $\eta_{A_N}$.
In other standard resource theories, where appending free states and performing partial traces are regarded as free operations, this modification does not change the value of $R(\rho \rightarrow \sigma)$.
Therefore, our definition of the optimal conversion rate is consistent with the conventional one.

As we see in the following theorems,
when both $\rho$ and $\sigma$ are pure, or when both $\rho$ and $\sigma$ are weak symmetric states, the optimal ratio $R(\rho\rightarrow\sigma)$ is determined by a single resource measure: the variance of $H$.
Therefore, we can say the variance of $H$ is the i.i.d. complete resource measure for these state conversions.
The theorem for the pure states is as follows:
\begin{theorem}\label{thm:U1_pure}
Let $S$ be a finite dimensional system with Hilbert space $\calH$, and let $\{e^{-iHt}\}$
be a unitary representation of $U(1)$ acting on $\calH$, where the smallest eigenvalue of $H$ is equal to zero.
For any pure states $\psi$ and $\phi$ on $S$ satisfying $\tau(\psi)=\tau(\phi)$, $\psi\not\in\calF_{G,\mathrm{strong}}$ and $\phi\not\in\calF_{G,\mathrm{strong}}$,
the optimal conversion rate $R(\psi\rightarrow\phi)$ satisfies
\eq{
R(\psi\rightarrow\phi)
=
\frac{V_{H}(\psi)}{V_{H}(\phi)},
}
where $V_H(\rho):=\Tr[\rho H^2]-\Tr[\rho H]^2$.
\end{theorem}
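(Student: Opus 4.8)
The plan is to reduce the entire problem to the charge distributions of the states and to exploit the fact that strong covariance is far more rigid than weak covariance. First I would invoke Theorem~\ref{thm:Kraus_strong}: any $(U,U')$-strong covariant channel has Kraus operators $K_m$ with $K_m e^{-iHt}=e^{-iH't}K_m$, so each $K_m$ intertwines $H$ and $H'$ and maps the charge-$\lambda$ eigenspace of $H$ into the charge-$\lambda$ eigenspace of $H'$. Writing $P_\lambda,P'_\lambda$ for the spectral projectors of $H,H'$, this gives $P'_\lambda K_m=K_m P_\lambda$ and $[K_m^\dagger K_m,P_\lambda]=0$, whence a one-line computation yields $\Tr[P'_\lambda\Lambda(\rho)]=\Tr[P_\lambda\rho]$ for every $\lambda$. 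Thus the \emph{entire} charge distribution $\{\Tr[P_\lambda\rho]\}_\lambda$, not merely its variance, is exactly preserved by free operations; the conservation $V_H(\rho)=V_H(\Lambda(\rho))$ in Theorem~\ref{thm:VM} and the additivity $V_{H^{(n)}_{\mathrm{tot}}}(\psi^{\otimes n})=n\,V_H(\psi)$ are immediate corollaries.

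Next I would observe that for a pure state the charge distribution is essentially a complete invariant under strong covariant manipulation. Decomposing $\psi^{\otimes N}=\sum_\lambda\sqrt{P_N(\lambda)}\,\ket{\lambda}\ket{a_\lambda}$ in the eigenbasis of $H^{(N)}_{\mathrm{tot}}$ (with $P_N$ the $N$-fold convolution of the single-copy charge distribution and $\ket{a_\lambda}$ the normalized internal vector), and similarly $\phi^{\otimes M}\otimes\eta_{A_N}=\sum_\lambda\sqrt{Q_M(\lambda-c_N)}\,\ket{\lambda}\ket{b_\lambda}$ where $c_N$ is the fixed charge of the strong symmetric ancilla, any block-diagonal isometry $V=\bigoplus_\lambda\ket{\lambda}\!\bra{\lambda}\otimes W_\lambda$ with $W_\lambda\ket{a_\lambda}=\ket{b_\lambda}$ is charge-conserving, hence strong covariant by the converse of Theorem~\ref{thm:Kraus_strong}. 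Its output has overlap with the target equal to the Bhattacharyya coefficient $\sum_\lambda\sqrt{P_N(\lambda)\,Q_M(\lambda-c_N)}$, so the achievable fidelity, and with it the whole conversion problem, is governed entirely by how well the two charge distributions can be matched.

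I would then bring in the central limit theorem. Because $\tau(\psi)=\tau(\phi)$, both $\psi$ and $\phi$ populate a common charge lattice of spacing $2\pi/\tau$, so the local lattice CLT presents $P_N$ and $Q_M$ as lattice Gaussians with means $N\mu_\psi,\,M\mu_\phi$ and variances $N\,V_H(\psi),\,M\,V_H(\phi)$ (with $\mu=\langle H\rangle$). For the converse, the total-variation distance of the charge distributions is bounded by the trace distance, so any achievable conversion forces $\mathrm{TV}(P_N,\,Q_M*\delta_{c_N})\to0$; two Gaussians whose variances stay in a fixed ratio $\kappa\neq1$ remain TV-separated, so the variances must match asymptotically, $N\,V_H(\psi)/(M\,V_H(\phi))\to1$, giving $R(\psi\to\phi)\le V_H(\psi)/V_H(\phi)$. (As a cross-check, the converse also follows by noting strong covariant operations are in particular weak covariant, so the rate is bounded by the weak-symmetry value $\calI_H(\psi)/\calI_H(\phi)=V_H(\psi)/V_H(\phi)$ via the i.i.d.\ completeness of the quantum Fisher information~\cite{Gour2008,Marvian2022}, the stabilizer condition $\mathrm{Sym}_{U(1)}(\psi)=\mathrm{Sym}_{U(1)}(\phi)$ being guaranteed by $\tau(\psi)=\tau(\phi)$.) For achievability I would take $r=V_H(\psi)/V_H(\phi)$, choose $c_N$ to cancel the mean offset $N\mu_\psi-M\mu_\phi$ rounded onto the lattice, enlarge $\calH_{A_N}$ so each relevant output sector dominates the corresponding input sector in dimension, and apply the block-diagonal isometry above; the Bhattacharyya coefficient of two lattice Gaussians with matched means and asymptotically equal variances tends to $1$, so the trace-norm error vanishes.

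The main obstacle will be the quantitative lattice CLT estimate underlying both directions: I must show that the Bhattacharyya coefficient of the genuine (non-Gaussian, lattice-supported) distributions $P_N$ and $Q_M(\cdot-c_N)$ converges to $1$, which requires uniform local-CLT control incorporating the $O(1)$ mean offset (contributing $O(1/\sqrt{N})$ in the centered variable) and the $O(1/N)$ relative variance mismatch coming from the floor $\lfloor rN\rfloor$. Secondary technical points are the bookkeeping of the common charge lattice dictated by the equality of periods, without which the two lattices are incommensurate and the overlap cannot approach $1$, and the dimension counting guaranteeing that the sector-wise isometries $W_\lambda$ exist after enlarging the free ancilla. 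The hypotheses $\psi,\phi\notin\calF_{G,\mathrm{strong}}$ ensure $V_H(\psi),V_H(\phi)>0$, so that the rate is finite and nonzero.
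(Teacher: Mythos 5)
Your proposal matches the paper's proof in all essential respects: exact preservation of the full charge distribution via the Kraus intertwining relation, the common lattice forced by $\tau(\psi)=\tau(\phi)$ (with the strong symmetric ancilla pinned to a lattice eigenvalue and used to cancel the mean offset $m''_N$), a CLT-type approximation showing the total-variation distance of the charge distributions can vanish iff the variances match at rate $r=V_H(\psi)/V_H(\phi)$, and an energy-conserving block unitary whose output fidelity with the target is the Bhattacharyya coefficient of the two distributions. The one technical obstacle you flag is dispatched in the paper by using the translated Poisson approximation of Marvian (2022) in total variation (in place of your lattice Gaussians) together with the elementary bound $\sum_m\sqrt{p(m)q(m)}\ge 1-d_{\mathrm{TV}}(p,q)$ and the gentle measurement lemma, so no uniform local-CLT control of the overlap itself is needed.
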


When $\rho$ and $\sigma$ are weak symmetric (not necessarily pure), the optimal ratio $R(\rho\rightarrow\sigma)$ is determined by the variance and the expectation value of $H$.
\begin{theorem}\label{thm:U1_w_sym}
Let $S$ be a finite dimensional system with Hilbert space $\calH$, and let $\{e^{-iHt}\}$ be a unitary representation of $U(1)$ acting on $\calH$, where the smallest eigenvalue of $H$ is equal to zero.
For any weak symmetric states $\rho$ and $\sigma$ on $S$ satisfying $\taust(\rho)=\taust(\sigma)$, $\rho\not\in\calF_{G,\mathrm{strong}}$ and $\sigma\not\in\calF_{G,\mathrm{strong}}$, the optimal conversion rate $R(\rho\rightarrow\sigma)$ satisfies
\eq{
R(\rho\rightarrow\sigma)
=\frac{V_H(\rho)}{V_H(\sigma)}.
}
\end{theorem}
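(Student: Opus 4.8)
The plan is to exploit the fact that, for weak symmetric inputs and outputs, the whole conversion problem collapses to a classical statement about the probability distribution of the conserved $U(1)$ charge. Since $\rho$ and $\sigma$ are weak symmetric they commute with $H$ and are block diagonal in its eigenbasis, carrying no strong-symmetry resource inside a fixed charge sector. By Theorem~\ref{thm:Kraus_strong} every Kraus operator $K_m$ of a $(U,U')$-strong covariant map satisfies $K_m U_g = U'_g K_m$, which for $U(1)$ forces $H' K_m = K_m H$; hence each $K_m$ maps the input total-charge-$Q$ subspace into the output total-charge-$Q$ subspace. A one-line computation using $\sum_m K_m^\dagger K_m = \mathbf 1$ and $P^{\mathrm{out}}_Q K_m = K_m P^{\mathrm{in}}_Q$ then shows that the \emph{entire} charge distribution is an exact invariant, $\Tr[P^{\mathrm{out}}_Q\,\Lambda(\rho)] = \Tr[P^{\mathrm{in}}_Q\,\rho]$; in particular the variance $V_H$ is exactly preserved (consistent with the invariance and additivity in Theorem~\ref{thm:VM}). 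Conversely, inside a fixed charge sector one may implement any channel for free. The reference $\calH_{A_N}$, needed because of the dimension obstruction of Lemma~\ref{lemm:no_exist}, enters only as a definite-charge buffer: a strong symmetric $\eta_{A_N}$ has $V_H(\eta_{A_N})=0$, so it can absorb a mismatch of the \emph{mean} charge while contributing nothing to the \emph{variance}.

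The exact conservation of the charge distribution is what drives both bounds, and it lets me avoid any naive continuity estimate for the variance (whose continuity constant grows like the square of the charge range and is useless at the level of rates). If a rate $r$ is achievable with $M=\lfloor rN\rfloor$, then $\Lambda_N(\rho^{\otimes N})$ is trace-norm close to $\sigma^{\otimes M}\otimes\eta_{A_N}$, so their charge distributions are total-variation close; but the charge distribution of $\Lambda_N(\rho^{\otimes N})$ equals that of $\rho^{\otimes N}$ exactly. Thus the comparison is between two \emph{explicit} $N$-fold convolutions. By the local central limit theorem these are lattice Gaussians of variances $N V_H(\rho)$ and $M V_H(\sigma)$, and two such Gaussians can be total-variation close only if their variance ratio tends to $1$; hence $M V_H(\sigma)/(N V_H(\rho))\to 1$. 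Note this forbids $r$ both above \emph{and} below $V_H(\rho)/V_H(\sigma)$: if the variances differed by $\Theta(N)$ the narrower-windowed distribution would miss an order-one fraction of the other's mass. The unique candidate rate is therefore $V_H(\rho)/V_H(\sigma)$, which (being the only achievable value up to $o(1)$) is also the supremum.

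For achievability at this rate I would give a measure-and-prepare protocol, which is strong covariant because charge measurement ($[P_Q,H^{(N)}_{\mathrm{tot}}]=0$) and charge-sector-preserving preparation both have Kraus operators commuting with $U_g$. Set $M=\lfloor (V_H(\rho)/V_H(\sigma))\,N\rfloor$, choose the buffer charge $q_\eta$ so that $M\langle H\rangle_\sigma + q_\eta = N\langle H\rangle_\rho + O(1)$, and let $\omega:=\sigma^{\otimes M}\otimes\eta_{A_N}$, which is block diagonal in total charge. The channel measures the total input charge $Q$ and prepares $\tau_Q := P_Q\,\omega\,P_Q/\Tr[P_Q\omega]$, giving $\Lambda_N(\rho^{\otimes N})=\sum_Q \Tr[P_Q\,\rho^{\otimes N}]\,\tau_Q$ against the target $\omega=\sum_Q \Tr[P_Q\,\omega]\,\tau_Q$. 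The error is then bounded by the total-variation distance $\sum_Q\bigl|\Tr[P_Q\,\rho^{\otimes N}]-\Tr[P_Q\,\omega]\bigr|$, which vanishes by the local central limit theorem: the means match by the choice of $q_\eta$, the variances match up to the $O(1)$ rounding of the floor (negligible against the $\Theta(N)$ variance), and the charge lattices match because the hypothesis $\taust(\rho)=\taust(\sigma)$ fixes a common lattice spacing $2\pi/\taust$ on which $q_\eta$ aligns the two convolutions. Combined with the converse, this yields $R(\rho\to\sigma)=V_H(\rho)/V_H(\sigma)$.

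The main obstacle I anticipate is making the central-limit step fully rigorous and uniform. One must establish a local central limit theorem on the relevant charge lattice, controlling the pointwise lattice probabilities (not just mean and variance) across the entire $O(\sqrt N)$-wide typical window, and verify that under the equal-period hypothesis the lattice offsets can be matched by an admissible integer choice of $q_\eta$—a lattice mismatch would leave the total-variation distance bounded away from zero. A secondary point is checking that $\tau_Q$ is a bona fide normalized state throughout the typical window (i.e.\ $\Tr[P_Q\omega]>0$ there) and that the Gaussian tails contribute vanishing trace norm. Conceptually, the key structural input making the rate two-sided and sharp is that strong covariance \emph{conserves} rather than merely decreases $V_H$: this is why no resource can be wasted (in contrast to the weak theory, where the partial trace allows discarding) and why the optimal rate is pinned exactly to the ratio of variances.
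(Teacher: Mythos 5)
Your proposal follows essentially the same route as the paper's proof: your converse rests on the exact invariance of the total-charge distribution under strong covariant operations (the paper derives this from the commuting-Kraus characterization, just as you do) combined with an asymptotic comparison of the two i.i.d.\ lattice convolutions, and your achievability protocol is exactly the paper's construction—a charge-measure-and-prepare channel, strong covariant because its Kraus operators intertwine the representations, with a definite-charge ancilla $\eta_{A_N}$ chosen on the common lattice $\tfrac{2\pi}{\taust}\mathbb{Z}$ to match the means, a typical window $M_N$ on which the prepared sector states are well defined, and a fallback output outside that window. The only difference is technical rather than conceptual: where you invoke a local central limit theorem for lattice Gaussians, the paper uses translated-Poisson approximation bounds with explicit $O(1/\sqrt{N})$ total-variation errors, and the residual items you flag (lattice-offset alignment of the ancilla charge, positivity of the sector weights on the typical window, and vanishing tail contributions) are precisely the points the paper settles via its $m''_N$, $\Delta$, and $M_N$ bookkeeping together with large deviations.
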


Theorem~\ref{thm:U1_pure} shows that, at least for $U(1)$ symmetry, the variance plays the same role for strong symmetry breaking as the entanglement entropy does in entanglement theory: for pure states, the optimal i.i.d.\ conversion rate is completely determined by the ratio of variances.
Moreover, Theorem~\ref{thm:U1_w_sym} shows that, even for mixed states, if both $\rho$ and $\sigma$ are weak symmetric, again the conversion rate is completely determined by the variance.
These theorems imply that, in the distillation or dilution of strong symmetry breaking in pure states or weak symmetric states, only the second moment of the conserved quantity is relevant.
This fact is particularly intriguing in light of the observation that strong covariant operations cannot alter the probability distribution over the eigenvalues of the corresponding conserved quantity at all.

\subsection{Conversion from weak symmetry breaking to strong symmetry breaking}
Reviewing the results obtained so far, we find that both the variance and the Fisher information
serve as good measures of strong and weak symmetry breaking, respectively.
Both quantities provide i.i.d.\ complete measures for a rather broad class of states,
including, at least, all pure states.
These two quantities admit a clear physical interpretation:
they quantify, respectively, the total amount of fluctuations of the conserved quantity
and the portion of fluctuations that originates purely from quantum superposition.

To make this distinction explicit, let us introduce a property of the SLD skew informatation (=a quarter of QFI)~\cite{Yu2013}:
\eq{
\tilde{I}_{H}(\rho)=\min_{\{q_j,\ket{\phi_j}\}}\sum_jq_jV_{\phi_j}(H),\label{eq:Yu}
}
where $\{q_j\ket{\phi_j}\}$ runs over all possible decomposition $\sigma=\sum_jq_j\phi_j$ including non-orthogonal ones.
The relation \eqref{eq:Yu} implies that the skew information quantifies the ``quantum part" of the variance, i.e., the fluctuation of $H$ by quantum superposition.
Furthermore, together with the concavity of the variance, the relation \eqref{eq:Yu} immediately implies that
the skew information is always upper bounded by the variance.
Indeed, the variance can be decomposed as \cite{Gibilisco2007,Gibilisco2011}
\eq{
V_H(\rho)=\tilde{I}_H(\rho)+C_H(\rho).\label{eq:decom_QC}
}
where the term $C$ may be regarded as the contribution arising from classical fluctuations.
This term can be defined as a variance with respect to the right logarithmic derivative (RLD) metric,
\eq{
C_H(\rho):=\sum_{k,l}p_{k}f_{RLD}(p_l/p_k)\left|\bra{k}H_\rho\ket{l}\right|^2,
}
where $\{p_k\}$ and $\{\ket{k}\}$ are eigenvalues and eigenvectors of $\rho$,  $f_{RLD}(x):=2x/(x+1)$ and $H_\rho:=H-I\Tr[H\rho]$,
whereas the usual variance corresponds to the variance defined with respect to the SLD metric:
\eq{
V_H(\rho)=\sum_{k,l}p_{k}f_{SLD}(p_l/p_k)\left|\bra{k}H_\rho\ket{l}\right|^2,
}
where $f_{SLD}(x):=(1+x)/2$.

Using this decomposition, we can quantitatively evaluate how weak symmetry breaking is converted into strong symmetry breaking over time in open-system dynamics that do not exchange conserved quantities with the environment.
First, the variance is invariant under strong covariant operations (Theorem \ref{thm:VM}).
By contrast, the quantum Fisher information monotonically decreases under strong covariant operations.
In other words, under strong covariant time evolution, the overall amount of symmetry breaking is conserved,
while it is irreversibly transformed from weak symmetry breaking into strong symmetry breaking.

Note that, by Theorem~\ref{thm:SD_strong_StoS}, any time evolution that does not exchange the conserved quantity $H$ with the environment
is always described by a strong covariant operation.
Our framework, therefore, provides a quantitative characterization of the transition
from weak symmetry breaking to strong symmetry breaking under such dynamics.
In particular, the question of
“how much of the strong symmetry breaking is purely strong, i.e. not weak symmetry breaking”
can be quantified by the following quantity:
\eq{
\calR_H(\rho):=\frac{C_H(\rho)}{V_H(\rho)}.
}

A similar behavior can be observed for symmetries described by an arbitrary compact Lie group.
The SLD quantum Fisher information matrix (SLD-QFIM) serves as a measure of weak symmetry breaking~\cite{Kudo2023}.
On the other hand, the symmetrized covariance matrix serves as a measure of strong symmetry breaking.
These two quantities are related by a decomposition analogous to \eqref{eq:decom_QC}, namely,
\eq{
\VS(\rho)=\tilde{\calI}^{\rho}+C^{\rho}
}
Here, $\tilde{\calI}^{\rho}:=\calI^{\rho}/4$, and the matrix $C$ is the covariance matrix associated with the RLD metric,
whose elements $C^{\rho}_{ij}$ are defined as
\eq{
C^{\rho}_{ij}=\sum_{k,l}p_{k}f_{RLD}(p_l/p_k)\bra{k}X_{i,\rho}\ket{l}\bra{l}X_{j,\rho}\ket{k},
}
where $X_{i,\rho}:=X_i-I\Tr[\rho X_i]$.

As before, the symmetrized covariance matrix is conserved under strong covariant operations,
whereas the SLD-QFIM decreases monotonically under such operations.
Therefore, when symmetry is described by a compact Lie group,
strong covariant time evolution preserves the total amount of symmetry breaking
while irreversibly converting weak symmetry breaking into strong symmetry breaking.

\section{Generalization to non-invertible symmetries}\label{sec:general}

So far, our discussion has focused on ordinary symmetries described by a group $G$.
Recently, it has been proposed that the same strategy extends to generalized symmetries \cite{Ahmad2025,Benini2025}.
Their key observation is that, instead of a group, one can work with a finite-dimensional
$C^\ast$-algebra $\ca{A}$ of symmetry operators acting on the Hilbert space
$\ca{H}$.
Typical examples are the fusion algebra, tube algebra, or strip algebra associated with a
fusion category describing the generalized symmetry.

Given such an algebra $\ca{A}$, one chooses a basis $\{X_a\}$ and defines the symmetric
bilinear form
\begin{equation}
  K_{ab} := \tr\bigl( X_a X_b \bigr).
\end{equation}
Let $\widetilde K$ denote the inverse matrix of $K$, and define the {\it symmetrizer} $S_{\ca{A}} : \ca{B}(\ca{H}) \to \ca{B}(\ca{H})$,
\begin{equation}
  S_{\ca{A}}(\rho)
  := \sum_{a,b} \widetilde K_{ab}\, X_a \rho X_b ,
  \label{eq:generalized-symmetrizer}
\end{equation}
which has the following properties:
\begin{itemize}
  \item $S_{\ca{A}}$ is a projector: $S_{\ca{A}}^2 = S_{\ca{A}}$.
  \item $S_{\ca{A}}(\rho)$ commutes with all elements of $\ca{A}$, i.e. it is
  $\ca{A}$-symmetric.
  \item $S_{\ca{A}}(\rho) = \rho$ if and only if $\rho$ is $\ca{A}$-symmetric.
  \item $S_{\ca{A}}$ is a CPTP map.
\end{itemize}
In particular, when $\ca{A}$ is the group algebra of a finite group $G$,
$\ca{A} = \bb{C}[G]$, the symmetrizer \eqref{eq:generalized-symmetrizer} reduces to the
familiar $G$-twirling channel (\ref{eq:twir}).
This construction thus provides the natural generalization of $G$-twirling to generalized
symmetries.

Using $S_{\ca{A}}$, one defines the relative entropy of asymmetry for generalized symmetries as
\begin{equation}
  A_{\ca{A}}(\rho)
  := S\bigl(\rho \,\Vert\, S_{\ca{A}}(\rho)\bigr)
  = S\bigl(S_{\ca{A}}(\rho)\bigr) - S(\rho),
  \label{eq:AGA-def}
\end{equation}
in complete analogy with the group case.

The construction of strong entanglement asymmetry extends to this setting with minimal
changes.
Since $\ca{A}$ is finite-dimensional and semisimple, the $\ca{A}$-symmetric state
$S_{\ca{A}}(\rho)$ can be decomposed into orthogonal ``charge sectors''.
Concretely, there exists a family of mutually orthogonal projectors $\{P_r\}$ on $\ca{H}$ such
that
\begin{equation}
  \sum_r P_r = \mathbf{1}, \qquad
  S_{\ca{A}}(\rho)
  = \sum_r P_r S_{\ca{A}}(\rho) P_r .
\end{equation}
The labels $r$ correspond to the simple sectors (irreducible $\ca{A}$-modules) of the
generalized symmetry.
We can then define the sector probabilities and conditional states as
\begin{equation}
  p_r := \tr\bigl( P_r S_{\ca{A}}(\rho) \bigr), \qquad
  \rho_{\mathrm{sym},r}^{(\ca{A})} := \frac{P_r S_{\ca{A}}(\rho) P_r}{p_r}.
\end{equation}
States with support entirely in a single sector $r$ behave as ``strong symmetric'' states
for the generalized symmetry, in direct analogy with the group case.

Motivated by our definition for ordinary groups, we introduce the strong entanglement
asymmetry for a generalized symmetry algebra $\ca{A}$ by
\begin{equation}
  S_{s\text{-asym}}^{(\ca{A})}(\rho)
  := - \sum_r p_r \log p_r + A_{\ca{A}}(\rho).
  \label{eq:ssasym-A}
\end{equation}
The first term quantifies how much the state spreads over different strong symmetry sectors
of $\ca{A}$, while $A_{\ca{A}}(\rho)$ encodes the weak symmetry breaking measured by the
relative entropy of asymmetry.
When $\ca{A} = \bb{C}[G]$ is the group algebra of an ordinary symmetry group, the
symmetrizer $S_{\ca{A}}$ reduces to the $G$-twirling channel and the projectors $\{P_r\}$ reduce to the
usual projectors onto irreducible $G$-representations, so that
$S_{s\text{-asym}}^{(\ca{A})}$ reproduces our strong entanglement asymmetry
$S_{s\text{-asym}}$ defined in the previous section.

At present, the resource-theoretic understanding of non-invertible asymmetry is still incomplete, even in the case of weak symmetries.
We plan to present a rigorous formulation of non-invertible asymmetry within the framework of resource theory in a forthcoming paper.

We can also extend the averaged logarithmic characteristic function \eqref{def_SLCF}
to the generalized symmetries. A natural extension is
\begin{equation}\label{eq:def_LA_Noninv}
L^{(\ca{A})}(\rho) := \fr{1}{\abs{\ca{A}}} \sum_a \bigl(-\log|\Tr[X_a \rho]|\bigr).
\end{equation}
In the same spirit as \eqref{eq:ssasym-A}, the definition \eqref{eq:def_LA_Noninv}
provides a natural generalization of the averaged logarithmic characteristic function.
In practice, this quantity is tractable both from a theoretical and a computational viewpoint. 
Hence, regardless of whether one considers the weak or strong symmetry,
it is an interesting direction to study symmetry breaking of generalized symmetries using the proposed quantity.
A systematic resource-theoretic justification of this quantity will be presented elsewhere.

\section{Examples}\label{sec:ex}
We have already seen a few illustrative examples following the definitions of various asymmetry monotones e.g.\, in sec.~\!\ref{sec:RStrong}. 
In this section, we include more examples, illustrating the computation of  strong asymmetry measure $\Ssa$ and other asymmetry monotones in a few analytically
tractable setups. Finally, we discuss the strong-to-weak spontaneous symmetry breaking and how to capture that using these resource monotones.

\subsection{Vacuum reduced density matrix in CFT}
Consider the vacuum state of a CFT reduced to a subsystem $A$,
\begin{equation}
  \rho_A = \tr_{\bar{A}} \ket{0}\bra{0},
\end{equation}
where $\bar{A}$ is the complement of $A$.
The vacuum is weak symmetric, so $\Ssa(\rho_A)$ reduces to the
Shannon entropy of the strong-charge distribution $\{p_\alpha\}$.

Following the same strategy as in the computation of symmetry-resolved entanglement entropy
\cite{Kusuki2023},
the probability of finding the subsystem in the irreducible representation $\alpha$ of $G$ is
\begin{equation}
  p_\alpha =  \frac{d_{\alpha}^2}{\abs{G}},
\end{equation}
where $d_\alpha$ is the quantum dimension of $\alpha$.
Hence
\begin{equation}
  \Ssa(\rho_A)
  = - \sum_{\alpha} \frac{d_{\alpha}^2}{\abs{G}}
    \log \frac{d_{\alpha}^2}{\abs{G}}.
\end{equation}
In particular, for $G=\bb{Z}_n$ one obtains
\begin{equation}
  \Ssa(\rho_A) = \log n.
\end{equation}
Thus, while the vacuum respects weak symmetry, it maximally breaks strong symmetry in the sense
of our resource measure.

\subsection{Global quantum quench}
As a dynamical example, consider a global quench in a $(1+1)$-dimensional CFT
\cite{Calabrese2016}.
The initial state is prepared from a conformal boundary state $\ket{B}$ \cite{Cardy2004} as
\begin{equation}
  \ket{\varphi_0} := \ex{-\fr{\beta}{4}H} \ket{B},
\end{equation}
and then evolved with a $G$-symmetric Hamiltonian $H$,
\begin{equation}
  \rho_A(t) = \tr_{\bar{A}}
  \bigl[\ex{-iHt}  \ket{\varphi_0}\bra{\varphi_0}  \ex{iHt}\bigr],
\end{equation}
where $A$ is an interval of length $l$.
We take the regularization parameter $\beta$ to be small.
The entanglement asymmetry for weak symmetry in this setup was computed in \cite{Kusuki2024}.

For a symmetry-breaking conformal boundary, one finds
\begin{equation}
  \Ssa\bigl(\rho_A(t)\bigr)
  =
  \begin{cases}
    2\log \abs{G}, & 0<t<\fr{l}{2}, \\[0.2em]
    \log \abs{G},  & t>\fr{l}{2},
  \end{cases}
\end{equation}
while for a symmetry-preserving boundary, one obtains
\begin{equation}
  \Ssa\bigl(\rho_A(t)\bigr) = 0.
\end{equation}
In the former case, the system starts from a state that maximally breaks both strong and weak
symmetries and, upon local equilibration, relaxes to a state in which only weak symmetry is
restored.
In the latter case, strong symmetry is preserved throughout the time evolution.

\subsection{ Thermal density matrix and strong asymmetry for rotation}

In this subsection we consider thermal state of a $(d-1)+1$-dimensional CFT on a spatial manifold $\bb{S}^{d-1}$.   The thermal density matrix 
\begin{equation}
  \rho_{th}(\beta):=\frac{\sum e^{-\beta E} |E\rangle\langle E|}{\sum e^{-\beta E}}  
\end{equation}
is naturally weak symmetric under rotational subgroup of the conformal group. On the hand, strong rotational symmetry is broken, since $\rho_{th}$ contains states transforming under various irreps of the rotation group. In what follows, we will quantify the amount of strong symmetry breaking. 

\medskip
\noindent{\bf Strong entaglement-asymmetry:} To be concrete, we will consider a $\mathbb{Z}_2$ subgroup of the rotation group $SO(d-1)$. Now the Hilbert space of states can be grouped into even spins and odd spins under $\mathbb{Z}_2$. In order to compute the asymmetry, let us first define some quantities, which will be useful later. We first define partition function $Z(\beta)$ as
\begin{equation}
Z(\beta):=\sum_E e^{-\beta E}\,,
\end{equation}
and we further define two refined partition functions, $Z_+$ and $Z_-$, out of states with even spin and odd spin respectively:
\be
Z_{+}(\beta):=\sum_{\text{even spin}} e^{-\beta E}\,, \quad Z_{-}(\beta):=\sum_{\text{odd spin}} e^{-\beta E}\,.
\ee

Clearly, we have $Z(\beta)=Z_+(\beta)+Z_{-}(\beta)$, and

\begin{equation}
\rho_{th}(\beta)= \frac{Z_{+}(\beta)}{Z(\beta)} \rho_{th,+}(\beta)+\frac{Z_{-}(\beta)}{Z(\beta)} \rho_{th,-}(\beta)\,,
\end{equation}
where we have
\be
\begin{aligned}
&\rho_{th,+}(\beta):=\frac{\sum_{\text{even spin}} e^{-\beta E} |E\rangle\langle E|}{Z_{+}(\beta)}\,,\\
&\rho_{th,-}(\beta):=\frac{\sum_{\text{odd spin}} e^{-\beta E} |E\rangle\langle E|}{Z_{-}(\beta)}\,.
\end{aligned}
\ee

The entanglement asymmetry of strong symmetry corresponding to this $\mathbb{Z}_2$ is given by 
\be
\Ssa=- \frac{Z_{+}(\beta)}{Z(\beta)} \log  \frac{Z_{+}(\beta)}{Z(\beta)} -  \frac{Z_{-}(\beta)}{Z(\beta)} \log  \frac{Z_{-}(\beta)}{Z(\beta)}\,,
\ee
where we have used that for weak asymmetry, $S_{\text{asym}}=0$ for $\rho_{th}$. Now we can use thermal effective field theory \cite{Benjamin:2024kdg} (originally initiated in \cite{Banerjee:2012iz,Bhattacharyya:2007vs,Jensen:2012jh}, recently reincarnated in \cite{Benjamin:2023qsc}, see also \cite{Bhattacharyya:2012nq,Bhattacharya:2012zx, Shaghoulian:2015lcn,Kang:2022orq,Anand2025}; see \cite{Kusuki:2025pgx} for the application of thermal EFT to entanglement entropy) to derive 
\begin{equation}\label{eq:thermalEFT}
\begin{aligned}
\frac{Z_+(\beta)-Z_-(\beta)}{Z_+(\beta)+Z_-(\beta)}&=\frac{\mathrm{Tr}\ e^{-\beta H} (-1)^J}{\mathrm{Tr}\ e^{-\beta H} } \\
&\underset{\beta\to 0}{\sim} \exp\left[-\frac{f}{\beta^{d-1}}\left(1-\frac{1}{2^d}\right)\right]\,,
\end{aligned}
\end{equation}
Here $f$ is  the leading Wilson coefficient appearing in the high temperature expansion of thermal partition function i.e.  $$f:=\lim_{\beta\to 0}\beta^{d-1}\log Z(\beta)\geqslant 0\,.$$
Physically, $f$ quantifies the energy density in a thermal state. It follows from eq.~\!\eqref{eq:thermalEFT} that the strong asymmetry in given by  
\begin{equation}
\label{eq:Z2}
\begin{aligned}
\Ssa & \underset{\beta\to 0}{\sim} \log 2 - \frac{1}{2} \exp \left[ -\frac{2\left(1-2^{-d}\right)f}{\beta^{d-1}}\right]\,,
\end{aligned}
\end{equation}

We end this subsection with four remarks:
\begin{enumerate}
    \item In $1+1$ dimensions,  $f$ is related to the central charge and given by $f=\frac{\pi c L}{6}$ where $L$ is the length of spatial circle and $c$ is the central charge of the CFT.
    \item The eq.~\!\eqref{eq:Z2} can easily be generalized to the $\mathbb{Z}_p$ (with $p$ being prime) subgroup of the rotational group following \cite{Benjamin:2024kdg}:
\begin{equation}
    \label{eq:Zq}
\begin{aligned}
\Ssa & \underset{\beta\to 0}{\sim} \log p - \frac{p-1}{2}\exp \left[ -\frac{2\left(1-p^{-d}\right)f}{\beta^{d-1}}\right]\,.
\end{aligned}
\end{equation}
It is possible to derive a formula for the strong asymmetry, given an arbitrary subgroup $Z_m$, where $m$ is not necessarily restricted to being a prime. 
\item In the $\beta\to 0$ limit, the eq.~\!\eqref{eq:Z2} and \eqref{eq:Zq} saturate the bound appropriate for weak symmetric density matrices.
\item Since the system has conformal symmetry, here the relevant dimensional quantity is $\beta/R$, where $R$ is the size of the system i.e. radius of $\bb{S}^{d-1}$. Hence, both the thermodynamic limit and the high-temperature limit correspond to $\beta/R\to 0$. For notational simplicity, we have set $R$ such that the volume of $\bb{S}^{d-1}$ is $1$, and $R$ does not appear explicitly in the above equations.
\end{enumerate}

The physical upshot is that as we take the high temperature/thermodynamic limit, the Boltzmann weight approaches $1$, all states become equally likely in thermal ensemble. Hence, one expects that $\rho_{th}$ approaches $\mathbb{I}$, leading to breaking of strong symmetry. While this is indeed true for a finite dimensional quantum system, for a quantum field theory with infinite degrees of freedom, the state $\mathbb{I}$ cannot be properly normalized. Albeit, the strong-asymmetry is still a meaningful quantity and can be computed reliably, confirming the intuition that strong symmetry is indeed broken in this limit.

\medskip
\noindent{\bf Averaged logarithmic characteristic function:} Given $\rho_{th}$, we can easily compute $\mathcal{L}(
\rho_{th})$ for the aforementioned $\mathbb{Z}_2$ symmetry using \eqref{eq:thermalEFT}:

\be
\begin{aligned}\label{z2}
L(
\rho_{th})(\beta)&=- \frac{1}{2} \log \frac{|Z_+(\beta)- Z_-(\beta)|}{Z(\beta)}\\
&\underset{\beta\to 0}{\sim}\frac{f}{\beta^{d-1}}\left(\frac{2^d-1}{2^{d+1}}\right)\,.
\end{aligned} 
\ee

For $\mathbb{Z}_p$ (with $p$ being prime), the above formula is generalized to 
\be\label{zp}
L(
\rho_{th})(\beta)
\underset{\beta\to 0}{\sim}(p-1)\frac{f}{\beta^{d-1}}\left(\frac{p^d-1}{p^{d+1}}\right) \,.
\ee

Once again, in the high temperature limit, the strong symmetry is broken, as evident from \eqref{z2} and \eqref{zp}. Note that in the $\beta\to0$ limit, the averaged logarithmic characteristic function diverges, reflecting the fact that we are dealing with a quantum field theory with infinite degrees of freedom.

\medskip
\noindent{\bf Variance \& strong asymmetry w.r.t $U(1)$:}
Variance is a resource measure appropriate for compact Lie group $G$. Here we consider a CFT with the simplest Lie group $G=U(1)$, generated by one of the generators of rotational subgroup. Using thermal EFT \cite{Benjamin:2023qsc}(originally initiated in \cite{Banerjee:2012iz,Bhattacharyya:2007vs,Jensen:2012jh}, recently reincarnated in \cite{Benjamin:2023qsc}, see also \cite{Bhattacharyya:2012nq,Bhattacharya:2012zx, Shaghoulian:2015lcn,Kang:2022orq,Anand2025}), we can compute
\begin{equation}
\mathrm{Tr}\ \rho_{th}\ e^{-\beta\omega J}\underset{\beta\to 0}{\sim} \exp\left[\frac{f}{\beta^{d-1}}\frac{\omega^2}{(1-\omega^2)}\right]
\end{equation}

The variance is given by 
\begin{equation}
    \langle J^2 \rangle -\left(\langle J\rangle \right)^2 \underset{\beta\to0}{\sim} 2f\beta^{-(d+1)}\,. 
\end{equation}

We see that the variance blows up as we take the high temperature/thermodynamic limit. While non-zero variance implies breaking of strong symmetry, the divergent behavior is an imprint of the presence of infinite degrees of freedom in QFT.

\subsection{ Thermal density matrix and strong asymmetry for global symmetry}
{\color{black}
In this subsection, we consider a CFT with a global $U(1)$ symmetry. We will show that thermal density matrix breaks the strong $U(1)$ symmetry as the temperature becomes large.

We use thermal EFT \cite{Benjamin:2024kdg,Kang:2022orq} a.k.a hydrodynamics \cite{Banerjee:2012iz,Bhattacharyya:2007vs,Bhattacharyya:2012nq,Bhattacharya:2012zx,Jensen:2012jh, Shaghoulian:2015lcn,Anand2025} to show that (in particular, see Eq.$1.7$ of \cite{Kang:2022orq})
\begin{equation}
 Z(\beta, g=e^{i\phi})\sim \exp\left(f\beta^{-d+1}-\frac{b}{4}\phi^2\beta^{-d+1}\right)\,,
\end{equation}
where $f$ quantifies the energy density in a thermal state and $b\geq 0$ quantifies the tension of the domain wall, that generates the $g$-twisted sector. They should be thought of as Wilson coefficients of the effective field theory.

The variance is given by 
\begin{equation}
    \langle Q^2\rangle -\left(\langle Q\rangle \right)^2 \underset{\beta\to0}{\sim} \frac{b}{2}\beta^{-d+1}\,.
\end{equation}

Once again, we see that the variance blows up as we take the high temperature/thermodynamic limit. The situation is exactly similar to that of spin. In the case $U(1)$, generated by spin, $b$-coefficient is in fact related to $f$. In contrast, for the global symmetry, $b$ is a genuinely new quantity. This is expected because, unlike the global $U(1)$ symmetry, the $U(1)$ related to spin is part of space-time symmetry which includes Hamiltonian. }

\subsection{Strong-to-weak SSB in a $\bb{Z}_2$ spin chain}
{\it Spontaneous symmetry breaking} (SSB) is among the most fundamental phenomena in many-body physics.
Applications of entanglement asymmetry to SSB have been worked out in \cite{Capizzi2023} (see also \cite{Ahmad2025,Benini2025} for extensions to generalized symmetries).
They discuss the extent to which the ``vacuum'' satisfying cluster decomposition spontaneously breaks a given symmetry.  
In other words, instead of the ``cat state'' $\ket{0\cdots 0}+\ket{1\cdots 1}$, they study the physically realized vacua $\ket{0\cdots 0}$ or $\ket{1\cdots 1}$, which are selected by an SSB, by means of the entanglement asymmetry.
We note that the cat state itself is symmetric,
and therefore its entanglement asymmetry vanishes.

The same line of reasoning extends naturally to mixed states. A phenomenon particular to
mixed states is {\it strong-to-weak spontaneous symmetry breaking} (SW-SSB) \cite{Lee2023,Ogunnaike2023,Lessa2024,Sala2024}. As a concrete
example, let us consider a one-dimensional spin chain with a global
$\bb{Z}_2$ symmetry generated by the spin-flip operator
\begin{equation}
  X := \prod_{i} X_i,
\end{equation}
where $X_i$ is the Pauli operator acting at site $i$.
A simple mixed state exhibiting SW-SSB is
\begin{equation}
  \rho_0 \;\propto\; \bb{I} + X.
\end{equation}
This state is strong symmetric, therefore,
\begin{equation}
  \Ssa(\rho_0) = 0.
\end{equation}
However, this state is unstable
against arbitrarily small local perturbations that break strong symmetry in close analogy with standard SSB \cite{Lessa2024}.

Since an SW-SSB state is strong symmetric, the strong entanglement asymmetry vanishes for it, just as the entanglement asymmetry is zero for an SSB state (e.g., a cat state). Therefore, following \cite{Capizzi2023}, let us consider the strong entanglement asymmetry of the post-collapse state.
Performing a $Z$-measurement at a single site $i$ drives the state to the maximally mixed state through the SW-SSB,
\begin{equation}
\rho_0 \;  \to \; \rho_{\mathrm{mix}} \;\propto\; \bb{I}\,.
\end{equation}
In the decomposition into strong-charge sectors, the post-measurement state has equal weights in the two strong symmetric states,
\begin{equation}\label{eq:rhopm}
\rho_{\mathrm{mix}}  \propto \rho_+ + \rho_-,
\end{equation}
where $X \rho_\pm = \pm \rho_\pm$.
By our definition, the strong entanglement asymmetry reduces in this case to the Shannon entropy of the strong-charge distribution, and hence
\begin{equation}
\Ssa(\rho_{\mathrm{mix}})\;=\;-\sum_{\alpha=\pm} \tfrac{1}{2}\log\tfrac{1}{2}\;=\;\log 2\,.
\end{equation}

The above result holds quite generally.  
For any mixed state obtained after SW-SSB,  
the measure of SW-SSB is determined entirely by the statistical uncertainty over the strong-charge sectors.  
In other words, the strong entanglement asymmetry is given by the Shannon entropy of the corresponding sector probabilities,
\begin{equation}
\Ssa(\rho)
= -\sum_{\alpha} p_\alpha \log p_\alpha\,,
\qquad 
p_\alpha = \tr\!\left(P_\alpha\,\rho\right),
\end{equation}
where $\{P_\alpha\}$ are the projectors onto the distinct strong symmetry sectors.

\section{Cross-over: Strong-Mpemba}\label{sec:so}
The quantum Mpemba effect refers to the following scenario: A system with more broken symmetry evolves more quickly towards a state where the symmetry gets restored. The Quantum Mpemba effect is nicely captured by the weak entanglement asymmetry, which, as a function of time shows a cross-over, followed by eventual settling down at zero. 

A natural question is whether such a phenomenon exists for strong-symmetry as well. However, this endeavor has an immediate conceptual obstruction. While restoration of weak symmetry is linked with system reaching the equilibrium, strong symmetry is not naturally connected to such a process. In fact, strong symmetry prohibits the exchange of charge between the system and the environment. Hence, the natural intuition would be that with generic interaction, strong symmetry gets broken rather than restored. 

Given the fact that we should not expect that strong entanglement asymmetry asymptotes to $0$ as the system relaxes due to generic interaction, the question remains regarding how to define Mpemba like effect for strong symmetry.

A natural way to define strong Mpemba effect is to look for 
\emph{cross-over} in the evolution of $\Ssa(t)$, as the system under consideration evolves towards the equilibrium configuration i.e.\ towards $\Sa=0$. If such a cross-over happens, we will say ``Strong-Mpemba" has happened. We note that Strong-Mpemba may happen even if the usual Quantum Mpemba is absent. In what follows, we will show a simple example of Strong Mpemba phenomenon without the presence of usual Quantum Mpemba effect. 

We consider a single qubit system,  evolving  under de-phasing channel
\begin{equation}\label{master}
    \dot{\rho}= -\frac{k}{4} [Z,[Z,\rho(\vec{r}(t))]]
\end{equation}
Here $Z$ is the Pauli $Z$ matrix, generating the $\mathbb{Z}_2$/ $U(1)$ symmetry.

The density matrix $\rho$ is parameterized by a Bloch vector $\vec{r}$, which moves inside the Bloch sphere as a function of time i.e.\ we have
\begin{equation}
\rho(\vec{r}(t))=\frac{1}{2}\left(\mathbf{1}+r_x(t)
X+r_y(t)Y+r_zZ\right)\,.
\end{equation}
Solving the governing eq.~\!\eqref{master},we find that the Bloch vector at time $t$ is given by
\begin{equation}
    \frac{r_x(t)}{r_x(0)}= \frac{r_y(t)}{r_y(0)}=e^{-kt}\,,\  r_z(t)=r_z(0)\,.
\end{equation}
In the infinite time limit, the off-diagonal pieces die off:
\begin{equation}
    \rho(\vec{r}(\infty))=\begin{pmatrix}
        \frac{1+r_z}{2}& 0\\
        0 & \frac{1-r_z}{2}
    \end{pmatrix}\,.
\end{equation}
One can easily compute entanglement asymmetry and strong entanglement asymmetry as a function of time:
\begin{equation}
\begin{aligned}\label{eq:Asss}
\Sa(t) &=f(|\vec{r}(t)|)-f(r_z)\,,\\
    \Ssa(t)&=\log 2-2f(r_z)+f(|\vec{r}(t)|)\,,
    \end{aligned}
\end{equation}
where $f:[-1,1]\to \mathbb{R}_{\geq 0}$ is given by 
\begin{equation}
    \label{eq:defF}f(x)=\frac{1+x}{2}\log (1+x) +\frac{1-x}{2}\log (1-x)\,.
\end{equation}

Now we choose two density matrices $\rho_1(0)$ and $\rho_2(0)$, parametrized by the Bloch-vectors at time $t=0$:
\begin{equation}
   \label{eq:r} \vec{r}_1(0)=(1/\sqrt{2},0,1/\sqrt{2})\,,\ \vec{r}_2(0)=(1/2,0,1/2)\,.
\end{equation}
We let them evolve under \eqref{master} and observe that eventually cross-over in $\Ssa(t)$ does happen (see fig.~\!\ref{fig:strong})
\begin{figure}
    \centering
    \includegraphics[scale=0.5]{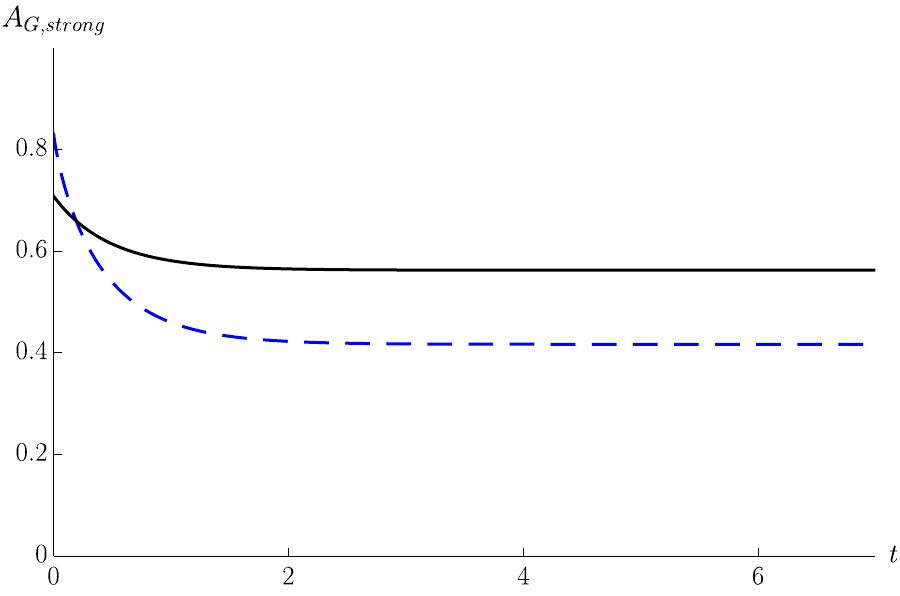}
    \caption{Dashed blue line denotes the evolution of strong asymmetry for $\rho_1$ and the thick black line denotes the evolution of strong asymmetry for $\rho_2$. There is a cross-over under evolution by \eqref{master} (here $k=1$). See eqs.~\!\eqref{eq:r}, \eqref{eq:Asss} and \eqref{eq:defF}.}
    \label{fig:strong}
\end{figure}
i.e.\ we have
\begin{equation}
    \begin{aligned}
      0.832\approx\Ssa(\rho_1(0))&>\Ssa(\rho_2(0))\approx 0.708\\  
      0.416\approx\Ssa(\rho_1(\infty))&<\Ssa(\rho_2(\infty))\approx 0.562\,,
    \end{aligned}
\end{equation}
which shows the ordering of $\Ssa$ for these density matrices gets flipped eventually.  Note that the weak asymmetry $\Sa(t)$ asymptotes to $0$ for both density matrices without any Mpemba-like effect for weak symmetry in this example, see fig.~\!\ref{fig:weak}.
\begin{figure}[!ht]
    \centering
    \includegraphics[scale=0.5]{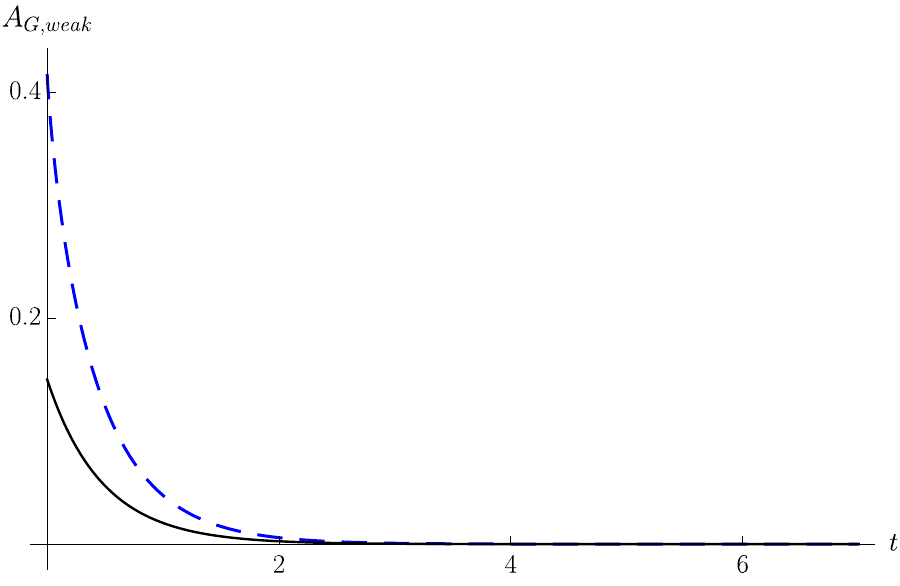}
    \caption{Dashed blue line denotes the evolution of weak asymmetry for $\rho_1$ and the thick black line denotes the evolution of weak asymmetry for $\rho_2$. The evolution happens according to \eqref{master} (here $k=1$). Both asymptotes to $0$ without having any cross-over. See eqs.~\!\eqref{eq:r}, \eqref{eq:Asss} and \eqref{eq:defF}.  }
    \label{fig:weak}
\end{figure}

\section{Discussion}
In this work, we established a method that systematically introduces new quantifiers of symmetry breaking under strong symmetry by constructing a new resource theory. We believe this represents an important next step beyond simply exploring connections between resource theory and condensed-matter physics in general. Since links between resource-theoretic ideas and condensed-matter physics have also begun to be explored in other contexts \cite{Styliaris2019,Liu2020,Oliviero2022,Aditya2025,Balasubramanian2025, Aditya2025a}, constructing resource theories tailored to specific physical settings and developing the corresponding appropriate measures may serve as a promising direction for future work.

\section*{Acknowledgments}
YK is supported by the INAMORI Frontier Program at Kyushu University and JSPS KAKENHI Grant Number 23K20046.
HT was supported by JSPS Grants-in-Aid for Scientific Research 
No. JP25K00924, and MEXT KAKENHI Grant-in-Aid for Transformative
Research Areas B ``Quantum Energy Innovation'' Grant Numbers 24H00830 and 24H00831, JST MOONSHOT No. JPMJMS2061, and JST FOREST No. JPMJFR2365. SP thanks all the participants of the workshop \textit{Entanglementasymmetry across energy scales} at LPTHE, especially Francesco Benini, Giuseppe Di Giulio, Victor Godet, Sara Murciano for discussion on various topics related to entanglement asymmetry. SP thanks Shu-heng Shao for introducing him to the notion of strong symmetry.

\clearpage
\appendix

\section{Properties of the logarithmic fidelity-based characteristic functions for weak symmetry}\label{sec:prop_LCF}
Here we show that the logarithmic fidelity-based characteristic functions
\eq{
LF_{G,g}(\rho):=-\log F(\rho,U_g\rho U^\dagger_g)
}
satisfy the properties (i)--(v) as the resource monotones.
More precisely, for any $g\in G$, $LF_{G,g}(\rho)$ satisfies (i) \emph{the minimal requirements (i-a) and (i-b) as a resource monotone}, (iii) \emph{Convexity}, (iv) \emph{Strong monotonicity}, and (v) \emph{Additivity}.
And the set $\{LF_{G,g}(\rho)\}_{g\in G}$ satisfies (ii) \emph{Faithfulness}: $\rho\in\calF_{G}$ iff $LF_{G,g}(\rho)=0$ for any $g\in G$.

\begin{proof}
(i-a): By definition, each $LF_{G,g}(\rho)$ is non-negative. And when $\rho\in\calF_{G}$, $U_g\rho U^\dagger_g=\rho$ holds, and thus $LF_{G,g}(\rho)=0$.

(i-b): For any $(U,U')$-covariant operation $\Lambda$, the following relation holds
\eq{
F(\rho,U_g\rho U^\dagger_g)&\stackrel{(a)}{\le} F(\Lambda(\rho),\Lambda(U_g\rho U^\dagger_g))\nonumber\\
&\le F(\Lambda(\rho),U'_g\Lambda(\rho) U'^\dagger_g),
}
which implies the monotonicity of $LF_{G,g}(\rho)$. Here, we used the monotonicity of fidelity in (a).

(ii): 
Since $F(\rho,\sigma)=1$ iff $\rho=\sigma$, if $LF_{G,g}(\rho)=0$ for any $g\in G$, $\rho$ is symmetric. The converse is guaranteed by (i-a).

(iii):
Due to $F(\sum_jp_j\rho_j,\sum_jp_j\sigma_j)\ge\sum_jp_jF(\rho_j,\sigma_j)$, we obtain
\eq{
F(\sum_jp_j\rho_j,U_g\sum_jp_j\rho_jU^\dagger_g)\ge\sum_jp_jF(\rho_j,U_g\rho_jU^\dagger_g).
}
Therefore, due to the convexity of the function $-\log$, we obtain the convexity of $LF_{G,g}(\rho)$.
which implies the convexity of

(iv): When a set of CP maps $\{\Psi_k\}$ is a $(U,U')$-covariant measurement, the CPTP map $\Psi(\cdot):=\sum_k\Psi_k(\cdot)$ is a $(U,U')$-covariant operation.
Hence, $LF_{G,g}(\rho)ge LF_{G,g}(\Psi(\rho))$ is valid.
Since $LF_{G,g}$ satisfies convexity and since $\Psi(\rho)=\sum_kp_k\rho_k$ where $p_k:=\Tr[\Psi_k(\rho)]$ and $\rho_k:=\Psi_k(\rho)/p_k$, the strong monotonicity is valid for each $LF_{G,g}$.

(v): Due to $F(\rho\otimes\sigma,\rho'\otimes\sigma')=F(\rho,\rho')F(\sigma,\sigma')$,we obtain the additivity of $LF_{G,g}(\rho)$ for product states.

\end{proof}

\section{Rényi-2 proxies are not resource monotones}
\label{app:Renyi2-not-monotone}

In this appendix, we give an explicit counterexample showing that the
second R\'enyi version of the relative entropy of asymmetry is not
a resource monotone under $G$-covariant channels, already in the
simplest setting of a single qubit with a $\bb{Z}_2$ symmetry.

\subsection{Setup: $\mathbb{Z}_2$ symmetry generated by $X$}

Consider a single qubit with Hilbert space $\ca{H} \simeq \bb{C}^2$.
We take the symmetry group to be $G = \bb{Z}_2 = \{e,g\}$ with
unitary representation
\begin{equation}
  U_e = \mathbf{1}, \qquad U_g = X,
\end{equation}
where $X$ is the Pauli-$X$ matrix.
The corresponding twirling channel is
\begin{equation}
  \ca{G}_X(\rho)
  := \frac{1}{2}\bigl(\rho + X\rho X\bigr).
\end{equation}
The second R\'enyi entropy of a state $\rho$ is
\begin{equation}
  S^{(2)}(\rho) := - \log \tr(\rho^2),
\end{equation}
and we define the second R\'enyi asymmetry with respect to $X$ as
\begin{equation}
  A_X^{(2)}(\rho)
  := S^{(2)}\!\bigl(\ca{G}_X(\rho)\bigr) - S^{(2)}(\rho)
  = \log\frac{\tr(\rho^2)}{\tr(\ca{G}_X(\rho)^2)}.
\end{equation}
By construction $A_X^{(2)}(\rho) \ge 0$ and $A_X^{(2)}(\rho) = 0$ if and only if
$\rho$ is invariant under conjugation by $X$.
The question is whether $A_X^{(2)}$ is non-increasing under all
$\bb{Z}_2$-covariant CPTP maps.

\subsection{$Z$-dephasing as an $\bb{Z}_2$-covariant channel}

Let us consider the dephasing channel in the $Z$ basis,
\begin{equation}
  \Delta_Z(\rho)
  := \sum_{s=\pm 1} P_s \rho P_s, \qquad
  P_{\pm} := \frac{1}{2}(\mathbf{1} \pm Z),
\end{equation}
where $Z$ is the Pauli-$Z$ matrix.
This is a CPTP map that removes the off-diagonal elements in the eigenbasis of $Z$.
We claim that $\Delta_Z$ is $\bb{Z}_2$-covariant.

Indeed, using $XZX = -Z$ one checks that
\begin{equation}
  X P_{\pm} X = P_{\mp}.
\end{equation}
Then, for any state $\rho$,
\begin{equation}
\begin{aligned}
  \Delta_Z\bigl( X\rho X \bigr)
  &= \sum_{s=\pm 1} P_s X\rho X P_s\\
  &= X\, \Delta_Z(\rho)\, X,
\end{aligned}
\end{equation}
i.e.\ it is $\bb{Z}_2$-covariant and hence a free operation in the resource theory of asymmetry for this symmetry.

\subsection{Explicit counterexample}

We now exhibit a state $\rho$ such that
\begin{equation}
  A_X^{(2)}\bigl( \Delta_Z(\rho) \bigr)
  > A_X^{(2)}(\rho),
\end{equation}
showing that $A_X^{(2)}$ is not a resource monotone.

It is convenient to use the Bloch-sphere representation
\begin{equation}
  \rho = \frac{1}{2}\bigl( \mathbf{1} + r_x X + r_y Y + r_z Z \bigr)
\end{equation}
with Bloch vector $\vec{r} = (r_x,r_y,r_z)$ satisfying $\abs{\vec{r}} \le 1$.
Here, $X,Y,Z$ are the Pauli matrices.
In this parametrization, one has
\begin{equation}
  \tr(\rho^2) = \frac{1 + \abs{\vec{r}}^2}{2}.
\end{equation}
The twirling channel $\ca{G}_X$ implements the average over conjugation by $X$,
under which the Bloch vector transforms as
\begin{equation}
  (r_x,r_y,r_z) \mapsto (r_x,-r_y,-r_z).
\end{equation}
Thus,
\begin{equation}
  \ca{G}_X(\rho) = \frac{1}{2}\bigl( \mathbf{1} + r_x X \bigr),
\end{equation}
so that
\begin{equation}
  \tr\bigl(\ca{G}_X(\rho)^2\bigr)
  = \frac{1 + r_x^2}{2}.
\end{equation}
Therefore
\begin{equation}
  A_X^{(2)}(\rho)
  = \log\frac{\tr(\rho^2)}{\tr(\ca{G}_X(\rho)^2)}
  = \log\frac{1 + r_x^2 + r_y^2 + r_z^2}{1 + r_x^2}.
\end{equation}

Next, we apply the $Z$-dephasing channel $\Delta_Z$.
This map sets the $X$ and $Y$ components of the Bloch vector to zero,
\begin{equation}
  \Delta_Z(\rho) = \frac{1}{2}\bigl( \mathbf{1} + r_z Z \bigr),
\end{equation}
so that
\begin{equation}
  \tr\bigl(\Delta_Z(\rho)^2\bigr) = \frac{1 + r_z^2}{2}.
\end{equation}
Twirling this dephased state with respect to $X$ simply removes the $Z$ component,
\begin{equation}
\begin{aligned}
  \ca{G}_X\bigl(\Delta_Z(\rho)\bigr)
  &= \frac{\mathbf{1}}{2}, \\
  \tr\Bigl(\ca{G}_X\bigl(\Delta_Z(\rho)\bigr)^2\Bigr) &= \frac{1}{2}.
\end{aligned}
\end{equation}
Hence
\begin{equation}
  A_X^{(2)}\bigl(\Delta_Z(\rho)\bigr)
  = \log\frac{\tr\bigl(\Delta_Z(\rho)^2\bigr)}{\tr\bigl(\ca{G}_X(\Delta_Z(\rho))^2\bigr)}
  = \log(1 + r_z^2).
\end{equation}

The condition
\begin{equation}
  A_X^{(2)}\bigl(\Delta_Z(\rho)\bigr) > A_X^{(2)}(\rho)
\end{equation}
is therefore equivalent to
\begin{equation}
  \log(1 + r_z^2)
  > \log\frac{1 + r_x^2 + r_y^2 + r_z^2}{1 + r_x^2},
\end{equation}
or, after exponentiating and simplifying,
\begin{equation}
  r_x^2 r_z^2 > r_y^2.
\end{equation}
This inequality is satisfied for a large set of Bloch vectors.
As a simple explicit choice, take
\begin{equation}
  \vec{r} = (r,0,r), \qquad 0 < r < \frac{1}{\sqrt{2}},
\end{equation}
so that $\abs{\vec{r}}^2 = 2r^2 < 1$ and $\rho$ is a valid quantum state.
Then
\begin{equation}
  r_x^2 r_z^2 = r^4 > 0 = r_y^2,
\end{equation}
and therefore
\begin{equation}
  A_X^{(2)}\bigl(\Delta_Z(\rho)\bigr) > A_X^{(2)}(\rho).
\end{equation}

For instance, choosing $r = 1/2$ one obtains
\begin{equation}
  \rho = \frac{1}{2}\bigl(\mathbf{1} + \tfrac{1}{2} X + \tfrac{1}{2} Z\bigr)
  = \begin{pmatrix}
      0.75 & 0.25 \\
      0.25 & 0.25
    \end{pmatrix},
\end{equation}
for which
\begin{equation}
\begin{aligned}
  A_X^{(2)}(\rho) &= \log\frac{6}{5} \approx 0.1823, \\
  A_X^{(2)}\bigl(\Delta_Z(\rho)\bigr) &= \log\frac{5}{4} \approx 0.2231.
\end{aligned}
\end{equation}
Thus the $\bb{Z}_2$-covariant dephasing channel $\Delta_Z$ strictly \emph{increases}
the second R\'enyi asymmetry.

This provides a concrete counterexample to the monotonicity of $A_X^{(2)}$
under $G$-covariant operations.
In particular, the second R\'enyi version of the relative entropy of asymmetry
cannot be regarded as a resource monotone for weak symmetry breaking,
even in the simplest qubit example.

Note that for the same $\rho$,\
the relative entropy of asymmetry is given by
\begin{equation}
\begin{aligned}
A_X(\rho) &\approx 0.1458,\\
A_X(\Delta_Z(\rho)) &\approx 0.1308,
\end{aligned}
\end{equation}
and hence
\begin{equation}
A_X(\Delta_Z(\rho))<A_X(\rho).
\end{equation}
Therefore, $A_X$ indeed satisfies monotonicity, unlike $A_X^{(2)}$. 

{\color{black}
In fact, for a general density matrix $\rho$, one can show that 
\begin{equation}
   A_X(\Delta_Z(\rho))- A_X(\rho)=f(r_x)+f(r_z)-f\left(\sqrt{r_x^2+r_y^2+r_z^2}\right)\,,
\end{equation}
where $f:[-1,1]\to \mathbf{R}_{\geq0}$ is given by 
\begin{equation}
  \label{eq0}  f(x)=\fr{1+x}{2}\log(1+x)+\fr{1-x}{2}\log(1-x).
\end{equation}
First of all note that $f(x)$ is an even function of $x$. So it suffices to restrict our attention to $x\geq 0$.

Now note that 
\begin{equation}\label{eq1}
\frac{f(x)}{x^2}=\sum_{n=0}^{\infty} \frac{x^{2 n}}{2 (n+1) (2 n+1)}\,,
\end{equation}
from which it follows that
$\frac{f(x)}{x^2}$ is an increasing function of $x$ for $x\geq 0$. Thus we have 
\begin{equation}\label{ineq1}
\begin{aligned}
    &\frac{f(r_x)}{r_x^2} \leq \frac{f\left(\sqrt{r_x^2+r_y^2+r_z^2}\right)}{r_x^2+r_y^2+r_z^2}\,,\\
    &\frac{f(r_z)}{r_z^2} \leq \frac{f\left(\sqrt{r_x^2+r_y^2+r_z^2}\right)}{r_x^2+r_y^2+r_z^2}\,.
    \end{aligned}
\end{equation}
Thus we have 
\begin{equation}
\begin{aligned}
    &f(r_x)+f(r_z)\\
    &\leq \frac{r_x^2+r_z^2}{r_x^2+r_y^2+r_z^2}f\left(\sqrt{r_x^2+r_y^2+r_z^2}\right)\\
    &\leq f\left(\sqrt{r_x^2+r_y^2+r_z^2}\right).
\end{aligned}
\end{equation}
Hence it follows that 
\begin{equation}
\begin{aligned}
    A_X(\Delta_Z(\rho))- A_X(\rho)\leq 0\end{aligned}.
\end{equation}
}
It is important to stress that this ad hoc R\'enyi asymmetry $A_G^{(n)}$ should \emph{not}
be confused with the $\alpha$-R\'enyi asymmetry measure of relative entropy introduced in
~\cite{Gao2017}.  
The latter is a genuine resource monotone under $G$-covariant operations, but completely different from $A_G^{(n)}$.

\clearpage
\bibliographystyle{JHEP}
\bibliography{main}

@article{Kang:2022orq,
    author = "Kang, Monica Jinwoo and Lee, Jaeha and Ooguri, Hirosi",
    title = "{Universal formula for the density of states with continuous symmetry}",
    eprint = "2206.14814",
    archivePrefix = "arXiv",
    primaryClass = "hep-th",
    reportNumber = "CALT-TH-2022-020, IPMU 22-0026, CALT-TH-2022-020; IPMU 22-0026;",
    doi = "10.1103/PhysRevD.107.026021",
    journal = "Phys. Rev. D",
    volume = "107",
    number = "2",
    pages = "026021",
    year = "2023"
}

@article{Bhattacharyya:2007vs,
    author = "Bhattacharyya, Sayantani and Lahiri, Subhaneil and Loganayagam, R. and Minwalla, Shiraz",
    title = "{Large rotating AdS black holes from fluid mechanics}",
    eprint = "0708.1770",
    archivePrefix = "arXiv",
    primaryClass = "hep-th",
    doi = "10.1088/1126-6708/2008/09/054",
    journal = "JHEP",
    volume = "09",
    pages = "054",
    year = "2008"
}

@article{Jensen:2012jh,
    author = "Jensen, Kristan and Kaminski, Matthias and Kovtun, Pavel and Meyer, Rene and Ritz, Adam and Yarom, Amos",
    title = "{Towards hydrodynamics without an entropy current}",
    eprint = "1203.3556",
    archivePrefix = "arXiv",
    primaryClass = "hep-th",
    reportNumber = "CCTP-2012-03",
    doi = "10.1103/PhysRevLett.109.101601",
    journal = "Phys. Rev. Lett.",
    volume = "109",
    pages = "101601",
    year = "2012"
}

@article{Bhattacharyya:2012nq,
    author = "Bhattacharyya, Sayantani",
    title = "{Constraints on the second order transport coefficients of an uncharged fluid}",
    eprint = "1201.4654",
    archivePrefix = "arXiv",
    primaryClass = "hep-th",
    doi = "10.1007/JHEP07(2012)104",
    journal = "JHEP",
    volume = "07",
    pages = "104",
    year = "2012"
}

@article{Bhattacharya:2012zx,
    author = "Bhattacharya, Jyotirmoy and Bhattacharyya, Sayantani and Rangamani, Mukund",
    title = "{Non-dissipative hydrodynamics: Effective actions versus entropy current}",
    eprint = "1211.1020",
    archivePrefix = "arXiv",
    primaryClass = "hep-th",
    reportNumber = "IPMU12-0196, DCPT-12-43",
    doi = "10.1007/JHEP02(2013)153",
    journal = "JHEP",
    volume = "02",
    pages = "153",
    year = "2013"
}

@article{Banerjee:2012iz,
    author = "Banerjee, Nabamita and Bhattacharya, Jyotirmoy and Bhattacharyya, Sayantani and Jain, Sachin and Minwalla, Shiraz and Sharma, Tarun",
    title = "{Constraints on Fluid Dynamics from Equilibrium Partition Functions}",
    eprint = "1203.3544",
    archivePrefix = "arXiv",
    primaryClass = "hep-th",
    reportNumber = "TFR-TH-12-05, IPMU12-0037",
    doi = "10.1007/JHEP09(2012)046",
    journal = "JHEP",
    volume = "09",
    pages = "046",
    year = "2012"
}

@article{Shaghoulian:2015lcn,
    author = "Shaghoulian, Edgar",
    title = "{Black hole microstates in AdS}",
    eprint = "1512.06855",
    archivePrefix = "arXiv",
    primaryClass = "hep-th",
    doi = "10.1103/PhysRevD.94.104044",
    journal = "Phys. Rev. D",
    volume = "94",
    number = "10",
    pages = "104044",
    year = "2016"
}

@article{Benjamin:2023qsc,
    author = "Benjamin, Nathan and Lee, Jaeha and Ooguri, Hirosi and Simmons-Duffin, David",
    title = "{Universal asymptotics for high energy CFT data}",
    eprint = "2306.08031",
    archivePrefix = "arXiv",
    primaryClass = "hep-th",
    reportNumber = "CALT-TH 2023-014, IPMU 23-0020",
    doi = "10.1007/JHEP03(2024)115",
    journal = "JHEP",
    volume = "03",
    pages = "115",
    year = "2024"
}

@article{Benjamin:2024kdg,
    author = "Benjamin, Nathan and Lee, Jaeha and Pal, Sridip and Simmons-Duffin, David and Xu, Yixin",
    title = "{Angular fractals in thermal QFT}",
    eprint = "2405.17562",
    archivePrefix = "arXiv",
    primaryClass = "hep-th",
    reportNumber = "CALT-TH 2024-021",
    doi = "10.1007/JHEP11(2024)134",
    journal = "JHEP",
    volume = "11",
    pages = "134",
    year = "2024"
}

@article{Kusuki:2025pgx,
    author = "Kusuki, Yuya and Ooguri, Hirosi and Pal, Sridip",
    title = "{Universality of R{\'e}nyi Entropy in Conformal Field Theory}",
    eprint = "2503.24353",
    archivePrefix = "arXiv",
    primaryClass = "hep-th",
    reportNumber = "CALT 2025-008, IPMU 25-0015, KYUSHU-HET-314, RIKEN-iTHEMS-Report-25",
    doi = "10.1103/fsg7-bs7q",
    journal = "Phys. Rev. Lett.",
    volume = "135",
    number = "6",
    pages = "061603",
    year = "2025"
}

@Article{Ares2022,
  author        = {Ares, Filiberto and Murciano, Sara and Calabrese, Pasquale},
  journal       = {Nature Communications 14, 2036 (2023)},
  title         = {Entanglement asymmetry as a probe of symmetry breaking},
  year          = {2022},
  issn          = {2041-1723},
  month         = apr,
  number        = {1},
  volume        = {14},
  abstract      = {Symmetry and symmetry breaking are two pillars of modern quantum physics. Still, quantifying how much a symmetry is broken is an issue that has received little attention. In extended quantum systems, this problem is intrinsically bound to the subsystem of interest. Hence, in this work, we borrow methods from the theory of entanglement in many-body quantum systems to introduce a subsystem measure of symmetry breaking that we dub entanglement asymmetry. As a prototypical illustration, we study the entanglement asymmetry in a quantum quench of a spin chain in which an initially broken global $U(1)$ symmetry is restored dynamically. We adapt the quasiparticle picture for entanglement evolution to the analytic determination of the entanglement asymmetry. We find, expectedly, that larger is the subsystem, slower is the restoration, but also the counterintuitive result that more the symmetry is initially broken, faster it is restored, a sort of quantum Mpemba effect, a phenomenon that we show to occur in a large variety of systems.},
  archiveprefix = {arXiv},
  copyright     = {Creative Commons Attribution 4.0 International},
  date          = {2022-07-29},
  doi           = {10.1038/s41467-023-37747-8},
  eprint        = {2207.14693},
  file          = {:http\://arxiv.org/pdf/2207.14693v2:PDF},
  keywords      = {Statistical Mechanics (cond-mat.stat-mech), High Energy Physics - Theory (hep-th), Quantum Physics (quant-ph), FOS: Physical sciences},
  primaryclass  = {cond-mat.stat-mech},
  publisher     = {Springer Science and Business Media LLC},
}

@Article{Ares2023,
  author        = {Ares, Filiberto and Murciano, Sara and Vernier, Eric and Calabrese, Pasquale},
  journal       = {SciPost Phys. 15, 089 (2023)},
  title         = {Lack of symmetry restoration after a quantum quench: an entanglement asymmetry study},
  year          = {2023},
  issn          = {2542-4653},
  month         = sep,
  number        = {3},
  volume        = {15},
  abstract      = {We consider the quantum quench in the XX spin chain starting from a tilted N\'eel state which explicitly breaks the $U(1)$ symmetry of the post-quench Hamiltonian. Very surprisingly, the $U(1)$ symmetry is not restored at large time because of the activation of a non-Abelian set of charges which all break it. The breaking of the symmetry can be effectively and quantitatively characterised by the recently introduced entanglement asymmetry. By a combination of exact calculations and quasi-particle picture arguments, we are able to exactly describe the behaviour of the asymmetry at any time after the quench. Furthermore we show that the stationary behaviour is completely captured by a non-Abelian generalised Gibbs ensemble. While our computations have been performed for a non-interacting spin chain, we expect similar results to hold for the integrable interacting case as well because of the presence of non-Abelian charges also in that case.},
  archiveprefix = {arXiv},
  copyright     = {arXiv.org perpetual, non-exclusive license},
  date          = {2023-02-07},
  doi           = {10.21468/scipostphys.15.3.089},
  eprint        = {2302.03330},
  file          = {:http\://arxiv.org/pdf/2302.03330v4:PDF},
  keywords      = {Statistical Mechanics (cond-mat.stat-mech), High Energy Physics - Theory (hep-th), Quantum Physics (quant-ph), FOS: Physical sciences},
  primaryclass  = {cond-mat.stat-mech},
  publisher     = {Stichting SciPost},
}

@Article{Ferro2023,
  author        = {Ferro, Florent and Ares, Filiberto and Calabrese, Pasquale},
  journal       = {J. Stat. Mech. (2024) 023101},
  title         = {Non-equilibrium entanglement asymmetry for discrete groups: the example of the XY spin chain},
  year          = {2023},
  issn          = {1742-5468},
  month         = feb,
  number        = {2},
  pages         = {023101},
  volume        = {2024},
  abstract      = {The entanglement asymmetry is a novel quantity that, using entanglement methods, measures how much a symmetry is broken in a part of an extended quantum system. So far it has only been used to characterise the breaking of continuous Abelian symmetries. In this paper, we extend the concept to cyclic $\mathbb{Z}_N$ groups. As an application, we consider the XY spin chain, in which the ground state spontaneously breaks the $\mathbb{Z}_2$ spin parity symmetry in the ferromagnetic phase. We thoroughly investigate the non-equilibrium dynamics of this symmetry after a global quantum quench, generalising known results for the standard order parameter.},
  archiveprefix = {arXiv},
  copyright     = {arXiv.org perpetual, non-exclusive license},
  date          = {2023-07-13},
  doi           = {10.1088/1742-5468/ad138f},
  eprint        = {2307.06902},
  file          = {:http\://arxiv.org/pdf/2307.06902v2:PDF},
  keywords      = {Statistical Mechanics (cond-mat.stat-mech), High Energy Physics - Theory (hep-th), Quantum Physics (quant-ph), FOS: Physical sciences},
  primaryclass  = {cond-mat.stat-mech},
  publisher     = {IOP Publishing},
}

@Article{Capizzi2023,
  author        = {Capizzi, Luca and Mazzoni, Michele},
  journal       = {JHEP 2023, 144 (2023)},
  title         = {Entanglement asymmetry in the ordered phase of many-body systems: the Ising Field Theory},
  year          = {2023},
  month         = jul,
  abstract      = {Global symmetries of quantum many-body systems can be spontaneously broken. Whenever this mechanism happens, the ground state is degenerate and one encounters an ordered phase. In this study, our objective is to investigate this phenomenon by examining the entanglement asymmetry of a specific region. This quantity, which has recently been introduced in the context of $U(1)$ symmetry breaking, is extended to encompass arbitrary finite groups $G$. We also establish a field theoretic framework in the replica theory using twist operators. We explicitly demonstrate our construction in the ordered phase of the Ising field theory in 1+1 dimensions, where a $\mathbb{Z}_2$ symmetry is spontaneously broken, and we employ a form factor bootstrap approach to characterise a family of composite twist fields. Analytical predictions are provided for the entanglement asymmetry of an interval in the Ising model as the length of the interval becomes large. We also propose a general conjecture relating the entanglement asymmetry and the number of degenerate vacua, expected to be valid for a large class of states, and we prove it explicitly in some cases.},
  archiveprefix = {arXiv},
  copyright     = {arXiv.org perpetual, non-exclusive license},
  doi           = {10.48550/ARXIV.2307.12127},
  eprint        = {2307.12127},
  file          = {:http\://arxiv.org/pdf/2307.12127v2:PDF},
  keywords      = {Statistical Mechanics (cond-mat.stat-mech), High Energy Physics - Theory (hep-th), Quantum Physics (quant-ph), FOS: Physical sciences},
  primaryclass  = {cond-mat.stat-mech},
  publisher     = {arXiv},
}

@book{davies1976quantum,
  title={Quantum Theory of Open Systems},
  author={Davies, E. B.},
  year={1976},
  publisher={Academic Press}
}

@book{hayashi2017group,
  title={Group representation for quantum theory},
  author={Hayashi, Masahito},
  year={2017},
  publisher={Springer}
}

@article{Gibilisco2011,
  title={On a refinement of Heisenberg uncertainty relation by means of quantum Fisher information},
  author={Gibilisco, P and Isola, T},
  journal={Journal of mathematical analysis and applications},
  volume={375},
  number={1},
  pages={270--275},
  year={2011},
  publisher={Elsevier}
}

@Article{Fossati2024,
  author        = {Fossati, Michele and Ares, Filiberto and Dubail, Jerome and Calabrese, Pasquale},
  journal       = {JHEP 2024, 59 (2024)},
  title         = {Entanglement asymmetry in CFT and its relation to non-topological defects},
  year          = {2024},
  issn          = {1029-8479},
  month         = may,
  number        = {5},
  volume        = {2024},
  abstract      = {The entanglement asymmetry is an information based observable that quantifies the degree of symmetry breaking in a region of an extended quantum system. We investigate this measure in the ground state of one dimensional critical systems described by a CFT. Employing the correspondence between global symmetries and defects, the analysis of the entanglement asymmetry can be formulated in terms of partition functions on Riemann surfaces with multiple non-topological defect lines inserted at their branch cuts. For large subsystems, these partition functions are determined by the scaling dimension of the defects. This leads to our first main observation: at criticality, the entanglement asymmetry acquires a subleading contribution scaling as $\log \ell / \ell$ for large subsystem length $\ell$. Then, as an illustrative example, we consider the XY spin chain, which has a critical line described by the massless Majorana fermion theory and explicitly breaks the $U(1)$ symmetry associated with rotations about the $z$-axis. In this situation the corresponding defect is marginal. Leveraging conformal invariance, we relate the scaling dimension of these defects to the ground state energy of the massless Majorana fermion on a circle with equally-spaced point defects. We exploit this mapping to derive our second main result: the exact expression for the scaling dimension associated with $n$ of defects of arbitrary strengths. Our result generalizes a known formula for the $n=1$ case derived in several previous works. We then use this exact scaling dimension to derive our third main result: the exact prefactor of the $\log \ell/\ell$ term in the asymmetry of the critical XY chain.},
  archiveprefix = {arXiv},
  copyright     = {arXiv.org perpetual, non-exclusive license},
  date          = {2024-02-05},
  doi           = {10.1007/jhep05(2024)059},
  eprint        = {2402.03446},
  file          = {:http\://arxiv.org/pdf/2402.03446v1:PDF},
  keywords      = {High Energy Physics - Theory (hep-th), Statistical Mechanics (cond-mat.stat-mech), Quantum Physics (quant-ph), FOS: Physical sciences},
  primaryclass  = {hep-th},
  publisher     = {Springer Science and Business Media LLC},
}

@Article{Lastres2024,
  author        = {Lastres, Marco and Murciano, Sara and Ares, Filiberto and Calabrese, Pasquale},
  journal       = {J. Stat. Mech. (2025) 013107},
  title         = {Entanglement asymmetry in the critical XXZ spin chain},
  year          = {2024},
  issn          = {1742-5468},
  month         = jan,
  number        = {1},
  pages         = {013107},
  volume        = {2025},
  abstract      = {We study the explicit breaking of a $SU(2)$ symmetry to a $U(1)$ subgroup employing the entanglement asymmetry, a recently introduced observable that measures how much symmetries are broken in a part of extended quantum systems. We consider as specific model the critical XXZ spin chain, which breaks the $SU(2)$ symmetry of spin rotations except at the isotropic point, and is described by the massless compact boson in the continuum limit. We examine the $U(1)$ subgroup of $SU(2)$ that is broken outside the isotropic point by applying conformal perturbation theory, which we complement with numerical simulations on the lattice. We also analyse the entanglement asymmetry of the full $SU(2)$ group. By relying on very generic scaling arguments, we derive an asymptotic expression for it.},
  archiveprefix = {arXiv},
  copyright     = {arXiv.org perpetual, non-exclusive license},
  date          = {2024-07-08},
  doi           = {10.1088/1742-5468/ada497},
  eprint        = {2407.06427},
  file          = {:http\://arxiv.org/pdf/2407.06427v1:PDF},
  keywords      = {Statistical Mechanics (cond-mat.stat-mech), High Energy Physics - Theory (hep-th), Quantum Physics (quant-ph), FOS: Physical sciences},
  primaryclass  = {cond-mat.stat-mech},
  publisher     = {IOP Publishing},
}

@Article{Ares2023a,
  author        = {Ares, Filiberto and Murciano, Sara and Piroli, Lorenzo and Calabrese, Pasquale},
  journal       = {Phys. Rev. D 110, L061901 (2024)},
  title         = {An entanglement asymmetry study of black hole radiation},
  year          = {2023},
  issn          = {2470-0029},
  month         = sep,
  number        = {6},
  pages         = {l061901},
  volume        = {110},
  abstract      = {Hawking's discovery that black holes can evaporate through radiation emission has posed a number of questions that with time became fundamental hallmarks for a quantum theory of gravity. The most famous one is likely the information paradox, which finds an elegant explanation in the Page argument suggesting that a black hole and its radiation can be effectively represented by a random state of qubits. Leveraging the same assumption, we ponder the extent to which a black hole may display emergent symmetries, employing the entanglement asymmetry as a modern, information-based indicator of symmetry breaking. We find that for a random state devoid of any symmetry, a $U(1)$ symmetry emerges and it is exact in the thermodynamic limit before the Page time. At the Page time, the entanglement asymmetry shows a finite jump to a large value. Our findings imply that the emitted radiation is symmetric up to the Page time and then undergoes a sharp transition. Conversely the black hole is symmetric only after the Page time.},
  archiveprefix = {arXiv},
  copyright     = {arXiv.org perpetual, non-exclusive license},
  date          = {2023-11-21},
  doi           = {10.1103/physrevd.110.l061901},
  eprint        = {2311.12683},
  file          = {:http\://arxiv.org/pdf/2311.12683v2:PDF},
  keywords      = {High Energy Physics - Theory (hep-th), Statistical Mechanics (cond-mat.stat-mech), General Relativity and Quantum Cosmology (gr-qc), Quantum Physics (quant-ph), FOS: Physical sciences},
  primaryclass  = {hep-th},
  publisher     = {American Physical Society (APS)},
}

@Article{Russotto2024,
  author        = {Russotto, Angelo and Ares, Filiberto and Calabrese, Pasquale},
  journal       = {JHEP 06 (2025) 149},
  title         = {Non-Abelian entanglement asymmetry in random states},
  year          = {2024},
  issn          = {1029-8479},
  month         = jun,
  number        = {6},
  volume        = {2025},
  abstract      = {The entanglement asymmetry measures the extent to which a symmetry is broken within a subsystem of an extended quantum system. Here, we analyse this quantity in Haar random states for arbitrary compact, semi-simple Lie groups, building on and generalising recent results obtained for the $U(1)$ symmetric case. We find that, for any symmetry group, the average entanglement asymmetry vanishes in the thermodynamic limit when the subsystem is smaller than its complement. When the subsystem and its complement are of equal size, the entanglement asymmetry jumps to a finite value, indicating a sudden transition of the subsystem from a fully symmetric state to one devoid of any symmetry. For larger subsystem sizes, the entanglement asymmetry displays a logarithmic scaling with a coefficient fixed by the dimension of the group. We also investigate the fluctuations of the entanglement asymmetry, which tend to zero in the thermodynamic limit. We check our findings against exact numerical calculations for the $SU(2)$ and $SU(3)$ groups. We further discuss their implications for the thermalisation of isolated quantum systems and black hole evaporation.},
  archiveprefix = {arXiv},
  copyright     = {arXiv.org perpetual, non-exclusive license},
  date          = {2024-11-20},
  doi           = {10.1007/jhep06(2025)149},
  eprint        = {2411.13337},
  file          = {:http\://arxiv.org/pdf/2411.13337v3:PDF},
  keywords      = {High Energy Physics - Theory (hep-th), Statistical Mechanics (cond-mat.stat-mech), Quantum Physics (quant-ph), FOS: Physical sciences},
  primaryclass  = {hep-th},
  publisher     = {Springer Science and Business Media LLC},
}

@Article{Murciano2023,
  author        = {Murciano, Sara and Ares, Filiberto and Klich, Israel and Calabrese, Pasquale},
  journal       = {J. Stat. Mech. (2024) 013103},
  title         = {Entanglement asymmetry and quantum Mpemba effect in the XY spin chain},
  year          = {2023},
  issn          = {1742-5468},
  month         = jan,
  number        = {1},
  pages         = {013103},
  volume        = {2024},
  abstract      = {Entanglement asymmetry is a quantity recently introduced to measure how much a symmetry is broken in a part of an extended quantum system. It has been employed to analyze the non-equilibrium dynamics of a broken symmetry after a global quantum quench with a Hamiltonian that preserves it. In this work, we carry out a comprehensive analysis of the entanglement asymmetry at equilibrium taking the ground state of the XY spin chain, which breaks the $U(1)$ particle number symmetry, and provide a physical interpretation of it in terms of superconducting Cooper pairs. We also consider quenches from this ground state to the XX spin chain, which preserves the broken $U(1)$ symmetry. In this case, the entanglement asymmetry reveals that the more the symmetry is initially broken, the faster it may be restored in a subsystem, a surprising and counter-intuitive phenomenon that is a type of a quantum Mpemba effect. We obtain a quasi-particle picture for the entanglement asymmetry in terms of Cooper pairs, from which we derive the microscopic conditions to observe the quantum Mpemba effect in this system, giving further support to the criteria recently proposed for arbitrary integrable quantum systems. In addition, we find that the power law governing symmetry restoration depends discontinuously on whether the initial state is critical or not, leading to new forms of strong and weak Mpemba effects.},
  archiveprefix = {arXiv},
  copyright     = {arXiv.org perpetual, non-exclusive license},
  date          = {2023-10-11},
  doi           = {10.1088/1742-5468/ad17b4},
  eprint        = {2310.07513},
  file          = {:http\://arxiv.org/pdf/2310.07513v2:PDF},
  keywords      = {Statistical Mechanics (cond-mat.stat-mech), Quantum Physics (quant-ph), FOS: Physical sciences},
  primaryclass  = {cond-mat.stat-mech},
  publisher     = {IOP Publishing},
}

@Article{Ares2024,
  author        = {Ares, Filiberto and Vitale, Vittorio and Murciano, Sara},
  journal       = {Phys. Rev. B 111, 104312 (2025)},
  title         = {The quantum Mpemba effect in free-fermionic mixed states},
  year          = {2024},
  issn          = {2469-9969},
  month         = mar,
  number        = {10},
  pages         = {104312},
  volume        = {111},
  abstract      = {Recently, a novel probe to study symmetry breaking, known as entanglement asymmetry, has emerged and has been utilized to explore how symmetry is dynamically restored following quantum quenches. Interestingly, it has been shown that, in certain scenarios, greater initial symmetry breaking leads to faster restoration, akin to a quantum Mpemba effect. This study focuses on investigating the effect of mixed initial states and non-unitary dynamics on symmetry restoration. The mixedness of a state can arise from different sources. We consider dephasing or dissipative processes affecting initial pure states or unitary dynamics of initially thermal states. In the former case, the stationary state after the quench is independent of the initial configuration, resembling the phenomenology of the classical Mpemba effect. Investigating the XY spin chain model, through a combination of analytical calculations and numerical simulations, we identify the conditions for the occurrence of the quantum Mpemba effect. It turns out that this phenomenon still occurs in the presence of dissipation or at finite temperature, even though it will be eventually suppressed as the state becomes more mixed.},
  archiveprefix = {arXiv},
  copyright     = {arXiv.org perpetual, non-exclusive license},
  date          = {2024-05-14},
  doi           = {10.1103/physrevb.111.104312},
  eprint        = {2405.08913},
  file          = {:http\://arxiv.org/pdf/2405.08913v2:PDF},
  keywords      = {Statistical Mechanics (cond-mat.stat-mech), Quantum Physics (quant-ph), FOS: Physical sciences},
  primaryclass  = {cond-mat.stat-mech},
  publisher     = {American Physical Society (APS)},
}

@Article{Rylands2023,
  author        = {Rylands, Colin and Klobas, Katja and Ares, Filiberto and Calabrese, Pasquale and Murciano, Sara and Bertini, Bruno},
  journal       = {Phys. Rev. Lett. 133, 010401 (2024)},
  title         = {Microscopic origin of the quantum Mpemba effect in integrable systems},
  year          = {2023},
  issn          = {1079-7114},
  month         = jul,
  number        = {1},
  pages         = {010401},
  volume        = {133},
  abstract      = {The highly complicated nature of far from equilibrium systems can lead to a complete breakdown of the physical intuition developed in equilibrium. A famous example of this is the Mpemba effect, which states that non-equilibrium states may relax faster when they are further from equilibrium or, put another way, hot water can freeze faster than warm water. Despite possessing a storied history, the precise criteria and mechanisms underpinning this phenomenon are still not known. Here we study a quantum version of the Mpemba effect that takes place in closed many body systems with a U(1) conserved charge: in certain cases a more asymmetric initial configuration relaxes and restores the symmetry faster than a more symmetric one. In contrast to the classical case, we establish the criteria for this to occur in arbitrary integrable quantum systems using the recently introduced entanglement asymmetry. We describe the quantum Mpemba effect in such systems and relate properties of the initial state, specifically its charge fluctuations, to the criteria for its occurrence. These criteria are expounded using exact analytic and numerical techniques in several examples, a free fermion model, the Rule 54 cellular automaton, and the Lieb-Liniger model.},
  archiveprefix = {arXiv},
  copyright     = {arXiv.org perpetual, non-exclusive license},
  date          = {2023-10-06},
  doi           = {10.1103/physrevlett.133.010401},
  eprint        = {2310.04419},
  file          = {:http\://arxiv.org/pdf/2310.04419v2:PDF},
  keywords      = {Statistical Mechanics (cond-mat.stat-mech), Exactly Solvable and Integrable Systems (nlin.SI), Quantum Physics (quant-ph), FOS: Physical sciences},
  primaryclass  = {cond-mat.stat-mech},
  publisher     = {American Physical Society (APS)},
}

@Article{Yamashika2024,
  author        = {Yamashika, Shion and Ares, Filiberto and Calabrese, Pasquale},
  journal       = {Phys. Rev. B 110, 085126 (2024)},
  title         = {Entanglement asymmetry and quantum Mpemba effect in two-dimensional free-fermion systems},
  year          = {2024},
  issn          = {2469-9969},
  month         = aug,
  number        = {8},
  pages         = {085126},
  volume        = {110},
  abstract      = {The quantum Mpemba effect is the counter-intuitive non-equilibrium phenomenon wherein the dynamic restoration of a broken symmetry occurs more rapidly when the initial state exhibits a higher degree of symmetry breaking. The effect has been recently discovered theoretically and observed experimentally in the framework of global quantum quenches, but so far it has only been investigated in one-dimensional systems. Here we focus on a two-dimensional free-fermion lattice employing the entanglement asymmetry as a measure of symmetry breaking. Our investigation begins with the ground state analysis of a system featuring nearest-neighbor hoppings and superconducting pairings, the latter breaking explicitly the $U(1)$ particle number symmetry. We compute analytically the entanglement asymmetry of a periodic strip using dimensional reduction, an approach that allows us to adjust the extent of the transverse size, achieving a smooth crossover between one and two dimensions. Further applying the same method, we study the time evolution of the entanglement asymmetry after a quench to a Hamiltonian with only nearest-neighbor hoppings, preserving the particle number symmetry which is restored in the stationary state. We find that the quantum Mpemba effect is strongly affected by the size of the system in the transverse dimension, with the potential to either enhance or spoil the phenomenon depending on the initial states. We establish the conditions for its occurrence based on the properties of the initial configurations, extending the criteria found in the one-dimensional case.},
  archiveprefix = {arXiv},
  copyright     = {arXiv.org perpetual, non-exclusive license},
  date          = {2024-03-07},
  doi           = {10.1103/physrevb.110.085126},
  eprint        = {2403.04486},
  file          = {:http\://arxiv.org/pdf/2403.04486v2:PDF},
  keywords      = {Statistical Mechanics (cond-mat.stat-mech), Quantum Physics (quant-ph), FOS: Physical sciences},
  primaryclass  = {cond-mat.stat-mech},
  publisher     = {American Physical Society (APS)},
}

@Article{Liu2024,
  author        = {Liu, Shuo and Zhang, Hao-Kai and Yin, Shuai and Zhang, Shi-Xin},
  journal       = {Phys. Rev. Lett. 133, 140405 (2024)},
  title         = {Symmetry restoration and quantum Mpemba effect in symmetric random circuits},
  year          = {2024},
  issn          = {1079-7114},
  month         = oct,
  number        = {14},
  pages         = {140405},
  volume        = {133},
  abstract      = {Entanglement asymmetry, which serves as a diagnostic tool for symmetry breaking and a proxy for thermalization, has recently been proposed and studied in the context of symmetry restoration for quantum many-body systems undergoing a quench. In this Letter, we investigate symmetry restoration in various symmetric random quantum circuits, particularly focusing on the U(1) symmetry case. In contrast to non-symmetric random circuits where the U(1) symmetry of a small subsystem can always be restored at late times, we reveal that symmetry restoration can fail in U(1)-symmetric circuits for certain weak symmetry-broken initial states in finite-size systems. In the early-time dynamics, we observe an intriguing quantum Mpemba effect implying that symmetry is restored faster when the initial state is more asymmetric. Furthermore, we also investigate the entanglement asymmetry dynamics for SU(2) and $Z_{2}$ symmetric circuits and identify the presence and absence of the quantum Mpemba effect for the corresponding symmetries, respectively. A unified understanding of these results is provided through the lens of quantum thermalization with conserved charges.},
  archiveprefix = {arXiv},
  copyright     = {Creative Commons Attribution 4.0 International},
  date          = {2024-03-13},
  doi           = {10.1103/physrevlett.133.140405},
  eprint        = {2403.08459},
  file          = {:http\://arxiv.org/pdf/2403.08459v4:PDF},
  keywords      = {Quantum Physics (quant-ph), Disordered Systems and Neural Networks (cond-mat.dis-nn), Statistical Mechanics (cond-mat.stat-mech), FOS: Physical sciences},
  primaryclass  = {quant-ph},
  publisher     = {American Physical Society (APS)},
}

@Article{Chalas2024,
  author        = {Chalas, Konstantinos and Ares, Filiberto and Rylands, Colin and Calabrese, Pasquale},
  journal       = {J. Stat. Mech. (2024) 103101},
  title         = {Multiple crossing during dynamical symmetry restoration and implications for the quantum Mpemba effect},
  year          = {2024},
  issn          = {1742-5468},
  month         = oct,
  number        = {10},
  pages         = {103101},
  volume        = {2024},
  abstract      = {Local relaxation after a quench in 1-D quantum many-body systems is a well known and very active problem with rich phenomenology. Except for pathological cases, the local relaxation is accompanied by the local restoration of the symmetries broken by the initial state that are preserved by the unitary evolution. Recently, the entanglement asymmetry has been introduced as a probe to study the interplay between symmetry breaking and relaxation in an extended quantum system. In particular, using the asymmetry, it has been shown that the more a symmetry is initially broken, the faster it may be restored. This surprising effect, which has been also observed in trapped-ion experiments, can be seen as a quantum version of the Mpemba effect and is manifested by the crossing at a finite time of the entanglement asymmetry curves of two different initial symmetry breaking configurations. In this paper we show, how, by tuning the initial state, the symmetry dynamics in free fermionic systems can display much richer behaviour than seen previously. In particular, for certain classes of initial states, including ground states of free fermionic models with long-range couplings, the entanglement asymmetry can exhibit multiple crossings. This illustrates that the existence of the quantum Mpemba effect can only be inferred by examining the late time behaviour of the entanglement asymmetry.},
  archiveprefix = {arXiv},
  copyright     = {arXiv.org perpetual, non-exclusive license},
  date          = {2024-05-07},
  doi           = {10.1088/1742-5468/ad769c},
  eprint        = {2405.04436},
  file          = {:http\://arxiv.org/pdf/2405.04436v1:PDF},
  keywords      = {Statistical Mechanics (cond-mat.stat-mech), Quantum Physics (quant-ph), FOS: Physical sciences},
  primaryclass  = {cond-mat.stat-mech},
  publisher     = {IOP Publishing},
}

@Article{Caceffo2024,
  author        = {Caceffo, Fabio and Murciano, Sara and Alba, Vincenzo},
  journal       = {J. Stat. Mech. (2024) 063103},
  title         = {Entangled multiplets, asymmetry, and quantum Mpemba effect in dissipative systems},
  year          = {2024},
  issn          = {1742-5468},
  month         = jun,
  number        = {6},
  pages         = {063103},
  volume        = {2024},
  abstract      = {Recently, the entanglement asymmetry emerged as an informative tool to understand dynamical symmetry restoration in out-of-equilibrium quantum many-body systems after a quantum quench. For integrable systems the asymmetry can be understood in the space-time scaling limit via the quasiparticle picture, as it was pointed out in Ref. [1]. However, a quasiparticle picture for quantum quenches from generic initial states was still lacking. Here we conjecture a full-fledged quasiparticle picture for the charged moments of the reduced density matrix, which are the main ingredients to construct the asymmetry. Our formula works for quenches producing entangled multiplets of an arbitrary number of excitations. We benchmark our results in the $XX$ spin chain. First, by using an elementary approach based on the multidimensional stationary phase approximation we provide an $\textit{ab initio}$ rigorous derivation of the dynamics of the charged moments for the quench treated in [2]. Then, we show that the same results can be straightforwardly obtained within our quasiparticle picture. As a byproduct of our analysis, we obtain a general criterion ensuring a vanishing entanglement asymmetry at long times. Next, by using the Lindblad master equation, we study the effect of gain and loss dissipation on the entanglement asymmetry. Specifically, we investigate the fate of the so-called quantum Mpemba effect (QME) in the presence of dissipation. We show that dissipation can induce QME even if unitary dynamics does not show it, and we provide a quasiparticle-based interpretation of the condition for the QME.},
  archiveprefix = {arXiv},
  copyright     = {arXiv.org perpetual, non-exclusive license},
  date          = {2024-02-05},
  doi           = {10.1088/1742-5468/ad4537},
  eprint        = {2402.02918},
  file          = {:http\://arxiv.org/pdf/2402.02918v3:PDF},
  keywords      = {Statistical Mechanics (cond-mat.stat-mech), Quantum Gases (cond-mat.quant-gas), Strongly Correlated Electrons (cond-mat.str-el), High Energy Physics - Theory (hep-th), Quantum Physics (quant-ph), FOS: Physical sciences},
  primaryclass  = {cond-mat.stat-mech},
  publisher     = {IOP Publishing},
}

@Article{Yamashika2024a,
  author        = {Yamashika, Shion and Calabrese, Pasquale and Ares, Filiberto},
  journal       = {Physical Review A},
  title         = {Quenching from superfluid to free bosons in two dimensions: entanglement, symmetries, and quantum Mpemba effect},
  year          = {2024},
  issn          = {2469-9934},
  month         = apr,
  number        = {4},
  pages         = {043304},
  volume        = {111},
  abstract      = {We study the non-equilibrium dynamics of bosons in a two-dimensional optical lattice after a sudden quench from the superfluid phase to the free-boson regime. The initial superfluid state is described approximately using both the Bogoliubov theory and the Gaussian variational principle. The subsequent time evolution remains Gaussian, and we compare the results from each approximation of the initial state by examining different aspects of the dynamics. First, we analyze the entanglement entropy and observe that, in both cases, it increases linearly with time before reaching a saturation point. This behavior is attributed to the propagation of entangled pairs of quantum depletions in the superfluid state. Next, we explore the fate of particle-number symmetry, which is spontaneously broken in the superfluid phase. To do so, we use the entanglement asymmetry, a recently introduced observable that enables us to track symmetry breaking within a subsystem. We observe that its evolution varies qualitatively depending on the theory used to describe the initial state. However, in both cases, the symmetry remains broken and is never restored in the stationary state. Finally, we assess the time it takes to reach the stationary state by evaluating the quantum fidelity between the stationary reduced density matrix and the time-evolved one. Interestingly, within the Gaussian variational principle, we find that an initial state further from the stationary state can relax more quickly than one closer to it, indicating the presence of the recently discovered quantum Mpemba effect. We derive the microscopic conditions necessary for this effect to occur and demonstrate that these conditions are never met in the Bogoliubov theory.},
  archiveprefix = {arXiv},
  copyright     = {arXiv.org perpetual, non-exclusive license},
  date          = {2024-10-18},
  doi           = {10.1103/physreva.111.043304},
  eprint        = {2410.14299},
  file          = {:http\://arxiv.org/pdf/2410.14299v1:PDF},
  keywords      = {Statistical Mechanics (cond-mat.stat-mech), Quantum Gases (cond-mat.quant-gas), Quantum Physics (quant-ph), FOS: Physical sciences},
  primaryclass  = {cond-mat.stat-mech},
  publisher     = {American Physical Society (APS)},
}

@Article{Joshi2024,
  author        = {Joshi, Lata Kh and Franke, Johannes and Rath, Aniket and Ares, Filiberto and Murciano, Sara and Kranzl, Florian and Blatt, Rainer and Zoller, Peter and Vermersch, Benoît and Calabrese, Pasquale and Roos, Christian F. and Joshi, Manoj K.},
  journal       = {Phys. Rev. Lett. 133, 010402, 2024},
  title         = {Observing the quantum Mpemba effect in quantum simulations},
  year          = {2024},
  month         = jan,
  abstract      = {The non-equilibrium physics of many-body quantum systems harbors various unconventional phenomena. In this study, we experimentally investigate one of the most puzzling of these phenomena -- the quantum Mpemba effect, where a tilted ferromagnet restores its symmetry more rapidly when it is farther from the symmetric state compared to when it is closer. We present the first experimental evidence of the occurrence of this effect in a trapped-ion quantum simulator. The symmetry breaking and restoration are monitored through entanglement asymmetry, probed via randomized measurements, and postprocessed using the classical shadows technique. Our findings are further substantiated by measuring the Frobenius distance between the experimental state and the stationary thermal symmetric theoretical state, offering direct evidence of subsystem thermalization.},
  archiveprefix = {arXiv},
  copyright     = {arXiv.org perpetual, non-exclusive license},
  doi           = {10.48550/ARXIV.2401.04270},
  eprint        = {2401.04270},
  file          = {:http\://arxiv.org/pdf/2401.04270v2:PDF},
  keywords      = {Quantum Physics (quant-ph), Statistical Mechanics (cond-mat.stat-mech), FOS: Physical sciences},
  primaryclass  = {quant-ph},
  publisher     = {arXiv},
}

@phdthesis{Marvian_thesis2012,
	title = {Symmetry, {Asymmetry} and {Quantum} {Information}},
	url = {https://uwspace.uwaterloo.ca/handle/10012/7088},
	abstract = {It is impossible to overstate the importance of symmetry in physics and mathematics. Symmetry arguments play a central role in a broad range of problems from simplifying a system of linear equations to a deep role in organizing the fundamental principles of physics. They are used, for instance, in Noether’s theorem to find the consequences of symmetry of a dynamics. For many systems of interest, the dynamics are sufficiently complicated that one cannot hope to characterize their evolution completely, whereas by appealing to the symmetries of the dynamical laws one can easily infer many useful results. 
 
In part I of this thesis we study the problem of finding the consequences of symmetry of a (possibly open) dynamics from an information-theoretic perspective. The study of this problem naturally leads us to the notion of asymmetry of quantum states. The asymmetry of a state relative to some symmetry group specifies how and to what extent the given symmetry is broken by the state. Characterizing these is found to be surprisingly useful to constrain which final states of the system can be reached from a given initial state. Another motivation for the study of asymmetry comes from the field of quantum metrology and relatedly the field of quantum reference frames. It turns out that the degree of success one can achieve in many metrological tasks depends only on the asymmetry properties of the state used for metrology. We show that some ideas and tools developed in the field of quantum information theory are extremely useful to study the notion of asymmetry of states and therefore to find the consequences of symmetry of an open or closed system dynamics. 
 
In part II of this thesis we present a novel application of symmetry arguments in the field of quantum estimation theory. We consider a family of multi-copy estimation problems wherein one is given n copies of an unknown quantum state according to some prior distribution and the goal is to estimate certain parameters of the given state. In particular, we are interested to know whether collective measurements are useful and if so to find an upper bound on the amount of entanglement which is required to achieve the optimal estimation. We introduce a new approach to this problem by considering the symmetries of the prior and the symmetries of the parameters to be estimated. We show that based on these symmetries one can find strong constraints on the amount of entanglement required to implement the optimal measurement. In order to infer properties of the optimal estimation procedure from the symmetries of the parameters and the prior we come up with a generalization of Schur-Weyl duality. Just as Schur-Weyl duality has many applications to quantum information theory and quantum algorithms so too does this generalization. In this thesis we explore some of these applications.},
	urldate = {2022-04-05},
	school = {University of Waterloo},
	author = {Marvian Mashhad, Iman},
	month = sep,
	year = {2012},
}

@Article{Gour2009,
  author        = {Gour, Gilad and Marvian, Iman and Spekkens, Robert W.},
  journal       = {Phys. Rev. A 80, 012307 (2009)},
  title         = {Measuring the quality of a quantum reference frame: the relative entropy of frameness},
  year          = {2009},
  issn          = {1094-1622},
  month         = jul,
  number        = {1},
  pages         = {012307},
  volume        = {80},
  abstract      = {In the absence of a reference frame for transformations associated with a group G, any quantum state that is non-invariant under the action of G may serve as a token of the missing reference frame. We here introduce a novel measure of the quality of such a token: the relative entropy of frameness. This is defined as the relative entropy distance between the state of interest and the nearest G-invariant state. Unlike the relative entropy of entanglement, this quantity is straightforward to calculate and we find it to be precisely equal to the G-asymmetry, a measure of frameness introduced by Vaccaro et al. It is shown to provide an upper bound on the mutual information between the group element encoded into the token and the group element that may be extracted from it by measurement. In this sense, it quantifies the extent to which the token successfully simulates a full reference frame. We also show, that despite a suggestive analogy from entanglement theory, the regularized relative entropy of frameness is zero and therefore does not quantify the rate of interconversion between the token and some standard form of quantum reference frame. Finally, we show how these investigations yield a novel approach to bounding the relative entropy of entanglement.},
  archiveprefix = {arXiv},
  copyright     = {arXiv.org perpetual, non-exclusive license},
  date          = {2009-01-07},
  doi           = {10.1103/physreva.80.012307},
  eprint        = {0901.0943},
  file          = {:http\://arxiv.org/pdf/0901.0943v2:PDF},
  keywords      = {Quantum Physics (quant-ph), FOS: Physical sciences},
  primaryclass  = {quant-ph},
  publisher     = {American Physical Society (APS)},
}

@article{Yadin2016,
  title = {General framework for quantum macroscopicity in terms of coherence},
  author = {Yadin, Benjamin and Vedral, Vlatko},
  journal = {Phys. Rev. A},
  volume = {93},
  issue = {2},
  pages = {022122},
  numpages = {7},
  year = {2016},
  month = {Feb},
  publisher = {American Physical Society},
  doi = {10.1103/PhysRevA.93.022122},
  url = {https://link.aps.org/doi/10.1103/PhysRevA.93.022122}
}

@Article{Buca2012,
  author        = {Buca, Berislav and Prosen, Tomaz},
  journal       = {New Journal of Physics 14 (2012) 073007},
  title         = {A note on symmetry reductions of the Lindblad equation: transport in constrained open spin chains},
  year          = {2012},
  issn          = {1367-2630},
  month         = jul,
  number        = {7},
  pages         = {073007},
  volume        = {14},
  abstract      = {We study quantum transport properties of an open Heisenberg XXZ spin 1/2 chain driven by a pair of Lindblad jump operators satisfying a global `microcanonical' constraint, i.e. conserving the total magnetization. We will show that this system has an additional discrete symmetry which is particular to the Liouvillean description of the problem. Such symmetry reduces the dynamics even more than what would be expected in the standard Hilbert space formalism and establishes existence of multiple steady states. Interestingly, numerical simulations of the XXZ model suggest that a pair of distinct non-equilibrium steady states becomes indistinguishable in the thermodynamic limit, and exhibit sub-diffusive spin transport in the easy-axis regime of anisotropy Delta > 1.},
  archiveprefix = {arXiv},
  copyright     = {arXiv.org perpetual, non-exclusive license},
  date          = {2012-03-05},
  doi           = {10.1088/1367-2630/14/7/073007},
  eprint        = {1203.0943},
  file          = {:http\://arxiv.org/pdf/1203.0943v3:PDF},
  keywords      = {Quantum Physics (quant-ph), Statistical Mechanics (cond-mat.stat-mech), FOS: Physical sciences},
  primaryclass  = {quant-ph},
  publisher     = {IOP Publishing},
}

@Article{Groot2021,
  author        = {de Groot, Caroline and Turzillo, Alex and Schuch, Norbert},
  journal       = {Quantum 6, 856 (2022)},
  title         = {Symmetry Protected Topological Order in Open Quantum Systems},
  year          = {2021},
  issn          = {2521-327X},
  month         = nov,
  pages         = {856},
  volume        = {6},
  abstract      = {We systematically investigate the robustness of symmetry protected topological (SPT) order in open quantum systems by studying the evolution of string order parameters and other probes under noisy channels. We find that one-dimensional SPT order is robust against noisy couplings to the environment that satisfy a strong symmetry condition, while it is destabilized by noise that satisfies only a weak symmetry condition, which generalizes the notion of symmetry for closed systems. We also discuss "transmutation" of SPT phases into other SPT phases of equal or lesser complexity, under noisy channels that satisfy twisted versions of the strong symmetry condition.},
  archiveprefix = {arXiv},
  copyright     = {Creative Commons Attribution 4.0 International},
  date          = {2021-12-08},
  doi           = {10.22331/q-2022-11-10-856},
  eprint        = {2112.04483},
  file          = {:http\://arxiv.org/pdf/2112.04483v3:PDF},
  keywords      = {Quantum Physics (quant-ph), Strongly Correlated Electrons (cond-mat.str-el), FOS: Physical sciences},
  primaryclass  = {quant-ph},
  publisher     = {Verein zur Forderung des Open Access Publizierens in den Quantenwissenschaften},
}

@Article{Ma2022,
  author        = {Ma, Ruochen and Wang, Chong},
  journal       = {Phys. Rev. X 13, 031016 (2023)},
  title         = {Average Symmetry-Protected Topological Phases},
  year          = {2022},
  issn          = {2160-3308},
  month         = aug,
  number        = {3},
  pages         = {031016},
  volume        = {13},
  abstract      = {Symmetry-protected topological (SPT) phases are many-body quantum states that are topologically nontrivial as long as the relevant symmetries are unbroken. In this work we show that SPT phases are also well defined for average symmetries, where quenched disorders locally break the symmetries, but restore the symmetries upon disorder averaging. An example would be crystalline SPT phases with imperfect lattices. Specifically, we define the notion of average SPT for disordered ensembles of quantum states. We then classify and characterize a large class of average SPT phases using a decorated domain wall approach, in which domain walls (and more general defects) of the average symmetries are decorated with lower dimensional topological states. We then show that if the decorated domain walls have dimension higher than $(0+1)d$, then the boundary states of such average SPT will almost certainly be long-range entangled, with probability approaching $1$ as the system size approaches infinity. This generalizes the notion of t'Hooft anomaly to average symmetries, which we dub "average anomaly". The average anomaly can also manifest as constraints on lattice systems similar to the Lieb-Schultz-Mattis (LSM) theorems, but with only average lattice symmetries. We also generalize our problem to "quantum disorders" that can admit short-range entanglement on their own, and develop a theory of such generalized average SPTs purely based on density matrices and quantum channels. Our results indicate that topological quantum phenomena associated with average symmetries can be at least as rich as those with ordinary exact symmetries.},
  archiveprefix = {arXiv},
  copyright     = {arXiv.org perpetual, non-exclusive license},
  date          = {2022-09-06},
  doi           = {10.1103/physrevx.13.031016},
  eprint        = {2209.02723},
  file          = {:http\://arxiv.org/pdf/2209.02723v2:PDF},
  keywords      = {Strongly Correlated Electrons (cond-mat.str-el), Disordered Systems and Neural Networks (cond-mat.dis-nn), Mesoscale and Nanoscale Physics (cond-mat.mes-hall), Statistical Mechanics (cond-mat.stat-mech), Quantum Physics (quant-ph), FOS: Physical sciences},
  primaryclass  = {cond-mat.str-el},
  publisher     = {American Physical Society (APS)},
}

@Article{Lee2022,
  author        = {Lee, Jong Yeon and You, Yi-Zhuang and Xu, Cenke},
  journal       = {Quantum 9, 1607 (2025)},
  title         = {Symmetry protected topological phases under decoherence},
  year          = {2022},
  issn          = {2521-327X},
  month         = jan,
  pages         = {1607},
  volume        = {9},
  abstract      = {We study ensembles described by density matrices with potentially nontrivial topological features. In particular, we study a class of symmetry protected topological (SPT) phases under various types of decoherence, which can drive a pure SPT state into a mixed state. We demonstrate that the system can still retain the nontrivial topological information from the SPT ground state even under decoherence. In the "doubled Hilbert space", we provide a general definition for symmetry protected topological ensemble (SPT ensemble), and the main quantity that we investigate is various types of (boundary) anomalies in the doubled Hilbert space. We show that the notion of the strange correlator, previously proposed to as a diagnosis for the SPT ground states, can be generalized to capture these anomalies in mixed-state density matrices. Using both exact calculations of the stabilizer Hamiltonians and field theory evaluations, we demonstrate that under decoherence the nontrivial features of the SPT state can persist in the two types of strange correlators: type-I and type-II. We show that the nontrivial type-I strange correlator corresponds to the presence of the SPT information that can be efficiently identified and utilized from experiments, such as for the purpose of preparing for long-range entangled states. The nontrivial type-II strange correlator encodes the full topological response of the decohered mixed state density matrix, i.e., the information about the presence of the SPT state before decoherence. Therefore, our work provides a unified framework to understand decohered SPT phases from the information-theoretic viewpoint.},
  archiveprefix = {arXiv},
  copyright     = {Creative Commons Attribution 4.0 International},
  date          = {2022-10-28},
  doi           = {10.22331/q-2025-01-23-1607},
  eprint        = {2210.16323},
  file          = {:http\://arxiv.org/pdf/2210.16323v6:PDF},
  keywords      = {Strongly Correlated Electrons (cond-mat.str-el), Quantum Physics (quant-ph), FOS: Physical sciences},
  primaryclass  = {cond-mat.str-el},
  publisher     = {Verein zur Forderung des Open Access Publizierens in den Quantenwissenschaften},
}

@Article{Ma2023,
  author        = {Ma, Ruochen and Zhang, Jian-Hao and Bi, Zhen and Cheng, Meng and Wang, Chong},
  journal       = {Phys. Rev. X 15, 021062 (2025)},
  title         = {Topological Phases with Average Symmetries: the Decohered, the Disordered, and the Intrinsic},
  year          = {2023},
  issn          = {2160-3308},
  month         = may,
  number        = {2},
  pages         = {021062},
  volume        = {15},
  abstract      = {Global symmetries greatly enrich the landscape of topological quantum phases, playing an essential role from topological insulators to fractional quantum Hall effect. Topological phases in mixed quantum states, originating from \textit{decoherence} in open quantum systems or \textit{disorders} in imperfect crystalline solids, have recently garnered significant interest. Unlike pure states, mixed quantum states can exhibit \textit{average symmetries} -- symmetries that keep the total ensemble invariant but not on each individual state. In this work, we present a systematic classification and characterization of average symmetry-protected topological (ASPT) phases applicable to generic symmetry groups, encompassing both average and exact symmetries, for bosonic and fermionic systems. Moreover, we formulate the theory of average symmetry-enriched topological (ASET) orders in disordered bosonic systems. Our systematic approach helps clarify nuanced issues in previous literature and uncovers compelling new physics. Notably, we discover that (1) the definition and classification of ASPT phases in decohered and disordered systems exhibit subtle differences; (2) despite these differences, ASPT phases in both settings can be classified and characterized under a unified framework of defect decoration and spectral sequence; (3) this systematic classification uncovers a plethora of ASPT phases that are \textit{intrinsically mixed}, implying they can exclusively manifest in decohered or disordered systems where part of the symmetry is average; (4) similarly for ASET, we find intrinsically disordered phases exhibiting exotic anyon behaviors -- the ground states of such phases necessarily contain localized anyons, with gapless (yet still localized) excitation spectral.},
  archiveprefix = {arXiv},
  copyright     = {Creative Commons Attribution Non Commercial No Derivatives 4.0 International},
  date          = {2023-05-25},
  doi           = {10.1103/physrevx.15.021062},
  eprint        = {2305.16399},
  file          = {:http\://arxiv.org/pdf/2305.16399v4:PDF},
  keywords      = {Strongly Correlated Electrons (cond-mat.str-el), Mesoscale and Nanoscale Physics (cond-mat.mes-hall), Mathematical Physics (math-ph), Quantum Physics (quant-ph), FOS: Physical sciences},
  primaryclass  = {cond-mat.str-el},
  publisher     = {American Physical Society (APS)},
}

@Article{Kusuki2023,
  author        = {Kusuki, Yuya and Murciano, Sara and Ooguri, Hirosi and Pal, Sridip},
  journal       = {JHEP 2023, 216 (2023)},
  title         = {Symmetry-resolved Entanglement Entropy, Spectra \& Boundary Conformal Field Theory},
  year          = {2023},
  issn          = {1029-8479},
  month         = nov,
  number        = {11},
  volume        = {2023},
  abstract      = {We perform a comprehensive analysis of the symmetry-resolved (SR) entanglement entropy (EE) for one single interval in the ground state of a $1+1$D conformal field theory (CFT), that is invariant under an arbitrary finite or compact Lie group, $G$. We utilize the boundary CFT approach to study the total EE, which enables us to find the universal leading order behavior of the SREE and its first correction, which explicitly depends on the irreducible representation under consideration and breaks the equipartition of entanglement. We present two distinct schemes to carry out these computations. The first relies on the evaluation of the charged moments of the reduced density matrix. This involves studying the action of the defect-line, that generates the symmetry, on the boundary states of the theory. This perspective also paves the way for discussing the infeasibility of studying symmetry resolution when an anomalous symmetry is present. The second scheme draws a parallel between the SREE and the partition function of an orbifold CFT. This approach allows for the direct computation of the SREE without the need to use charged moments. From this standpoint, the infeasibility of defining the symmetry-resolved EE for an anomalous symmetry arises from the obstruction to gauging. Finally, we derive the symmetry-resolved entanglement spectra for a CFT invariant under a finite symmetry group. We revisit a similar problem for CFT with compact Lie group, explicitly deriving an improved formula for $U(1)$ resolved entanglement spectra. Using the Tauberian formalism, we can estimate the aforementioned EE spectra rigorously by proving an optimal lower and upper bound on the same. In the abelian case, we perform numerical checks on the bound and find perfect agreement.},
  archiveprefix = {arXiv},
  copyright     = {arXiv.org perpetual, non-exclusive license},
  date          = {2023-09-06},
  doi           = {10.1007/jhep11(2023)216},
  eprint        = {2309.03287},
  file          = {:http\://arxiv.org/pdf/2309.03287v2:PDF},
  keywords      = {High Energy Physics - Theory (hep-th), Statistical Mechanics (cond-mat.stat-mech), Quantum Physics (quant-ph), FOS: Physical sciences},
  primaryclass  = {hep-th},
  publisher     = {Springer Science and Business Media LLC},
}

@Article{Calabrese2016,
  author        = {Calabrese, Pasquale and Cardy, John},
  journal       = {J. Stat. Mech. (2016) 064003},
  title         = {Quantum quenches in 1+1 dimensional conformal field theories},
  year          = {2016},
  issn          = {1742-5468},
  month         = jun,
  number        = {6},
  pages         = {064003},
  volume        = {2016},
  abstract      = {We review the imaginary time path integral approach to the quench dynamics of conformal field theories. We show how this technique can be applied to the determination of the time dependence of correlation functions and entanglement entropy for both global and local quenches. We also briefly review other quench protocols. We carefully discuss the limits of applicability of these results to realistic models of condensed matter and cold atoms.},
  archiveprefix = {arXiv},
  copyright     = {arXiv.org perpetual, non-exclusive license},
  date          = {2016-03-09},
  doi           = {10.1088/1742-5468/2016/06/064003},
  eprint        = {1603.02889},
  file          = {:http\://arxiv.org/pdf/1603.02889v1:PDF},
  keywords      = {Statistical Mechanics (cond-mat.stat-mech), High Energy Physics - Theory (hep-th), FOS: Physical sciences},
  primaryclass  = {cond-mat.stat-mech},
  publisher     = {IOP Publishing},
}

@Article{Cardy2004,
  author        = {Cardy, John},
  journal       = {Encyclopedia of Mathematical Physics, (Elsevier, 2006)},
  title         = {Boundary Conformal Field Theory},
  year          = {2004},
  month         = nov,
  abstract      = {Boundary conformal field theory (BCFT) is simply the study of conformal field theory (CFT) in domains with a boundary. It gains its significance because, in some ways, it is mathematically simpler: the algebraic and geometric structures of CFT appear in a more straightforward manner; and because it has important applications: in string theory in the physics of open strings and D-branes, and in condensed matter physics in boundary critical behavior and quantum impurity models. In this article, however, I describe the basic ideas from the point of view of quantum field theory, without regard to particular applications nor to any deeper mathematical formulations.},
  archiveprefix = {arXiv},
  copyright     = {Assumed arXiv.org perpetual, non-exclusive license to distribute this article for submissions made before January 2004},
  doi           = {10.48550/ARXIV.HEP-TH/0411189},
  eprint        = {hep-th/0411189},
  file          = {:https\://arxiv.org/pdf/hep-th/0411189v2:PDF},
  keywords      = {High Energy Physics - Theory (hep-th), FOS: Physical sciences},
  primaryclass  = {hep-th},
  publisher     = {arXiv},
}

@article{Ares:2025onj,
    author = "Ares, Filiberto and Calabrese, Pasquale and Murciano, Sara",
    title = "{The quantum Mpemba effects}",
    eprint = "2502.08087",
    archivePrefix = "arXiv",
    primaryClass = "cond-mat.stat-mech",
    doi = "10.1038/s42254-025-00838-0",
    journal = "Nature Rev. Phys.",
    volume = "7",
    number = "8",
    pages = "451--460",
    year = "2025"
}

@article{DiGiulio:2025ems,
    author = "Di Giulio, Giuseppe and Turkeshi, Xhek and Murciano, Sara",
    title = "{Measurement-Induced Symmetry Restoration and Quantum Mpemba Effect}",
    eprint = "2502.19506",
    archivePrefix = "arXiv",
    primaryClass = "quant-ph",
    doi = "10.3390/e27040407",
    journal = "Entropy",
    volume = "27",
    number = "4",
    pages = "407",
    year = "2025"
}

@article{Benini:2024xjv,
    author = "Benini, Francesco and Godet, Victor and Singh, Amartya Harsh",
    title = "{Entanglement asymmetry in conformal field theory and holography}",
    eprint = "2407.07969",
    archivePrefix = "arXiv",
    primaryClass = "hep-th",
    reportNumber = "SISSA 14/2024/FISI",
    doi = "10.1093/ptep/ptaf080",
    journal = "PTEP",
    volume = "2025",
    pages = "6",
    year = "2025"
}

@Article{Lessa2024,
  author        = {Lessa, Leonardo A. and Ma, Ruochen and Zhang, Jian-Hao and Bi, Zhen and Cheng, Meng and Wang, Chong},
  journal       = {PRX Quantum},
  title         = {Strong-to-Weak Spontaneous Symmetry Breaking in Mixed Quantum States},
  year          = {2025},
  issn          = {2691-3399},
  month         = {Mar},
  number        = {1},
  pages         = {010344},
  volume        = {6},
  abstract      = {Symmetry in mixed quantum states can manifest in two distinct forms: strong symmetry, where each individual pure state in the quantum ensemble is symmetric with the same charge, and weak symmetry, which applies only to the entire ensemble. This paper explores a novel type of spontaneous symmetry breaking (SSB) where a strong symmetry is broken to a weak one. While the SSB of a weak symmetry is measured by the long-ranged two-point correlation function, the strong-to-weak SSB (SW-SSB) is measured by the fidelity correlator. We prove that SW-SSB is a universal property of mixed-state quantum phases, in the sense that the phenomenon of SW-SSB is robust against symmetric low-depth local quantum channels. We also show that the symmetry breaking is "spontaneous" in the sense that the effect of a local symmetry-breaking measurement cannot be recovered locally. We argue that a thermal state at a nonzero temperature in the canonical ensemble (with fixed symmetry charge) should have spontaneously broken strong symmetry. Additionally, we study non-thermal scenarios where decoherence induces SW-SSB, leading to phase transitions described by classical statistical models with bond randomness. In particular, the SW-SSB transition of a decohered Ising model can be viewed as the "ungauged" version of the celebrated toric code decodability transition. We confirm that, in the decohered Ising model, the SW-SSB transition defined by the fidelity correlator is the only physical transition in terms of channel recoverability. We also comment on other (inequivalent) definitions of SW-SSB, through correlation functions with higher Renyi indices.},
  archiveprefix = {arXiv},
  copyright     = {Creative Commons Attribution Non Commercial No Derivatives 4.0 International},
  date          = {2024-05-06},
  doi           = {10.1103/PRXQuantum.6.010344},
  eprint        = {2405.03639},
  file          = {:http\://arxiv.org/pdf/2405.03639v3:PDF},
  issue         = {1},
  keywords      = {Quantum Physics (quant-ph), Mesoscale and Nanoscale Physics (cond-mat.mes-hall), Strongly Correlated Electrons (cond-mat.str-el), FOS: Physical sciences},
  numpages      = {24},
  primaryclass  = {quant-ph},
  publisher     = {American Physical Society},
  url           = {https://link.aps.org/doi/10.1103/PRXQuantum.6.010344},
}

@Article{Vaccaro2008,
  author        = {Vaccaro, J. A. and Anselmi, F. and Wiseman, H. M. and Jacobs, K.},
  journal       = {Phys. Rev. A},
  title         = {Tradeoff between extractable mechanical work, accessible entanglement, and ability to act as a reference system, under arbitrary superselection rules},
  year          = {2008},
  issn          = {1094-1622},
  month         = {Mar},
  number        = {3},
  pages         = {032114},
  volume        = {77},
  abstract      = {Superselection rules (SSRs) limit the mechanical and quantum processing resources represented by quantum states. However SSRs can be violated using reference systems to break the underlying symmetry. We show that there is a duality between the ability of a system to do mechanical work and to act as a reference system. Further, for a bipartite system in a globally symmetric pure state, we find a triality between the system's ability to do local mechanical work, its ability to do ``logical work'' due to its accessible entanglement, and its ability to act as a shared reference system.},
  archiveprefix = {arXiv},
  copyright     = {Assumed arXiv.org perpetual, non-exclusive license to distribute this article for submissions made before January 2004},
  date          = {2005-01-21},
  doi           = {10.1103/PhysRevA.77.032114},
  eprint        = {quant-ph/0501121},
  file          = {:https\://arxiv.org/pdf/quant-ph/0501121v2:PDF},
  issue         = {3},
  keywords      = {Quantum Physics (quant-ph), FOS: Physical sciences},
  numpages      = {12},
  primaryclass  = {quant-ph},
  publisher     = {American Physical Society},
  url           = {https://link.aps.org/doi/10.1103/PhysRevA.77.032114},
}

@Article{Gour2008,
  author        = {Gour, Gilad and Spekkens, Robert W},
  journal       = {New Journal of Physics},
  title         = {The resource theory of quantum reference frames: manipulations and monotones},
  year          = {2008},
  issn          = {1367-2630},
  month         = {mar},
  number        = {3},
  pages         = {033023},
  volume        = {10},
  abstract      = {Every restriction on quantum operations defines a resource theory, determining how quantum states that cannot be prepared under the restriction may be manipulated and used to circumvent the restriction. A superselection rule (SSR) is a restriction that arises through the lack of a classical reference frame and the states that circumvent it (the resource) are quantum reference frames. We consider the resource theories that arise from three types of SSRs, associated respectively with lacking: (i) a phase reference, (ii) a frame for chirality, and (iii) a frame for spatial orientation. Focusing on pure unipartite quantum states (and in some cases restricting our attention even further to subsets of these), we explore single-copy and asymptotic manipulations. In particular, we identify the necessary and sufficient conditions for a deterministic transformation between two resource states to be possible and, when these conditions are not met, the maximum probability with which the transformation can be achieved. We also determine when a particular transformation can be achieved reversibly in the limit of arbitrarily many copies and find the maximum rate of conversion. A comparison of the three resource theories demonstrates that the extent to which resources can be interconverted decreases as the strength of the restriction increases. Along the way, we introduce several measures of frameness and prove that these are monotonically non-increasing under various classes of operations that are permitted by the SSR.},
  archiveprefix = {arXiv},
  copyright     = {Assumed arXiv.org perpetual, non-exclusive license to distribute this article for submissions made before January 2004},
  date          = {2007-11-01},
  doi           = {10.1088/1367-2630/10/3/033023},
  eprint        = {0711.0043},
  file          = {:https\://arxiv.org/pdf/0711.0043v2:PDF},
  keywords      = {Quantum Physics (quant-ph), FOS: Physical sciences},
  primaryclass  = {quant-ph},
  publisher     = {IOP Publishing},
  url           = {https://doi.org/10.1088/1367-2630/10/3/033023},
}

@Article{Marvian2020,
  author        = {Marvian, Iman},
  journal       = {Nature Communications},
  title         = {Coherence distillation machines are impossible in quantum thermodynamics},
  year          = {2020},
  issn          = {2041-1723},
  month         = jan,
  number        = {1},
  pages         = {25},
  volume        = {11},
  abstract      = {The role of coherence in quantum thermodynamics has been extensively studied in the recent years and it is now well-understood that coherence between different energy eigenstates is a resource independent of other thermodynamics resources, such as work. A fundamental remaining open question is whether the laws of quantum mechanics and thermodynamics allow the existence of a coherence distillation machine, i.e., a machine that, by possibly consuming work, obtains pure coherent states from mixed states, at a nonzero rate. This is related to another fundamental question: Starting from many copies of noisy quantum clocks which are (approximately) synchronized with a reference clock, can one distill synchronized clocks in pure states, at a non-zero rate? Surprisingly, we find that the answer to both questions is negative for generic (full-rank) mixed states. However, at the same time, it is possible to distill a sub-linear number of pure coherent states with a vanishing error.},
  archiveprefix = {arXiv},
  copyright     = {arXiv.org perpetual, non-exclusive license},
  date          = {2018-05-05},
  doi           = {10.1038/s41467-019-13846-3},
  eprint        = {1805.01989},
  file          = {:https\://arxiv.org/pdf/1805.01989v2:PDF},
  keywords      = {Quantum Physics (quant-ph), Statistical Mechanics (cond-mat.stat-mech), Atomic Physics (physics.atom-ph), FOS: Physical sciences},
  primaryclass  = {quant-ph},
  publisher     = {Springer Science and Business Media LLC},
  url           = {https://www.nature.com/articles/s41467-019-13846-3},
  urldate       = {2021-11-08},
}

@Article{Marvian2022,
  author        = {Marvian, Iman},
  journal       = {Physical Review Letters},
  title         = {Operational {Interpretation} of {Quantum} {Fisher} {Information} in {Quantum} {Thermodynamics}},
  year          = {2022},
  issn          = {1079-7114},
  month         = oct,
  number        = {19},
  pages         = {190502},
  volume        = {129},
  abstract      = {In the framework of quantum thermodynamics preparing a quantum system in a general state requires the consumption of two distinct resources, namely, work and energetic coherence. It has been shown that the work cost of preparing a quantum state is determined by its free energy. Considering a similar setting, here we determine the coherence cost of preparing a general state when there are no restrictions on work consumption. More precisely, the coherence cost is defined as the minimum rate of consumption of systems in a pure coherent state, that is needed to prepare copies of the desired system. We show that the coherence cost of any system is determined by its quantum Fisher information about the time parameter, hence introducing a new operational interpretation of this central quantity of quantum metrology. Our resource-theoretic approach also reveals a previously unnoticed connection between two fundamental properties of quantum Fisher information.},
  archiveprefix = {arXiv},
  copyright     = {arXiv.org perpetual, non-exclusive license},
  date          = {2021-12-09},
  doi           = {10.1103/PhysRevLett.129.190502},
  eprint        = {2112.04694},
  file          = {:https\://arxiv.org/pdf/2112.04694v1:PDF},
  keywords      = {Quantum Physics (quant-ph), Statistical Mechanics (cond-mat.stat-mech), Mathematical Physics (math-ph), FOS: Physical sciences},
  primaryclass  = {quant-ph},
  publisher     = {American Physical Society (APS)},
  url           = {https://link.aps.org/doi/10.1103/PhysRevLett.129.190502},
  urldate       = {2023-04-17},
}

@Article{Fossati2024a,
  author        = {Fossati, Michele and Rylands, Colin and Calabrese, Pasquale},
  title         = {Entanglement asymmetry in CFT with boundary symmetry breaking},
  year          = {2024},
  month         = nov,
  abstract      = {We examine the behavior of the entanglement asymmetry in the ground state of a (1+1)-dimensional conformal field theory with a boundary condition that explicitly breaks a bulk symmetry. Our focus is on the asymmetry of a subsystem $A$ originating from the symmetry-breaking boundary and extending into a semi-infinite bulk. By employing the twist field formalism, we derive a universal expression for the asymmetry, showing that the asymptotic behavior for large subsystems is approached algebraically, with an exponent which is twice the conformal dimension of a boundary condition-changing operator. As a secondary result, we also establish a similar asymptotic behavior for the string order parameter. Our exact analytical findings are validated through numerical simulations in the critical Ising and 3-state Potts models.},
  archiveprefix = {arXiv},
  copyright     = {arXiv.org perpetual, non-exclusive license},
  doi           = {10.48550/ARXIV.2411.10244},
  eprint        = {2411.10244},
  file          = {:http\://arxiv.org/pdf/2411.10244v2:PDF},
  keywords      = {High Energy Physics - Theory (hep-th), Statistical Mechanics (cond-mat.stat-mech), Quantum Physics (quant-ph), FOS: Physical sciences},
  primaryclass  = {hep-th},
  publisher     = {arXiv},
}

@Article{Kusuki2024,
  author        = {Kusuki, Yuya and Murciano, Sara and Ooguri, Hirosi and Pal, Sridip},
  journal       = {JHEP 01 (2025) 057},
  title         = {Entanglement asymmetry and symmetry defects in boundary conformal field theory},
  year          = {2024},
  issn          = {1029-8479},
  month         = nov,
  number        = {1},
  volume        = {2025},
  abstract      = {A state in a quantum system with a given global symmetry, $G$, can be sensitive to the presence of boundaries, which may either preserve or break this symmetry. In this work, we investigate how conformal invariant boundary conditions influence the $G-$symmetry breaking through the lens of the entanglement asymmetry, a quantifier of the "distance" between a symmetry-broken state and its symmetrized counterpart. By leveraging 2D boundary conformal field theory (BCFT), we investigate the symmetry breaking for both finite and compact Lie groups. Beyond the leading order term, we also compute the subleading corrections in the subsystem size, highlighting their dependence on the symmetry group $G$ and the BCFT operator content. We further explore the entanglement asymmetry following a global quantum quench, where a symmetry-broken state evolves under a symmetry-restoring Hamiltonian. In this dynamical setting, we compute the entanglement asymmetry by extending the method of images to a BCFT with non-local objects such as invertible symmetry defects.},
  archiveprefix = {arXiv},
  copyright     = {arXiv.org perpetual, non-exclusive license},
  date          = {2024-11-14},
  doi           = {10.48550/ARXIV.2411.09792},
  eprint        = {2411.09792},
  file          = {:http\://arxiv.org/pdf/2411.09792v1:PDF},
  keywords      = {High Energy Physics - Theory (hep-th), Statistical Mechanics (cond-mat.stat-mech), Quantum Physics (quant-ph), FOS: Physical sciences},
  primaryclass  = {hep-th},
  publisher     = {arXiv},
}

@Article{Chen2024,
  author        = {Chen, Hui-Huang and Tang, Zi-Jun},
  title         = {Entanglement asymmetry in the Hayden-Preskill protocol},
  year          = {2024},
  month         = nov,
  abstract      = {In this paper, we consider the time evolution of entanglement asymmetry of the black hole radiation in the Hayden-Preskill thought experiment. We assume the black hole is initially in a mixed state since it is entangled with the early radiation. Alice throws a diary maximally entangled with a reference system into the black hole. After the black hole has absorbed the diary, Bob tries to recover the information that Alice thought should be destroyed by the black hole. In this protocol, we found that a $U(1)$ symmetry of the radiation emerges before a certain transition time (the time when the vanishing entanglement asymmetry begins to grow). This emergent symmetry is exact in the thermodynamic limit and can be characterized by the vanishing entanglement asymmetry of the radiation. The transition time depends on the initial entropy and the size of the diary. What's more, when the initial state of the black hole is maximally mixed, this emergent symmetry survives during the whole procedure of the black hole radiation. We successfully explained this novel phenomenon using the decoupling inequality.},
  archiveprefix = {arXiv},
  copyright     = {Creative Commons Attribution 4.0 International},
  doi           = {10.48550/ARXIV.2411.17695},
  eprint        = {2411.17695},
  file          = {:http\://arxiv.org/pdf/2411.17695v2:PDF},
  keywords      = {High Energy Physics - Theory (hep-th), Statistical Mechanics (cond-mat.stat-mech), Quantum Physics (quant-ph), FOS: Physical sciences},
  primaryclass  = {hep-th},
  publisher     = {arXiv},
}

@Article{Capizzi2023a,
  author        = {Capizzi, Luca and Vitale, Vittorio},
  journal       = {J. Phys. A: Math. Theor. 57 45LT01 2024},
  title         = {A universal formula for the entanglement asymmetry of matrix product states},
  year          = {2023},
  issn          = {1751-8121},
  month         = oct,
  number        = {45},
  pages         = {45LT01},
  volume        = {57},
  abstract      = {Symmetry breaking is a fundamental concept in understanding quantum phases of matter, studied so far mostly through the lens of local order parameters. Recently, a new entanglement-based probe of symmetry breaking has been introduced under the name of \textit{entanglement asymmetry}, which has been employed to investigate the mechanism of dynamical symmetry restoration. Here, we provide a universal formula for the entanglement asymmetry of matrix product states with finite bond dimension, valid in the large volume limit. We show that the entanglement asymmetry of any compact -- discrete or continuous -- group depends only on the symmetry breaking pattern, and is not related to any other microscopic features.},
  archiveprefix = {arXiv},
  copyright     = {arXiv.org perpetual, non-exclusive license},
  date          = {2023-10-03},
  doi           = {10.48550/ARXIV.2310.01962},
  eprint        = {2310.01962},
  file          = {:http\://arxiv.org/pdf/2310.01962v2:PDF},
  keywords      = {Quantum Physics (quant-ph), Statistical Mechanics (cond-mat.stat-mech), FOS: Physical sciences},
  primaryclass  = {quant-ph},
  publisher     = {arXiv},
}

@Article{Chen2023,
  author        = {Chen, Miao and Chen, Hui-Huang},
  title         = {Rényi entanglement asymmetry in 1+1-dimensional conformal field theories},
  year          = {2023},
  month         = oct,
  abstract      = {In this paper, we consider the R\'enyi entanglement asymmetry of excited states in the 1+1 dimensional free compact boson conformal field theory (CFT) at equilibrium. We obtain a universal CFT expression written by correlation functions for the charged moments via the replica trick. We provide detailed analytic computations of the second R\'enyi entanglement asymmetry in the free compact boson CFT for excited states $\Psi=V_{\beta}+V_{-\beta}$ and $\Phi=V_{\beta}+J$ with $V_{\beta}$ and $J=i\partial\phi$ being the vertex operator and current operator respectively. We make numerical tests of the universal CFT computations using the XX spin chain model. Taking the non-Hermite fake RDMs into consideration, we propose an effective way to test them numerically, which can be applied to other excited states. The CFT predictions are in perfect agreement with the exact numerical calculations.},
  archiveprefix = {arXiv},
  copyright     = {Creative Commons Attribution 4.0 International},
  doi           = {10.48550/ARXIV.2310.15480},
  eprint        = {2310.15480},
  file          = {:http\://arxiv.org/pdf/2310.15480v2:PDF},
  keywords      = {High Energy Physics - Theory (hep-th), Statistical Mechanics (cond-mat.stat-mech), FOS: Physical sciences},
  primaryclass  = {hep-th},
  publisher     = {arXiv},
}

@Article{Florio2025,
  author        = {Florio, Adrien and Murciano, Sara},
  title         = {{Entanglement asymmetry in gauge theories: chiral anomaly in the finite temperature massless Schwinger model}},
  year          = {2025},
  month         = {11},
  abstract      = {The entanglement asymmetry has emerged in recent years as a practical quantity to study phases of matter. We present the first study of entanglement asymmetry in gauge theories by considering the chiral anomaly of the analytically solvable massless Schwinger model at both zero and finite temperatures. At zero temperature, we find the asymmetry exhibits logarithmic growth with system size. At finite temperature, we show that it is parametrically more sensitive to chiral symmetry-breaking than the corresponding local order parameter: while the chiral condensate decays exponentially, the asymmetry decreases only logarithmically. This establishes the entanglement asymmetry as a promising tool to probe (finite-temperature) phase transitions in gauge theories.},
  archiveprefix = {arXiv},
  copyright     = {arXiv.org perpetual, non-exclusive license},
  doi           = {10.48550/ARXIV.2511.01966},
  eprint        = {2511.01966},
  file          = {:https\://arxiv.org/pdf/2511.01966v1:PDF},
  keywords      = {High Energy Physics - Theory (hep-th), Statistical Mechanics (cond-mat.stat-mech), High Energy Physics - Phenomenology (hep-ph), Nuclear Theory (nucl-th), Quantum Physics (quant-ph), FOS: Physical sciences},
  primaryclass  = {hep-th},
  publisher     = {arXiv},
}

@Article{Turkeshi2024,
  author        = {Turkeshi, Xhek and Calabrese, Pasquale and De Luca, Andrea},
  title         = {Quantum Mpemba Effect in Random Circuits},
  year          = {2024},
  month         = may,
  abstract      = {The essence of the Mpemba effect is that non-equilibrium systems may relax faster the further they are from their equilibrium configuration. In the quantum realm, this phenomenon arises in the dynamics of closed systems, where it is witnessed by fundamental features such as symmetry and entanglement. Here, we study the quantum Mpemba effect in charge-preserving random circuits on qudits combining extensive numerical simulations and analytical arguments. We show that the more asymmetric certain classes of initial states (tilted ferromagnets) are, the faster they restore symmetry and reach the grand-canonical ensemble. Conversely, other classes of states (tilted antiferromagnets) do not show the Mpemba effect. We provide a simple and general mechanism underlying the effect, based on the spreading of nonconserved operators in terms of conserved densities. Our analysis is based on minimal principles -- locality, unitarity, and symmetry. Consequently, our results represent a significant advancement in clarifying the emergence of Mpemba physics in chaotic systems.},
  archiveprefix = {arXiv},
  copyright     = {Creative Commons Attribution 4.0 International},
  doi           = {10.48550/ARXIV.2405.14514},
  eprint        = {2405.14514},
  file          = {:http\://arxiv.org/pdf/2405.14514v2:PDF},
  keywords      = {Quantum Physics (quant-ph), Statistical Mechanics (cond-mat.stat-mech), FOS: Physical sciences},
  primaryclass  = {quant-ph},
  publisher     = {arXiv},
}

@Article{Liu2024a,
  author        = {Liu, Shuo and Zhang, Hao-Kai and Yin, Shuai and Zhang, Shi-Xin and Yao, Hong},
  journal       = {Science Bulletin 70 (23), 3991-3996 (2025)},
  title         = {Quantum Mpemba effects in many-body localization systems},
  year          = {2024},
  issn          = {2095-9273},
  month         = aug,
  number        = {23},
  pages         = {3991--3996},
  volume        = {70},
  abstract      = {The nonequilibrium dynamics of quantum many-body systems have attracted growing attention due to various intriguing phenomena absent in equilibrium physics. One famous example is the quantum Mpemba effect, where the subsystem symmetry is restored faster under a symmetric quench from a more asymmetric initial state. The quantum Mpemba effect has been extensively studied in integrable and chaotic systems. In this Letter, we investigate symmetry restoration and quantum Mpemba effect in many-body localized systems with various initial states. We reveal that the symmetry can still be fully restored in many-body localization phases without approaching thermal equilibrium. Furthermore, we demonstrate that the presence of the quantum Mpemba effect is universal for any initial tilted product state, contrasting to the cases in the chaotic systems where the presence of the quantum Mpemba effect relies on the choice of initial states. We also provide a theoretical analysis of symmetry restoration and quantum Mpemba effects with the help of the effective model for many-body localization. This Letter not only sheds light on extending the quantum Mpemba effect to more non-equilibrium settings but also contributes to a deeper understanding of the many-body localization.},
  archiveprefix = {arXiv},
  copyright     = {Creative Commons Attribution 4.0 International},
  date          = {2024-08-14},
  doi           = {10.48550/ARXIV.2408.07750},
  eprint        = {2408.07750},
  file          = {:http\://arxiv.org/pdf/2408.07750v1:PDF},
  keywords      = {Disordered Systems and Neural Networks (cond-mat.dis-nn), Statistical Mechanics (cond-mat.stat-mech), Strongly Correlated Electrons (cond-mat.str-el), Quantum Physics (quant-ph), FOS: Physical sciences},
  primaryclass  = {cond-mat.dis-nn},
  publisher     = {arXiv},
}

@Article{Rylands2024,
  author        = {Rylands, Colin and Vernier, Eric and Calabrese, Pasquale},
  title         = {Dynamical symmetry restoration in the Heisenberg spin chain},
  year          = {2024},
  month         = sep,
  abstract      = {The entanglement asymmetry is an observable independent tool to investigate the relaxation of quantum many body systems through the restoration of an initially broken symmetry of the dynamics. In this paper we use this to investigate the effects of interactions on quantum relaxation in a paradigmatic integrable model. Specifically, we study the dynamical restoration of the $U(1)$ symmetry corresponding to rotations about the $z$-axis in the XXZ model quenched from a tilted ferromagnetic state. We find two distinct patterns of behaviour depending upon the interaction regime of the model. In the gapless regime, at roots of unity, we find that the symmetry restoration is predominantly carried out by bound states of spinons of maximal length. The velocity of these bound states is suppressed as the anisotropy is decreased towards the isotropic point leading to slower symmetry restoration. By varying the initial tilt angle, one sees that symmetry restoration is slower for an initally smaller tilt angle, signifying the presence of the quantum Mpemba effect. In the gapped regime however, spin transport for non maximally tilted states, is dominated by smaller bound states with longer bound states becoming frozen. This leads to a much longer time scales for restoration compared to the gapless regime. In addition, the quantum Mpemba effect is absent in the gapped regime.},
  archiveprefix = {arXiv},
  copyright     = {Creative Commons Attribution 4.0 International},
  doi           = {10.48550/ARXIV.2409.08735},
  eprint        = {2409.08735},
  file          = {:http\://arxiv.org/pdf/2409.08735v1:PDF},
  keywords      = {Statistical Mechanics (cond-mat.stat-mech), Quantum Physics (quant-ph), FOS: Physical sciences},
  primaryclass  = {cond-mat.stat-mech},
  publisher     = {arXiv},
}

@Article{Aditya2025,
  author        = {Aditya, Sreemayee and Summer, Alessandro and Sierant, Piotr and Turkeshi, Xhek},
  title         = {{Mpemba Effects in Quantum Complexity}},
  year          = {2025},
  month         = {9},
  abstract      = {The Mpemba effect is the phenomenon whereby systems farther from equilibrium may relax faster. In this work, we show that this counterintuitive behavior appears in the very measures that define quantum complexity. Using the framework of quantum resource theories, we study the dynamics of coherence, imaginarity, non-Gaussianity, and magic state resources in random circuit models. Our results reveal that coherence and imaginarity display a quantum Mpemba effect when the system is initialized in resourceful product states, while non-Gaussianity and magic do not. Strikingly, all four resources exhibit the so-called Pontus-Mpemba effect: an initial "preheating" stage accelerates relaxation compared to direct "cooling" dynamics. Taken together, our findings show that Mpemba physics extends beyond thermodynamics and asymmetry, emerging broadly in the resource theories that capture aspects of quantum complexity.},
  archiveprefix = {arXiv},
  copyright     = {Creative Commons Attribution 4.0 International},
  doi           = {10.48550/ARXIV.2509.22176},
  eprint        = {2509.22176},
  file          = {:https\://arxiv.org/pdf/2509.22176v2:PDF},
  keywords      = {Quantum Physics (quant-ph), Statistical Mechanics (cond-mat.stat-mech), FOS: Physical sciences},
  primaryclass  = {quant-ph},
  publisher     = {arXiv},
}

@Article{Summer2025,
  author        = {Summer, Alessandro and Moroder, Mattia and Bettmann, Laetitia P. and Turkeshi, Xhek and Marvian, Iman and Goold, John},
  title         = {{A resource theoretical unification of Mpemba effects: classical and quantum}},
  year          = {2025},
  month         = {7},
  abstract      = {The Mpemba effect originally referred to the observation that, under certain thermalizing dynamics, initially hotter samples can cool faster than colder ones. This effect has since been generalized to other anomalous relaxation behaviors even beyond classical domains, such as symmetry restoration in quantum systems. This work demonstrates that resource theories, widely employed in information theory, provide a unified organizing principle to frame Mpemba physics. We show how the conventional thermal Mpemba effect arises naturally from the resource theory of athermality, while its symmetry-restoring counterpart is fully captured by the resource theories of asymmetry. Leveraging the framework of modes of asymmetry, we demonstrate that the Mpemba effect due to symmetry restoration is governed by the initial overlap with the slowest symmetry-restoring mode -- mirroring the role of the slowest Liouvillian eigenmode in thermal Mpemba dynamics. Through this resource-theoretical formalism, we uncover the connection between these seemingly disparate effects and show that the dynamics of thermalization naturally splits into a symmetry-respecting and a symmetry-breaking term.},
  archiveprefix = {arXiv},
  copyright     = {Creative Commons Attribution 4.0 International},
  doi           = {10.48550/ARXIV.2507.16976},
  eprint        = {2507.16976},
  file          = {:https\://arxiv.org/pdf/2507.16976v1:PDF},
  keywords      = {Quantum Physics (quant-ph), Statistical Mechanics (cond-mat.stat-mech), Classical Physics (physics.class-ph), FOS: Physical sciences},
  primaryclass  = {quant-ph},
  publisher     = {arXiv},
}

@Article{Mazzoni2025,
  author        = {Mazzoni, Michele and Capizzi, Luca and Piroli, Lorenzo},
  title         = {{Breaking global symmetries with locality-preserving operations}},
  year          = {2025},
  month         = {8},
  abstract      = {In the general framework of quantum resource theories, one typically only distinguishes between operations that can or cannot generate the resource of interest. In many-body settings, one can further characterize quantum operations based on underlying geometrical constraints, and a natural question is to understand the power of resource-generating operations that preserve locality. In this work, we address this question within the resource theory of asymmetry, which has recently found applications in the study of many-body symmetry-breaking and symmetry-restoration phenomena. We consider symmetries corresponding to both abelian and non-abelian compact groups with a homogeneous action on the space of $N$ qubits, focusing on the prototypical examples of $U(1)$ and $SU(2)$. We study the so-called $G$-asymmetry $ΔS^{G}_N$, and present two main results. First, we derive a general bound on the asymmetry that can be generated by locality-preserving operations acting on product states. We prove that, in any spatial dimension, $ΔS^{G}_N\leq (1/2)ΔS^{G, \rm max}_N[1+o(1)]$, where $ΔS^{G, \rm max}_N$ is the maximum value of the $G$-asymmetry in the full many-body Hilbert space. Second, we show that locality-preserving operations can generate maximal asymmetry, $ΔS^{G}_N\simΔS^{G, \rm max}_N$, when applied to symmetric states featuring long-range entanglement. Our results provide a unified perspective on recent studies of asymmetry in many-body physics, highlighting a non-trivial interplay between asymmetry, locality, and entanglement.},
  archiveprefix = {arXiv},
  copyright     = {arXiv.org perpetual, non-exclusive license},
  doi           = {10.48550/ARXIV.2508.15892},
  eprint        = {2508.15892},
  file          = {:https\://arxiv.org/pdf/2508.15892v1:PDF},
  keywords      = {Quantum Physics (quant-ph), Statistical Mechanics (cond-mat.stat-mech), Mathematical Physics (math-ph), FOS: Physical sciences},
  primaryclass  = {quant-ph},
  publisher     = {arXiv},
}

@Article{Yamashika2025,
  author        = {Yamashika, Shion and Endo, Shimpei and Tajima, Hiroyasu},
  title         = {Quantum Fisher Information as a Measure of Symmetry Breaking in Quantum Many-Body Systems},
  year          = {2025},
  month         = sep,
  abstract      = {Symmetry breaking underlies diverse phenomena from phase transitions in condensed matter to fundamental interactions in gauge theories. Despite many proposed indicators, a general quantification of symmetry breaking that is faithful, computable, and valid in the thermodynamic limit has remained elusive. Here, within quantum resource theory, we propose the quantum Fisher information (QFI) as such a measure. We demonstrate its utility by computing QFI for paradigmatic models: in the BCS superconductor, the QFI counts the number of Cooper pairs; in the transverse-field XY spin chains, it captures topological phase transition that has no local order parameter; and in quantum quench dynamics, it allows us to exactly derive the microscopic origin and conditions of the quantum Mpemba effect in terms of excitation propagation, including in the thermodynamic limit--beyond the reach of previous analyses. Our results show that the QFI, which is a complete resource monotone in the resource theory of asymmetry that plays the role of entanglement entropy in entanglement theory, faithfully captures symmetry breaking in condensed-matter systems. These results highlight the QFI as a universal and physically meaningful diagnostic of symmetry breaking in both equilibrium and non-equilibrium quantum many-body systems.},
  archiveprefix = {arXiv},
  copyright     = {arXiv.org perpetual, non-exclusive license},
  doi           = {10.48550/ARXIV.2509.07468},
  eprint        = {2509.07468},
  file          = {:https\://arxiv.org/pdf/2509.07468v1:PDF},
  keywords      = {Statistical Mechanics (cond-mat.stat-mech), Quantum Gases (cond-mat.quant-gas), Quantum Physics (quant-ph), FOS: Physical sciences},
  primaryclass  = {cond-mat.stat-mech},
  publisher     = {arXiv},
}

@Article{Zhang2022,
  author        = {Zhang, Jian-Hao and Qi, Yang and Bi, Zhen},
  title         = {Fidelity Strange Correlators for Average Symmetry-Protected Topological Phases},
  year          = {2022},
  month         = oct,
  abstract      = {Topological phases in open quantum systems have become an exciting area of research, driven by the rising importance of noisy intermediate-scale quantum platforms. The concept of Average Symmetry-Protected Topological (ASPT) phases extends symmetry-protected topological phases to quantum systems affected by disorder or decoherence. To identify nontrivial ASPT states, we introduce a "fidelity strange correlator" (FSC), which operates directly on a single bulk density matrix without boundaries. For nontrivial ASPT phases in one and two dimensions, we show that the FSC reveals long-range or power-law behavior. In several two-dimensional examples, we establish a link between the FSC and certain non-local correlation functions in statistical loop models with quantum corrections, enabling us to derive exact scaling exponents for the FSC. In addition, we discuss methods to measure the FSC using classical shadow tomography. Our work lays the groundwork for identifying these intriguing topological phases of matter in open quantum systems, both numerically and experimentally.},
  archiveprefix = {arXiv},
  copyright     = {Creative Commons Attribution Non Commercial No Derivatives 4.0 International},
  doi           = {10.48550/ARXIV.2210.17485},
  eprint        = {2210.17485},
  file          = {:http\://arxiv.org/pdf/2210.17485v3:PDF},
  keywords      = {Strongly Correlated Electrons (cond-mat.str-el), Statistical Mechanics (cond-mat.stat-mech), Quantum Physics (quant-ph), FOS: Physical sciences},
  primaryclass  = {cond-mat.str-el},
  publisher     = {arXiv},
}

@Article{Bennett1996,
  author        = {Bennett, Charles H. and Bernstein, Herbert J. and Popescu, Sandu and Schumacher, Benjamin},
  journal       = {Phys.Rev.A53:2046-2052,1996},
  title         = {Concentrating Partial Entanglement by Local Operations},
  year          = {1995},
  issn          = {1094-1622},
  month         = apr,
  number        = {4},
  pages         = {2046--2052},
  volume        = {53},
  abstract      = {If two separated observers are supplied with entanglement, in the form of $n$ pairs of particles in identical partly-entangled pure states, one member of each pair being given to each observer; they can, by local actions of each observer, concentrate this entanglement into a smaller number of maximally-entangled pairs of particles, for example Einstein-Podolsky-Rosen singlets, similarly shared between the two observers. The concentration process asymptotically conserves {\em entropy of entanglement}---the von Neumann entropy of the partial density matrix seen by either observer---with the yield of singlets approaching, for large $n$, the base-2 entropy of entanglement of the initial partly-entangled pure state. Conversely, any pure or mixed entangled state of two systems can be produced by two classically-communicating separated observers, drawing on a supply of singlets as their sole source of entanglement.},
  archiveprefix = {arXiv},
  copyright     = {Assumed arXiv.org perpetual, non-exclusive license to distribute this article for submissions made before January 2004},
  date          = {1995-11-21},
  doi           = {10.1103/physreva.53.2046},
  eprint        = {quant-ph/9511030},
  file          = {:https\://arxiv.org/pdf/quant-ph/9511030v1:PDF},
  keywords      = {Quantum Physics (quant-ph), FOS: Physical sciences},
  primaryclass  = {quant-ph},
  publisher     = {American Physical Society (APS)},
}

@Article{Marvian2013,
  author        = {Marvian, Iman and Spekkens, Robert W},
  journal       = {New Journal of Physics},
  title         = {The theory of manipulations of pure state asymmetry: {I}. {Basic} tools, equivalence classes and single copy transformations},
  year          = {2013},
  issn          = {1367-2630},
  month         = mar,
  number        = {3},
  pages         = {033001},
  volume        = {15},
  abstract      = {If a system undergoes symmetric dynamics, then the final state of the system can only break the symmetry in ways in which it was broken by the initial state, and its measure of asymmetry can be no greater than that of the initial state. It follows that for the purpose of understanding the consequences of symmetries of dynamics, in particular, complicated and open-system dynamics, it is useful to introduce the notion of a state's asymmetry properties, which includes the type and measure of its asymmetry. We demonstrate and exploit the fact that the asymmetry properties of a state can also be understood in terms of information-theoretic concepts, for instance in terms of the state's ability to encode information about an element of the symmetry group. We show that the asymmetry properties of a pure state psi relative to the symmetry group G are completely specified by the characteristic function of the state, defined as chi_psi(g)=<psi|U(g)|psi> where g\in G and U is the unitary representation of interest. For a symmetry described by a compact Lie group G, we show that two pure states can be reversibly interconverted one to the other by symmetric operations if and only if their characteristic functions are equal up to a 1-dimensional representation of the group. Characteristic functions also allow us to easily identify the conditions for one pure state to be converted to another by symmetric operations (in general irreversibly) for the various paradigms of single-copy transformations: deterministic, state-to-ensemble, stochastic and catalyzed.},
  archiveprefix = {arXiv},
  copyright     = {arXiv.org perpetual, non-exclusive license},
  date          = {2011-03-31},
  doi           = {10.1088/1367-2630/15/3/033001},
  eprint        = {1104.0018},
  file          = {:https\://arxiv.org/pdf/1104.0018v2:PDF},
  keywords      = {Quantum Physics (quant-ph), FOS: Physical sciences},
  primaryclass  = {quant-ph},
  publisher     = {IOP Publishing},
  shorttitle    = {The theory of manipulations of pure state asymmetry},
  url           = {https://iopscience.iop.org/article/10.1088/1367-2630/15/3/033001},
  urldate       = {2022-04-10},
}

@Misc{Marvian2016,
  author        = {Iman Marvian and Seth Lloyd},
  month         = aug,
  title         = {From clocks to cloners: Catalytic transformations under covariant operations and recoverability},
  year          = {2016},
  abstract      = {There are various physical scenarios in which one can only implement operations with a certain symmetry. Under such restriction, a system in a symmetry-breaking state can be used as a catalyst, e.g. to prepare another system in a desired symmetry-breaking state. This sort of (approximate) catalytic state transformations are relevant in the context of (i) state preparation using a bounded-size quantum clock or reference frame, where the clock or reference frame acts as a catalyst, (ii) quantum thermodynamics, where again a clock can be used as a catalyst to prepare states which contain coherence with respect to the system Hamiltonian, and (iii) cloning of unknown quantum states, where the given copies of state can be interpreted as a catalyst for preparing the new copies. Using a recent result of Fawzi and Renner on approximate recoverability, we show that the achievable accuracy in this kind of catalytic transformations can be determined by a single function, namely the relative entropy of asymmetry, which is equal to the difference between the entropy of state and its symmetrized version: if the desired state transition does not require a large increase of this quantity, then it can be implemented with high fidelity using only symmetric operations. Our lower bound on the achievable fidelity is tight in the case of cloners, and can be achieved using the Petz recovery map, which interestingly turns out to be the optimal cloning map found by Werner.},
  archiveprefix = {arXiv},
  copyright     = {arXiv.org perpetual, non-exclusive license},
  doi           = {10.48550/ARXIV.1608.07325},
  eprint        = {1608.07325},
  file          = {:https\://arxiv.org/pdf/1608.07325v1:PDF},
  keywords      = {Quantum Physics (quant-ph), FOS: Physical sciences},
  primaryclass  = {quant-ph},
  publisher     = {arXiv},
  url           = {https://arxiv.org/abs/1608.07325},
}

@Article{Helstrom1969,
  author   = {Helstrom, Carl W.},
  journal  = {Journal of Statistical Physics},
  title    = {Quantum detection and estimation theory},
  year     = {1969},
  issn     = {1572-9613},
  month    = jun,
  number   = {2},
  pages    = {231--252},
  volume   = {1},
  abstract = {A review. Quantum detection theory is a reformulation, in quantum-mechanical terms, of statistical decision theory as applied to the detection of signals in random noise. Density operators take the place of the probability density functions of conventional statistics. The optimum procedure for choosing between two hypotheses, and an approximate procedure valid at small signal-to-noise ratios and called threshold detection, are presented. Quantum estimation theory seeks best estimators of parameters of a density operator. A quantum counterpart of the Cramér-Rao inequality of conventional statistics sets a lower bound to the mean-square errors of such estimates. Applications at present are primarily to the detection and estimation of signals of optical frequencies in the presence of thermal radiation.},
  doi      = {10.1007/BF01007479},
  url      = {https://doi.org/10.1007/BF01007479},
  urldate  = {2022-04-03},
}

@Article{Hansen2008,
  author        = {Hansen, F.},
  journal       = {Proceedings of the National Academy of Sciences},
  title         = {Metric adjusted skew information},
  year          = {2008},
  issn          = {0027-8424, 1091-6490},
  month         = jul,
  number        = {29},
  pages         = {9909--9916},
  volume        = {105},
  abstract      = {We extend the concept of Wigner-Yanase-Dyson skew information to something we call ``metric adjusted skew information'' (of a state with respect to a conserved observable). This ``skew information'' is intended to be a non-negative quantity bounded by the variance (of an observable in a state) that vanishes for observables commuting with the state. We show that the skew information is a convex function on the manifold of states. It also satisfies other requirements, proposed by Wigner and Yanase, for an effective measure-of-information content of a state relative to a conserved observable. We establish a connection between the geometrical formulation of quantum statistics as proposed by Chentsov and Morozova and measures of quantum information as introduced by Wigner and Yanase and extended in this article. We show that the set of normalized Morozova-Chentsov functions describing the possible quantum statistics is a Bauer simplex and determine its extreme points. We determine a particularly simple skew information, the ``lambda-skew information,'' parametrized by a lambda in (0,1], and show that the convex cone this family generates coincides with the set of all metric adjusted skew informations. Key words: Skew information, convexity, monotone metric, Morozova-Chentsov function, lambda-skew information.},
  archiveprefix = {arXiv},
  copyright     = {Assumed arXiv.org perpetual, non-exclusive license to distribute this article for submissions made before January 2004},
  date          = {2006-07-23},
  doi           = {10.1073/pnas.0803323105},
  eprint        = {math-ph/0607049},
  file          = {:https\://arxiv.org/pdf/math-ph/0607049v6:PDF},
  keywords      = {Mathematical Physics (math-ph), FOS: Physical sciences},
  primaryclass  = {math-ph},
  publisher     = {Proceedings of the National Academy of Sciences},
  url           = {http://www.pnas.org/cgi/doi/10.1073/pnas.0803323105},
  urldate       = {2021-11-08},
}

@Article{Yamaguchi2023,
  author        = {Yamaguchi, Koji and Tajima, Hiroyasu},
  journal       = {Phys. Rev. Lett.},
  title         = {Beyond i.i.d. in the Resource Theory of Asymmetry: An Information-Spectrum Approach for Quantum Fisher Information},
  year          = {2023},
  issn          = {1079-7114},
  month         = {Nov},
  number        = {20},
  pages         = {200203},
  volume        = {131},
  abstract      = {Energetic coherence is indispensable for various operations, including precise measurement of time and acceleration of quantum manipulations. Since energetic coherence is fragile, it is essential to understand the limits in distillation and dilution to restore damage. The resource theory of asymmetry (RTA) provides a rigorous framework to investigate energetic coherence as a resource to break time-translation symmetry. Recently, in the i.i.d. regime where identical copies of a state are converted into identical copies of another state, it has been shown that the convertibility of energetic coherence is governed by a standard measure of energetic coherence, called the quantum Fisher information (QFI). This fact means that QFI in the theory of energetic coherence takes the place of entropy in thermodynamics and entanglement entropy in entanglement theory. However, distillation and dilution in realistic situations take place in regimes beyond i.i.d., where quantum states often have complex correlations. Unlike entanglement theory, the conversion theory of energetic coherence in pure states in the non-i.i.d. regime has been an open problem. In this Letter, we solve this problem by introducing a new technique: an information-spectrum method for QFI. Two fundamental quantities, coherence cost and distillable coherence, are shown to be equal to the spectral QFI rates for arbitrary sequences of pure states. As a consequence, we find that both entanglement theory and RTA in the non-i.i.d. regime are understood in the information-spectrum method, while they are based on different quantities, i.e., entropy and QFI, respectively.},
  archiveprefix = {arXiv},
  copyright     = {arXiv.org perpetual, non-exclusive license},
  date          = {2022-04-18},
  doi           = {10.1103/PhysRevLett.131.200203},
  eprint        = {2204.08439},
  file          = {:https\://arxiv.org/pdf/2204.08439v6:PDF},
  issue         = {20},
  keywords      = {Quantum Physics (quant-ph), Statistical Mechanics (cond-mat.stat-mech), FOS: Physical sciences},
  numpages      = {7},
  primaryclass  = {quant-ph},
  publisher     = {American Physical Society},
  url           = {https://link.aps.org/doi/10.1103/PhysRevLett.131.200203},
}

@Article{Yamaguchi2023a,
  author        = {Yamaguchi, Koji and Tajima, Hiroyasu},
  journal       = {{Quantum}},
  title         = {Smooth {M}etric {A}djusted {S}kew {I}nformation {R}ates},
  year          = {2023},
  issn          = {2521-327X},
  month         = may,
  pages         = {1012},
  volume        = {7},
  abstract      = {Metric adjusted skew information, induced from quantum Fisher information, is a well-known family of resource measures in the resource theory of asymmetry. However, its asymptotic rates are not valid asymmetry monotone since it has an asymptotic discontinuity. We here introduce a new class of asymmetry measures with the smoothing technique, which we term smooth metric adjusted skew information. We prove that its asymptotic sup- and inf-rates are valid asymptotic measures in the resource theory of asymmetry. Furthermore, it is proven that the smooth metric adjusted skew information rates provide a lower bound for the coherence cost and an upper bound for the distillable coherence.},
  archiveprefix = {arXiv},
  copyright     = {Creative Commons Attribution 4.0 International},
  date          = {2022-11-22},
  doi           = {10.22331/q-2023-05-22-1012},
  eprint        = {2211.12522},
  file          = {:https\://arxiv.org/pdf/2211.12522v3:PDF},
  keywords      = {Quantum Physics (quant-ph), Statistical Mechanics (cond-mat.stat-mech), FOS: Physical sciences},
  primaryclass  = {quant-ph},
  publisher     = {{Verein zur F{\"{o}}rderung des Open Access Publizierens in den Quantenwissenschaften}},
  url           = {https://doi.org/10.22331/q-2023-05-22-1012},
}

@Article{Provost1980,
  author   = {Provost, J. P. and Vallee, G.},
  journal  = {Communications in Mathematical Physics},
  title    = {Riemannian structure on manifolds of quantum states},
  year     = {1980},
  issn     = {1432-0916},
  month    = sep,
  number   = {3},
  pages    = {289--301},
  volume   = {76},
  abstract = {A metric tensor is defined from the underlying Hilbert space structure for any submanifold of quantum states. The case where the manifold is generated by the action of a Lie group on a fixed state vector (generalized coherent states manifold hereafter noted G.C.S.M.) is studied in details; the geometrical properties of some wellknown G.C.S.M. are reviewed and an explicit expression for the scalar Riemannian curvature is given in the general case. The physical meaning of such Riemannian structures (which have been recently introduced to describe collective manifolds in nuclear physics) is discussed. It is shown on examples that the distance between nearby states is related to quantum fluctuations; in the particular case of the harmonic oscillator group the condition of zero curvature appears to be identical to that of non dispersion of wave packets.},
  doi      = {10.1007/BF02193559},
  url      = {https://doi.org/10.1007/BF02193559},
  urldate  = {2024-07-04},
}

@Article{Kudo2023,
  author        = {Kudo, Daigo and Tajima, Hiroyasu},
  journal       = {Physical Review A},
  title         = {Fisher information matrix as a resource measure in resource theory of asymmetry with general connected Lie group symmetry},
  year          = {2022},
  issn          = {2469-9934},
  month         = jun,
  number        = {6},
  pages         = {062418},
  volume        = {107},
  abstract      = {In recent years, in quantum information theory, there has been a remarkable development in the general theoretical framework for studying symmetry in dynamics. This development, called resource theory of asymmetry, is expected to have a wide range of applications, from accurate time measurements to black hole physics. Despite its importance, the foundation of resource theory of asymmetry still has room for expansion. An important problem is in quantifying the amount of resource. When the symmetry prescribed U(1), i.e., with a single conserved quantity, the quantum Fisher information is known as a resource measure that has suitable properties and a clear physical meaning related to quantum fluctuation of the conserved quantity. However, it is not clear what is the resource measure with such suitable properties when a general symmetry prevails for which there are multiple conserved quantities. The purpose of this paper is to fill this gap. Specifically, we show that the quantum Fisher information matrix is a resource measure whenever a connected linear Lie group describes the symmetry. We also consider the physical meaning of this matrix and see which properties that the quantum Fisher information has when the symmetry is described by $U(1)$ can be inherited by the quantum Fisher information matrix.},
  archiveprefix = {arXiv},
  copyright     = {arXiv.org perpetual, non-exclusive license},
  date          = {2022-05-06},
  doi           = {10.1103/physreva.107.062418},
  eprint        = {2205.03245},
  file          = {:https\://arxiv.org/pdf/2205.03245v1:PDF},
  keywords      = {Quantum Physics (quant-ph), FOS: Physical sciences},
  primaryclass  = {quant-ph},
  publisher     = {American Physical Society (APS)},
}

@Article{Fujimura2025,
  author        = {Fujimura, Harunobu and Shimamori, Soichiro},
  title         = {{Entanglement Asymmetry and Quantum Mpemba Effect for Non-Abelian Global Symmetry}},
  year          = {2025},
  month         = {9},
  abstract      = {Entanglement asymmetry is a measure that quantifies the degree of symmetry breaking at the level of a subsystem. In this work, we investigate the entanglement asymmetry in $\widehat{su}(N)_k$ Wess-Zumino-Witten model and discuss the quantum Mpemba effect for SU$(N)$ symmetry, the phenomenon that the more symmetry is initially broken, the faster it is restored. Due to the Coleman-Mermin-Wagner theorem, spontaneous breaking of continuous global symmetries is forbidden in $1+1$ dimensions. To circumvent this no-go theorem, we consider excited initial states which explicitly break non-Abelian global symmetry. We particularly focus on the initial states built from primary operators in the fundamental and adjoint representations. In both cases, we study the real-time dynamics of the Rényi entanglement asymmetry and provide clear evidence of quantum Mpemba effect for SU$(N)$ symmetry. Furthermore, we find a new type of quantum Mpemba effect for the primary operator in the fundamental representation: increasing the rank $N$ leads to stronger initial symmetry breaking but faster symmetry restoration. Also, increasing the level $k$ leads to weaker initial symmetry breaking but slower symmetry restoration. On the other hand, no such behavior is observed for adjoint case, which may suggest that this new type of quantum Mpemba effect is not universal.},
  archiveprefix = {arXiv},
  copyright     = {Creative Commons Attribution 4.0 International},
  doi           = {10.48550/ARXIV.2509.05597},
  eprint        = {2509.05597},
  file          = {:https\://arxiv.org/pdf/2509.05597v2:PDF},
  keywords      = {High Energy Physics - Theory (hep-th), Statistical Mechanics (cond-mat.stat-mech), Quantum Physics (quant-ph), FOS: Physical sciences},
  primaryclass  = {hep-th},
  publisher     = {arXiv},
  reportnumber  = {OU-HET 1284},
}

@Misc{Yu2013,
  author        = {Sixia Yu},
  month         = feb,
  title         = {Quantum Fisher Information as the Convex Roof of Variance},
  year          = {2013},
  abstract      = {Quantum Fisher information places the fundamental limit to the accuracy of estimating an unknown parameter. Here we shall provide the quantum Fisher information an operational meaning: a mixed state can be so prepared that a given observable has the minimal averaged variance, which equals exactly to the quantum Fisher information for estimating an unknown parameter generated by the unitary dynamics with the given observable as Hamiltonian. In particular we shall prove that the quantum Fisher information is the convex roof of the variance, as conjectured by Toth and Petz based on numerical and analytical evidences, by constructing explicitly a pure-state ensemble of the given mixed state in which the averaged variance of a given observable equals to the quantum Fisher information.},
  archiveprefix = {arXiv},
  copyright     = {arXiv.org perpetual, non-exclusive license},
  doi           = {10.48550/ARXIV.1302.5311},
  eprint        = {1302.5311},
  file          = {:https\://arxiv.org/pdf/1302.5311v1:PDF},
  keywords      = {Quantum Physics (quant-ph), FOS: Physical sciences},
  primaryclass  = {quant-ph},
  publisher     = {arXiv},
  url           = {https://arxiv.org/abs/1302.5311},
}

@Article{Gibilisco2007,
  author        = {Gibilisco, Paolo and Imparato, Daniele and Isola, Tommaso},
  journal       = {Journal of mathematical physics},
  title         = {Uncertainty principle and quantum Fisher information. II.},
  year          = {2007},
  issn          = {1089-7658},
  month         = jul,
  number        = {7},
  volume        = {48},
  abstract      = {Heisenberg and Schr{ö}dinger uncertainty principles give lower bounds for the product of variances $Var_ρ(A)\cdot Var_ρ(B)$, in a state $ρ$, if the observables $A,B$ are not compatible, namely if the commutator $[A,B]$ is not zero. In this paper we prove an uncertainty principle in Schr{ö}dinger form where the bound for the product of variances $Var_ρ(A)\cdot Var_ρ(B)$ depends on the area spanned by the commutators $[ρ,A]$ and $[ρ,B]$ with respect to an arbitrary quantum version of the Fisher information.},
  archiveprefix = {arXiv},
  copyright     = {Assumed arXiv.org perpetual, non-exclusive license to distribute this article for submissions made before January 2004},
  date          = {2007-01-24},
  doi           = {10.1063/1.2748210},
  eprint        = {math-ph/0701062},
  file          = {:https\://arxiv.org/pdf/math-ph/0701062v3:PDF},
  keywords      = {Mathematical Physics (math-ph), Operator Algebras (math.OA), FOS: Physical sciences, FOS: Mathematics},
  primaryclass  = {math-ph},
  publisher     = {AIP Publishing},
}

@Article{Ahmad2025,
  author        = {Ahmad, Shadi Ali and Klinger, Marc S. and Wang, Yifan},
  title         = {The Many Faces of Non-invertible Symmetries},
  year          = {2025},
  month         = sep,
  abstract      = {We investigate the interplay between algebraic and categorical notions of non-invertible symmetries. In particular, a fusion categorical symmetry $\mathcal{C}$ is shown to induce an algebraic symmetry encoded in a weak Hopf algebra $H$ which is Tannaka-Krein dual to $\mathcal{C}$ in the sense that $\mathcal{C} = \text{Rep}(H^*)$. The latter duality is not unique, and consequently the algebraic symmetry acts on an extended system relative to the categorical one. We present an approach to analyzing the symmetry breaking patterns of weak Hopf algebraic non-invertible symmetries. The central ingredient is a certain conditional expectation, which serves as the analog of a group averaging map for a non-invertible symmetry. The index of this conditional expectation emerges as a quantum information theoretic quantity that determines the extent to which the underlying symmetry can be broken. Ambiguities which ensue from the non-uniqueness of the categorical reconstruction lead to distinct properties of symmetry breaking compared to the invertible case. Finally, we exemplify our approach through topological and conformal quantum field theories in which non-invertible symmetries are naturally interpreted as defect operators and boundary conditions.},
  archiveprefix = {arXiv},
  copyright     = {arXiv.org perpetual, non-exclusive license},
  doi           = {10.48550/ARXIV.2509.18072},
  eprint        = {2509.18072},
  file          = {:http\://arxiv.org/pdf/2509.18072v2:PDF},
  keywords      = {High Energy Physics - Theory (hep-th), Strongly Correlated Electrons (cond-mat.str-el), Quantum Algebra (math.QA), Quantum Physics (quant-ph), FOS: Physical sciences, FOS: Mathematics},
  primaryclass  = {hep-th},
  publisher     = {arXiv},
}

@Article{Benini2025,
  author        = {Benini, Francesco and Calabrese, Pasquale and Fossati, Michele and Singh, Amartya Harsh and Venuti, Marco},
  title         = {Entanglement Asymmetry for Higher and Noninvertible Symmetries},
  year          = {2025},
  month         = sep,
  abstract      = {Entanglement asymmetry is an observable in quantum systems, constructed using quantum-information methods, suited to detecting symmetry breaking in states -- possibly out of equilibrium -- relative to a subsystem. In this paper we define the asymmetry for generalized finite symmetries, including higher-form and noninvertible ones. To this end, we introduce a "symmetrizer" of (reduced) density matrices with respect to the $C^*$-algebra of symmetry operators acting on the subsystem Hilbert space. We study in detail applications to (1+1)-dimensional theories: First, we analyze spontaneous symmetry breaking of noninvertible symmetries, confirming that distinct vacua can exhibit different physical properties. Second, we compute the asymmetry of certain excited states in conformal field theories (including the Ising CFT), when the subsystem is either the full circle or an interval therein. The relevant symmetry algebras to consider are the fusion, tube, and strip algebras. Finally, we comment on the case that the symmetry algebra is a (weak) Hopf algebra.},
  archiveprefix = {arXiv},
  copyright     = {arXiv.org perpetual, non-exclusive license},
  doi           = {10.48550/ARXIV.2509.16311},
  eprint        = {2509.16311},
  file          = {:http\://arxiv.org/pdf/2509.16311v1:PDF},
  keywords      = {High Energy Physics - Theory (hep-th), Statistical Mechanics (cond-mat.stat-mech), Strongly Correlated Electrons (cond-mat.str-el), Quantum Physics (quant-ph), FOS: Physical sciences},
  primaryclass  = {hep-th},
  publisher     = {arXiv},
}

@Article{Anand2025,
  author        = {Anand, Harsh and Benjamin, Nathan and Kumar, Vipul and Minwalla, Shiraz and Mukherjee, Jyotirmoy and Pal, Sridip and Rahaman, Asikur},
  title         = {{Semi-universality of CFT$_d$ entropy at large spin}},
  year          = {2025},
  month         = {11},
  abstract      = {The thermal partition function, $Z$, of a $CFT_d$ on $S^{d-1}$ is parameterized by the inverse temperature $β$ along with $\lfloor d/2\rfloor$ angular velocities $ω_i$. In this paper, we investigate the behaviour of this partition function when $n$ of the $ω_i$ are scaled to unity (the largest allowed value) at fixed values of the other $(\lfloor d/2\rfloor-n)$ angular velocities. We argue that $\ln Z$ develops a simple pole in $(1-ω_i)$ for each $ω_i$ that is scaled to unity. The residue of this product of poles is a theory dependent (so non-universal) function of $β$ and the fixed angular velocities. The inverse Laplace transformation of this partition function constrains the functional form of the field theory entropy as a function of charges in a limit in which angular momenta and the twist are scaled as follows. While $n$ special angular momenta $J_1\ldots J_n$ are scaled to infinity, the twist and the other angular momenta - collectively denoted $x_i$ - are also taken to infinity but at the slower rate that ensures that the scaled charges $x_i/(J_1 J_2 \ldots J_n)^{\frac{1}{n+1}}$ are held fixed. In this limit, we demonstrate that the scaled entropy $S/(J_1 J_2 \ldots J_n)^{\frac{1}{n+1}}$ depends only on the $\lfloor d/2\rfloor-n+1$ scaled charges defined above (the precise form of this dependence is non-universal). We verify our predictions (and compute all non-universal functions) in the case of free scalar theories (which show surprisingly rich behaviour) as well as large $N$, strongly coupled ${\cal N}=4$ Yang Mills theory. The last theory is analyzed in the bulk via the AdS/CFT correspondence. In the scaling limit described above, its phase diagram displays sharp phase transitions between black hole, grey galaxy, and thermal gas phases.},
  archiveprefix = {arXiv},
  copyright     = {arXiv.org perpetual, non-exclusive license},
  doi           = {10.48550/ARXIV.2512.00158},
  eprint        = {2512.00158},
  file          = {:https\://arxiv.org/pdf/2512.00158v1:PDF},
  keywords      = {High Energy Physics - Theory (hep-th), FOS: Physical sciences},
  primaryclass  = {hep-th},
  publisher     = {arXiv},
}

@Article{Gao2017,
  author        = {Gao, Li and Junge, Marius and LaRacuente, Nicholas},
  title         = {Unifying Entanglement with Uncertainty via Symmetries of Observable Algebras},
  year          = {2017},
  month         = oct,
  abstract      = {Strong subadditivity goes beyond the tensored subsystem and commuting operator models. As previously noted by Petz and later by Araki and Moriya, two subalgebras of observables satisfy a generalized SSA-like inequality if they form a commuting square. We explore the interpretation and consequences in finite dimensions, connecting various entropic uncertainty relations for mutually unbiased bases with the positivity of a generalized conditional mutual information (CMI), and with inequalities on relative entropies of coherence and asymmetry. We obtain a bipartite resource theory of operations under which the two subalgebras are respectively invariant and covariant, with CMI as a monotone, and generalized non-classical monotones based on squashed entanglement and entanglement of formation. Free transformations support conversion between entanglement and uncertainty-based configurations, as "EPR <-> 2UCR." Our theory quantifies the common non-classicality in entanglement and uncertainty, implying a strong conceptual link between these fundamentally quantum phenomena.},
  archiveprefix = {arXiv},
  copyright     = {arXiv.org perpetual, non-exclusive license},
  doi           = {10.48550/ARXIV.1710.10038},
  eprint        = {1710.10038},
  file          = {:https\://arxiv.org/pdf/1710.10038v3:PDF},
  keywords      = {Quantum Physics (quant-ph), FOS: Physical sciences},
  primaryclass  = {quant-ph},
  publisher     = {arXiv},
}

@Article{arratia2016,
  author        = {Arratia, Richard and Earnest, Michael},
  title         = {A countdown process, with application to the rank of random matrices over $\mathbb F_q(n)$},
  year          = {2016},
  month         = may,
  abstract      = {Motivated by the work of Fulman and Goldstein, comparing the distribution of the corank of random matrices in $\mathbb F_q[n]$ with the limit distribution as $n \to \infty$, we define a countdown process, driven by independent geometric random variables related to random integer partitions. Analysis of this process leads to sharper bounds on the total variation distance.},
  archiveprefix = {arXiv},
  copyright     = {arXiv.org perpetual, non-exclusive license},
  doi           = {10.48550/ARXIV.1605.04352},
  eprint        = {1605.04352},
  file          = {:https\://arxiv.org/pdf/1605.04352v2:PDF},
  keywords      = {Probability (math.PR), Combinatorics (math.CO), FOS: Mathematics},
  primaryclass  = {math.PR},
  publisher     = {arXiv},
}

@Article{Winter1999,
  author        = {Winter, A.},
  journal       = {IEEE Transactions on Information Theory},
  title         = {Coding theorem and strong converse for quantum channels},
  year          = {1999},
  issn          = {0018-9448},
  month         = sep,
  number        = {7},
  pages         = {2481-2485},
  volume        = {45},
  abstract      = {In this correspondence we present a new proof of Holevo's coding theorem for transmitting classical information through quantum channels, and its strong converse. The technique is largely inspired by Wolfowitz's combinatorial approach using types of sequences. As a by-product of our approach which is independent of previous ones, both in the coding theorem and the converse, we can give a new proof of Holevo's information bound.},
  archiveprefix = {arXiv},
  copyright     = {arXiv.org perpetual, non-exclusive license},
  doi           = {10.1109/18.796385},
  eprint        = {1409.2536},
  file          = {:https\://arxiv.org/pdf/1409.2536v1:PDF},
  keywords      = {Communication channels},
  primaryclass  = {quant-ph},
  publisher     = {Institute of Electrical and Electronics Engineers (IEEE)},
}

@Article{Stinespring1955,
  author    = {W. Forrest Stinespring},
  journal   = {Proceedings of the American Mathematical Society},
  title     = {Positive functions on {\ensuremath{\mathit{C}}}*-algebras},
  year      = {1955},
  issn      = {00029939, 10886826},
  number    = {2},
  pages     = {211--216},
  volume    = {6},
  doi       = {10.1090/s0002-9939-1955-0069403-4},
  publisher = {American Mathematical Society},
  url       = {http://www.jstor.org/stable/2032342},
  urldate   = {2026-01-21},
}

@Misc{Nielsen2012,
  author     = {Nielsen, Michael A. and Chuang, Isaac L.},
  month      = dec,
  title      = {Quantum {Computation} and {Quantum} {Information}: 10th {Anniversary} {Edition}},
  year       = {2010},
  abstract   = {One of the most cited books in physics of all time, Quantum Computation and Quantum Information remains the best textbook in this exciting field of science. This 10th anniversary edition includes an introduction from the authors setting the work in context. This comprehensive textbook describes such remarkable effects as fast quantum algorithms, quantum teleportation, quantum cryptography and quantum error-correction. Quantum mechanics and computer science are introduced before moving on to describe what a quantum computer is, how it can be used to solve problems faster than 'classical' computers and its real-world implementation. It concludes with an in-depth treatment of quantum information. Containing a wealth of figures and exercises, this well-known textbook is ideal for courses on the subject, and will interest beginning graduate students and researchers in physics, computer science, mathematics, and electrical engineering.},
  doi        = {10.1017/cbo9780511976667},
  isbn       = {9780511976667},
  journal    = {Higher Education from Cambridge University Press},
  publisher  = {Cambridge University Press},
  shorttitle = {Quantum {Computation} and {Quantum} {Information}},
  url        = {https://doi.org/10.1017/CBO9780511976667},
  urldate    = {2022-10-19},
}

@Article{Sala2024,
  author        = {Sala, Pablo and Gopalakrishnan, Sarang and Oshikawa, Masaki and You, Yizhi},
  journal       = {Phys. Rev. B},
  title         = {Spontaneous strong symmetry breaking in open systems: Purification perspective},
  year          = {2024},
  month         = {Oct},
  pages         = {155150},
  volume        = {110},
  abstract      = {We explore the landscape of the decoherence effect in mixed-state ensembles from a purification perspective. We analyze the spontaneous strong-to-weak symmetry breaking (SSSB) in mixed states triggered by local quantum channels by mapping this decoherence process to unitary operations in the purified state within an extended Hilbert space. Our key finding is that mixed-state long-range order and SSSB can be mapped into symmetry-protected topological (SPT) order in the purified state. Notably, the measurement-induced long-range order in the purified SPT state mirrors the long-range order in the mixed state due to SSSB, characterized by the Renyi-2 correlator. We establish a correspondence between fidelity correlators in the mixed state, which serve as a measure of SSSB, and strange correlators in the purification, which signify the SPT order. This purification perspective is further extended to explore intrinsic mixed-state topological order and decoherent symmetry-protected topological phases.},
  archiveprefix = {arXiv},
  copyright     = {arXiv.org perpetual, non-exclusive license},
  doi           = {10.1103/PhysRevB.110.155150},
  eprint        = {2405.02402},
  file          = {:http\://arxiv.org/pdf/2405.02402v1:PDF},
  issue         = {15},
  keywords      = {Quantum Physics (quant-ph), Strongly Correlated Electrons (cond-mat.str-el), High Energy Physics - Theory (hep-th), FOS: Physical sciences},
  numpages      = {28},
  primaryclass  = {quant-ph},
  publisher     = {American Physical Society},
  url           = {https://link.aps.org/doi/10.1103/PhysRevB.110.155150},
}

@InCollection{Berry1989,
  author    = {Berry, Michael V},
  booktitle = {Geometric {Phases} in {Physics}},
  publisher = {World Scientific Singapore},
  title     = {The quantum phase, five years after},
  year      = {1989},
  pages     = {3--28},
  volume    = {5},
}

@Article{Zapusek2025,
  author        = {Zapusek, Elias and Kirova, Kristina and Hahn, Walter and Marthaler, Michael and Reiter, Florentin},
  journal       = {Quantum Science and Technology},
  title         = {Variational quantum thermalizers based on weakly-symmetric nonunitary multi-qubit operations},
  year          = {2025},
  month         = feb,
  abstract      = {We propose incorporating multi-qubit nonunitary operations in Variational Quantum Thermalizers (VQTs). VQTs are hybrid quantum-classical algorithms that generate the thermal (Gibbs) state of a given Hamiltonian, with applications in quantum algorithms and simulations. However, current algorithms struggle at intermediate temperatures, where the target state is nonpure but exhibits entanglement. We devise multi-qubit nonunitary operations that harness weak symmetries and thereby improve the performance of the algorithm.&#xD;Utilizing dissipation engineering, we create these nonunitary multi-qubit operations without the need for measurements or additional qubits. To train the ansatz, we develop and benchmark novel methods for entropy estimation of quantum states, expanding the toolbox for quantum state characterization. We demonstrate that our approach can prepare thermal states of paradigmatic spin models at all temperatures. &#xD;Our work thus creates new opportunities for simulating open quantum many-body systems.},
  archiveprefix = {arXiv},
  copyright     = {arXiv.org perpetual, non-exclusive license},
  doi           = {10.48550/ARXIV.2502.09698},
  eprint        = {2502.09698},
  file          = {:https\://arxiv.org/pdf/2502.09698v2:PDF},
  keywords      = {Quantum Physics (quant-ph), FOS: Physical sciences},
  primaryclass  = {quant-ph},
  publisher     = {arXiv},
  url           = {http://iopscience.iop.org/article/10.1088/2058-9565/ae2886},
}

@Article{Shitara2023,
  author        = {Shitara, Tomohiro and Mitsuhashi, Yosuke and Tajima, Hiroyasu},
  title         = {The i.i.d. State Convertibility in the Resource Theory of Asymmetry for Finite Groups},
  year          = {2023},
  month         = dec,
  abstract      = {We identify exact and approximate conversion rates between i.i.d. pure states under covariant operations in the resource theory of asymmetry for symmetries described by finite groups. We establish the formula for the exact conversion rate by completely specifying the relevant set of resource measures. The exact conversion is generally asymptotically irreversible due to the existence of multiple independent resource measures, and we find the necessary and sufficient condition for asymptotic reversibility. On the other hand, we show that the approximate conversion rates diverge or vanish, which implies that the asymmetry can be infinitely amplified if we allow a vanishingly small error. We reveal the underlying mechanism of such a counterintuitive phenomenon by utilizing the properties of maximally asymmetric states.},
  archiveprefix = {arXiv},
  copyright     = {arXiv.org perpetual, non-exclusive license},
  doi           = {10.48550/ARXIV.2312.15758},
  eprint        = {2312.15758},
  file          = {:https\://arxiv.org/pdf/2312.15758v3:PDF},
  keywords      = {Quantum Physics (quant-ph), Statistical Mechanics (cond-mat.stat-mech), FOS: Physical sciences},
  primaryclass  = {quant-ph},
  publisher     = {arXiv},
}

@Article{Yamaguchi2024,
  author        = {Yamaguchi, Koji and Mitsuhashi, Yosuke and Shitara, Tomohiro and Tajima, Hiroyasu},
  title         = {Quantum geometric tensor determines the i.i.d. conversion rate in the resource theory of asymmetry for any compact Lie group},
  year          = {2024},
  month         = nov,
  abstract      = {Quantifying physical concepts in terms of the ultimate performance of a given task has been central to theoretical progress, as illustrated by thermodynamic entropy and entanglement entropy, which respectively quantify irreversibility and quantum correlations. Symmetry breaking is equally universal, yet lacks such an operational quantification. While an operational characterization of symmetry breaking through asymptotic state-conversion efficiency is a central goal of the resource theory of asymmetry (RTA), such a characterization has so far been completed only for the $U(1)$ group among continuous symmetries. Here, we identify the complete measure of symmetry breaking for a general continuous symmetry described by any compact Lie group. Specifically, we show that the asymptotic conversion rate between many copies of pure states in RTA is determined by the quantum geometric tensor, thereby establishing it as the complete measure of symmetry breaking. As an immediate consequence of our conversion rate formula, we also resolve the Marvian-Spekkens conjecture on conditions for reversible conversion in RTA, which has remained unproven for over a decade. By applying our analysis to a standard setup in quantum thermodynamics, we show that asymptotic state conversion under thermal operations generally requires macroscopic coherence in the thermodynamic limit.},
  archiveprefix = {arXiv},
  copyright     = {arXiv.org perpetual, non-exclusive license},
  doi           = {10.48550/ARXIV.2411.04766},
  eprint        = {2411.04766},
  file          = {:https\://arxiv.org/pdf/2411.04766v3:PDF},
  keywords      = {Quantum Physics (quant-ph), Statistical Mechanics (cond-mat.stat-mech), FOS: Physical sciences},
  primaryclass  = {quant-ph},
  publisher     = {arXiv},
}

@Article{Balasubramanian2025,
  author      = {Vijay Balasubramanian and Pawel Caputa and Joan Simón},
  title       = {Variations on a Theme of Krylov},
  abstract    = {Spread complexity uses the distribution of support of a time-evolving state in the Krylov basis to quantify dispersal across accessible dimensions of a Hilbert space. Here, we describe how variations in initial conditions, the Hamiltonian, and the dimension of the Hilbert space affect spread complexity and Krylov basis structure. We introduce Koherence, the entropy of coherence between perturbed and unperturbed Krylov bases, which can, e.g., quantify dynamical amplification of differences in initial conditions in chaos. To illustrate, we show that dynamics on SL(2,R), SU(2), and Heisenberg-Weyl group manifolds, often used as paradigmatic settings for contrasting chaotic and integrable (semi-)classical behavior, display distinctively different responses to variations of the initial state or Hamiltonian. We then describe a lattice model that displays linear growth of spread complexity, saturating for bounded lattices and continuing forever in a thermodynamic limit. The latter example illustrates a breakdown of continuum/classical effective descriptions of complexity growth in bounded quantum systems.},
  date        = {2025-11-05},
  eprint      = {2511.03775},
  eprintclass = {hep-th},
  eprinttype  = {arXiv},
  file        = {:https\://arxiv.org/pdf/2511.03775v2:PDF},
  keywords    = {hep-th, cond-mat.stat-mech, quant-ph},
}

@Article{Lee2023,
  author       = {Jong Yeon Lee and Chao-Ming Jian and Cenke Xu},
  title        = {Quantum criticality under decoherence or weak measurement},
  abstract     = {Decoherence inevitably happens when a quantum state is exposed to its environment, which can affect quantum critical points (QCP) in a nontrivial way. As was pointed out in recent literature on $(1+1)d$ conformal field theory (CFT), the effect of weak measurement can be mathematically mapped to the problem of boundary CFT. In this work, we focus on the $(2+1)d$ QCPs, whose boundary and defect effects have attracted enormous theoretical and numerical interests very recently. We focus on decoherence caused by weak measurements with and without post-selecting the measurement outcomes. Our main results are: (1) for an O(N) Wilson-Fisher QCP under weak measurement with post-selection, an observer would in general observe two different types of boundary/defect criticality with very different behaviors from the well-known Wilson-Fisher fixed points; in particular, it is possible to observe the recently proposed exotic "extraordinary-log" correlation. (2) An extra quantum phase transition can be driven by decoherence, if we consider quantities nonlinear with the decohered density matrix, such as the Renyi entropy. We demonstrate the connection between this transition to the information-theoretic transition driven by an error in the toric code model. (3) When there is no post-selection, though correlation functions between local operators remain the same as the undecohered pure state, nonlocal operators such as the "disorder operator" would have qualitatively distinct behaviors; and we also show that the decoherence can lead to confinement.},
  date         = {2023-01-12},
  doi          = {10.1103/PRXQuantum.4.030317},
  eprint       = {2301.05238},
  eprintclass  = {cond-mat.stat-mech},
  eprinttype   = {arXiv},
  file         = {:https\://arxiv.org/pdf/2301.05238v2:PDF},
  journaltitle = {PRX Quantum 4, 030317 (2023)},
  keywords     = {cond-mat.stat-mech, cond-mat.str-el, quant-ph},
}

@Article{Ogunnaike2023,
  author      = {Olumakinde Ogunnaike and Johannes Feldmeier and Jong Yeon Lee},
  title       = {Unifying Emergent Hydrodynamics and Lindbladian Low Energy Spectra across Symmetries, Constraints, and Long-Range Interactions},
  abstract    = {We identify emergent hydrodynamics governing charge transport in Brownian random circuits with various symmetries, constraints, and ranges of interactions. This is accomplished via a mapping between the averaged dynamics and the low energy spectrum of a Lindblad operator, which acts as an effective Hamiltonian in a doubled Hilbert space. By explicitly constructing dispersive excited states of this effective Hamiltonian using a single mode approximation, we provide a comprehensive understanding of diffusive, subdiffusive, and superdiffusive relaxation in many-body systems with conserved multipole moments and variable interaction ranges. Our approach further allows us to identify exotic Krylov-space-resolved hydrodynamics exhibiting diffusive relaxation despite the presence of dipole conservation, which we verify numerically. Our approach provides a general and versatile framework to qualitatively understand the dynamics of conserved operators under random unitary time evolution.},
  date        = {2023-04-25},
  doi         = {10.1103/PhysRevLett.131.220403},
  eprint      = {2304.13028},
  eprintclass = {cond-mat.str-el},
  eprinttype  = {arXiv},
  file        = {:https\://arxiv.org/pdf/2304.13028v4:PDF},
  keywords    = {cond-mat.str-el, cond-mat.stat-mech, quant-ph},
}

@Article{Styliaris2019,
  author       = {Georgios Styliaris and Namit Anand and Lorenzo Campos Venuti and Paolo Zanardi},
  title        = {Quantum coherence and the localization transition},
  abstract     = {A dynamical signature of localization in quantum systems is the absence of transport which is governed by the amount of coherence that configuration space states possess with respect to the Hamiltonian eigenbasis. To make this observation precise, we study the localization transition via quantum coherence measures arising from the resource theory of coherence. We show that the escape probability, which is known to show distinct behavior in the ergodic and localized phases, arises naturally as the average of a coherence measure. Moreover, using the theory of majorization, we argue that broad families of coherence measures can detect the uniformity of the transition matrix (between the Hamiltonian and configuration bases) and hence act as probes to localization. We provide supporting numerical evidence for Anderson and Many-Body Localization (MBL). For infinitesimal perturbations of the Hamiltonian, the differential coherence defines an associated Riemannian metric. We show that the latter is exactly given by the dynamical conductivity, a quantity of experimental relevance which is known to have a distinctively different behavior in the ergodic and in the many-body localized phases. Our results suggest that the quantum information-theoretic notion of coherence and its associated geometrical structures can provide useful insights into the elusive nature of the ergodic-MBL transition.},
  date         = {2019-06-21},
  doi          = {10.1103/PhysRevB.100.224204},
  eprint       = {1906.09242},
  eprintclass  = {quant-ph},
  eprinttype   = {arXiv},
  file         = {:https\://arxiv.org/pdf/1906.09242v2:PDF},
  journaltitle = {Phys. Rev. B 100, 224204 (2019)},
  keywords     = {quant-ph, cond-mat.dis-nn},
}

@Article{Liu2020,
  author       = {Zi-Wen Liu and Andreas Winter},
  title        = {Many-body quantum magic},
  abstract     = {Magic (non-stabilizerness) is a necessary but "expensive" kind of "fuel" to drive universal fault-tolerant quantum computation. To properly study and characterize the origin of quantum "complexity" in computation as well as physics, it is crucial to develop a rigorous understanding of the quantification of magic. Previous studies of magic mostly focused on small systems and largely relied on the discrete Wigner formalism (which is only well behaved in odd prime power dimensions). Here we present an initiatory study of the magic of genuinely many-body quantum states that may be strongly entangled, with focus on the important case of many qubits, at a quantitative level. We first address the basic question of how "magical" a many-body state can be, and show that the maximum magic of an $n$-qubit state is essentially $n$, simultaneously for a range of "good" magic measures. We then show that, in fact, almost all $n$-qubit pure states have magic of nearly $n$. In the quest for explicit, scalable cases of highly entangled states whose magic can be understood, we connect the magic of hypergraph states with the second-order nonlinearity of their underlying Boolean functions. Next, we go on and investigate many-body magic in practical and physical contexts. We first consider a variant of MBQC where the client is restricted to Pauli measurements, in which magic is a necessary feature of the initial "resource" state. We show that $n$-qubit states with nearly $n$ magic, or indeed almost all states, cannot supply nontrivial speedups over classical computers. We then present an example of analyzing the magic of "natural" condensed matter systems of physical interest. We apply the Boolean function techniques to derive explicit bounds on the magic of certain representative 2D SPT states, and comment on possible further connections between magic and the quantum complexity of phases of matter.},
  date         = {2020-10-26},
  doi          = {10.1103/PRXQuantum.3.020333},
  eprint       = {2010.13817},
  eprintclass  = {quant-ph},
  eprinttype   = {arXiv},
  file         = {:https\://arxiv.org/pdf/2010.13817v4:PDF},
  journaltitle = {PRX Quantum 3, 020333 (2022)},
  keywords     = {quant-ph, cond-mat.str-el},
}

@Article{Oliviero2022,
  author       = {Salvatore F. E. Oliviero and Lorenzo Leone and Alioscia Hamma},
  title        = {Magic-state resource theory for the ground state of the transverse-field Ising model},
  abstract     = {Ground states of quantum many-body systems are both entangled and possess a kind of quantum complexity as their preparation requires universal resources that go beyond the Clifford group and stabilizer states. These resources - sometimes described as magic - are also the crucial ingredient for quantum advantage. We study the behavior of the stabilizer Rényi entropy in the integrable transverse field Ising spin chain. We show that the locality of interactions results in a localized stabilizer Rényi entropy in the gapped phase thus making this quantity computable in terms of local quantities in the gapped phase, while measurements involving $L$ spins are necessary at the critical point to obtain an error scaling with $O(L^{-1})$.},
  date         = {2022-05-04},
  doi          = {10.1103/PhysRevA.106.042426},
  eprint       = {2205.02247},
  eprintclass  = {quant-ph},
  eprinttype   = {arXiv},
  file         = {:https\://arxiv.org/pdf/2205.02247v3:PDF},
  journaltitle = {Phys. Rev. A 106, 042426 (2022)},
  keywords     = {quant-ph, cond-mat.stat-mech},
}

@Article{Aditya2025a,
  author      = {Sreemayee Aditya and Xhek Turkeshi and Piotr Sierant},
  title       = {Growth and spreading of quantum resources under random circuit dynamics},
  abstract    = {Quantum many-body dynamics generate nonclassical correlations naturally described by quantum resource theories. Quantum magic resources (or nonstabilizerness) capture deviation from classically simulable stabilizer states, while coherence and fermionic non-Gaussianity measure departure from the computational basis and from fermionic Gaussian states, respectively. We track these resources in a subsystem of a one-dimensional qubit chain evolved by random brickwall circuits. For resource-generating gates, evolution from low-resource states exhibits a universal rise-peak-fall behavior, with the peak time scaling logarithmically with subsystem size and the resource eventually decaying as the subsystem approaches a maximally mixed state. Circuits whose gates do not create the resource but entangle neighboring qubits, give rise to a ballistic spreading of quantum resource initially confined to a region of the initial state. Our results give a unified picture of spatiotemporal resource dynamics in local circuits and a baseline for more structured quantum many-body systems.},
  date        = {2025-12-16},
  eprint      = {2512.14827},
  eprintclass = {quant-ph},
  eprinttype  = {arXiv},
  file        = {:https\://arxiv.org/pdf/2512.14827v1:PDF},
  keywords    = {quant-ph, cond-mat.stat-mech},
}

\widetext

\clearpage

\begin{center}
{\large \bf Supplemental Material for \protect \\ 
``Resource-Theoretic Quantifiers of Weak and Strong Symmetry Breaking:\\ Strong Entanglement Asymmetry and Beyond''}\\
\vspace*{0.3cm}
Yuya Kusuki$^{1,2,3}$, Sridip Pal$^{4}$ and Hiroyasu Tajima$^{5,6}$ \\
\vspace*{0.1cm}
$^{1}${\small \em Institute for Advanced Study, 
Kyushu University, Fukuoka 819-0395, Japan.
}
\\ 
$^{2}${\small \em Department of Physics, 
Kyushu University, Fukuoka 819-0395, Japan.
}
\\
$^{3}${\small \em RIKEN Interdisciplinary Theoretical and Mathematical Sciences (iTHEMS),
Wako, Saitama 351-0198, Japan.
}
\\
$^{3}${\small \em Institut des Hautes Études Scientifiques (IHES), 91440 Bures-sur-Yvette, France.
}
\\
$^{4}${\small \em Department of Informatics, Faculty of Information Science and Electrical Engineering, 
Kyushu University, Fukuoka 819-0395, Japan.
}
\\
$^{5}${\small \em JST, FOREST, 4-1-8 Honcho, Kawaguchi, Saitama, 332-0012, Japan.
}
\end{center}

\section{Preliminaries: Groups and (projective) unitary representations}

Let $G$ be a group. A \emph{unitary representation} of $G$ on a Hilbert space $\mathcal{H}$ is a homomorphism
\begin{equation}
    U: G \to \mathcal{U}(\mathcal{H}), \qquad
    g \mapsto U_g,
\end{equation}
satisfying the exact group composition rule
\begin{equation}
    U_g U_h = U_{gh}, \quad
    U_e = \mathbb{I},
\end{equation}
where $\mathcal{U}(\mathcal{H})$ denotes the unitary group acting on $\mathcal{H}$ and $e$ is the identity element of $G$.

In quantum theory, symmetries are often represented not by exact unitary representations but by \emph{projective unitary representations}.  
A projective unitary representation $U: G\rightarrow\calU(\calH)$ satisfies
\begin{equation}
    U_g U_h = \omega(g,h)\, U_{gh},
    \label{eq:projective_rep}
\end{equation}
where $\omega(g,h)$ is a phase factor known as a \emph{2-cocycle}, obeying the associativity condition
\begin{equation}
    \omega(g,h)\, \omega(gh,k)
    = \omega(h,k)\, \omega(g,hk),
    \label{eq:cocycle_condition}
\end{equation}
for all $g,h,k \in G$.  

When $\{U_g\}$ is a projective unitary representation, $U_e=I$ does not always hold.
In general, $U_e=cI$ is always valid, where $c$ is a complex number whose absolute value is 1.
However, taking $\alpha(g):G\rightarrow U(1)$ satisfying $\alpha(e)=c^{-1}$, we can always redefine $\{U_g\}$ as $\tilde{U}_g:=\alpha(g)U_g$. 
Then, $\tilde{U}_g$ becomes another projective unitary representation since it satisfies
\eq{
\tilde{U}_g\tilde{U}_h&=\alpha(g)\alpha(h)\omega(g,h)U_{gh}\nonumber\\
&=\tilde{\omega}(g,h)\tilde{U}_{gh},\\
\tilde{\omega}(g,h)&:=\alpha(g)\alpha(h)\alpha(gh)^{-1}\omega(g,h).
}
By definition, $\{\tilde{U}_g\}$ satisfies $\tilde{U}_e=I$ which implies that its cocycle satisfies $\tilde{\omega}(g,e)=\tilde{\omega}(e,g)=1$. Therefore, in this supplementary materials, we assume that $U_e=I$ (and $\omega(g,e)=\omega(e,g)=1$).

In this Supplementary Material, we assume that the group $G$ is compact. We also focus on (projective) unitary representations which have the following irreducible decomposition:
\eq{
U_g=\bigoplus_{\nu}U^{\nu}_g\otimes I^{m_{\nu}}\label{decom_U}.
}
Here $m_{\nu}$ denotes the multiplicity of the irrep $\nu$.
We also use the irreducible decomposition of the Hilbert space $\calH$ induced by $\{U_g\}_{g\in G}$:
\eq{
\calH=\bigoplus_{\nu}\calH_{\nu}\otimes \mathbb{C}^{m_{\nu}}.\label{decom_H}
}

In the decomposition of a projective unitary representation $U_g=\bigoplus_\nu U^{(\nu)}_g\otimes I^{m_{\nu}}$, the cocycle of each irreducible representation is the same. To be concrete, for any irreducible representation $U^{(\mu)}_g$ in the decomposition, its cocyle $\omega^{(\mu)}$ satisfies 
\eq{
\omega^{(\mu)}(g,h)=\omega(g,h),
}
where $\omega$ is the cocycle of $\{U_g\}$.

\begin{proof}
\eq{
\omega(g,h)U_{gh}&=U_gU_h\nonumber\\
&=\bigoplus_{\mu}U^{(\mu)}_gU^{(\mu)}_h\otimes I^{m_{\mu}}\nonumber\\
&=\bigoplus_{\mu}\omega^{(\mu)}(g,h) U^{(\mu)}_{gh}\otimes I^{m_{\mu}}
}
Multiplying the both sides by $P^{(\mu)}$ that is the projection to $\calH_{\mu}\otimes \mathbb{C}^{m_{\mu}}$ on the right, we obtain
\eq{
\omega(g,h)U^{(\mu)}_{gh}\otimes I^{m_{\mu}}=\omega^{(\mu)}(g,h)U^{(\mu)}_{gh}\otimes I^{m_{\mu}}.
}
Since $U^{(\mu)}_{gh}\otimes I^{m_{\mu}}$ is an invertible matrix, we obtain $\omega(g,h)=\omega^{(\mu)}(g,h)$.
\end{proof}

Even for such projective unitary representations, Schur’s lemma and Schur's orthogonality theorem remain valid.
\begin{lemma}[Schur's lemma]\label{SLemm}
Let $G$ be a group, and let $\{U_g\}$ and $\{U'_g\}$ be irreducible (projective) unitary representations acting on $\calH$ and $\calH'$. Let $T$ be a linear map from $\calH$ to $\calH'$ satsfying 
\eq{
TU_g=U'_gT,\enskip \forall g\in G.\label{eq:Schur}
}
Then, (a) $T$ is a zero map or (b) $T$ is bijection and $T=cV$, where $c$ is a scalar satisfying $|c|>0$ and $V$ is a unitary map from $\calH$ to $\calH'$, and the irreducible representations $\{U_g\}$ and $\{U'_g\}$ satisfy $U'_g=VU_gV^\dagger$, i.e. they are equivalent to each other.
\end{lemma}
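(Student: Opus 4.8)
The plan is to run the classical Schur's-lemma argument, whose only inputs are the unitarity of the two representations, the intertwining relation $TU_g=U'_gT$, and the irreducibility of $\{U_g\}$ and $\{U'_g\}$. The central point I want to stress is that none of these ingredients refers to the group composition law or to the $2$-cocycle $\omega$, so the ordinary proof transfers verbatim to the projective case; this is exactly why the statement can be phrased for projective representations without any change.

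First I would analyze the kernel and image of $T$. If $v\in\ker T$ then $TU_gv=U'_gTv=0$, so $U_gv\in\ker T$, showing $\ker T$ is a $\{U_g\}$-invariant subspace of $\calH$; symmetrically $\mathrm{Im}\,T$ is $\{U'_g\}$-invariant because $U'_g(Tv)=T(U_gv)$. Irreducibility forces each of these subspaces to be either trivial or the whole space. If $T=0$ we are in case (a). Otherwise $\ker T\neq\calH$ gives $\ker T=\{0\}$ (injectivity) and $\mathrm{Im}\,T\neq\{0\}$ gives $\mathrm{Im}\,T=\calH'$ (surjectivity); since the spaces are finite-dimensional, $T$ is a bijection, which is case (b).

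Next I would upgrade a generic bijective intertwiner to a scalar multiple of a unitary. Taking adjoints of $TU_g=U'_gT$ and using $U_g^\dagger=U_g^{-1}$ and $(U'_g)^\dagger=(U'_g)^{-1}$, I obtain $T^\dagger U'_g=U_gT^\dagger$, so $T^\dagger$ intertwines in the reverse direction. Composing the two relations yields $T^\dagger T\,U_g=T^\dagger U'_gT=U_g\,T^\dagger T$, i.e.\ the positive self-adjoint operator $T^\dagger T$ commutes with every $U_g$. To avoid circularity I would invoke the spectral theorem rather than Schur's lemma itself: each eigenspace of $T^\dagger T$ is $\{U_g\}$-invariant, hence trivial or all of $\calH$ by irreducibility, and since $T^\dagger T\neq 0$ it has a single positive eigenvalue $\lambda$, so $T^\dagger T=\lambda I$ with $\lambda>0$. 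Setting $c:=\sqrt{\lambda}$ and $V:=T/c$ gives $V^\dagger V=I$, so $V$ is an isometry, and bijectivity of $T$ makes $V$ unitary. Dividing $TU_g=U'_gT$ by $c$ gives $VU_g=U'_gV$, whence $U'_g=VU_gV^\dagger$, establishing equivalence.

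The only genuinely delicate points are bookkeeping rather than conceptual obstacles: confirming that the cocycle $\omega$ truly drops out—which it does, since both the intertwining relation and unitarity are cocycle-free—and checking that the notions of invariant subspace and irreducibility for projective representations coincide with the ordinary ones, a subspace $W$ being invariant precisely when $U_gW\subseteq W$ for all $g$, independent of how $U_gU_h$ factors. Finiteness of dimension, needed to pass from injective-and-surjective to bijective and to apply the spectral theorem cleanly, is automatic here since every irreducible unitary representation of a compact group is finite-dimensional.
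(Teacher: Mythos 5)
Your proof is correct, but it takes a genuinely different route from the paper's. You run the classical Schur argument directly on the projective representations: $\ker T$ and $\mathrm{Im}\,T$ are invariant subspaces (so $T$ is zero or bijective), and $T^\dagger T$ commutes with every $U_g$, whence by the spectral theorem and irreducibility $T^\dagger T=\lambda I$ with $\lambda>0$, giving $T=cV$ with $V$ unitary and $U'_g=VU_gV^\dagger$. The paper instead first extracts from the nonzero intertwiner that the two cocycles coincide, $\omega(g,h)=\omega'(g,h)$ (by evaluating $TU_gU_h$ in two ways and applying the result to a vector $\ket{\psi}$ with $T\ket{\psi}\neq 0$), and then \emph{linearizes}: it forms the central extension $\tilde{G}:=U(1)\times G$ with twisted product $(z,g)\cdot(z',g')=(zz'\,\omega(g,g'),\,gg')$, promotes $U,U'$ to genuine unitary representations $\tilde{U}_{z,g}=zU_g$, $\tilde{U}'_{z,g}=zU'_g$, and invokes the ordinary non-projective Schur lemma as a black box. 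Your approach is more elementary and self-contained, and it vindicates the observation you emphasize: neither the intertwining relation nor unitarity ever composes two group elements, so $\omega$ never enters; in your version the cocycle equality $\omega=\omega'$ comes out as a corollary of $U'_g=VU_gV^\dagger$ rather than as an intermediate lemma. What the paper's route buys is reusability of the mechanism: the same cocycle-matching step plus $U(1)$-extension trick is immediately redeployed to transfer Schur orthogonality to projective representations, where one genuinely integrates over the group and the cocycle bookkeeping cannot be sidestepped. Two minor points of phrasing in your write-up: injectivity plus surjectivity already gives bijectivity in any dimension---finite-dimensionality is really used so that $\mathrm{Im}\,T$ is a closed subspace to which irreducibility applies (and so the spectral theorem is available in its elementary form); and your adjoint step $T^\dagger U'_g=U_gT^\dagger$ is sound exactly because it involves a single fixed $g$, which is worth stating explicitly as the reason the cocycle drops out there as well.
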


\begin{proof}
When $T=0$ is valid, \eqref{eq:Schur} is also clearly valid.
Therefore, below we assume that $T$ is a non-zero map, and show that (b) is valid.
Note that
\eq{
\omega(g,h)TU_{gh}&=TU_gU_h\nonumber\\
&=U'_gU'_hT\nonumber\\
&=\omega'(g,h)U'_{gh}T\nonumber\\
&=\omega'(g,h)TU_{gh}.\label{schur1}
}
Since $T$ is a non-zero map, there exists $\ket{\psi}$ s.t. $T\ket{\psi}\ne0$.
Then, multiplying the both sides of \eqref{schur1} by $U^\dagger_{gh}\ket{\psi}$ on the right, we obtain
$\omega(g,h)T\ket{\psi}=\omega'(g,h)T\ket{\psi}$, which implies
\eq{
\omega(g,h)=\omega'(g,h).\label{omega=omega'}
}
Due to \eqref{omega=omega'}, we can define a group $\tilde{G}:=U(1)\times G$ whose element $(z,g)$ satisfies
\eq{
(z,g)\cdot(z',g')=(zz'\omega(g,g'),gg')
}
and non-projective unitary representations $\tilde{U}$ and $\tilde{U}'$ of $\tilde{G}$ acting on $\calH$ and $\calH'$
\eq{
\tilde{U}_{z,g}:=zU_g,\enskip \tilde{U}'_{z,g}:=zU'_g.
}
Then, $\tilde{U}$ is irreducible iff $U$ is irreducible, and $\tilde{U}'$ is irreducible iff $U'$ is irreducible.
Therefore, \eqref{eq:Schur} implies
\eq{
T\tilde{U}_{z,g}=U'_{z,g}T,\enskip\forall(z,g)\in\tilde{G}.
}
Therefore, because of Schur's lemma for non-projective unitary representation~\cite{hayashi2017group}, (b) must be valid.
\end{proof}

\begin{lemma}[Schur's orthogonality theorem]
Let $G$ be a group, and let $U^{(\mu)}$ and $U^{(\lambda)}$ be irreducible (projective) unitary representations of $G$ acting on $\calH_\mu$ and $\calH_\lambda$, respectively.
If the cocycles of $U^{(\mu)}$ and $U^{(\lambda)}$ are the same, the following relations hold:
\begin{description}
\item[(a)]When $U^{(\mu)}$ and $U^{(\lambda)}$ are not equivalent,
\eq{
\int_G dg\bra{j}_\mu (U^{(\mu)}_g)^\dagger\ket{i}_{\mu}\bra{k}_\lambda U^{(\lambda)}_{g}\ket{l}_{\lambda}=0,\label{eq:orthogonality1}
}
where $\{\ket{i}_\mu\}$ and $\{\ket{k}_\lambda\}$ are arbitrary orthonomal basis of $\calH_\mu$ and $\calH_\lambda$, respectively.
\item[(b)]When $U^{(\mu)}$ and $U^{(\lambda)}$ are equivalent, 
\eq{
\int_G dg\bra{j}_\mu (U^{(\mu)}_g)^\dagger\ket{i}_{\mu}\bra{k}_\lambda U^{(\lambda)}_{g}\ket{l}_{\lambda}=\frac{\delta_{i,k}\delta_{j,l}}{d_{\mu}},\label{eq:orthogonality2}
}
where $d_\mu$ is the dimension of $\calH_\mu$, and $\{\ket{i}_\mu\}$ and $\{\ket{k}_\lambda\}:=\{V\ket{k}_{\mu}\}$ are orthonomal basis of $\calH_\mu$ and $\calH_\lambda$, respectively, where $V$ is a unitary satisfying $U^{(\lambda)}_g=VU^{(\mu)}_gV^\dagger$.
\end{description}
\end{lemma}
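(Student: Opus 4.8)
The plan is to reduce the projective case to the ordinary (linear) Schur orthogonality theorem by passing to the central extension already used in the proof of Lemma~\ref{SLemm}. Concretely, since $U^{(\mu)}$ and $U^{(\lambda)}$ are assumed to carry the same $2$-cocycle $\omega$, I would form the compact group $\tilde G := U(1)\times G$ with the twisted product $(z,g)\cdot(z',g')=(zz'\omega(g,g'),gg')$ and lift both representations to genuine unitary representations $\tilde U^{(\nu)}_{(z,g)}:=z\,U^{(\nu)}_g$ for $\nu\in\{\mu,\lambda\}$. Because $U_e=I$ and $\omega(e,\cdot)=\omega(\cdot,e)=1$, these are honest (non-projective) representations of $\tilde G$; moreover $\tilde U^{(\nu)}$ is irreducible iff $U^{(\nu)}$ is, and $\tilde U^{(\mu)}\cong\tilde U^{(\lambda)}$ iff $U^{(\mu)}\cong U^{(\lambda)}$, with the \emph{same} intertwining unitary $V$ in both pictures.

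The second step is to rewrite the integral over $G$ as an integral over $\tilde G$. The normalized Haar measure on $\tilde G$ factorizes as $d(z,g)=dz\,dg$ with $\int_{U(1)}dz=1$, and the key observation is that the integrand is $z$-independent: using $(\tilde U^{(\mu)}_{(z,g)})^\dagger=\bar z\,(U^{(\mu)}_g)^\dagger$ and $\tilde U^{(\lambda)}_{(z,g)}=z\,U^{(\lambda)}_g$, the two phases cancel since $\bar z z=1$, so that
\eq{
&\int_{\tilde G}d(z,g)\,\bra{j}_\mu(\tilde U^{(\mu)}_{(z,g)})^\dagger\ket{i}_\mu\,\bra{k}_\lambda\tilde U^{(\lambda)}_{(z,g)}\ket{l}_\lambda\nonumber\\
&=\int_G dg\,\bra{j}_\mu(U^{(\mu)}_g)^\dagger\ket{i}_\mu\,\bra{k}_\lambda U^{(\lambda)}_g\ket{l}_\lambda .
}
Thus the left-hand side of the lemma is exactly a matrix element of the standard Schur averaging integral for the ordinary representations $\tilde U^{(\mu)}$ and $\tilde U^{(\lambda)}$ on the compact group $\tilde G$.

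The third step is to invoke the ordinary (non-projective) Schur orthogonality theorem for compact groups~\cite{hayashi2017group}, applied to $\tilde U^{(\mu)}$ and $\tilde U^{(\lambda)}$. In the inequivalent case this integral vanishes, which gives part~(a). In the equivalent case, choosing the bases to be matched by the intertwiner, $\ket{k}_\lambda=V\ket{k}_\mu$ (exactly the choice specified in the statement), reduces the computation to a single irrep on $\calH_\mu$, and the standard normalization of the averaging integral produces $\delta_{ik}\delta_{jl}/d_\mu$ with $d_\mu=\dim\calH_\mu$, which is part~(b).

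I expect the only delicate points to be bookkeeping checks rather than any conceptual obstacle: (i) confirming that each $\tilde U^{(\nu)}$ really is a linear representation (this is precisely where the normalizations $\omega(e,\cdot)=\omega(\cdot,e)=1$ and $U_e=I$ are used) and that irreducibility and equivalence are preserved under the lift, and (ii) verifying that the equivalence unitary $V$ of the lifted representations coincides with that of the projective ones, so that the matched-basis choice in~(b) is legitimate. A fully self-contained alternative, avoiding the extension, would instead form the averaged operator $T:=\int_G dg\,(U^{(\mu)}_g)^\dagger A\,U^{(\lambda)}_g$ from a rank-one seed $A=\ket{i}_\mu\bra{k}_\lambda$ and show directly, via the cocycle identity together with $\omega(h,h^{-1})$-type normalizations, that $T$ intertwines $U^{(\lambda)}$ and $U^{(\mu)}$, whence Lemma~\ref{SLemm} forces $T=0$ or $T\propto V$; the hard part there is managing exactly these cocycle phases, which is what the extension trick automates.
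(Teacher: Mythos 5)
Your proposal is correct and follows essentially the same route as the paper: both pass to the central extension $\tilde G = U(1)\times G$ constructed in the proof of Lemma~\ref{SLemm}, lift $U^{(\mu)},U^{(\lambda)}$ to genuine representations $\tilde U^{(\nu)}_{(z,g)}=zU^{(\nu)}_g$, note that the $z$-phases cancel in the averaging integral (so the integral over $\tilde G$ equals the integral over $G$), and then invoke the standard non-projective Schur orthogonality theorem. The extra bookkeeping you flag (the lift preserving irreducibility/equivalence with the same intertwiner $V$) is exactly what the paper's terse proof implicitly relies on, so nothing is missing.
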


\begin{proof}
Since cocycles of $U^{(\mu)}$ and $U^{(\lambda)}$ are the same, again we can define the group $\tilde{G}$ and its non-projective unitary representations $\tilde{U}^{(\mu)}$ and $\tilde{U}^{(\lambda)}$ as $\tilde{G}$, $\tilde{U}$ and $\tilde{U}'$ in the proof of Lemma \ref{SLemm}, respectively.
Note that
\eq{
\int_{U(1)}dz\int_G dg\bra{j}_\mu (U^{(\mu)}_{z,g})^\dagger\ket{i}_{\mu}\bra{k}_\lambda U^{(\lambda)}_{z,g}\ket{l}_{\lambda}&=\int_{U(1)}dz |z|^2\int_G dg\bra{j}_\mu (U^{(\mu)}_{g})^\dagger\ket{i}_{\mu}\bra{k}_\lambda U^{(\lambda)}_{g}\ket{l}_{\lambda}\nonumber\\
&=\int_G dg\bra{j}_\mu (U^{(\mu)}_g)^\dagger\ket{i}_{\mu}\bra{k}_\lambda U^{(\lambda)}_{g}\ket{l}_{\lambda}.
}
Therefore, from Schur's orthogonality theorem for non-projective unitary representations~\cite{hayashi2017group}, we obtain (a) and (b).
\end{proof}

\section{Properties of strong symmetric states, single-sector states and strong covariant operations}
In this section, we give the basic properties of strong symmetric states, single-sector states and strong covariant operations.
For the readers' convenience, we repeat the definitions:

\medskip
\noindent\textbf{Free states: Strong symmetric states.}
Let $S$ be a quantum system, and let $G$ be a group with a (projective) unitary representation $U$ acting on $S$. 
A state $\rho$ on $S$ is said to be strong symmetric with respect to $U$ if it satisfies
\eq{
U_g\rho=e^{i\theta_{g,\rho}}\rho,\enskip\forall g\in G,
}
where $\theta_{g,\rho}$ is a real-valued function of $g$ and $\rho$. We refer to the whole set of strong symmetric states as ${\cal F}_{G,\mathrm{strong}}$.

\medskip
\noindent\textbf{Additional class of states: Single-sector states.}
Let $S$ be a quantum system, and let $G$ be a group with a (projective) unitary representation $U$ acting on $S$. 
We assume that $U_{g\in G}$ admits the following irreducible decomposition:
\eq{
U_g = \bigoplus_{\nu} U_g^{(\nu)} \otimes I_{m_\nu},
}
where $m_{\nu}$ denotes the multiplicity of the irrep labeled by $\nu$.  
Correspondingly, the Hilbert space $\calH$ can be decomposed as
\eq{\label{sum}
\calH = \bigoplus_{\nu} \calH_{\nu} \otimes \mathbb{C}^{m_\nu}.
}
A state $\rho$ on $S$ is said to be \textit{single-sector} with respect to $U$ if it is weak symmetric with respect to $U$ and satisfies
\eq{
\exists\, \nu \text{ such that } P_\nu \rho P_\nu = \rho,
}
where $P_\nu$ is the projection operator onto $\calH_{\nu} \otimes \mathbb{C}^{m_\nu}$.
We refer to the whole set of single-sector states as ${\cal F}_{G,\mathrm{single}}$.

\medskip
\noindent
\textbf{Free operations: strong covariant operations.}
Let $G$ be a group, and let $U$ and $U'$ be (projective) unitary  representations of $G$ acting on two Hilbert spaces $\calH$ and $\calH'$, respectively.
A CPTP map $\Lambda: \calB(\calH)\to \calB(\calH')$ is said to be $(U,U')$-strong covariant if it satisfies
\begin{equation}
    \Lambda\!\left(U_g...\right)
    = U'_g \Lambda(...) 
    \quad \forall g \in G.
    \label{eq_SM:s-covariance_def}
\end{equation}
We refer to the whole set of strong covariant operations as ${\cal O}_{G,\mathrm{strong}}$.

For strong covariant operations, the following lemma holds:
\begin{lemma}\label{lemm_SM:cocycle_SC}
Let $G$ be a (compact) group, and let $U:G\rightarrow U(\calH)$ and $U':G\rightarrow U(\calH')$ be (projective) unitary representations acting on Hilbert spaces $\calH$ and $\calH'$, respectively.
Suppose that there is a $(U,U')$-strong covariant operation $\Lambda:\calB(\calH)\rightarrow\calB(\calH')$.
Then, the cocycles $\omega(g,h)$ and $\omega'(g,h)$ of $U$ and $U'$ coincide:
\eq{
\omega(g,h)=\omega'(g,h),\enskip\forall g,h\in G
}
\end{lemma}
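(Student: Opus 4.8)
The plan is to exploit the defining intertwining relation $\Lambda(U_g\,\cdot\,)=U'_g\,\Lambda(\cdot)$ twice and to compare the result with the two cocycle relations $U_gU_h=\omega(g,h)U_{gh}$ and $U'_gU'_h=\omega'(g,h)U'_{gh}$. First I would evaluate $\Lambda(U_gU_hX)$ for an arbitrary operator $X\in\calB(\calH)$ by peeling off the group elements one at a time. Applying strong covariance with $g$ to the operator $U_hX$ gives $\Lambda\bigl(U_g(U_hX)\bigr)=U'_g\,\Lambda(U_hX)$, and a second application with $h$ yields $U'_gU'_h\,\Lambda(X)$. Using the cocycle relation on $\calH'$, this equals $\omega'(g,h)\,U'_{gh}\,\Lambda(X)$.

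Next I would compute the very same quantity by contracting on the domain side first. By the cocycle relation $U_gU_h=\omega(g,h)U_{gh}$ on $\calH$ together with linearity of $\Lambda$, we have $\Lambda(U_gU_hX)=\omega(g,h)\,\Lambda(U_{gh}X)$, and a single application of strong covariance with the element $gh$ gives $\omega(g,h)\,U'_{gh}\,\Lambda(X)$. Equating the two evaluations then produces the identity $\omega'(g,h)\,U'_{gh}\,\Lambda(X)=\omega(g,h)\,U'_{gh}\,\Lambda(X)$, valid for every $X\in\calB(\calH)$ and every $g,h\in G$.

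It remains to cancel the common operator factor $U'_{gh}\,\Lambda(X)$, and this is the only step requiring care. Since $\Lambda$ is CPTP, hence trace preserving, choosing $X=\rho$ to be any density matrix gives $\Tr[\Lambda(\rho)]=\Tr\rho=1$, so $\Lambda(\rho)\neq 0$; and because $U'_{gh}$ is unitary, hence invertible, it follows that $U'_{gh}\,\Lambda(\rho)\neq 0$ as well. Therefore the scalar prefactors must coincide, i.e.\ $\omega'(g,h)=\omega(g,h)$, and since $g,h\in G$ were arbitrary the claim follows. The argument is purely algebraic; the main—and essentially the only—subtlety is guaranteeing the non-vanishing of $U'_{gh}\,\Lambda(X)$ for some $X$, so that the two phases may be legitimately identified. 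This is precisely where the trace-preservation property (equivalently, the fact that $\Lambda$ is a genuine channel and not the zero map) enters the proof.
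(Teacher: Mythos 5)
Your proof is correct and takes essentially the same approach as the paper: both evaluate the action of $\Lambda$ on $U_gU_h\rho$ twice---once by applying strong covariance and the target cocycle, once by applying the domain cocycle and strong covariance---and both invoke trace preservation to ensure the comparison is non-degenerate. The paper merely packages the argument as a single trace identity, $\omega'(g,h)=\Tr[U'^\dagger_{gh}U'_{g}U'_h\Lambda(\rho)]=\Tr[\Lambda(U^\dagger_{gh}U_{g}U_h\rho)]=\omega(g,h)$, extracting the phases directly via $\Tr[\Lambda(\rho)]=1$ rather than cancelling the nonzero operator factor $U'_{gh}\Lambda(\rho)$ as you do.
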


\begin{proof}
Since $\Lambda$ is $(U,U')$-strong covariant, the following relations hold for an arbitrary state $\rho\in\calS(\calH)$, and thus the cocycles of $U$ and $U'$ coincide: 
\eq{
\omega'(g,h)&=\Tr[U'^\dagger_{gh}U'_{g}U'_h\Lambda(\rho)]\nonumber\\
&=\Tr[\Lambda(U^\dagger_{gh}U_{g}U_h\rho)]\nonumber\\
&=\omega(g,h).
}
\end{proof}

\subsection{Properties as free states and free operations}
We firstly show that either the combination of $(\calF_{G,\mathrm{strong}},\calO_{G,\mathrm{strong}})$ or $(\calF_{G,\mathrm{single}},\calO_{G,\mathrm{strong}})$ satisfies the minimal requirements of the resource theory.
\begin{theorem}\label{thm_SM:GR_strong}
Let $G$ be a (compact) group, and let $U$, $U'$ and $U''$ be (projective) unitary representations acting on Hilbert spaces $\calH$, $\calH'$ and $\calH''$, respectively.
Then, the following three are valid:$\\$
(i) The identity operation on $\calB(\calH)$ is $(U,U)$-strong covariant. $\\$
(ii) If $\Lambda:\calB(\calH)\rightarrow \calB(\calH')$ and $\Lambda':\calB(\calH')\rightarrow \calB(\calH'')$ are $(U,U')$- and $(U',U'')$-strong covariant, respectively, $\Lambda'\circ\Lambda$ is also $(U,U'')$-strong covariant. $\\$
(iii) If $\Lambda:\calB(\calH)\rightarrow \calB(\calH')$ is a $(U,U')$-strong covariant CPTP map, the following relations hold:
\eq{
\rho\in{\cal F}_{G,\mathrm{strong}}&\Rightarrow\Lambda(\rho)\in{\cal F}_{G,\mathrm{strong}},\label{eq_SM:GR_strong}\\
\rho\in{\cal F}_{G,\mathrm{single}}&\Rightarrow\Lambda(\rho)\in{\cal F}_{G,\mathrm{single}}.\label{eq_SM:GR_single}
}
\end{theorem}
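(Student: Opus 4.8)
Parts (i) and (ii) I would dispatch by direct algebra, splitting part (iii) into its two implications, the strong-symmetric one being immediate. For (i), taking $U'=U$ gives $\mathrm{id}(U_g X)=U_g X=U_g\,\mathrm{id}(X)$. For (ii), chaining the defining relations yields $(\Lambda'\circ\Lambda)(U_g X)=\Lambda'\big(U'_g\Lambda(X)\big)=U''_g(\Lambda'\circ\Lambda)(X)$, which is $(U,U'')$-strong covariance. For the strong-symmetric implication in (iii), if $U_g\rho=e^{i\theta_{g,\rho}}\rho$ then linearity and left covariance give $U'_g\Lambda(\rho)=\Lambda(U_g\rho)=e^{i\theta_{g,\rho}}\Lambda(\rho)$; since $\Lambda(\rho)$ is a unit-trace state, this exhibits it as strong symmetric with the same phases.

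For the single-sector implication I would check the two defining properties of such a state separately. To obtain weak symmetry of $\Lambda(\rho)$, I would first upgrade the left covariance to full weak covariance: complete positivity makes $\Lambda$ Hermiticity-preserving, $\Lambda(A^\dagger)=\Lambda(A)^\dagger$, so applying the defining relation to $X=Y^\dagger$ and taking adjoints gives the right-handed relation $\Lambda(Y{U_g}^\dagger)=\Lambda(Y){U'_g}^\dagger$; combining the two yields $\Lambda(U_g Y {U_g}^\dagger)=U'_g\Lambda(Y){U'_g}^\dagger$. Since a single-sector $\rho$ is in particular weak symmetric, $U_g\rho {U_g}^\dagger=\rho$, this gives $U'_g\Lambda(\rho){U'_g}^\dagger=\Lambda(\rho)$, so $\Lambda(\rho)$ is weak symmetric.

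The main step — which I expect to be the crux — is confining $\Lambda(\rho)$ to a single isotypic sector of $U'$. I would write the isotypic projector onto an irrep type $\nu'$ as a Haar average against the conjugate character, $P_{\nu'}=d_{\nu'}\int_G dg\,\overline{\chi_{\nu'}(g)}\,U_g$ on $\calH$ and likewise $P'_{\nu'}$ with $U'_g$ on $\calH'$, a standard consequence of Schur's orthogonality theorem. By Lemma~\ref{lemm_SM:cocycle_SC} the cocycles of $U$ and $U'$ agree, and equivalent irreps share a character, so the same scalar weight $g\mapsto d_{\nu'}\overline{\chi_{\nu'}(g)}$ defines both projectors. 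The key manipulation is to slide $\Lambda$ through the average using left covariance,
\eq{
P'_{\nu'}\Lambda(\rho)&=d_{\nu'}\int_G dg\,\overline{\chi_{\nu'}(g)}\,U'_g\Lambda(\rho)\\
&=\Lambda\Big(d_{\nu'}\int_G dg\,\overline{\chi_{\nu'}(g)}\,U_g\rho\Big)=\Lambda\big(P_{\nu'}\rho\big),
}
which turns a target-space sector projection into a source-space one. Since $\rho$ is supported in a single source sector $\tau$, one has $P_{\nu'}\rho=\delta_{\nu',\tau}\,\rho$, so $P'_{\nu'}\Lambda(\rho)=0$ for all $\nu'\neq\tau$ and hence $\Lambda(\rho)=P'_\tau\Lambda(\rho)$; taking adjoints (with $\Lambda(\rho)$ Hermitian and $P'_\tau$ a projector) promotes this to $\Lambda(\rho)=P'_\tau\Lambda(\rho)P'_\tau$, i.e.\ single-sector support, with $\tau$ necessarily occurring in $U'$ because $\Lambda$ is trace-preserving. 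The hard part will be justifying this character-projection identity in the \emph{projective} setting: one must verify, via Lemma~\ref{lemm_SM:cocycle_SC} and the projective form of Schur's orthogonality theorem, that equivalent irreps of $U$ and $U'$ carry the same character and that the weighted Haar average reproduces the isotypic projector, so that interchanging $\Lambda$ with $\int_G dg$ maps sectors cleanly to sectors.
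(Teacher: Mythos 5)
Your proof is correct and follows essentially the same route as the paper's: the cocycle-matching lemma plus the projective Schur-orthogonality character projectors $P_\nu = d_\nu\int_G dg\,\chi_\nu(g)^*\,U_g$, with $\Lambda$ slid through the Haar average via left covariance. The only cosmetic differences are that you confine the support at the operator level via $P'_{\nu'}\Lambda(\rho)=\Lambda(P_{\nu'}\rho)$ where the paper takes traces to obtain $p'_\nu(\Lambda(\rho))=p_\nu(\rho)$ and then argues support confinement from positivity, and that you explicitly derive weak covariance from Hermiticity preservation of CP maps, a step the paper merely asserts.
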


\begin{proofof}{Theorem \ref{thm_SM:GR_strong}}
The properties (i) and (ii) directly follow from the definition of strong covariant operations.

Let us prove (iii).
We first prove \eqref{eq_SM:GR_strong}. 
Let $\rho$ be a state on $S$ that is strong symmetric with respect to $U$. 
Then, there exists a real-valued function $\theta_{g,\rho}$ such that 
$U_g \rho = e^{i\theta_{g,\rho}} \rho$ holds for all $g \in G$. 
Since $\Lambda$ is $(U,U')$-strong covariant, 
we have
\eq{
U'_g \Lambda(\rho) 
= \Lambda(U_g \rho) 
= e^{i\theta_{g,\rho}} \Lambda(\rho),
}
which shows that $\Lambda(\rho)$ is also strong symmetric with respect to $\{U'_g\}$.

Next, let us prove \eqref{eq_SM:GR_single}.
Due to Lemma \ref{lemm_SM:cocycle_SC}, the cocycles of $U$ and $U'$ coincide, and thus we can take the irreducible decompositions of $U$, $U'$, $\calH$ and $\calH'$ as
\eq{
U_g&=\bigoplus_\nu U^{(\nu)}_g\otimes I_{m_\nu},\\
U'_g&=\bigoplus_{\nu'} U'^{(\nu')}_g\otimes I_{m'_{\nu'}},\\
\calH&=\bigoplus_\nu \calH_{\nu}\otimes \mathbb{C}^{m_\nu},\\
\calH'&=\bigoplus_{\nu'} \calH'_{\nu'}\otimes \mathbb{C}^{m'_{\nu'}},
}
where, irreducible representations labeled by the same index $\nu$ are understood to be identical, i.e., $U^{(\nu)} = U'^{(\nu)}$, and the cocycles of irreducible representations $U^{(\nu)}$ ($U'^{(\nu)}$) in the decomposition of $U$ ($U'$) coincide with each others.
Here, using Schur's orthogonality theorem, we obtain
\eq{
\int\chi_\nu(g)^* U^{(\nu')}_gdg=\frac{\delta_{\nu,\nu'}I_\nu}{d_\nu},
}
where $d_\nu:=\mathrm{dim}\calH_\nu$, $\chi_\nu(g):=\Tr[U^{(\nu)}_g]$ and $I_\nu$ is the identity operator on $\calH_\nu$.
Using this theorem and $U^{(\nu)} = U'^{(\nu)}$, we can rewrite the projection $P_\nu$ to $\calH_{\nu}\otimes\mathbb{C}^{m_\nu}$ and $P'_{\nu'}$ to $\calH_{\nu'}\otimes\mathbb{C}^{m'_{\nu'}}$ as
\eq{
P_\nu&=I_\nu\otimes I_{m_\nu}\nonumber\\
&=d_\nu\int \chi_\nu(g)^* \bigoplus_{\nu'}\left(U^{(\nu')}_g\otimes I_{m_{\nu'}}\right)dg\nonumber\\
&=d_\nu \int\chi_\nu(g)^* U_gdg,\\
P'_{\nu'}&=d_{\nu'} \int\chi_{\nu'}(g)^* U'_gdg.
}
Therefore, for any $\rho$ on $S$, we obtain
\eq{
p'_\nu(\rho)&:=\Tr[P'_\nu\Lambda(\rho)]\nonumber\\
&=\Tr[d_\nu\int_Gdg\chi_{\nu}(g)^*U'_g\Lambda(\rho)]\nonumber\\
&=\Tr[d_\nu\int_Gdg\chi_{\nu}(g)^*\Lambda(U_g\rho)]\nonumber\\
&=\Tr[\Lambda(P_\nu \rho)]\nonumber\\
&=\Tr[P_\nu\rho]=p_\nu(\rho).\label{eq_SM:preserve_probability},
}
which is also given by Ref.~\cite{Zapusek2025} in a different context.
Therefore, when $\rho$ is a single sector state, the support of $\Lambda(\rho)$ is in an irreducible component $\calH'_\nu\otimes\mathbb{C}^{m'_\nu}$.
Since $\Lambda$ is strong covariant,  $\Lambda$ is also weak covariant, and thus $\Lambda(\rho)$ is weak symmetric if $\rho$ is a single sector state.
Therefore, $\Lambda(\rho)$ is a single sector state.
\end{proofof}

\subsection{Stinespring representations and Kraus representations of the strong-covariant operations}
In this subsection, we clarify the properties of Stinespring representations and Kraus representations of strong covariant operations.
\begin{theorem}\label{thm_SM:Kraus_strong}
Let $G$ be a (compact) group, and let $U$ and $U'$ be (projective) unitary representations acting on Hilbert spaces $\calH$ and $\calH'$, respectively.
Let $\Lambda:\calB(\calH)\rightarrow \calB(\calH')$ be a $(U,U')$-strong covariant CPTP map.
Then, any Kraus representation $\{K_m\}$ of $\Lambda$ satisfies
\eq{
    K_mU_g=U'_gK_m,\enskip \forall g\in G.\label{eq_SM:K_comm}
}
Conversely, when a CPTP map $\Lambda:\calB(\calH)\rightarrow \calB(\calH')$ has a Kraus representation satisfying \eqref{eq_SM:K_comm}, it is $(U,U')$-strong covariant.
\end{theorem}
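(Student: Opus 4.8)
The plan is to prove both directions directly from the Kraus form $\Lambda(X)=\sum_m K_m X K_m^\dagger$, without invoking the Stinespring dilation of Theorem~\ref{thm:SD_strong_StoS}. The converse is immediate: if $K_m U_g = U'_g K_m$ for every $m$, then substituting into the Kraus form gives $\Lambda(U_g X)=\sum_m K_m U_g X K_m^\dagger = \sum_m U'_g K_m X K_m^\dagger = U'_g\Lambda(X)$, which is exactly strong covariance. The content is therefore in the forward direction, and the point I want to make sure is captured is that it holds for \emph{any} Kraus representation, not merely a cleverly chosen one.

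The naive attempt---rewriting strong covariance as $\sum_m (K_m U_g) X K_m^\dagger = \sum_m (U'_g K_m) X K_m^\dagger$ for all $X$ and trying to match terms---fails, because an identity of the form $\sum_m A_m X K_m^\dagger = \sum_m B_m X K_m^\dagger$ does not force $A_m = B_m$. I would instead introduce the operators $D_m := K_m U_g - U'_g K_m$ and show that the manifestly positive quantity $\sum_m D_m D_m^\dagger$ vanishes; since each summand is positive semidefinite, this forces every $D_m=0$, i.e.\ \eqref{eq_SM:K_comm}.

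To evaluate $\sum_m D_m D_m^\dagger$ I first record three consequences of strong covariance. Setting $X=I$ in $\Lambda(U_g X)=U'_g\Lambda(X)$ gives $\Lambda(U_g)=U'_g\Lambda(I)$. Taking the adjoint of the defining relation and using Hermiticity preservation $\Lambda(Y^\dagger)=\Lambda(Y)^\dagger$ (valid for any CP map) yields the ``right'' version $\Lambda(Y U_g^\dagger)=\Lambda(Y)(U'_g)^\dagger$; at $Y=I$ this reads $\Lambda(U_g^\dagger)=\Lambda(I)(U'_g)^\dagger$. Combining these via $\Lambda(I)=\Lambda(U_g U_g^\dagger)=U'_g\Lambda(U_g^\dagger)$ also gives $\Lambda(I)=U'_g\Lambda(I)(U'_g)^\dagger$. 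Expanding $\sum_m D_m D_m^\dagger$ and using $U_g U_g^\dagger=I$ produces four terms, summing to $\Lambda(I)-\Lambda(U_g)(U'_g)^\dagger-U'_g\Lambda(U_g^\dagger)+U'_g\Lambda(I)(U'_g)^\dagger$; substituting the three identities above collapses this to $\Lambda(I)-U'_g\Lambda(I)(U'_g)^\dagger=0$.

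The only subtle step---and the one I would flag as the main obstacle---is recognizing that term-by-term matching is illegitimate and that the positivity argument via $\sum_m D_m D_m^\dagger\ge 0$ is the correct substitute; the remainder is a short algebraic verification. I note that the argument goes through verbatim for projective representations, since only unitarity of the individual $U_g, U'_g$ (hence $U_g U_g^\dagger=I$) enters and the cocycle phases never appear. Moreover, because the entire computation is carried out for a generic Kraus family $\{K_m\}$, the conclusion automatically covers every Kraus representation of $\Lambda$, which is precisely the strength claimed in the statement.
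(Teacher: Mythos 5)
Your proof is correct, and it takes a genuinely different route from the paper's. The paper proves the forward direction by passing to the \emph{minimal} Stinespring dilation of $\Lambda^\dagger$: minimality upgrades the covariance identity $U_g^\dagger W^\dagger (Y\otimes 1_{E'})W = W^\dagger(U_g'^\dagger Y\otimes 1_{E'})W$ to the intertwining relation $U_g^\dagger W^\dagger = W^\dagger (U_g'^\dagger\otimes 1_{E'})$, which yields \eqref{eq_SM:K_comm} for the particular Kraus family $K_m=\bra{m}W$, and then extends to an arbitrary Kraus representation through the mixing relation $K'_l=\sum_m u_{l,m}K_m$. You instead work with a generic Kraus family from the start, set $D_m:=K_mU_g-U'_gK_m$, and compute
\begin{equation*}
\sum_m D_mD_m^\dagger
=\Lambda(I)-\Lambda(U_g)\,(U'_g)^\dagger-U'_g\,\Lambda(U_g^\dagger)+U'_g\,\Lambda(I)\,(U'_g)^\dagger .
\end{equation*}
Your three auxiliary identities are all legitimate: $\Lambda(U_g)=U'_g\Lambda(I)$ follows from strong covariance at $X=I$; $\Lambda(YU_g^\dagger)=\Lambda(Y)(U'_g)^\dagger$ follows by taking adjoints together with $\Lambda(Y^\dagger)=\Lambda(Y)^\dagger$, which holds for any map in Kraus form; and $\Lambda(I)=U'_g\Lambda(I)(U'_g)^\dagger$ follows by combining the two. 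Substituting, each of the three non-leading terms equals $\Lambda(I)$, so the sum vanishes, and positivity of each $D_mD_m^\dagger$ forces $D_m=0$; the converse is the same one-line substitution as in the paper. Comparing the two: your argument is more elementary and self-contained---no dilation theory, no classification of Kraus representations---and it covers \emph{every} Kraus family in a single stroke, whereas the paper must first treat a distinguished representation and then invoke the unitary-mixing theorem. What the paper's route buys in exchange is the intertwining property of the Stinespring isometry itself, which it immediately reuses to build the symmetric unitary dilation in Theorem~\ref{thm_SM:SD_strong_StoS}, so the dilation-based proof is doing double duty there. You are also right on the two subtle points you flag: term-by-term matching in $\sum_m A_mXK_m^\dagger=\sum_m B_mXK_m^\dagger$ is indeed illegitimate, and the projective cocycle never enters because the argument is pointwise in $g$ and uses only $U_gU_g^\dagger=I$; in the finite-dimensional setting the paper adopts, rearranging possibly countable Kraus sums in your expansion is also unproblematic.
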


\begin{proofof}{Theorem \ref{thm_SM:Kraus_strong}}
Due to the Stinespring's dilation theorem, there exists a \textit{minimal} Stinespring representation of $\Lambda^\dagger$~\cite{Stinespring1955,davies1976quantum}. Namely, there exists a quantum system $E'$ with Hilbert space $\calH_{E'}$ and an isometry $W:\calH\rightarrow \calH'\otimes\calH_{E'}$ such that
\begin{equation}
    \begin{aligned}
\Lambda^\dagger(Y)&=W^\dagger (Y\otimes 1_{E'})W,\quad\quad \forall Y\in \calB(\calH_{S'})\\
\calH_{S'}\otimes\calH_{E'}&=\overline{\mathrm{span}}\{(Y\otimes1_{E'})W\ket{\psi}|\quad Y\in \calB(\calH_{S'}),\enskip\ket{\psi}\in\calH_{S}\}\,
\end{aligned}
\end{equation}
where $\calB(\calH_{S'})$ is the whole set of bounded operators of $\calH_{S'}$.

Since $\Lambda$ is $(U,U')$-strong covariant, for any operator $Y$, 
\eq{
\Lambda^\dagger(U'^\dagger_gY)=U^\dagger_g\Lambda^\dagger(Y), \enskip \forall g\in G.
}
Therefore,
\eq{
U^\dagger_g W^\dagger (Y\otimes 1_{E'})W=W^\dagger (U'^\dagger_gY\otimes 1_{E'})W,\enskip \forall g\in G.
}
By multiplying the equation from the right by $\ket{\psi}$, 
allowing $Y$ and $\ket{\psi}$ to vary freely, and then taking linear combinations, we obtain the following:
\eq{
U^\dagger_gW^\dagger=W^\dagger(U'^\dagger_g\otimes1_{E'})\label{com_W_U}
}
Let us take an orthonormal basis $\{\ket{m}\}$ of $E'$ and define $\{K_m\}$ as
\eq{
K_m:=\bra{m}W\label{def_K}
}
We remark that since $W$ is a map from $\calH$ to $\calH'\otimes\calH_{E'}$, and thus $\bra{m}W$ is a map from $\calH$ to $\calH'$.

The operators $\{K_m\}$ satisfy
\eq{
\Lambda(...)&=\Tr_{E'}[W...W^\dagger]=\sum_m\bra{m}W...W^\dagger\ket{m}=\sum_mK_m....K^\dagger_m.\\
\sum_mK^\dagger_mK_m&=W^\dagger1_{S'}\otimes\sum_m\ket{m}\bra{m}W=W^\dagger1_{S'}\otimes1_{E'}W=1_S.
}
Therefore, $\{K_m\}$ is a Kraus representation of $\Lambda$.
And due to \eqref{com_W_U} and \eqref{def_K}, we obtain $K_mU_g=U'_gK_m$ for any $g\in G$ as follows:
\eq{
K_mU_g&=\bra{m}WU_g\nonumber\\
&=\bra{m}1_{E'}\otimes U'_gW\nonumber\\
&=U'_g\bra{m}W\nonumber\\
&=U'_gK_m.
}
Therefore, $\{K_m\}$ satisfies \eqref{eq_SM:K_comm}.
Furthermore, any other Kraus representation $\{K'_{l}\}$ of $\Lambda$ can be written as
\eq{
K'_l=\sum_{m}u_{l,m}K_m,
}
$\{K'_l\}$ also satisfies \eqref{eq_SM:K_comm}.

The converse part is obvious.
When a CPTP map $\Lambda$ from $S$ to $S'$ admits a Kraus representation $\{K_m\}$ satisfying \eqref{eq_SM:K_comm}, 
the map $\Lambda$ is strong covariant, since
\eq{
U'_g\Lambda(...)&=U'_g\sum_mK_m...K^\dagger_m=\sum_mK_mU_g...K^\dagger_m=\Lambda(U_g...).
}
\end{proofof}

\begin{theorem}\label{thm_SM:SD_strong_StoS}
Let $G$ be a (compact) group, and let $U$ be a (projective) unitary representation acting on a Hilbert space $\calH$.
Let $\Lambda:\calB(\calH)\rightarrow \calB(\calH)$ be a $(U,U)$-strong covariant CPTP map.
Then, there exists an auxiliary Hilbert space $\mathcal{H}_E$, a unitary operator $V$ on $\calH\otimes\calH_{E}$, and a state $\sigma_E\in\calS(\calH_E)$ such that
\eq{
    (U_g \otimes I^{(E)})\, V
    &= V\, (U_g \otimes I^{(E)})
    \quad \forall g \in G
\label{eq_SM:covariant_unitary_general_strong},\\
    \Lambda(\rho)
    &= \Tr_E\!\left[\,V (\rho \otimes \sigma_E) V^\dagger \,\right].\label{eq_SM:covariant_unitary_general_strong2}
}
Conversely, when a CPTP map $\Lambda:\calB(\calH)\rightarrow \calB(\calH)$ can be realized by $(V,\sigma_E)$ satisfying \eqref{eq_SM:covariant_unitary_general_strong}  and \eqref{eq_SM:covariant_unitary_general_strong2}, the map is $(U,U)$-strong covariant.
\end{theorem}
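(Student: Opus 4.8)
The plan is to build the Stinespring unitary from the Kraus data and then enlarge it to a symmetric unitary by a multiplicity-counting argument. First I would invoke Theorem~\ref{thm_SM:Kraus_strong}, which is already established: since $\Lambda$ is $(U,U)$-strong covariant, it admits a Kraus representation $\{K_m\}_{m=1}^{d}$ with $K_m U_g = U_g K_m$ for all $g\in G$ and all $m$. I would then take $\calH_E=\mathbb{C}^d$ with a fixed orthonormal basis $\{\ket{m}\}_{m=1}^d$, distinguish the vector $\ket{1}$, set $\sigma_E=\ket{1}\bra{1}$, and define the Stinespring isometry $W:\calH\to\calH\otimes\calH_E$ by $W\ket{\psi}=\sum_m K_m\ket{\psi}\otimes\ket{m}$. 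The normalization $\sum_m K_m^\dagger K_m=I$ gives $W^\dagger W=I$, and a direct computation shows $\Lambda(\rho)=\Tr_E[W\rho W^\dagger]$. The commutation $K_m U_g=U_g K_m$ immediately yields the intertwining relation $W U_g=(U_g\otimes I_E)W$.

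Next I would promote $W$ to a unitary. Define $V$ on the subspace $\calH\otimes\ket{1}$ by $V(\ket{\psi}\otimes\ket{1}):=W\ket{\psi}$; this is a unitary from $\calH\otimes\ket{1}$ onto the image $W\calH$. Both the source $\calH\otimes\ket{1}$ and the target $W\calH$ are invariant under the representation $U_g\otimes I_E$ of $G$ on $\calH\otimes\calH_E$ — the former obviously, the latter because of the intertwining relation — and $V$ is a unitary intertwiner between these two invariant subspaces. To obtain the desired global unitary I must extend $V$ to the orthogonal complements, i.e.\ produce a unitary $V':K_1\to K_2$ intertwining $U_g\otimes I_E$, where $K_1:=(\calH\otimes\calH_E)\ominus(\calH\otimes\ket{1})$ and $K_2:=(\calH\otimes\calH_E)\ominus W\calH$.

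The crux of the argument — and the step I expect to be the main obstacle — is showing that $K_1$ and $K_2$ carry unitarily equivalent representations of $G$, which is what guarantees the existence of such a $V'$. Here I would use the irreducible decomposition $U_g=\bigoplus_\nu U_g^{(\nu)}\otimes I_{m_\nu}$ together with Schur's lemma and the orthogonality relations for (projective) representations established earlier in this Supplement; since the input and output representations coincide, the cocycles match automatically (Lemma~\ref{lemm_SM:cocycle_SC}) and the decomposition is well behaved. Because $\calH_E$ carries the trivial $G$-action, the irrep $\nu$ appears with multiplicity $d\,m_\nu$ in $\calH\otimes\calH_E$; it appears with multiplicity $m_\nu$ in $\calH\otimes\ket{1}$ and, via the intertwiner $W$, also with multiplicity $m_\nu$ in $W\calH$. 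Subtracting, the irrep $\nu$ occurs with multiplicity $(d-1)m_\nu$ in each of $K_1$ and $K_2$. Since a finite-dimensional representation is fixed up to equivalence by its irrep multiplicities, $U_g\otimes I_E|_{K_1}$ and $U_g\otimes I_E|_{K_2}$ are equivalent, so the desired unitary intertwiner $V'$ exists (the edge case $d=1$ is trivial, as then $K_1=K_2=\{0\}$ and $W$ is already unitary).

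Setting $V:=\bigl(V|_{\calH\otimes\ket{1}}\bigr)\oplus V'$ gives a unitary on $\calH\otimes\calH_E$ that commutes with $U_g\otimes I_E$ by construction and reproduces $W$ on $\calH\otimes\ket{1}$, so \eqref{eq_SM:covariant_unitary_general_strong} and \eqref{eq_SM:covariant_unitary_general_strong2} hold with $\sigma_E=\ket{1}\bra{1}$. For the converse I would simply compute, for any $A\in\calB(\calH)$, that $\Lambda(U_gA)=\Tr_E[V(U_g\otimes I_E)(A\otimes\sigma_E)V^\dagger]=\Tr_E[(U_g\otimes I_E)V(A\otimes\sigma_E)V^\dagger]=U_g\Lambda(A)$, using the commutation of $V$ with $U_g\otimes I_E$ and the identity $\Tr_E[(U_g\otimes I_E)X]=U_g\Tr_E[X]$; this is exactly $(U,U)$-strong covariance.
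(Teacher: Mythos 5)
Your proposal is correct, but the extension step is genuinely different from the paper's. Both proofs start identically: invoke Theorem~\ref{thm_SM:Kraus_strong} to get a Kraus representation $\{K_m\}_{m=1}^{d}$ commuting with $U_g$, form the Stinespring isometry $W = \sum_m K_m\otimes\ket{m}$, and observe $WU_g = (U_g\otimes I_E)W$; the converse direction is the same trivial computation in both. Where you diverge is in promoting $W$ to a symmetric unitary. You extend the intertwining unitary $\calH\otimes\ket{1}\to W\calH$ to the orthogonal complements by representation theory: both complements are invariant, the irrep $\nu$ occurs with multiplicity $(d-1)m_\nu$ in each (using complete reducibility and the fact that $W\calH$ carries a copy of $U$), and since finite-dimensional projective representations with a common cocycle are classified by their multiplicities (the cocycles do match, and the $\widetilde{G}=U(1)\times G$ lifting in the Supplement's preliminaries justifies the classification), an intertwining unitary $V'$ between the complements exists. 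The paper instead avoids representation theory entirely: it enlarges the environment by one dimension to $d_{\Lambda}+1$, with $\sigma_E=\ket{0}\bra{0}$ and $W'=\sum_{m\geq 1}K_m\otimes\ket{m}\bra{0}_E$, so that the initial subspace $\calH\otimes\ket{0}$ and the final subspace $W'(\calH\otimes\ket{0})$ are automatically orthogonal, and then sets $V = W' + W'^{\dagger} + Q$ with $Q = I - W'^{\dagger}W' - W'W'^{\dagger}$; unitarity and commutation with $U_g\otimes I_E$ are then manifest term by term. Your route buys the minimal environment dimension $d$ but leans on compactness of $G$ and finite dimensionality through complete reducibility and multiplicity counting; the paper's swap construction costs one extra environment dimension but is purely algebraic and so is insensitive to whether the complements decompose nicely, which makes it more robust if one tries to relax those standing assumptions. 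Two cosmetic points: your reuse of $K_1,K_2$ for the complement subspaces clashes with the Kraus-operator notation, and your claim that $K_1=K_2=\{0\}$ when $d=1$ is the correct degenerate case but worth stating as following from $W$ being unitary onto $\calH\otimes\ket{1}$.
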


\begin{proofof}{Theorem \ref{thm_SM:SD_strong_StoS}}
Due to the proof of Theorem \ref{thm_SM:Kraus_strong}, when $\Lambda$ is strong covariant, there is a Kraus representation $\{K_m\}_{m=1,...,d_{\Lambda}}$ satisfying \eqref{eq_SM:K_comm}.
Let us take quantum systems $E$ satisfying $\mathrm{dim}\calH_{E}= d_{\Lambda}+1$, and take an orthonormal basis $\{\ket{m}\}_{m=0,1,...,d_{\Lambda}}$.
Using them, we define a partial isometry $W'$ from $\calH\otimes\calH_E$ to $\calH\otimes\calH_E$ as
\eq{
W':=\sum^{d_\Lambda}_{m=1}K_m\otimes \ket{m}\bra{0}_E
}
This $W'$ is indeed a partial isometry, since $W'^\dagger W' = 1_S \otimes \ket{0}_E\bra{0}_E$ and $(W'W'^\dagger)^2 = W'W'^\dagger$, 
i.e., both $W'^\dagger W'$ and $W'W'^\dagger$ are projections.

Since $W'W'^\dagger$ and $W'^\dagger W'$ are orthogonal to each other, the following $Q$ is also a projection:
\eq{
Q:=I_{SE}-W'^\dagger W'-W'W'^\dagger,
}
where $I_{SE}$ is the identity of $\calH\otimes\calH_E$.
Because of $[U_g\otimes1_E,W'^\dagger W']=0$ and $[U_g\otimes1_E,W'W'^\dagger]=0$, the projection $Q$ satisfies
\eq{
[U_g\otimes1_E,Q]=0.
}
It is also a projection which is orthogonal to $W'^\dagger W'$ and $W'W'^\dagger$, and thus it satisfies
\eq{
\|QW'\|^2_2&=\Tr[W'^\dagger QW']=\Tr[QW'W'^\dagger]=0,\\
\|W'Q\|^2_2&=\Tr[QW'^\dagger W'Q]=\Tr[QW'^\dagger W']=0,
}
which imply $QW'=W'Q=0$.

Now, let us define 
\eq{
V:=W'+W'^\dagger+Q.
}
By definition, 
\eq{
V^\dagger V=(W'+W'^\dagger+Q)^2=VV^\dagger. 
}
$V$ is a unitary, since
\eq{
V^\dagger V&=(W'^\dagger+W'+Q)^2\nonumber\\
&=W'^\dagger W'+W'W'^\dagger+Q=I_{SE}.
}
Here we used $QW'=W'Q=0$ and $W'W'=0$.

Furthermore, due to \eqref{eq_SM:K_comm} and $[U_g\otimes1_E,Q]=0$, we obtain
\eq{
[U_g\otimes1_E,V]=0.
}
By defining $\sigma_E:=\ket{0}\bra{0}_E$,  
\eq{
Tr_E{V...\otimes\sigma_EV^\dagger}=\sum_mK_m...K^\dagger_m=\Lambda(...).
}
Therefore, we have proved the former part of the theorem.
The latter part is straightforward. When a Kraus repretentation $\{K_m\}$ of $\Lambda$ satisfies \eqref{eq_SM:K_comm}, 
\eq{
U_g\Lambda(...)=U_g\sum_mK_m...K^\dagger_m=\sum_mK_mU_g...K^\dagger_m=\Lambda(U_g...).
}
Therefore, $\Lambda$ is strong covariant.
\end{proofof}

\subsection{GKSL equation realizing strong covariant operations}

\begin{theorem}
Let $S$ be a quantum system whose dynamics obey the following GKSL equation:
\eq{
\partial_t \rho=-i[H,\rho]+\sum_k\left(L_k\rho L^\dagger_k-\frac{1}{2}\{L^\dagger_kL_k,\rho\}\right).
}
Let $G$ and $U$ be a group and its (projective) unitary representation on $\calH$, the Hilbert space of $S$.
We also assume that the CPTP map $\Lambda_t:\calB(\calH)\rightarrow\calB(\calH)$ realized by the master equation is $(U,U)$-strong covariant for any $t$, and that $[H,U_g]=0$.
Then, any jump operators $\{L_k\}$ of the GKSL equation satisfy
\eq{
[L_k,U_g]=0, \enskip\forall g\in G.\label{eq_SM:com_GKSL}
}
\end{theorem}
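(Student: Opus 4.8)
The plan is to pass from the covariance of the one-parameter semigroup to a covariance property of its generator, and then read off a constraint on the jump operators by elementary linear algebra. Since the excerpt's Theorem~\ref{thm:Kraus_strong} already tells us that any Kraus representation of a strong covariant channel commutes with $U_g$, one conceptual route is to expand the exact Kraus operators of $\Lambda_t$ for small $t$ (schematically $K_0(t)=I+O(t)$ and $K_k(t)=\sqrt t\,L_k+O(t)$) and take $t\to 0$ after dividing by $\sqrt t$. I will instead follow a self-contained generator-level argument, which sidesteps the delicate short-time Kraus asymptotics.

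First I would exploit that strong covariance holds for \emph{every} $t$: writing $\mathsf{L}_g(X):=U_gX$, the hypothesis is $\Lambda_t\circ\mathsf{L}_g=\mathsf{L}_g\circ\Lambda_t$ for all $t\ge 0$. In finite dimensions $\Lambda_t=e^{t\mathcal L}$ is analytic in $t$, so differentiating at $t=0$ gives the generator identity $\mathcal L(U_gX)=U_g\mathcal L(X)$ for all $X$ and all $g$. Substituting the GKSL form and using the assumption $[H,U_g]=0$ (which makes the Hamiltonian part already covariant, since $[H,U_gX]-U_g[H,X]=[H,U_g]X=0$) reduces this to a pure dissipator identity,
\begin{equation}
\sum_k [L_k,U_g]\,X\,L_k^\dagger=\tfrac12\sum_k[L_k^\dagger L_k,U_g]\,X,\qquad\forall X .
\end{equation}

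Next I would convert this superoperator identity into an operator identity. Pairing both sides with an arbitrary $Y$ under the Hilbert--Schmidt inner product (equivalently, reading off the coefficient of $X$ on each side) yields $\sum_k L_k^\dagger\,Z\,[L_k,U_g]=Z\,M$ for all $Z$, with $M:=\tfrac12\sum_k[L_k^\dagger L_k,U_g]$. Viewed in the algebra of superoperators this is the single tensor identity $\sum_k L_k^\dagger\otimes[L_k,U_g]=I\otimes M$. The finish is then linear-algebraic: applying a linear functional that annihilates every $L_k^\dagger$ but not $I$ forces $M=0$, after which functionals dual to the $L_k^\dagger$ force each $[L_k,U_g]=0$.

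The step that needs care --- and the main obstacle --- is that this finish requires $\{I\}\cup\{L_k^\dagger\}$ to be linearly independent, whereas the given jump operators need not be. I would handle this by a gauge reduction: unitary mixing $L_k\mapsto\sum_j u_{kj}L_j$ and discarding any jump operator proportional to the identity both leave the dissipator $\mathcal D$ and the Hamiltonian $H$ (hence the hypothesis $[H,U_g]=0$) untouched, and allow one to assume the $\{L_k\}$ are linearly independent with $I\notin\mathrm{span}\{L_k\}$. The subtlety is precisely that one must \emph{avoid} identity shifts $L_k\mapsto L_k+c_kI$, which would make the operators traceless but would move $H$ and could spoil $[H,U_g]=0$; fortunately the weaker property ``$I$ outside the span'' is all the finish needs, and it is reachable without touching $H$. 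Finally, since every admissible choice of jump operators differs from the reduced one only by such unitary mixings and identity shifts, both of which preserve commutation with $U_g$ (as $[c_kI,U_g]=0$), the conclusion $[L_k,U_g]=0$ holds for any set of jump operators, as claimed.
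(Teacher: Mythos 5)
Your proof is correct, and it takes a genuinely different route from the paper's. The paper's own proof is precisely the one you sketch and set aside in your opening paragraph: it applies Theorem~\ref{thm:Kraus_strong} (every Kraus representation of a strong covariant channel intertwines with $U_g$) to the infinitesimal map $\Lambda_{dt}$, for which it chooses the Kraus representation $K_{0,dt}=I+O(dt)$, $K_{j,dt}=L_j\sqrt{dt}$, and reads off $[L_j,U_g]=0$ directly; that argument is two lines long but leans on exactly the short-time Kraus asymptotics you flag as delicate, since the displayed operators satisfy the completeness relation only to first order in $dt$ and so form a Kraus representation of $\Lambda_{dt}$ only approximately. Your generator-level argument avoids this: differentiating $\Lambda_t(U_gX)=U_g\Lambda_t(X)$ at $t=0$ is legitimate in finite dimensions, the dissipator identity $\sum_k[L_k,U_g]\,X\,L_k^\dagger=\tfrac12\sum_k[L_k^\dagger L_k,U_g]\,X$ is correctly isolated using $[H,U_g]=0$, and the passage to $\sum_k L_k^\dagger\otimes[L_k,U_g]=I\otimes M$ followed by the functional argument is sound (modulo harmless transpose conventions), because the map $A\otimes B\mapsto(X\mapsto AXB)$ is a linear isomorphism in finite dimensions. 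Your gauge reduction is also handled correctly: unitary mixing concentrates any identity component in a single jump operator proportional to $I$, which contributes nothing to the dissipator and can be discarded without touching $H$, so linear independence together with $I\notin\mathrm{span}\{L_k^\dagger\}$ is reachable while preserving the hypothesis $[H,U_g]=0$; and since the residual GKSL gauge freedom --- unitary mixing (after padding with zero operators) and shifts $L_k\mapsto L_k+c_kI$ --- manifestly preserves commutation with $U_g$, the conclusion transfers to every admissible set of jump operators, which in fact substantiates the paper's unproved remark following the theorem. The trade-off: the paper's proof is shorter and reuses its structural Kraus theorem, while yours is self-contained and rigorous at the generator level, at the cost of invoking the standard classification of GKSL representations.
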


\begin{proof}
Let $\Lambda_{dt}$ denote the CPTP map describing the time evolution over an infinitesimal time $dt$ according to the given GKSL equation.
Note that $\Lambda_{dt}(...)$ is $(U,U)$-strong covariant.
Therefore, for any Kraus representation satisfying $\Lambda_{dt}(...)=\sum_jK_{j,dt}...K^{\dagger}_{j,dt}$, $[U_g,K_{j,dt}]=0$ holds.
Choosing the Kraus representation properly, we can take one satisfying
\eq{
K_{j,dt}&=L_j\sqrt{dt},\enskip(j\ne0)\\
K_{0,dt}&=I+\left(\frac{1}{2}\sum_{j\ne0}L^\dagger_{j}L_j+iH\right)dt.
}

These operators commute with $U_g$ for any $g\in G$.
Therefore, $[L_j,U_g]=0$ for any $j$ and $g\in G$.
\end{proof}

\subsection{The proof of Lemma \ref{lemm:no_exist}}

When the system $S$ has a larger Hilbert-space dimension than $S'$, 
a strong covariant operation from $S$ to $S'$ does not necessarily exist. 
This already becomes evident in the case of $U(1)$ symmetry. 
For example, the following lemma holds.
\begin{lemma}\label{lemm_SM:no_exist}
Let $\calH$ and $\calH'$ be Hilbert spaces whose dimensions are $d$ and $d'$, respectively, 
and assume $d > d'$. 
Let $U:U(1)\rightarrow U(\calH)$ and $U':U(1)\rightarrow U(\calH')$ be unitary representations which are defined as $U_t:=e^{-iHt}$ and $U'_t:=e^{-iH't}$, respectively. 
If $H$ has strictly more distinct eigenvalues than $H'$, then there exists no $(U,U')$-strong covariant operation 
$\Lambda: \calB(\calH) \rightarrow \calB(\calH')$.
\end{lemma}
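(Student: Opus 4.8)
The plan is to argue by contradiction, leveraging the rigid intertwining structure that Theorem~\ref{thm_SM:Kraus_strong} imposes on the Kraus operators of any strong covariant channel. Suppose such a $(U,U')$-strong covariant operation $\Lambda:\calB(\calH)\to\calB(\calH')$ exists. By Theorem~\ref{thm_SM:Kraus_strong}, $\Lambda$ admits a Kraus representation $\{K_m\}$ satisfying $K_m U_t = U'_t K_m$ for all $t$, i.e.\ $K_m e^{-iHt} = e^{-iH't}K_m$. Differentiating this identity at $t=0$ yields the generator-level intertwining relation
\eq{
K_m H = H' K_m, \enskip \forall m,
}
so every Kraus operator intertwines $H$ and $H'$.

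Next I would exploit the eigenvalue mismatch. Since $H$ has strictly more distinct eigenvalues than $H'$, the set of eigenvalues of $H$ cannot be contained in that of $H'$; hence there exists an eigenvalue $\lambda_0$ of $H$ that is \emph{not} an eigenvalue of $H'$. Let $P_{\lambda_0}\neq 0$ be the spectral projector of $H$ onto its $\lambda_0$-eigenspace. For any $\ket{\psi}$ in this eigenspace, the intertwining relation gives $H'K_m\ket{\psi}=K_m H\ket{\psi}=\lambda_0 K_m\ket{\psi}$, so $K_m\ket{\psi}$ lies in the $\lambda_0$-eigenspace of $H'$. But that eigenspace is trivial, which forces $K_m\ket{\psi}=0$. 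Therefore $K_m P_{\lambda_0}=0$ for every $m$.

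Finally I would derive the contradiction from the completeness relation $\sum_m K_m^\dagger K_m = \mathbf{1}$. Sandwiching it between two copies of $P_{\lambda_0}$ and using $K_m P_{\lambda_0}=0$,
\eq{
P_{\lambda_0} = P_{\lambda_0}\Bigl(\sum_m K_m^\dagger K_m\Bigr)P_{\lambda_0} = \sum_m (K_m P_{\lambda_0})^\dagger(K_m P_{\lambda_0}) = 0,
}
which contradicts $P_{\lambda_0}\neq 0$. Hence no such $\Lambda$ can exist.

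I expect the only genuinely delicate points to be (i) passing from the group-level relation $K_m U_t=U'_t K_m$ to the generator-level intertwiner $K_m H=H'K_m$, which is immediate here by differentiation at the identity (finite dimensions and smoothness of the $U(1)$ representation make this harmless), and (ii) the elementary but essential set-theoretic step that ``strictly more distinct eigenvalues'' guarantees a genuinely missing eigenvalue $\lambda_0$. I note that the dimension hypothesis $d>d'$ is not actually used in the argument: the obstruction to strong covariance rests entirely on the comparison of the \emph{numbers of distinct eigenvalues}, reflecting that strong covariant maps can neither create nor destroy charge sectors and so cannot map a richer $U(1)$-spectrum onto a coarser one.
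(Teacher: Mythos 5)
Your proof is correct, but it takes a genuinely different route from the paper's. The paper argues directly at the channel level, without ever invoking Kraus operators: it picks an eigenvalue $E$ of $H$ that is not an eigenvalue of $H'$, applies the defining covariance relation to the single state $\ket{E}\bra{E}$ to get $e^{-iH't}\Lambda(\ket{E}\bra{E})=e^{-iEt}\Lambda(\ket{E}\bra{E})$, differentiates at $t=0$ to obtain $H'\Lambda(\ket{E}\bra{E})=E\,\Lambda(\ket{E}\bra{E})$, and then uses the fact that $\Lambda(\ket{E}\bra{E})$ is a state (hence nonzero) to extract an eigenvector of $H'$ with the forbidden eigenvalue $E$. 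You instead route the argument through the full Kraus-intertwining structure theorem (Theorem~\ref{thm:Kraus_strong}, itself proved via a minimal Stinespring dilation), derive $K_mH=H'K_m$, and contradict trace preservation via $K_mP_{\lambda_0}=0$. The paper's version is lighter and self-contained — it needs only the definition of strong covariance plus positivity and normalization of the output — whereas yours imports heavier machinery but delivers a sharper structural conclusion: every Kraus operator annihilates the entire eigenspace of the missing charge, so the completeness relation $\sum_m K_m^\dagger K_m=\mathbf{1}$ already fails on that sector; this makes the underlying mechanism (strong covariant maps preserve the charge distribution, cf.\ the sector-probability identity used in the paper's Theorem~\ref{thm:GR_strong}) fully explicit. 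Your closing observation that the hypothesis $d>d'$ is never actually used is also accurate for the paper's proof, where it likewise plays no role beyond motivating the lemma: the entire obstruction is the comparison of the numbers of distinct eigenvalues.
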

\begin{proof}
By assumption, there exists a real number $E$ that is an eigenvalue of $H$ but not an eigenvalue of $H'$.  
Let $\ket{E}$ be an eigenstate of $H$ whose eigenvalue is $E$.  
Suppose, for the sake of contradiction, that there exists a strong covariant map 
$\Lambda : \calB(\calH) \rightarrow \calB(\calH')$, i.e.,
\[
\Lambda(e^{-iHt}...) = e^{-iH't} \Lambda(...).
\]
Then in particular,
\begin{equation}
e^{-iH't}\Lambda(\ket{E}\!\bra{E}) = e^{-iEt}\Lambda(\ket{E}\!\bra{E}).
\label{eq_SM:evol_e}
\end{equation}
Differentiating \eqref{eq_SM:evol_e} at $t=0$ yields
\eq{
H' \Lambda(\ket{E}\!\bra{E}) = E\,\Lambda(\ket{E}\!\bra{E}).
\label{eq_SM:evol_e2}
}
Let $\ket{\psi}$ be any eigenstate of $\Lambda(\ket{E}\!\bra{E})$ with a nonzero eigenvalue.
Right-multiplying both sides of \eqref{eq_SM:evol_e2} by $\ket{\psi}$ gives
\eq{
H'\ket{\psi} = E\ket{\psi}.
}
This implies that $E$ is an eigenvalue of $H'$, contradicting the assumption.  
Therefore, no such strong covariant operation $\Lambda$ can exist.
\end{proof}

\section{resource measures of symmetry-breaking for strong symmetry}

In this section, we introduce resource measures of strong symmetry breaking.
Unless otherwise stated, in this section, we treat the strong symmetric states as the free states. 
When the single-sector states are taken as the free states, it will be explicitly mentioned.

Since the symmetry depends on the choice of the (projective) unitary representation, 
all quantities introduced in this section are functions of the representation. 
For simplicity, we omit the explicit dependence on the unitary representation 
whenever it is clear from the context, and specify it only when necessary. 
In particular, we write a measure $M$ as $M(\rho)$ when the representation is omitted, 
and write it as $M(\rho\,\|\{U_g\})$ when the representation is explicitly indicated.

\subsection{measure for general-group symmetry}
In this subsection, we introduce two measures which are applicable to a general group $G$.

\begin{definition}[Entanglement asymmetry of strong symmetry breaking]
Let $G$ be a group, and let $U$ be its (projective) unitary representation acting on a system $S$.
We assume that $G$ has Haar measure $\int_G dg=1$ and that $U$ has the irreducible decomposition 
\eq{
U_g = \bigoplus_{\nu} U_g^{(\nu)} \otimes I_{m_\nu}.\label{decom_U}
}
Then, we define $\Ssa(\rho)$ as
\eq{
\Ssa(\rho)&:=H\{p_\nu(\rho)\}+S(\ca{G}(\rho))-S(\rho),\nonumber\\
&=H\{p_\nu(\rho)\}+\Sa(\rho),
}
where
\eq{
S(\rho)&:=-\Tr[\rho\log\rho],\\
\ca{G}(\rho)&:=\int_Gdg U_g\rho U^\dagger_g,\\
p_{\nu}(\rho)&:=\Tr[P_\nu\ca{G}(\rho)]=\Tr[P_\nu\rho],\\
H\{p_\nu\}&:=-\sum_{\nu}p_\nu\log p_{\nu}\,,
}
and $P_\nu$ is the projection to $\calH_{\nu}\otimes \mathbb{C}^{m_\nu}$ in the decomposition of $\calH$ under $U$:
\eq{
\calH = \bigoplus_{\nu} \calH_{\nu} \otimes \mathbb{C}^{m_\nu}.
}
\end{definition}

\begin{theorem}\label{thm_SM:strong_EA}
The quantity $\Ssa(\rho)$ satisfies the following features
\begin{description}
\item[(A)] $\Ssa(\rho)$ is a resource measure. Namely, (i-a) it is non-negative, and when $\rho$ is strong symmetric, it is zero, and (i-b) when a CPTP map $\Lambda:\calB(\calH)\rightarrow\calB(\calH')$ is $(U,U')$-strong covariant, \eq{\Ssa(\rho\|U)\ge\Ssa(\Lambda(\rho)\|U')
}
\item[(B)] It is faithful when we employ $\calF_{G,\mathrm{single}}$ as free states. In other words, $\Ssa(\rho)=0$ if and only if $\rho$ is a single-sector state.
\item[(C)] It is \textit{not always} faithful when we employ $\calF_{G,\mathrm{strong}}$ as free states. In other words, there exists a resource state $\rho$ (=state which is not strong symmetric) satisfying $\Ssa(\rho)=0$. 
\item[(D)] When we employ collective representation, it is \textit{not} additive for product states. In other words, for two systems $A$ and $B$, there exists unitary representations $U^{A}$ and $U^{B}$ and states $\rho_A$ and $\sigma_B$ satisfying 
\eq
{&\Ssa(\rho_A\otimes\sigma_B\|U^A\otimes U^B)\nonumber\\
&\ne\Ssa(\rho_A\|U^A)+\Ssa(\sigma_B\|U^B).}
\item[(E)] When $S$ is a finite-dimension system and $G$ is a compact Lie group or a finite group, for any $\rho$ on $S$ and any representation $U$ of $G$ acting on $S$, 
\eq{
\lim_{n\rightarrow \infty}\frac{\Ssa(\rho^{\otimes n}\|U^{\otimes n})}{n}=0.
}
\end{description}
\end{theorem}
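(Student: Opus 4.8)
The plan rests on the structural decomposition $\Ssa(\rho)=H\{p_\nu(\rho)\}+\Sa(\rho)$, which reduces the entire theorem to controlling two summands separately: the Shannon entropy $H\{p_\nu(\rho)\}$ of the sector distribution $p_\nu(\rho)=\Tr[P_\nu\rho]$, and the relative entropy of asymmetry $\Sa(\rho)=S(\rho\|\calG(\rho))$. Two facts established earlier do most of the work. First, $\Sa$ is already a faithful weak-symmetry monotone: it is nonnegative, vanishes exactly on weak symmetric states, and is nonincreasing under $(U,U')$-covariant channels. Second, by \eqref{eq_SM:preserve_probability} any $(U,U')$-strong covariant map $\Lambda$ preserves the sector probabilities, $\Tr[P'_\nu\Lambda(\rho)]=\Tr[P_\nu\rho]$, and by Theorem~\ref{thm_SM:Kraus_strong} its Kraus operators obey $K_mU_g=U'_gK_m$, so $\Lambda$ is in particular $(U,U')$-covariant in the weak sense.

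For part (A), nonnegativity in (i-a) is immediate since $H\{p_\nu\}\ge0$ and $\Sa\ge0$. If $\rho$ is strong symmetric, then $U_g\rho=e^{i\theta_{g,\rho}}\rho$ forces every eigenvector of $\rho$ to be a simultaneous eigenvector of all $U_g$ with a common phase, so the support lies in a single one-dimensional irreducible sector: $p_\nu(\rho)=1$ for one $\nu$, giving $H\{p_\nu\}=0$, and strong symmetry implies weak symmetry, giving $\Sa(\rho)=0$. For monotonicity (i-b) I combine the two inputs above: the Shannon term is \emph{invariant} under $\Lambda$ (the $p_\nu$ are preserved) while the weak asymmetry term is nonincreasing (since $\Lambda$ is weak covariant), and adding them yields $\Ssa(\Lambda(\rho)\|U')\le\Ssa(\rho\|U)$. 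Part (B) then follows: as both summands are nonnegative, $\Ssa(\rho)=0$ forces $\Sa(\rho)=0$ (so $\rho$ is weak symmetric, by faithfulness of $\Sa$) and $H\{p_\nu(\rho)\}=0$ (so $\rho$ is supported in one sector); together these are exactly the single-sector condition, and the converse is the computation already used in (i-a).

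Parts (C) and (D) are handled by explicit $SU(2)$ counterexamples. For (C), the maximally mixed qubit $\rho=I/2$ under the irreducible spin-$\tfrac12$ representation is weak symmetric and occupies the single spin-$\tfrac12$ sector, so $\Ssa(\rho)=0$, yet no strong symmetric state exists in this representation; hence $\Ssa$ is not faithful for $\calF_{G,\mathrm{strong}}$. For (D), take $\rho_A=\sigma_B=\ket{\uparrow}\bra{\uparrow}$ with the collective $SU(2)$ action. Each qubit factor gives $\Ssa=\log2$ (single sector, so $H\{p_\nu\}=0$, while $\calG(\rho)=I/2$ gives $\Sa=\log2$), so the right-hand side is $\log4$. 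The product $\ket{\uparrow\uparrow}$ is the highest-weight vector of the spin-$1$ sector in the fusion $\tfrac12\otimes\tfrac12=1\oplus0$, hence lies entirely in one sector ($H\{p_\nu\}=0$) and twirls to the maximally mixed state on that three-dimensional sector, giving $\Ssa=\log3\neq\log4$. The underlying reason, worth stating, is that fusion of irreducibles under the collective representation makes neither $H\{p_\nu\}$ nor $\Sa$ additive.

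The analytically substantive part is (E), and this is where I expect the main obstacle. I will bound each summand of $\Ssa(\rho^{\otimes n}\|U^{\otimes n})$ by $O(\log n)$. The weak asymmetry term satisfies $\Sa(\rho^{\otimes n})=O(\log n)$ by the cited i.i.d.\ result on the logarithmic growth of the relative entropy of asymmetry. For the Shannon term, the key estimate is $H\{p_\nu(\rho^{\otimes n})\}\le\log N_n$, where $N_n$ is the number of \emph{distinct} irreducible sectors appearing in $U^{\otimes n}$. For a finite group $N_n$ is bounded by the constant number of irreducible representations of $G$, so this term is uniformly bounded. For a compact Lie group the harder step is to show that $N_n$ grows only \emph{polynomially}: the highest weights occurring in $U^{\otimes n}$ lie in a simplex of linear size $O(n)$ in the weight lattice, and the number of dominant lattice points there is $O(n^{\mathrm{rank}\,G})$. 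Granting this counting bound, $\log N_n=O(\log n)$, both summands are sublinear, and dividing by $n$ and sending $n\to\infty$ gives the claim. The lattice-point counting for the compact Lie case is the one place requiring genuine care; everything else follows mechanically from the additive-plus-monotone structure of the two summands.
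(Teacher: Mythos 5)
Your proposal is correct, and for parts (A)--(C) it is essentially the paper's own proof: the same decomposition $\Ssa=H\{p_\nu\}+\Sa$, the same two inputs for (i-b) (sector-probability preservation \eqref{eq_SM:preserve_probability} making the Shannon term invariant, plus weak monotonicity of $\Sa$ since strong covariance implies weak covariance via the Kraus characterization), and the identical $SU(2)$ qubit example ($I/2$ is single-sector but no strong symmetric state exists) for (C). You genuinely diverge on (D) and (E). For (D), the paper gives no explicit example: it deduces non-additivity abstractly from the vanishing asymptotic rate (additivity would force $\Ssa(\rho^{\otimes n}\|U^{\otimes n})=n\,\Ssa(\rho\|U)$, contradicting sublinear growth), whereas your computation $\Ssa(\ket{\uparrow}\bra{\uparrow}\|U)=\log 2$ per factor versus $\Ssa(\ket{\uparrow\uparrow}\bra{\uparrow\uparrow}\|U\otimes U)=\log 3$ is correct and more concrete. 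For (E), the paper's route is different from yours: it proves, for \emph{pure} states, $H\{p_\nu(\psi)\}\le S(\calG(\psi))=\Sa(\psi)$ using the mutual orthogonality of the sector blocks of $\calG(\psi)$, hence $\Ssa(\psi)\le 2\Sa(\psi)$, and then invokes $\lim_n \Sa(\rho^{\otimes n})/n=0$ from \cite{Gour2009}. Your route --- bounding the Shannon term by $\log N_n$, with $N_n$ the number of distinct sectors in $U^{\otimes n}$, constant for finite $G$ and $O(n^{\mathrm{rank}\,G})$ for compact Lie $G$ by counting dominant weights inside a region of linear size $O(n)$ in the weight lattice --- buys something real: it applies uniformly to all states, whereas the paper's inequality $\Ssa\le 2\Sa$ is false for general mixed states (a weak symmetric state spread over several sectors has $\Sa=0$ but $\Ssa>0$), so the paper's written argument directly covers (E) only for pure $\rho$ even though the theorem asserts it for all $\rho$; within the paper's toolkit one would close this by purifying with a trivially represented ancilla and using that this particular partial trace is strong covariant together with (A)(i-b). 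Two small cautions on your counting step, which you rightly flag as the delicate point: for non-connected compact Lie groups the weight-lattice count must be combined with the finite component group (Clifford theory), and for projective representations one should pass to the corresponding central extension or covering group --- both routine, but each deserves a line in a complete write-up.
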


\begin{proofof}{Theorem \ref{thm_SM:strong_EA}}
(A): We first show (i-a). The term $S(\calT(\rho))-S(\rho)$ is equal to the relative entropy of asymmetry (=entanglement asymmetry) $\Sa(\rho):=D(\rho\|\calT(\rho))$ \cite{Gour2009}.
Hence, $\Ssa(\rho)$ can be written as 
\eq{
\Ssa(\rho)=H\{p_\nu(\rho)\}+\Sa(\rho).\label{decom_Ssa}
}
Due to \eqref{decom_Ssa}, $\Ssa(\rho)$ is clearly nonnegative. 

To show $\rho\in\calF_{G,\mathrm{strong}}$, note that any strong symmetric state $\rho$ is weak symmetric (i.e. $\Sa(\rho)=0$) and is single-sector (i.e. $H\{p_\nu(\rho)\}=0$). Therefore, for any strong symmetric state $\rho$, $\Ssa(\rho)=0$.

For (i-b), let $\Lambda:\calB(\calH)\rightarrow \calB(\calH')$ be $(U,U')$-strong covariant.
Then, due to \eqref{eq_SM:preserve_probability}, $p_\mu(\rho)=p_{\mu}(\Lambda(\rho))$ holds, and thus the term $H\{p_\nu\}$  does not change via $\Lambda$.
Since any strong covariant operation is also weak covariant, $\Lambda$ is weak covariant, and thus $\Sa(\rho)\ge\Sa(\Lambda(\rho))$.
Therefore, we obtain $\Ssa(\rho)\ge\Ssa(\Lambda(\rho))$.

(B): For any single-sector state $\rho$, $H\{p_\mu(\rho)\}=0$. Since any single-sector state is also a weak symmetric state, $\Sa(\rho)=0$ is also valid. Therefore, for any single-sector state $\rho$, $\Ssa(\rho)=0$.
Conversely, since $H\{p_\mu(\rho)\}$ and $\Sa(\rho)$ are non-negative, if $\Ssa(\rho)=0$ is valid, $H\{p_\mu(\rho)\}=0$ and $\Sa(\rho)=0$ also hold, which imply $\rho$ is a single-sector state.

(C): Because of (B), $\Ssa(\rho)=0$ for any single-sector state. Therefore, we only have to show that there is a single-sector state which is not strong symmetric.
Such an example has already been given in the main text: we consider the situation where the system of interest is a qubit, $G$ is $SU(2)$, 
and $U$ is the natural irreducible unitary representation of $SU(2)$ acting on the qubit:  
$U_g := e^{-i\frac{\theta_g}{2}\vec{n}_g\cdot\vec{\sigma}}$, 
where $\theta_g$ is a real parameter describing the rotation angle, $\vec{n}_g$ is a real vector parameter defined as 
$\vec{n}_g := (n_x(g), n_y(g), n_z(g))$ describing the rotation axis, and $\vec{\sigma} := (\sigma_x, \sigma_y, \sigma_z)$ 
denotes the Pauli operators. 
Then, for any strong symmetric state $\eta$, the condition $U_g\eta U_g^\dagger = \eta$ must hold 
for all $g\in G$. Since the  representation $U$ is irreducible, it follows that 
$\eta = I/2$. However, since the representation $U$ includes $\sigma_x$, we have 
$\sigma_x \eta = \sigma_x/2 \not\propto \eta$. 
This leads to a contradiction, and hence there is no strong symmetric state in this case.
And clearly, $I/2$ is a single-sector state in this case. Therefore, $I/2$ is single-sector but not strong symmetric, and thus $\Ssa$ is not faithful for $\calF_{G,\mathrm{strong}}$. 

(D) and (E): For any compact Lie group or finite group $G$, its any (projective) unitary representation $U$ acting on $\calH$, and any pure state $\psi$, the strong entanglement asymmetry can be converted as follows:
\eq{
\Ssa(\psi)&=H\{p_\mu(\psi)\}+S(\calG(\psi))-S(\psi)\nonumber\\
&=H\{p_\mu(\psi)\}+S(\calG(\psi))\nonumber\\
&=H\{p_\mu(\psi)\}+S\left(\sum_\mu p_{\mu}(\psi)\rho_{\mu}\right),
}
where each $\rho_{\mu}$ is a single-sector state corresponding to the irreducible representation in $U$ labeled by $\mu$, which is written as $\rho_{\mu}=\frac{I_{\mu}}{d_{\mu}}\otimes\rho_{m_{\mu}}$.
Note that for any probability distribution $\{r_j\}$ and states $\{\rho_j\}$ satisfying $F(\rho_j,\rho_{j'})=0$, the following relation holds~\cite{Nielsen2012}:
\eq{
S\left(\sum_jr_j\rho_j\right)=H\{r_j\}+\sum_jr_jS(\rho_j)
}
Since $\rho_{\mu}$ is orthogonal to $\rho_{\nu}$ when $\nu\ne\mu$, we obtain
\eq{
S\left(\sum_\mu p_{\mu}(\psi)\rho_{\mu}\right)
&=H\{p_{\mu}(\psi)\}+\sum_\mu p_{\mu}(\psi)S(\rho_\mu)\nonumber\\
&\ge H\{p_{\mu}(\psi)\}
}
Therefore, we obtain
\eq{
\Ssa(\psi)\le2\Sa(\psi)
}
And for any state $\rho$, $\lim_{n\rightarrow\infty}\Sa(\rho^{\otimes n})/n=0$~\cite{Gour2009}, which also implies $\Sa$ is not additive. 
Therefore, (iv) and (v) clearly hold.
\end{proofof}

The second one is given by the logarithmic characteristic function, which is the complete measure of the weak symmetry breaking for the finite-group case~\cite{Shitara2023}.

\begin{definition}[Averaged logarithmic characteristic function]
Let $G$ be a group, and let $\{U_g\}$ be its (projective) unitary representation acting on a Hilbert space $\calH$.
We assume that $G$ has Haar measure $\int_G dg=1$ and that $U_g$ has the irreducible decomposition \eqref{decom_U}.
Then, we define $L(\rho)$ as
\eq{
L(\rho):=\int_GdgL_{g,\mathrm{strong}}(\rho),\label{def_SLCF}
}
where 
\eq{
L_{g,\mathrm{strong}}(\rho):=-\log|\Tr[U_g\rho]|,
}
and when $G$ is finite, $\int_Gdg$ becomes $\frac{1}{|G|}\sum_g$.
\end{definition}

\begin{theorem}\label{thm_SM:SLCF}
The quantity $L(\rho)$ satisfies the following features
\begin{description}
\item[(A)] $L(\rho)$ is a resource measure. Namely, (i-a) it is non-negative, and when $\rho$ is strong symmetric, it is zero, and (i-b) when $\Lambda$ is $(U,U')$-strong covariant, the inequality $L(\rho)\ge L(\Lambda(\rho))$ holds. More precisely, $L(\rho)=L(\Lambda(\rho))$.
\item[(B)] It is faithful for $\calF_{G,\mathrm{strong}}$. In other words, $L(\rho)=0$ if and only if $\rho$ is strong symmetric.
\item[(C)] When we employ collective representation, it is additive for product states. In other words, for two systems $A$ and $B$ and (projective) unitary representations $U^{A}$ and $U^{B}$ on them, any states $\rho_A$ and $\sigma_B$ satisfy
\eq{
L(\rho_A\otimes\sigma_B\|U^A\otimes U^B)=L(\rho_A\|U^A)+L(\sigma_B\|U^B).
}

Consequently, 
\eq{
\lim_{n\rightarrow \infty} \fr{L(\rho^{\otimes n}\|U^{\otimes n})}{n}=L(\rho)
}
holds.
\end{description}
\end{theorem}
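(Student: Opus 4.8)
The plan is to reduce every assertion about $L$ to a pointwise-in-$g$ statement about the characteristic function $c_g(\rho):=\Tr[U_g\rho]$, since $L(\rho)=\int_G dg\,(-\log|c_g(\rho)|)$ and the Haar average preserves both (in)equalities and additive identities. The cornerstone estimate is that for any state $\rho$ one has $|c_g(\rho)|\le 1$, which I would establish from the spectral decomposition $\rho=\sum_k p_k\ket{k}\bra{k}$ and unitarity of $U_g$:
\begin{align}
|c_g(\rho)|=\Big|\sum_k p_k\bra{k}U_g\ket{k}\Big|\le\sum_k p_k\,|\bra{k}U_g\ket{k}|\le 1,
\end{align}
using $|\bra{k}U_g\ket{k}|\le\|U_g\ket{k}\|=1$. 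This gives (i-a) at once: $L_{g,\mathrm{strong}}(\rho)=-\log|c_g(\rho)|\ge 0$, hence $L(\rho)\ge 0$; and if $\rho$ is strong symmetric, then $U_g\rho=e^{i\theta_{g,\rho}}\rho$ forces $c_g(\rho)=e^{i\theta_{g,\rho}}$, so $|c_g(\rho)|=1$ and $L(\rho)=0$.

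For the invariance claim in (A)(i-b) I would invoke Theorem \ref{thm_SM:Kraus_strong}: any Kraus family $\{K_m\}$ of a $(U,U')$-strong covariant $\Lambda$ satisfies $K_mU_g=U'_gK_m$. Then, using trace preservation $\sum_mK_m^\dagger K_m=\mathbf{1}$,
\begin{align}
\Tr[U'_g\Lambda(\rho)]=\sum_m\Tr[U'_gK_m\rho K_m^\dagger]=\sum_m\Tr[K_mU_g\rho K_m^\dagger]=\Tr\Big[\Big(\sum_mK_m^\dagger K_m\Big)U_g\rho\Big]=c_g(\rho).
\end{align}
Hence $|\Tr[U'_g\Lambda(\rho)]|=|c_g(\rho)|$ for every $g$, and integrating over the common group (the cocycles of $U$ and $U'$ agree by Lemma \ref{lemm_SM:cocycle_SC}, so both Haar averages are over the same $G$) yields $L(\Lambda(\rho)\|U')=L(\rho\|U)$, the asserted equality and a fortiori monotonicity.

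The substantive part, and the step I expect to be the main obstacle, is the converse direction of (B). From $L(\rho)=0$ and non-negativity of the integrand I get $|c_g(\rho)|=1$ for Haar-almost every $g$, and I would then analyze the equality case of the bound above. Equality in $|c_g(\rho)|\le 1$ forces equality in both inequalities: the second gives $|\bra{k}U_g\ket{k}|=1$ for every $k$ with $p_k>0$, which is the Cauchy--Schwarz equality case and hence $U_g\ket{k}=\lambda_k\ket{k}$ with $|\lambda_k|=1$; the first (triangle) inequality then forces the nonvanishing terms to share a common phase, so $\lambda_k=\lambda_g$ is independent of $k$. Therefore $U_g\rho=\lambda_g\rho$ with $\lambda_g=c_g(\rho)$ for almost every $g$. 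The delicate point is upgrading ``almost every $g$'' to ``every $g$'': for a finite group the average is $\tfrac1{|G|}\sum_g$, so $L(\rho)=0$ gives the relation for all $g$ directly; for a compact Lie group I would note that $g\mapsto U_g\rho$ and $g\mapsto c_g(\rho)\rho$ are continuous and that the full-measure set $\{g:|c_g(\rho)|=1\}$ is dense (Haar measure has full support), so $U_g\rho=c_g(\rho)\rho$ extends by continuity to all $g$, establishing $\rho\in\calF_{G,\mathrm{strong}}$.

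Finally, for (C) I would use multiplicativity of the characteristic function under tensor products: with the collective representation $U^A\otimes U^B$,
\begin{align}
c_g(\rho_A\otimes\sigma_B)=\Tr[(U^A_g\otimes U^B_g)(\rho_A\otimes\sigma_B)]=\Tr[U^A_g\rho_A]\,\Tr[U^B_g\sigma_B],
\end{align}
so that $-\log|c_g(\rho_A\otimes\sigma_B)|=L_{g,\mathrm{strong}}(\rho_A\|U^A)+L_{g,\mathrm{strong}}(\sigma_B\|U^B)$ pointwise in $g$ (as an identity valued in $[0,\infty]$, valid even when a factor vanishes). Integrating over $G$ gives $L(\rho_A\otimes\sigma_B\|U^A\otimes U^B)=L(\rho_A\|U^A)+L(\sigma_B\|U^B)$; iterating on $\rho_A=\sigma_B=\rho$ yields $L(\rho^{\otimes n}\|U^{\otimes n})=n\,L(\rho)$, so the stated limit equals $L(\rho)$ (in fact with equality for every $n$).
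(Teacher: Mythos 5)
Your proof is correct and follows essentially the same route as the paper's: the bound $|\Tr[U_g\rho]|\le 1$ with its triangle/Cauchy--Schwarz equality-case analysis for faithfulness, the commuting-Kraus theorem (Theorem~\ref{thm_SM:Kraus_strong}) giving $\Tr[U'_g\Lambda(\rho)]=\Tr[U_g\rho]$ for the exact invariance in (A)(i-b), and multiplicativity of the characteristic function under tensor products for (C). Your explicit upgrade from ``almost every $g$'' to ``every $g$'' in (B) (density of the full-measure set where $|\Tr[U_g\rho]|=1$ plus continuity of $g\mapsto U_g\rho$) is in fact slightly more careful than the paper, which passes from $L(\rho)=0$ to $|\Tr[U_g\rho]|=1$ for all $g\in G$ without comment.
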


\textbf{Remark:} Each $L_{g,\mathrm{strong}}(\rho)$ is also a resource measure and additive, but it is not faithful.

\begin{proofof}{Theorem \ref{thm_SM:SLCF}}
(A): To show the non-negativity, we only have to show
\eq{
|\Tr[U_g\rho]|\le1\label{ia-1}.
}
Let $\rho=\sum_jp_j\ket{\psi_j}\bra{\psi_j}$ be a spectral decomposition of $\rho$. We obtain \eqref{ia-1} as follows:
\eq{
|\Tr[U_g\rho]|&=|\sum_jp_j\Tr[U_g\ket{\psi_j}\bra{\psi_j}]|\nonumber\\
&\le\sum_jp_j|\bra{\psi_j}U_g\ket{\psi_j}|\nonumber\\
&\le1.\label{ia-2}
}

Also, since any strong symmetric state $\rho$ satisfies $|\Tr[U_g\rho]|=1$, any strong symmetric state $\rho$ satisfies $L(\rho)=0$.

For (i-b), let a CPTP map $\Lambda:\calB(\calH)\rightarrow \calB(\calH')$ be $(U,U')$-strong covariant.
Then, any Kraus representation $\{K_m\}$ of $\Lambda$ satisfies $K_mU_g=U'_gK_m$.
Therefore,
\eq{
\Tr[U'_g\Lambda(\rho)]&=\sum_m\Tr[U'_gK_m\rho K^\dagger_m]\nonumber\\
&=\sum_m\Tr[K^\dagger_mK_mU_g\rho]\nonumber\\
&=\Tr[U_g\rho].
}
Therefore, $L(\rho)=L(\Lambda(\rho))$ and thus (i-b) holds.

(B): We only have to show $L(\rho)=0$ implies that $\rho$ is strong symmetric.
Let us assume that $L(\rho)=0$ holds. Then,
\eq{
|\Tr[U_g\rho]|=1,\enskip\forall g\in G.\label{ii-1}
}
Due to \eqref{ia-2}, the eqaution \eqref{ii-1} requires
\eq{
U_g\ket{\psi_j}=c_{g,j}\ket{\psi_j}
}
where $C_{g,j}$ is a complex number satisfying $|c_{g,j}|=1$.
Then, $Tr[U_g\rho]=\sum_jp_jc_{j,g}$. Due to \eqref{ii-1}, $c_{j,g}=c_{j',g}=:c_g$ for any $j$ and $j'$. Therefore, we obtain $U_g\rho=c_g\rho$, and thus $\rho$ is strong symmetric.

(C): the additivity immediately follows from
\eq{
|\Tr[\rho\otimes\sigma U^A_g\otimes U^B_g]|
&=|\Tr[\rho U^A_g]| |\Tr[\sigma U^B_g]|.
}
\end{proofof}

\subsection{measure for compact-Lie-group symmetry}
Next, we introduce a resource measure that works particularly well when the group $G$ is a compact Lie group:
\begin{definition}[Non-symmetrized and symmetrized covariance matrices]
Let $G$ be a compact Lie group, and let  $\dim G$ denote the dimension of $G$ as a smooth manifold. Then, elements in the neighborhood of the identity $e\in G$ can be parametrized as $g(\bm{\lambda})=e^{\ii \sum_{j=1}^{\dim G}\lambda^j A_j}$ with a basis $\{A_j\}_{j=1}^{\dim G}$ of the Lie algebra $\mathfrak{g}$. 
Let $U$ be a (projective) unitary representation of $G$ acting on a Hilbert space $\calH$.
We assume that $U$ is  differentiable, and
introduce Hermitian operators 
\eq{
    X_j:=  -\ii \frac{\partial}{\partial \lambda^j} U(g(\bm{\lambda}))\biggl|_{\bm{\lambda}=\bm{0}}\quad \label{eq_SM:hermitian_operators}
}
for $j=1,\cdots,\dim G$, which corresponds to $L(A_j)$, where $L$ is the Lie algebra representation defined as $L(A):= -\ii \left.\frac{d}{dt }U(e^{\ii t A})\right|_{t=0}$.
Then, we define two types of covariance matrices associated with $\rho$.
\begin{enumerate}
    \item \textbf{Non-symmetrized covariance matrix.}
    The non-symmetrized covariance matrix $\VNS(\rho)$ is defined by
    \begin{equation}
        (\VNS)_{i,j}
        := \Tr[\rho\, X_i X_j]
           - \langle X_i \rangle_\rho \langle X_j \rangle_\rho ,
    \end{equation}
    where $\langle X \rangle_\rho := \Tr[\rho X]$.

    \item \textbf{Symmetrized covariance matrix.}
    The symmetrized covariance matrix $\VS(\rho)$ is defined by
    \begin{equation}
        (\VS(\rho))_{i,j}
        := \frac{1}{2}\Tr[\rho\,\{X_i, X_j\}]
           - \langle X_i \rangle_\rho \langle X_j \rangle_\rho ,
    \end{equation}
    where $\{X_i, X_j\} := X_i X_j + X_j X_i$ denotes the anti-commutator.
\end{enumerate}
\end{definition}

\begin{theorem}\label{thm_SM:VM}
The covariance matrices $\VNS(\rho)$ and $\VS(\rho)$ satisfy the following features
\begin{description}
\item[(A)] $\VNS(\rho)$ and $\VS(\rho)$ are resource measures. Namely, (i-a) they are positive-semidefinite matrices, and when $\rho$ is strong symmetric, it is zero, and (i-b) when $\Lambda$ is $(U,U')$-strong covariant, the inequalities $\VNS(\rho)\ge \VNS(\Lambda(\rho))$ and $\VS(\rho)\ge \VS(\Lambda(\rho))$ hold. More precisely, $\VNS(\rho)=\VNS(\Lambda(\rho))$ and $\VS(\rho)= \VS(\Lambda(\rho))$ are valid.
\item[(B)] When $G$ is connected, they are faithful for $\calF_{G,\mathrm{strong}}$. In other words, 
\eq{
\VNS(\rho)=0\Leftrightarrow\VS(\rho)=0\Leftrightarrow\rho\in\calF_{G,\mathrm{strong}}.
}
Using (i-a) of $\VS$ and $\VNS$, we can equivalently state
\eq{
\sum_i\VNS(\rho)_{ii}=0\Leftrightarrow\sum_i\VS(\rho)_{ii}=0\Leftrightarrow\rho\in\calF_{G,\mathrm{strong}}.
}
\item[(C)] When we employ collective representation, they are additive for product states. In other words, for two systems $A$ and $B$ and unitary representations $U^{A}$ and $U^{B}$ on them, any states $\rho_A$ and $\sigma_B$ satisfy 
\eq{
&\VNS(\rho_A\otimes\sigma_B\|U^A\otimes U^B)\nonumber\\
&=\VNS(\rho_A\|U^A)+\VNS(\sigma_B\|U^B),\\
&\VS(\rho_A\otimes\sigma_B\|U^A\otimes U^B)\nonumber\\
&=\VS(\rho_A\|U^A)+\VS(\sigma_B\|U^B)
}
Consequently, $\lim_{n\rightarrow \infty}\VNS(\rho^{\otimes n}\|U^{\otimes n})/n=\VNS(\rho)$ and $\lim_{n\rightarrow \infty}\VS(\rho^{\otimes n}\|U^{\otimes n})/n=\VS(\rho)$ hold.
\end{description}
\end{theorem}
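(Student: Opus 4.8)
The plan is to dispatch (A), (C), and the matrix bookkeeping of (B) by direct computation, isolating the faithfulness implication in (B) as the one genuinely delicate step. For the positivity in (A)(i-a), I would contract the matrices with test vectors: pairing $\VS(\rho)$ with a real vector $\bm v$ and writing $Y:=\sum_j v_j X_j$ gives $\bm v^{\top}\VS(\rho)\,\bm v=\Tr[\rho\,(Y-\langle Y\rangle_\rho)^2]\ge 0$, the ordinary variance of the Hermitian observable $Y$; pairing $\VNS(\rho)$ with a complex vector $\bm c$ and writing $Z:=\sum_j c_j X_j$ gives $\bm c^{\dagger}\VNS(\rho)\,\bm c=\Tr[\rho\,(Z-\langle Z\rangle_\rho)^{\dagger}(Z-\langle Z\rangle_\rho)]\ge 0$, so $\VNS(\rho)$ is a Hermitian positive-semidefinite matrix. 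To see both vanish on strong symmetric states, I would differentiate $U_g\rho=e^{\ii\theta_{g,\rho}}\rho$ along a one-parameter subgroup to get $X_j\rho=\mu_j\rho$ with $\mu_j=\langle X_j\rangle_\rho\in\mathbb{R}$; taking the adjoint gives $\rho X_j=\mu_j\rho$, and substituting collapses every entry of both matrices to $\mu_i\mu_j-\mu_i\mu_j=0$.

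For the invariance in (A)(i-b), the essential input is Theorem~\ref{thm_SM:Kraus_strong}: any Kraus family $\{K_m\}$ of a $(U,U')$-strong covariant $\Lambda$ obeys \eqref{eq_SM:K_comm}, and differentiating $K_mU_g=U'_gK_m$ yields the generator intertwining $K_mX_j=X'_jK_m$. Using $\sum_m K_m^{\dagger}K_m=\mathbf 1$, I would verify $\langle X'_i\rangle_{\Lambda(\rho)}=\langle X_i\rangle_\rho$ and $\Tr[\Lambda(\rho)\,X'_iX'_j]=\Tr[\rho\,X_iX_j]$ (and the analogue for the anticommutator), so that $\VNS$ and $\VS$ are \emph{exactly} preserved, giving (i-b) with equality. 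Part (C) is then a short computation with the collective generators $X_j^{AB}=X_j^A\otimes\mathbf 1+\mathbf 1\otimes X_j^B$: on a product state the cross-covariances cancel, leaving $\VNS(\rho_A\otimes\sigma_B)=\VNS(\rho_A)+\VNS(\sigma_B)$ and likewise for $\VS$, whence $\VNS(\rho^{\otimes n})=n\,\VNS(\rho)$ and the asymptotic statement follow at once.

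The matrix part of (B) is elementary: $\VS(\rho)$ is the real (symmetric) part of the Hermitian matrix $\VNS(\rho)$, so they share diagonals, and a positive-semidefinite matrix with vanishing diagonal (equivalently vanishing trace) is zero by $|M_{ij}|^2\le M_{ii}M_{jj}$; this gives all equivalences $\VNS(\rho)=0\Leftrightarrow\VS(\rho)=0\Leftrightarrow\sum_i\VNS(\rho)_{ii}=0\Leftrightarrow\sum_i\VS(\rho)_{ii}=0$. The substantive claim is $\VS(\rho)=0\Rightarrow\rho\in\calF_{G,\mathrm{strong}}$. From $\VS(\rho)=0$ I get $\Tr[\rho\,(Y-\langle Y\rangle_\rho)^2]=0$ for every direction $Y=\sum_j v_jX_j$; rewriting this as $\|(Y-\langle Y\rangle_\rho)\sqrt\rho\|_2^2=0$ gives $(Y-\langle Y\rangle_\rho)\rho=0$, i.e.\ $X_A\rho=\mu_A\rho$ with $\mu_A\in\mathbb{R}$ for every Lie-algebra element $A=\sum_j a_jA_j$, where $X_A:=L(A)$.

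The main obstacle is integrating this infinitesimal eigen-relation to the global statement $U_g\rho=e^{\ii\theta_g}\rho$. I would exponentiate along one-parameter subgroups: since $X_A^{\,n}\rho=\mu_A^{\,n}\rho$, one obtains $U(e^{\ii tA})\rho=e^{\ii t\mu_A}\rho$, using that the restriction of the (projective) representation to a fixed one-parameter subgroup can be chosen as a genuine one-parameter unitary group $t\mapsto e^{\ii tX_A}$ (its $2$-cocycle can be trivialized on the abelian one-parameter subgroup). Then connectedness of $G$ enters decisively: every $g\in G$ is a finite product of exponentials $g=e^{\ii A_1}\cdots e^{\ii A_k}$, and applying the phase relation factor by factor—absorbing the cocycle phases from $U_gU_h=\omega(g,h)U_{gh}$—propagates to $U_g\rho=e^{\ii\theta_g}\rho$ for all $g$, so $\rho$ is strong symmetric. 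This is precisely where connectedness cannot be dropped: the $X_j$ only probe the identity component, so for disconnected $G$ the implication—and hence faithfulness for $\calF_{G,\mathrm{strong}}$—can fail, consistent with the hypothesis of (B). Tracking the cocycle phases through the product, and justifying the one-parameter-group exponentiation for a projective representation, is the only place the argument is more than routine.
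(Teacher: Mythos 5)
Your proposal is correct and follows essentially the same route as the paper's own proof: positive semidefiniteness and vanishing on strong symmetric states by direct computation from $U_g\rho=e^{\ii\theta_{g,\rho}}\rho$, exact preservation under strong covariant maps via the Kraus intertwining relation $K_mX_j=X'_jK_m$ of Theorem~\ref{thm_SM:Kraus_strong} together with $\sum_mK_m^\dagger K_m=\mathbf{1}$, additivity from the collective generators $X_j^{AB}=X_j^A\otimes\mathbf{1}+\mathbf{1}\otimes X_j^B$, and faithfulness by deducing $\Delta X_i\,\rho=0$ from the vanishing diagonal, exponentiating to $U_{\exp[A]}\rho=e^{\ii\mu_A}\rho$, and invoking connectedness to reach all of $G$. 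The only differences are presentational: you spell out two points the paper leaves implicit---the $|M_{ij}|^2\le M_{ii}M_{jj}$ bookkeeping that collapses the four vanishing conditions into one, and the trivialization of the $2$-cocycle along one-parameter subgroups of the projective representation---which strengthens rather than alters the argument.
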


\begin{proofof}{Theorem \ref{thm_SM:VM}}
(A): By definition, $\VNS$ and $\VS$ are always positive-semidefinite, and thus we obtain that $\VNS$ and $\VS$ are positive-semidefinite.

Next, let us assume $\rho$ to be strong symmetric, i.e. $U_g\rho=e^{i\theta_g}\rho$. Then, 
\eq{
X_i\rho&=-i\left.\frac{\partial}{\partial \lambda^{i}}U(g(\bm{\lambda}))\right|_{\bm{\lambda}=0}\rho\nonumber\\
&=\left.\frac{\partial \theta_g}{\partial \lambda^{i}}\right|_{\bm{\lambda}=0}\rho.
}
Therefore, we obtain
\eq{
(\VNS(\rho))_{i,j}=(\VS(\rho))_{i,j}=\left.\frac{\partial \theta_g}{\partial \lambda^{i}}\right|_{\bm{\lambda}=0}\left.\frac{\partial \theta_g}{\partial \lambda^{j}}\right|_{\bm{\lambda}=0}-\left.\frac{\partial \theta_g}{\partial \lambda^{i}}\right|_{\bm{\lambda}=0}\left.\frac{\partial \theta_g}{\partial \lambda^{j}}\right|_{\bm{\lambda}=0}=0.
}

Next let us show the property (i-b), i.e., monotonicity.
Let $\Lambda$ be a $(U,U')$-strong covariant operation, and let $\{K_m\}$ be the Kraus representation of it.
Then, due to $U'_gK_m=K_mU_g$, we obtain
\eq{
X'_iK_m=K_mX_i,\enskip\forall m,\enskip\forall i,
}
where $\{X'_i\}$ and $\{X_i\}$ are generators of $U'$ and $U$, respectively.
Furthermore, since $\Lambda$ is CPTP, $\sum_mK^\dagger_mK_m=I$ is valid.
Therefore, we obtain 
\eq{
\Tr[\Lambda(\rho)X'_i]=\Tr[\sum_{m}K^\dagger_mK_m X_i\rho ]=\Tr[X_i\rho ]=\Tr[\rho X_i],\\
\Tr[\Lambda(\rho)X'_i X'_j]=\Tr[\rho X_i \sum_{m}K^\dagger_mK_m X_j]=\Tr[\rho X_i X_j],
}
which hold when $\Lambda$ is strong covariant.

(B): Due to (A), we only have to show that when either $\VNS(\rho)=0$ or $\VS(\rho)=0$ holds, $\rho$ is strong symmetric.
Let us define $\Delta X_i:=X_i-\Tr[X_i\rho]I$.
Note that either $\VNS(\rho)=0$ or $\VS(\rho)=0$ holds,
\eq{
\Tr[\rho(\Delta X_i)^2]=0.
}
Hence, we obtain $\Delta X_i \sqrt{\rho}=0$, and thus $\Delta X_i \rho=0$. Therefore, we have
\eq{
X_i\rho=\Tr[\rho X_i]\rho.
}
Therefore, for any element $A=\sum_i\lambda_i A_i$ in the Lie algebra $\mathfrak{g}$,
\eq{
U_{\exp[A]}\rho=e^{i\sum_i\lambda_i\Tr[\rho X_i]}\rho.
}
Note that since $G$ is connected, for any $g\in G$ can be written as  $\exp[A^{(1)}]...\exp[A^{(m)}]$. Therefore, for any $g\in G$, there exist a real number $\theta_g$ such that
\eq{
U_g\rho=e^{i\theta_g}\rho
}
which implies that $\rho$ is strong symmetric.

(C): By definition, the variances $\VNS(\rho)$ and $\VS(\rho)$ are additive for product states.
\end{proofof}

\section{Complete resource measure for the case of $U(1)$}
In this section, we identify the complete measure for i.i.d.\ conversion between pure states for the $U(1)$-symmetry.
We also show that the same measure determines the i.i.d.\ conversion between weak symmetric states.

\subsection{Formulation of i.i.d.\ conversion in $U(1)$-symmetry}

We consider quantum systems with finite-dimensional Hilbert spaces. On each system, which is a copy of system $S$ with Hilbert space $\calH$, we consider an identical unitary representation $U:=\{e^{-iHt}\}$ of $U(1)$, and assume that the smallest eigenvalue of $H$ is equal to zero.
Hereafter, we define the period of a state $\rho$ on $S$ as
\eq{
\tau(\rho):=\inf\{t>0:e^{-iHt}\rho e^{iHt}=\rho\}.
}
Furthermore, we formulate
\eq{
\taust(\rho):=\inf\{t>0:\exists\theta\in\mathbb{R},\enskip e^{-iHt}\rho=e^{i\theta}\rho\}.
}
To formulate the i.i.d.\ conversion, we consider $n$ copies of the system $S$ as $S^{(n)}$, whose Hilbert space is $\calH^{\otimes n}$, and consider the collective unitary representation $U^{(n)}:\{e^{-iH^{(n)}_{\mathrm{tot}}t}\}$ acting on $\calH^{\otimes n}$, where  
\eq{
H^{(n)}_{\mathrm{tot}}:=\sum^{n-1}_{j=0}I^{\otimes j}\otimes H\otimes I^{\otimes n-j-1}.
}
In other words, $U^{(n)}=U^{\otimes n}$. 

We define the optimal conversion rate $R(\rho\rightarrow\sigma)$ by allowing an additional reference system in which we can store a free state after the state conversion.
To be concrete, we define the optimal rate $R(\rho\rightarrow\sigma)$ as the supremum of the following achievable rate $r$: the rate $r$ is achievable when there are sequence of additional Hilbert spaces $\{\calH_{A_N}\}_{N\in\mathbb{N}}$, sequence of unitary representations $\{U_{A_N}:=\{e^{-iH_{A_N}t}\}_{t}\}_{n\in\mathbb{N}}$ of $U(1)$ acting on the Hilbert spaces, and the sequence of $(U^{(N)},U^{(\lfloor rN\rfloor)}\otimes U_{A_N})$-strong covariant operations $\{\Phi_N\}$ such as
\eq{
\Phi_N:\calB(\calH^{\otimes N}\otimes\calH_{A_N} )\rightarrow\calB(\calH^{\otimes\lfloor rN\rfloor}\otimes\calH_{A_N}),\enskip N\in\mathbb{N}\label{cond_1}
}
and 
\begin{equation}
  \lim_{N\to\infty}
  \Bigl\|
    \Phi_N\bigl(\rho^{\otimes N}\bigr)
    - \sigma^{\otimes \lfloor rN \rfloor}\otimes\eta_{A_N}
  \Bigr\|_1
  = 0,\label{cond_2}
\end{equation} 
where $\eta_{A_N}$ is a strong symmetric state in $\calS(\calH_{A_N})$.

The above definition allows us to consider, instead of a direct conversion from
$\rho^{\otimes N}$ to $\sigma^{\otimes \lfloor rN \rfloor}$, a conversion to the tensor product
$\sigma^{\otimes \lfloor rN \rfloor} \otimes \eta_{A_N}$ with some free state $\eta_{A_N}$.
In other standard resource theories, where appending free states and performing partial traces are regarded as free operations, this modification does not change the value of $R(\rho \rightarrow \sigma)$.
Therefore, our definition of the optimal conversion rate is consistent with the conventional one.

\subsection{i.i.d.\ state conversion between pure states}
When $\rho$ and $\sigma$ are pure, the optimal ratio $R(\rho\rightarrow\sigma)$ is determined by two quantities: the variance and the expectation value of $H$. 
\begin{theorem}\label{thm_SM:U1_pure}
Let $S$ be a finite dimensional system with Hilbert space $\calH$, and let $\{e^{-iHt}\}$
be a unitary representation of $U(1)$ acting on $\calH$, where the smallest eigenvalue of $H$ is equal to zero.
For any pure states $\psi$ and $\phi$ on $S$ satisfying $\tau(\psi)=\tau(\phi)$, $\psi\not\in\calF_{G,\mathrm{strong}}$ and $\phi\not\in\calF_{G,\mathrm{strong}}$,
the optimal conversion rate $R(\psi\rightarrow\phi)$ satisfies
\eq{
R(\psi\rightarrow\phi)
=
\frac{V_{H}(\psi)}{V_{H}(\phi)},
}
where $V_H(\rho):=\Tr[\rho H^2]-\Tr[\rho H]^2$.
\end{theorem}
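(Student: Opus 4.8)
The plan is to prove the two inequalities $R(\psi\to\phi)\le V_H(\psi)/V_H(\phi)$ (converse) and $R(\psi\to\phi)\ge V_H(\psi)/V_H(\phi)$ (achievability) separately, using as central structural input the fact that a $U(1)$ strong covariant operation has Kraus operators commuting with $H$ (Theorem~\ref{thm:Kraus_strong}), so that it \emph{exactly preserves the probability distribution of the conserved charge}, and in particular its variance (Theorem~\ref{thm:VM}). First I would record that, writing $|\psi\rangle=\sum_E\sqrt{q_E^\psi}\,|\psi_E\rangle$ for the decomposition into charge-$E$ eigenspaces, the total charge of $\psi^{\otimes N}$ is a sum of $N$ i.i.d.\ bounded random variables (bounded since $\dim\calH<\infty$), with mean $N\mu_\psi$ and variance $N V_H(\psi)$; the same holds for $\phi$.

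\textbf{Converse.} Let $r$ be achievable, realised by strong covariant $\Phi_N$ with $\|\Phi_N(\psi^{\otimes N})-\phi^{\otimes\lfloor rN\rfloor}\otimes\eta_{A_N}\|_1\to0$, where $\eta_{A_N}$ is strong symmetric, hence supported on a single eigenvalue $c_N$ of $H_{A_N}$. Since $\Phi_N$ preserves the charge distribution, the total-charge law $P_N$ of $\Phi_N(\psi^{\otimes N})$ equals that of $\psi^{\otimes N}$, with variance $N V_H(\psi)$. The target's charge law $Q_N$ is that of $\phi^{\otimes\lfloor rN\rfloor}$ rigidly shifted by the deterministic value $c_N$, so $\mathrm{Var}(Q_N)=\lfloor rN\rfloor V_H(\phi)$. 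As the charge law is a linear functional of the state, $\|P_N-Q_N\|_{\mathrm{TV}}\le\|\Phi_N(\psi^{\otimes N})-\phi^{\otimes\lfloor rN\rfloor}\otimes\eta_{A_N}\|_1\to0$. Both $P_N$ and $Q_N$ are, after centering and rescaling by $\sqrt N$, asymptotically Gaussian with uniformly controlled tails (Berry--Esseen for bounded i.i.d.\ summands; the point mass from $\eta_{A_N}$ does not alter the shape). Vanishing total-variation distance between two such near-Gaussian profiles forces their scales to coincide, i.e.\ $N V_H(\psi)/(\lfloor rN\rfloor V_H(\phi))\to1$, giving $R(\psi\to\phi)\le V_H(\psi)/V_H(\phi)$.

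\textbf{Achievability.} Here I would build an explicit charge-conserving $\Phi_N$. The essential point distinguishing the strong from the weak theory is that injecting coherence across charge sectors is \emph{not} free: appending a state is allowed only if it is strong symmetric, i.e.\ a charge eigenstate (cf.\ Theorem~\ref{thm:SD_strong_StoS}), so all output charge fluctuation must originate from $\psi^{\otimes N}$. Restricting to the typical charge window of width $O(\sqrt N)$, the charge-amplitude profile of $\psi^{\otimes N}$ is asymptotically Gaussian with variance $N V_H(\psi)$; choosing $r=V_H(\psi)/V_H(\phi)-\epsilon$ makes it match, to leading order, the profile of $\phi^{\otimes\lfloor rN\rfloor}$. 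One then defines a charge-conserving isometry reshaping the first profile into the second tensored with a charge eigenstate $\eta_{A_N}$ on the reference, the latter absorbing the deterministic mean mismatch $N\mu_\psi-\lfloor rN\rfloor\mu_\phi$; this is the strong-covariant analogue of the weak $U(1)$ distillation underlying the Fisher-information result. The hypothesis $\tau(\psi)=\tau(\phi)$ enters precisely here: it guarantees that the charge-support lattices of the two families share a common spacing, so the reshaping acts sector-by-sector with no incommensurate leftover weight, and it secures reversibility so that the achievable rate is a genuine ratio.

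\textbf{Main obstacle.} The delicate step is the passage from trace-norm (equivalently total-variation) closeness to equality of \emph{variances}, since the second moment of $H$ is not continuous in trace norm and both the Hilbert spaces and the operator $H^{(n)}_{\mathrm{tot}}$ grow with $N$. I expect to control this not via any general continuity statement but by exploiting the i.i.d.\ structure: boundedness of the per-copy spectrum yields uniform Berry--Esseen tail bounds, upgrading total-variation convergence of the two near-Gaussian charge laws to convergence of their rescaled variances. A secondary point is verifying that the reference output of the construction is \emph{exactly} strong symmetric rather than merely approximately so. Alternatively, once achievability holds in both directions, the converse also follows from the general composition bound $R(\psi\to\phi)\,R(\phi\to\psi)\le1$, which together with variance conservation pins the product of rates to unity and forces both to equal the stated ratio.
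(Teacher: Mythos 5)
Your proposal has the same overall architecture as the paper's proof: the converse rests on the fact that a strong covariant map preserves the charge distribution exactly (via Theorem~\ref{thm:Kraus_strong}), together with the bound of total-variation distance of charge laws by trace distance of states; the achievability is an explicit charge-conserving reshaping of the amplitude profile over a typical charge window, with a charge eigenstate $\eta_{A_N}$ on the reference absorbing the deterministic mean mismatch $N\mu_\psi-\lfloor rN\rfloor\mu_\phi$ --- this is precisely the paper's construction using unitaries commuting with $H^{(N)}_{\tot}$ that map $\psi^{\otimes N}$ to its charge-basis standard form, followed by a commuting shift isometry and the gentle measurement lemma. Where you genuinely differ is the probabilistic engine, and there is one real gap there. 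The paper does not use Berry--Esseen; it uses a \emph{translated Poisson approximation of the i.i.d.\ charge law in total variation} (a local limit theorem on the common lattice, imported from Marvian--Spekkens). For the \emph{converse}, your Kolmogorov-distance version is fine and arguably more elementary: TV dominates the Kolmogorov distance, and Gaussians of distinct variances are Kolmogorov-separated uniformly in their means, so vanishing trace distance forces $NV_H(\psi)/\lfloor rN\rfloor V_H(\phi)\to1$. For the \emph{achievability}, however, the figure of merit is the pure-state overlap, which equals the Bhattacharyya coefficient $\sum_m\sqrt{p_{\psi^{\otimes N}}(m)\,p_{\phi^{\otimes\lfloor rN\rfloor}\otimes\eta_{A_N}}(m)}$, and driving this to one requires TV (equivalently Hellinger) convergence of the two \emph{lattice} laws. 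Your claimed fix --- ``uniform Berry--Esseen tail bounds upgrading total-variation convergence'' --- does not work: Berry--Esseen controls only the CDF sup-distance, which says nothing about pointwise masses of lattice distributions, and TV convergence to a continuous Gaussian never holds. What is needed is a genuine local CLT or the translated-Poisson TV bound, and this is also exactly where the hypothesis on periods does its real work: $\tau(\psi)=\tau(\phi)$ gives a common lattice, and the fact that $\tau$ is the \emph{infimum} of periods gives the maximal-span condition without which the local limit theorem fails. You correctly flagged this as the delicate step, but your proposed tool does not close it.

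Two smaller points. First, the construction must check that the shifted labels $m-m''_N$ stay inside the spectrum of $H^{(\lfloor rN\rfloor)}_{\tot}$ --- the paper does this with a window parameter $\Delta$, using $V_H(\psi),V_H(\phi)>0$ to guarantee $0<E_H(\cdot)<m^{(1)}$ --- and must pad the reference with $\calH^{\otimes N-\lfloor rN\rfloor}$ when $\lfloor rN\rfloor<N$, which matters because, by Lemma~\ref{lemm:no_exist}, strong covariant maps into systems with fewer distinct charges need not exist; your sketch silently assumes both. Second, your fallback converse via $R(\psi\to\phi)R(\phi\to\psi)\le1$ is circular as stated: forbidding rate products above one itself requires a monotonicity-plus-continuity argument, and since variance is not trace-norm continuous in growing dimension, making that rigorous reduces to the same distribution-comparison converse you already gave.
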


\begin{proofof}{Theorem \ref{thm_SM:U1_pure}}
It suffices to show that $r$ is achievable iff the following relation holds:
\eq{
r=\frac{V_{H}(\psi)}{V_{H}(\phi)}.\label{equi_cond}
}

Let us show the converse part, i.e., if $r$ is achievable, the relation \eqref{equi_cond} must hold.
Let us assume that $r$ is achievable.
Then, there exist the sequence of additional Hilbert spaces $\{\calH_{A_N}\}_{N\in\mathbb{N}}$, the sequence of unitary representations $\{U_{A_N}:=\{e^{-iH_{A_N}t}\}_{t}\}_{n\in\mathbb{N}}$ of $U(1)$ acting on the Hilbert spaces, and the sequence of $(U^{(N)},U^{(\lfloor rN\rfloor)}\otimes U_{A_N})$-strong covariant operations $\{\Phi_N\}$ satisfying \eqref{cond_1} and \eqref{cond_2}.
Due to \eqref{eq_SM:preserve_probability}, the $(U^{(N)},U^{(\lfloor rN\rfloor)}\otimes U_{A_N})$-strong covariant CPTP map $\Phi_N$ satisfy the following relation for any $\rho$ in $\calS(\calH^{\otimes N})$ and $e\in\mathbb{R}$:
\eq{
p^{(N)}_{\rho}(e)=p'^{(\lfloor rN\rfloor)}_{\Phi_N(\rho)}(e).\label{eq:pp2}
}
Here we used the definitions $p^{(m)}_{\sigma}(f):=\Tr[\sigma P^{(m)}_f]$ and $p'^{(m)}_{\sigma}(f):=\Tr[\sigma P'^{(m)}_f]$, where $P^{(m)}_f$ and $P'^{(m)}_f$ are the projections to the eigenspace of $H^{(m)}_{\tot}$ and $H^{(m)}_{\tot}\otimes I_{A_m}+I_{\calH^{\otimes m}}\otimes H_{A_m}$ whose eigenvalue is $f$.
Because of \eqref{eq:pp2}, for the pure initial state $\psi$ in $\calS(\calH)$,
\eq{
p^{(N)}_{\psi^{\otimes N}}(e)=p'^{(\lfloor rN\rfloor)}_{\Phi_N(\psi^{\otimes N})}(e).\label{equal_pre_after}
}
Furthermore, since $\tau(\psi)=\tau(\phi)$ holds and since the smallest eigenvalue of $H$ is zero, any $E$ satisfying $p^{(N)}_{\psi^{\otimes N}}(E)\ne0$ and/or $p^{(\lfloor rN\rfloor)}_{\phi^{\otimes \lfloor rN\rfloor}}(E)\ne0$ can be written as
\eq{
E=m\frac{2\pi}{\tau(\psi)}. \enskip m\in \mathbb{Z},
}
Therefore, we can rewrite the distributions $p^{(N)}_{\psi^{\otimes N}}$, $p'^{(\lfloor rN\rfloor)}_{\Phi_N(\psi^{\otimes N})}$ as the distributions on $\mathbb{Z}$:
\eq{
p_{\psi^{\otimes N}}(m)&:=p^{(N)}_{\psi^{\otimes N}}(E_m),\label{def_p}\\
p_{\Phi_N(\psi^{\otimes N})}(m)&:=p'^{(\lfloor rN\rfloor)}_{\Phi_N(\psi^{\otimes N})}(E_m),
}
where $E_m:=m\frac{2\pi}{\tau(\psi)}$.
Then, due to \eqref{equal_pre_after},
\eq{
p_{\psi^{\otimes N}}(m)=p_{\Phi_N(\psi^{\otimes N})}(m)\label{equal_pre_after'}
}
Furthermore, since $\eta_{A_N}$ is strong symmetric, it is an eigenstate of $H_{A_N}$ that we denote $a_N$, and thus for the distribution $p'^{(\lceil rN\rceil)}_{\phi^{\otimes\lceil rN\rceil}\otimes\eta_{A_N}}(F)\ne0$ holds only when $F$ can be written as
\eq{
F=m'\frac{2\pi}{\tau(\psi)}+a_N,\enskip m'\in\mathbb{Z}.
}
Hence, if $a_N$ cannot be written as $a_N=m''\frac{2\pi}{\tau(\psi)}$ with some integer $m''$, $p'^{(\lfloor rN\rfloor)}_{\Phi_N(\psi^{\otimes N})}$ has no overlap with $p'^{(\lfloor rN\rfloor)}_{\phi^{\otimes\lfloor rN\rfloor}\otimes\eta_{A_N}}$.
Therefore, due to \eqref{cond_2}, there exists a natural number $N_0$ and for any $N\ge N_0$, $a_N$ can be written as 
\eq{
a_N=m''_N\frac{2\pi}{\tau(\psi)},
}
and thus we can rewrite $p'^{(\lfloor rN\rfloor)}_{\phi^{\otimes\lfloor rN\rfloor}\otimes\eta_{A_N}}$ as the distribution on $\mathbb{Z}$: 
\eq{
p_{\phi^{\otimes\lfloor rN\rfloor}\otimes \eta_{A_N}}(m):=p'^{(\lfloor rN\rfloor)}_{\phi^{\otimes\lfloor rN\rfloor}\otimes \eta_{A_N}}(E_{m}).\label{def_p2}
}

Now, let us define the total variation distance between two distributions $p$ and $q$ on $\mathbb{Z}$ as
\eq{
d_{\mathrm{TV}}(p,q):=\frac{1}{2}\sum_{m}|p(m)-q(m)|.
}
Then, the distribution $p_{\psi^{\otimes N}}$ is the $N$-fold convolution of the probability distribution $p_\psi$. Therefore, it is known that the following relation holds \cite{Marvian2022}:
\eq{
d_{\mathrm{TV}}\left(p_{\psi^{\otimes N}},TP\left(N\frac{\tau(\psi)}{2\pi}E_H(\psi),n\frac{\tau(\psi)}{2\pi}V_H(\psi)\right)\right)
&\le\frac{c}{\sqrt{Nb-1/2}}+\frac{2}{N\frac{\tau(\psi)}{2\pi}V_H(\psi)},\label{eq:conv}
}
where $E_H(\rho):=\Tr[\rho H]$, and $b$ and $c$ are finite real numbers independent of $N$ (and $b$ satisfies $b\le1/2$), and $TP(x,y)$ is the translated Poisson distribution that is defined as
\eq{
TP(x,y)(m)&:=
\left\{
\begin{array}{ll}
P_{y+\gamma}(m-s) & (m\ge s) \\
0 & (m<s)
\end{array}
\right.
,\\
P_{z}(l)&:=e^{-z}\frac{z^l}{l!},\\
\gamma&:=x-y-\lfloor x-y\rfloor,\\
s&:=\lfloor x-\sqrt{y}\rfloor
}
Furthermore, since the distribution 
$p'_{\phi^{\otimes \lfloor rN\rfloor}\otimes\eta_{A_N}}$ 
is obtained by shifting the distribution 
$p_{\phi^{\otimes \lfloor rN\rfloor}}$ 
by $m''_N$, the following relation holds:
\eq{
d_{\mathrm{TV}}\left(p'_{\phi^{\otimes \lfloor rN\rfloor}\otimes\eta_{A_N}},TP\left(\lfloor rN\rfloor\frac{\tau(\psi)}{2\pi}E_H(\phi)+m''_N,\lfloor rN\rfloor\frac{\tau(\psi)}{2\pi}V_H(\phi)\right)\right)
&\le\frac{c'}{\sqrt{\lfloor rN\rfloor b'-1/2}}+\frac{2}{\lfloor rN\rfloor\frac{\tau(\psi)}{2\pi}V_H(\phi)},\label{eq:conv2}
}
where $b'$ and $c'$ are finite real numbers independent of $N$ (and $b'$ satisfies $b'\le1/2$)

Note that the Poisson distribution on $\mathbb{Z}$ satisfies~\cite{Marvian2022}
\eq{
d_{TV}(P_z,P_{z'})\le\frac{|z-z'|}{\min\{z,z'\}},\label{Poi_1}
}
and satisfies the following relation for any integer $s$
\eq{
\frac{1}{2}\sum_{m}|P_z(m)-P_{z'}(m+s)|\le|s|P_{z}(\lfloor z\rfloor).\label{Poi_2}
}
which is obtained by the triangle inequality and the following relation~\cite{arratia2016}
\eq{
\frac{1}{2}\sum_{m}|P_z(m)-P_{z'}(m+1)|=P_z(\lfloor z\rfloor).
}
Therefore, combining Stirling's formula, the definition of the translated Poisson distribution,  \eqref{Poi_1} and \eqref{Poi_2}, the following relation holds: 
\eq{
\lim_{N\rightarrow \infty}d_{\mathrm{TV}}\left(TP\left(Nx,Ny\right),TP\left(\lfloor rN\rfloor x'_N,\lfloor rN\rfloor y'\right)\right)=0\enskip\Leftrightarrow \left(x=r\lim_{N\rightarrow\infty}x'_N\right)\land \left(y=ry'\right).\label{equi_tool}
}
Hence, if \eqref{equi_cond} does not holds, for any sequence $\{m''_N\}$, there exist a real number $\delta>0$ such that for any $N'\in \mathbb{N}$, there exists $N>N'$ satisfying
\eq{
d_{\mathrm{TV}}\left(TP\left(N\frac{\tau(\psi)}{2\pi}E_H(\psi),N\frac{\tau(\psi)}{2\pi}V_H(\psi)\right),TP\left(\lfloor rN\rfloor\frac{\tau(\psi)}{2\pi}E_H(\phi)+m''_N,\lfloor rN\rfloor\frac{\tau(\psi)}{2\pi}V_H(\phi)\right)\right)
>\delta.\label{eq:conv3}
}
Due to \eqref{eq:conv}, \eqref{eq:conv2}, \eqref{eq:conv3} and the existence of $N_0$, for any sequence $\{\eta_{A_N}\}$, there exists a real number $\delta>0$ such that for any $N'\in\mathbb{N}$, there exists $N>N'$ satisfying
\eq{
d_{\mathrm{TV}}\left(p_{\psi^{\otimes N}},p_{\phi^{\otimes \lfloor rN\rfloor}\otimes\eta_{A_N}}\right)
>\frac{\delta}{2}.
}
Therefore, due to \eqref{equal_pre_after'}, 
\eq{
\frac{1}{2}\|\Phi_N(\psi^{\otimes N})-\phi^{\otimes \lfloor rN\rfloor}\otimes\eta_{A_N}\|_1\ge d_{\mathrm{TV}}\left(p_{\Phi_N(\psi^{\otimes N})},p_{\phi^{\otimes \lfloor rN\rfloor}}\right)>\frac{\delta}{2}.
}
This contradicts \eqref{cond_2}.
Therefore, if $r$ is achievable, \eqref{equi_cond} must hold.

Next, we show that the direct part, i.e., if \eqref{equi_cond} holds, $r$ is achievable.
Let us assume that \eqref{equi_cond} holds.
Since $\tau(\psi)=\tau(\phi)$, again we can define the distribution $p_{\psi^{\otimes N}}$ on $\mathbb{Z}$ satisfying \eqref{def_p}.
To define $\calH_{A_N}$, $H_{A_N}$ and $\eta_{A_N}$, we firstly define
\eq{
m^{\max}_N:=\max\{m|p^{(N)}_{\psi^{\otimes N}}(E_m)\ne0\lor p^{(\lfloor rN\rfloor)}_{\phi^{\otimes \lfloor rN\rfloor}}(E_m)\ne0\}.
}
Then we define $\calH_{A'_N}$ as a $4m^{\max}_N+1$-dimensional Hilbert space and define $H_{A'_N}$ as
\eq{
H_{A'_N}:=\sum^{2m^{\max}_N}_{m=-2m^{\max}_N}E_m\ket{m}\bra{m}_{A'_N}.
}
Then we define $\eta_{A'_N}$ as a pure state 
\eq{
\ket{\eta_{A'_N}}&:=\ket{m''_N}_{A_N},\\
m''_N&:=\left\lfloor N\frac{\tau(\psi)}{2\pi}E_H(\psi)-\lfloor rN\rfloor \frac{\tau(\psi)}{2\pi}E_H(\phi)\right\rfloor
}
Using $\calH_{A'_N}$, $H_{A'_N}$ and $\eta_{A'_N}$, we define $\calH_{A_N}$, $H_{A_N}$ and $\eta_{A_N}$ as
\eq{
\calH_{A_N}&=\left\{
\begin{array}{ll}
\calH_{A'_N} & (\lfloor rN\rfloor\ge N) \\
\calH_{A'_N}\otimes\calH^{\otimes N-\lfloor rN\rfloor} & (\lfloor rN\rfloor <N)
\end{array}
\right.\\
H_{A_N}&=\left\{
\begin{array}{ll}
H_{A'_N} & (\lfloor rN\rfloor\ge N) \\
H_{A'_N}\otimes I_{\calH^{\otimes N-\lfloor rN\rfloor}}+I_{A'_N}\otimes H^{(N-\lfloor rN\rfloor)}_{\tot} & (\lfloor rN\rfloor <N)
\end{array}
\right.\\
\eta_{A_N}&=
\left\{
\begin{array}{ll}
\eta_{A'_N} & (\lfloor rN\rfloor\ge N) \\
\eta_{A'_N}\otimes\ket{0}\bra{0}^{\otimes N-\lfloor rN\rfloor} & (\lfloor rN\rfloor <N)
\end{array}
\right.
}
Then, we can rewrite $p'^{(\lfloor rN\rfloor)}_{\phi^{\otimes\lfloor rN\rfloor}\otimes\eta_{A_N}}$ as the distribution $p_{\phi^{\otimes\lfloor rN\rfloor}\otimes \eta_{A_N}}$ on $\mathbb{Z}$ satisfying \eqref{def_p2}.
Then, due to \eqref{eq:conv}, \eqref{eq:conv2} and \eqref{equi_tool},
\eq{
\lim_{N\rightarrow\infty}d_{TV}(p_{\psi^{\otimes N}},p_{\phi^{\lfloor rN\rfloor}\otimes\eta_{A_N}})=0.\label{eq:conv_lim}
}

To construct the sequence $\{\Phi_{N}\}$, note that for any system $A$, any Hermitian $X:=\sum_{x}\Pi_{x}$ and any pure state $\ket{\eta}$ on $A$, there exists a unitary $V$ such that \cite{Marvian2013}
\eq{
V\ket{\eta}&=\sum_{x}\sqrt{p(x,\eta)}\ket{x},\\
[V,X]&=0
}
where $\ket{x}$ is an eigenvector of $X$ whose eigenvalue is $x$ and $p(x,\eta):=\Tr[\eta \Pi_x]$.
Therefore, there exist unitary operations $U_{\psi,N}$ on $\calH^{\otimes N}$ and $U_{\phi,r,N}$ on $\calH^{\otimes \lfloor rN\rfloor}$ such that
\eq{
U_{\psi,N}\ket{\psi}^{\otimes n}&=\sum_{m}\sqrt{p_{\psi^{\otimes N}}(m)}\ket{m}_{S^{(N)}},\\
U_{\phi,r,N}\ket{\phi}^{\otimes \lfloor rN\rfloor}&=\sum_{m}\sqrt{p_{\phi^{\otimes \lfloor rN\rfloor}}(m)}\ket{m}_{S^{(\lfloor rN\rfloor)}},\\
[U_{\psi,N},H^{(n)}_{\tot}]&=0,\\
[U_{\phi,r,N},H^{(\lfloor rN\rfloor)}_{\tot}]&=0,
}
where $\ket{m}_{S^{(l)}}$ is an eigenvector of $H^{(l)}_{\tot}$ whose eigenvalue $E_m$ and where
\eq{
p_{\phi^{\otimes \lfloor rN\rfloor}}(m):=p^{(\lfloor rN\rfloor)}_{\phi^{\otimes \lfloor rN\rfloor}}(E_m).
}
Due to $\psi\not\in\calF_{G,\mathrm{strong}}$ and $\phi\not\in\calF_{G,\mathrm{strong}}$, the variances $V_H(\psi)$ and $V_H(\phi)$ are strictly larger than 0, and thus the following relations hold:
\eq{
0&<E_H(\psi)<m^{(1)},\\
0&<E_H(\phi)<m^{(1)},
}
where $m^{(l)}$ is the maximum value of $m$ such that $E_m$ is an eigenvalue of $H^{(l)}_{\tot}$.
Therefore, we can define a real positive number $\Delta$ as
\eq{
\Delta:=\frac{1}{4}\min\left\{\frac{\tau(\psi)}{2\pi}E_H(\psi),\enskip r\frac{\tau(\psi)}{2\pi}E_H(\phi),\enskip m^{(1)}-\frac{\tau(\psi)}{2\pi}E_H(\psi),\enskip rm^{(1)}-r\frac{\tau(\psi)}{2\pi}E_H(\phi)\right\}.
}
Then, for any $N>2\max\left\{1,\frac{1}{r},\frac{1}{r}\frac{2m^{(1)}}{m^{(1)}-\frac{\tau(\psi)}{2\pi}E_H(\phi)}\right\}$, the following relation holds:
\eq{
0&\le\min\left\{\left\lfloor \lfloor rN\rfloor \frac{\tau(\psi)}{2\pi}E_H(\phi)-2\Delta N\right\rfloor,\left\lfloor N \frac{\tau(\psi)}{2\pi}E_H(\psi)-2\Delta N\right\rfloor\right\}\\
\left\lfloor \lfloor rN\rfloor \frac{\tau(\psi)}{2\pi}E_H(\phi)+2\Delta N\right\rfloor&\le m^{(\lfloor rN\rfloor)}\\
\left\lfloor N \frac{\tau(\psi)}{2\pi}E_H(\psi)+2\Delta N\right\rfloor&\le m^{(N)}
}
We also define 
\eq{
M_N&:=\left\{m|\left\lfloor N \frac{\tau(\psi)}{2\pi}E_H(\psi)-\Delta N\right\rfloor\le m\le \left\lfloor N \frac{\tau(\psi)}{2\pi}E_H(\psi)+\Delta N\right\rfloor\right\},\\
\Pi^{(N)}_{M_N}&:=\sum_{m\in M_N}\Pi^{(N)}_{m},
}
where $\Pi^{(N)}_m$ is the projection to the eigenspace of $H^{(N)}_{\tot}$ whose eigenvalue is $E_m$.
Then, due to the large-deviation principle for the i.i.d. state, $\epsilon_N:=1-\sum_{m\in M_N}p_{\psi^{\otimes N}}(m)$ satisfies
\eq{
\lim_{N\rightarrow\infty}\epsilon_N\rightarrow0\label{eps_to_0}
}
We also define $\calH_{A''_N}$, $H_{A''_N}$ and $\ket{0}_{A''_N}$ as
\eq{
\calH_{A''_N}&=\left\{
\begin{array}{ll}
\calH_{A'_N}\otimes\calH^{\otimes \lfloor rN\rfloor-N} & (\lfloor rN\rfloor> N) \\
\calH_{A'_N} & (\lfloor rN\rfloor \le N)
\end{array}
\right.\\
H_{A''_N}&=\left\{
\begin{array}{ll}
H_{A'_N}\otimes I_{\calH^{\otimes \lfloor rN\rfloor-N}}+I_{A'_N}\otimes H^{(\lfloor rN\rfloor-N)}_{\tot} & (\lfloor rN\rfloor >N)\\
H_{A'_N} & (\lfloor rN\rfloor\le N) 
\end{array}
\right.\\
\ket{0}_{A''_N}&=
\left\{
\begin{array}{ll}
\ket{0}_{A'_N}\otimes\ket{0}^{\otimes \lfloor rN\rfloor-N}_{\calH} & (\lfloor rN\rfloor >N)\\
\ket{0}_{A'_N} & (\lfloor rN\rfloor\le N) 
\end{array}
\right.
}
Then, $\calH_{A''_N}$ and $H_{A''_N}$ satisfy
\eq{
\calH^{\otimes N}\otimes\calH_{A''_N}=\calH^{\otimes \lfloor rN\rfloor}\otimes \calH_{A_N},\\
H^{(N)}_{\tot}+H_{A''_N}=H^{(\lfloor rN\rfloor)}_{\tot}+H_{A_N}.
}

Because of the definitions of $m''_N$, $\Delta$ and $M_N$, for any $N>2\max\left\{1,\frac{1}{r},\frac{1}{r}\frac{2m^{(1)}}{m^{(1)}-\frac{\tau(\psi)}{2\pi}E_H(\phi)}\right\}$, 
\eq{
m\in M_N
\Rightarrow
0\le m-m''_N\le m^{(\lfloor rN\rfloor)}.
}
Therefore, for any $N>2\max\left\{1,\frac{1}{r},\frac{1}{r}\frac{2m^{(1)}}{m^{(1)}-\frac{\tau(\psi)}{2\pi}E_H(\phi)}\right\}$ and any $m\in M_N$, the state $\ket{m-m''_N}_{S^{(\lfloor rN\rfloor)}}$ is well-defined.
Therefore, we can take a unitary $W_{\psi,\phi,r,N}$  on $\calH^{\otimes N}\otimes \calH_{A''_N}$ satisfying
\eq{
W_{\psi,\phi,r,N}\ket{m}_{S^{(N)}}\otimes\ket{0}_{A''_N}&=
\left\{
\begin{array}{ll}
\ket{m-m''_N}_{S^{(\lfloor rN\rfloor)}}\otimes\ket{\eta_{A_N}}_{A_N} & (m\in M_N) \\
\ket{m}_{S^{(N)}}\otimes\ket{0}_{A''_N} & (m\not\in M_N)
\end{array}
\right.,\\
[W_{\psi,\phi,r,N},H^{(N)}_{\tot}+H_{A_N}]&=0.
}
Now, let us define a unitary on $\calH^{\otimes N}\otimes \calH_{A''_N}$ ($=\calH^{\otimes \lfloor rN\rfloor}\otimes \calH_{A_N}$) as
\eq{
V_{\psi,\phi,r,N}:=\left(U^\dagger_{\phi,r,N}\otimes I_{A_N}\right)W_{\psi,\phi,r,N}.
}
By definition, the unitary $V_{\psi,\phi,r,N}$ satisfies
\eq{
[V_{\psi,\phi,r,N},H^{(N)}_{\tot}+H_{A_N}]=0.
}
Therefore, since $e^{-iH_{A''_N}t}\ket{0}_{A''_N}=\ket{0}_{A''_N}$ for any $t\in\mathbb{R}$, the following CPTP map $\Phi_N$ is strong covariant:
\eq{
\Phi_N(\cdot)&:=\calV_{\psi,\phi,r,N}\circ\Lambda_N\circ\calU_{\psi,N}(\cdot),\\
\Lambda_N(\cdot)&:=(\cdot)\otimes\ket{0}\bra{0}_{A''_N}
}
The map $\Phi_N$ also satisfies
\eq{
\Phi_N(\Pi^{(N)}_{M_N}\psi^{\otimes N}\Pi^{(N)}_{M_N})&=\ket{\tilde{\Psi'}_N}\bra{\tilde{\Psi'}_N}\otimes \eta_{A_N},\\
\ket{\tilde{\Psi'}_N}&:=U^\dagger_{\phi,r,N}\sum_{m\in M_N}\sqrt{p_{\psi^{\otimes N}}(m)}\ket{m-m''_N}_{S^{(\lfloor rN\rfloor)}}
}
Due to the definition of $\epsilon_N$, we obtain
\eq{
\bra{\psi}^{\otimes N}\Pi^{(N)}_{M_N}\ket{\psi}^{\otimes N}=1-\epsilon_N.
}
Therefore, using the gentle measurement lemma \cite{Winter1999}, we obtain
\eq{
\|\psi^{\otimes N}-\Pi^{(N)}_{M_N}\psi^{\otimes N}\Pi^{(N)}_{M_N}\|_1\le2\sqrt{\epsilon_N}
}
Therefore, we obtain
\eq{
\left\|\Phi_N(\psi^{\otimes N})-\tilde{\Psi'}_N\otimes\eta_{A_N}\right\|_1\le2\sqrt{\epsilon_N}\label{part_A}
}
Note that
\eq{
|\bra{\phi^{\otimes\lfloor rN\rfloor}}\otimes\bra{\eta_{A_N}}_{A_N}\ket{\tilde{\Psi'}_N}\otimes\ket{\eta_{A_N}}_{A_N}|&=\sum_{m\in M_N}\sqrt{p_{\psi^{\otimes N}}(m)p_{\phi^{\otimes \lfloor rN\rfloor}\otimes\eta_{A_N}}(m)}\nonumber\\
&\ge\sum_{m}\sqrt{p_{\psi^{\otimes N}}(m)p_{\phi^{\otimes \lfloor rN\rfloor}\otimes\eta_{A_N}}(m)}-
\sum_{m\not\in M_N}\sqrt{p_{\psi^{\otimes N}}(m)p_{\phi^{\otimes \lfloor rN\rfloor}\otimes\eta_{A_N}}(m)}
\nonumber\\
&\ge\sum_{m}\sqrt{p_{\psi^{\otimes N}}(m)p_{\phi^{\otimes \lfloor rN\rfloor}\otimes\eta_{A_N}}(m)}-
\sqrt{\sum_{m\not\in M_N}p_{\psi^{\otimes N}(m)}}\sqrt{\sum_{m\not\in M_N}p_{\phi^{\otimes \lfloor rN\rfloor}\otimes\eta_{A_N}}(m)}\nonumber\\
&\ge\sum_{m}\sqrt{p_{\psi^{\otimes N}}(m)p_{\phi^{\otimes \lfloor rN\rfloor}\otimes\eta_{A_N}}(m)}-\sqrt{\epsilon_N}\nonumber\\
&\ge1-d_{TV}(p_{\psi^{\otimes N}},p_{\phi^{\otimes \lfloor rN\rfloor}\otimes\eta_{A_N}})-\sqrt{\epsilon_N}
}
Therefore, 
\eq{
|\bra{\phi^{\otimes\lfloor rN\rfloor}}\ket{\tilde{\Psi'}_N}\otimes\ket{\eta_{A_N}}_{A_N}|^2\ge 1-2(\delta_N+\sqrt{\epsilon_N}),
}
where $\delta_N:=d_{TV}(p_{\psi^{\otimes N}},p_{\phi^{\otimes \lfloor rN\rfloor}\otimes\eta_{A_N}})$.
Again, using the gentle measurement lemma, we obtain
\eq{
\|\tilde{\Psi'}_N\otimes\eta_{A_N}-\phi^{\otimes \lfloor rN\rfloor}\otimes\eta_{A_N}\|_1\le2\sqrt{2}\sqrt{\delta_N+\sqrt{\epsilon_N}}.\label{part_B}
}
Due to \eqref{part_A} and \eqref{part_B},
\eq{
\|\Phi_N(\psi^{\otimes N})-\phi^{\otimes \lfloor rN\rfloor}\otimes\eta_{A_N}\|_1\le2\sqrt{\epsilon_N}+2\sqrt{2}\sqrt{\delta_N+\sqrt{\epsilon_N}}.
}
Therefore, due to \eqref{eq:conv_lim} and \eqref{eps_to_0} holds, 
\eq{
\lim_{N\rightarrow\infty}\|\Phi_N(\psi^{\otimes N})-\phi^{\otimes \lfloor rN\rfloor}\otimes\eta_{A_N}\|_1=0,
}
which implies the rate $r$ is achievable.
\end{proofof}

\subsection{i.i.d.\ state conversion between weak symmetric states}
When $\rho$ and $\sigma$ are weak symmetric (not necessarily pure), the optimal ratio $R(\rho\rightarrow\sigma)$ is determined by the variance and the expectation value of $H$.
\begin{theorem}\label{thm_SM:U1_w_sym}
Let $S$ be a finite dimensional system with Hilbert space $\calH$, and let $\{e^{-iHt}\}$ be a unitary representation of $U(1)$ acting on $\calH$, where the smallest eigenvalue of $H$ is equal to zero.
For any weak symmetric states $\rho$ and $\sigma$ on $S$ satisfying $\taust(\rho)=\taust(\sigma)$, $\rho\not\in\calF_{G,\mathrm{strong}}$ and $\sigma\not\in\calF_{G,\mathrm{strong}}$, the optimal conversion rate $R(\rho\rightarrow\sigma)$ satisfies
\eq{
R(\rho\rightarrow\sigma)
=\frac{V_H(\rho)}{V_H(\sigma)}.
}
\end{theorem}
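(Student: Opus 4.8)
The plan is to exploit that a weak symmetric state under $U(1)$ commutes with $H$ and is therefore block-diagonal across the eigenspaces of $H$, which collapses the whole conversion problem to a \emph{classical} one on the charge distribution. First I would record that for weak symmetric $\rho$ one has $\rho=\sum_E P_E\rho P_E$, so that $p_\rho(E):=\Tr[P_E\rho]$ is an honest probability distribution on the spectrum of $H$ and, by a one-line computation, $E_H(\rho)=\sum_E E\,p_\rho(E)$ and $V_H(\rho)=\sum_E E^2 p_\rho(E)-\big(\sum_E E\,p_\rho(E)\big)^2$; that is, $V_H(\rho)$ is precisely the classical variance of $p_\rho$ (and likewise for $\sigma$). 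The hypothesis $\taust(\rho)=\taust(\sigma)$ says that $p_\rho$ and $p_\sigma$ are supported on lattices of a common spacing $d=2\pi/\taust(\rho)$, so that the convolutions $p_{\rho^{\otimes N}}$ and $p_{\sigma^{\otimes\lfloor rN\rfloor}}$ live on lattices that differ only by an integer offset. This is exactly the situation treated in the proof of Theorem~\ref{thm_SM:U1_pure}, so the translated-Poisson estimates \eqref{eq:conv}, \eqref{eq:conv2} and the matching criterion \eqref{equi_tool} apply verbatim with $(\tau,V_H(\psi),V_H(\phi))\to(\taust,V_H(\rho),V_H(\sigma))$.

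For the converse ($R(\rho\to\sigma)\le V_H(\rho)/V_H(\sigma)$) I would argue as in the pure case. Any strong covariant map preserves the charge distribution by \eqref{eq_SM:preserve_probability}, and since $\eta_{A_N}$ is strong symmetric it is an eigenstate of $H_{A_N}$ of a single charge $a_N$; hence the charge distribution of $\sigma^{\otimes\lfloor rN\rfloor}\otimes\eta_{A_N}$ is $p_{\sigma^{\otimes\lfloor rN\rfloor}}$ rigidly shifted by $a_N$. Achievability of $r$ via \eqref{cond_2} then forces the total-variation distance between $p_{\rho^{\otimes N}}$ and this shifted distribution to vanish, and feeding the translated-Poisson bounds into \eqref{equi_tool} pins $V_H(\rho)=r\,V_H(\sigma)$ (the mean mismatch being absorbed into $a_N$), i.e.\ the analogue of \eqref{equi_cond}.

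For the direct part the mixed case is in fact simpler than the pure one: because both $\rho^{\otimes N}$ and the target are block-diagonal in total charge, I would dispense with charge-conserving unitaries and instead use a measure-and-prepare channel. Writing $\varsigma_{E'}:=P^{(\lfloor rN\rfloor)}_{E'}\sigma^{\otimes\lfloor rN\rfloor}P^{(\lfloor rN\rfloor)}_{E'}/p_{\sigma^{\otimes\lfloor rN\rfloor}}(E')$ for the normalized charge-$E'$ block of the target, and $a_N$ for the lattice-aligning shift chosen as in Theorem~\ref{thm_SM:U1_pure}, I set
\eq{
\Phi_N(X):=\sum_E \Tr\!\big[P^{(N)}_E X\big]\,\big(\varsigma_{E-a_N}\otimes\eta_{A_N}\big).
}
This is manifestly CPTP (it is entanglement-breaking, and $\sum_E P^{(N)}_E=I$ makes it trace preserving), and it is $\big(U^{(N)},\,U^{(\lfloor rN\rfloor)}\otimes U_{A_N}\big)$-strong covariant: since $[U^{(N)}_g,P^{(N)}_E]=0$ one gets $\Tr[P^{(N)}_E U^{(N)}_g X]=e^{-iEt}\Tr[P^{(N)}_E X]$, while each output block $\varsigma_{E-a_N}\otimes\eta_{A_N}$ carries definite total charge $E$, so that $U^{(\lfloor rN\rfloor)}_g\otimes U_{A_N,g}$ acts on it by the matching phase $e^{-iEt}$, and \eqref{eq_SM:s-covariance_def} holds as an operator identity. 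Evaluating on $\rho^{\otimes N}$ and using mutual orthogonality of distinct charge blocks, the error collapses to the classical quantity
\eq{
\big\|\Phi_N(\rho^{\otimes N})-\sigma^{\otimes\lfloor rN\rfloor}\otimes\eta_{A_N}\big\|_1=\sum_{E'}\big|p_{\rho^{\otimes N}}(E'+a_N)-p_{\sigma^{\otimes\lfloor rN\rfloor}}(E')\big|,
}
twice the total-variation distance of the aligned charge distributions. Taking $r=V_H(\rho)/V_H(\sigma)$ and invoking \eqref{eq:conv}, \eqref{eq:conv2} and \eqref{equi_tool} shows this vanishes as $N\to\infty$, giving $R(\rho\to\sigma)\ge V_H(\rho)/V_H(\sigma)$ and hence equality.

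The hard part will not be the analysis---the translated-Poisson machinery and the bookkeeping of the shift $a_N$ are inherited directly from Theorem~\ref{thm_SM:U1_pure}---but rather verifying cleanly that the measure-and-prepare map is strong covariant as an \emph{operator} identity (the one-sided covariance \eqref{eq_SM:s-covariance_def}, not merely its restriction to states), which hinges on the blocks $\varsigma_{E-a_N}\otimes\eta_{A_N}$ sitting in exactly the total-charge sector selected by the measurement. I expect the genuinely new conceptual content to be the reduction itself: recognizing that weak symmetry renders the states classical over the charge and thereby reduces the theorem to the same convolution/local-CLT problem already solved in the pure-state case.
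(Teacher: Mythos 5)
Your overall route coincides with the paper's own proof. The converse is verbatim the paper's argument: invariance of the total-charge distribution under strong covariant maps via \eqref{eq_SM:preserve_probability}, the observation that a strong symmetric ancilla state is a single $H_{A_N}$-eigenstate so the target distribution is a rigid lattice shift of $p_{\sigma^{\otimes\lfloor rN\rfloor}}$, and the translated-Poisson matching criterion pinning $r=V_H(\rho)/V_H(\sigma)$. And for achievability your instinct that the mixed case is simpler is exactly what the authors do: they abandon the charge-conserving-unitary construction of the pure case and use a measure-and-prepare channel that measures total charge $E_m$ and prepares the normalized charge block $\sigma_{r,N,m-m''_N}\otimes\eta_{A_N}$, with covariance holding for precisely the reason you give (each prepared block carries definite total charge equal to the measured one, so \eqref{eq_SM:s-covariance_def} holds as an operator identity).

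The one genuine defect is your unrestricted sum over all outcomes $E$. For atypical outcomes the shifted label $E-a_N$ can fall outside the spectrum of $H^{(\lfloor rN\rfloor)}_{\tot}$ (e.g.\ $E-a_N<0$, or above the maximal total charge of the output system), so $\varsigma_{E-a_N}$ simply does not exist; and covariance forbids the obvious patch of preparing some default state, because whatever is output on outcome $E$ must carry total charge exactly $E$ on $\calH^{\otimes\lfloor rN\rfloor}\otimes\calH_{A_N}$. The paper repairs this by restricting the shift-and-prepare branch to a typical window $M_N$ (with $\Delta$ chosen so that $m\in M_N$ implies $0\le m-m''_N\le m^{(\lfloor rN\rfloor)}$ for large $N$) and, on the complement, re-preparing the original block $\rho_{N,m}$ tensored with a zero-charge ancilla state $\ket{0}\bra{0}_{A''_N}$ --- which is only possible because $\calH_{A_N}$, $H_{A_N}$ and the auxiliary $\calH_{A''_N}$, $H_{A''_N}$ are engineered so that $\calH^{\otimes N}\otimes\calH_{A''_N}=\calH^{\otimes\lfloor rN\rfloor}\otimes\calH_{A_N}$ with matching total Hamiltonians. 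The large-deviation bound $\epsilon_N\to0$ of \eqref{re:eps_to_0} then controls the extra error, so the final estimate is $\|\Phi_N(\rho^{\otimes N})-\sigma^{\otimes\lfloor rN\rfloor}\otimes\eta_{A_N}\|_1\le 2d_{TV}+2\epsilon_N$ rather than your exact classical total-variation identity. (A second, harmless, loose end: $\varsigma_{E'}$ is a $0/0$ when $p_{\sigma^{\otimes\lfloor rN\rfloor}}(E')=0$ with $E'$ still in the spectrum; any charge-$E'$ eigenstate works there without changing the bound.) So your construction is fixable, but as written the channel is ill-defined on a set of outcomes whose weight is nonzero at every finite $N$, and the ancilla-spectrum bookkeeping you defer to ``chosen as in Theorem~\ref{thm_SM:U1_pure}'' is precisely where the fix lives.
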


The proof is almost identical to that of Theorem~\ref{thm_SM:U1_pure}:
\begin{proofof}{Theorem \ref{thm_SM:U1_w_sym}}
It suffices to show that $r$ is achievable iff the following relation holds:
\eq{
r=\frac{V_{H}(\rho)}{V_{H}(\sigma)}.\label{re:equi_cond}
}

Let us show the converse part, i.e., if $r$ is achievable, the relation \eqref{re:equi_cond} must hold.
Let us assume that $r$ is achievable.
Then, there exist the sequence of additional Hilbert spaces $\{\calH_{A_N}\}_{N\in\mathbb{N}}$, the sequence of unitary representations $\{U_{A_N}:=\{e^{-iH_{A_N}t}\}_{t}\}_{n\in\mathbb{N}}$ of $U(1)$ acting on the Hilbert spaces, and the sequence of $(U^{(N)},U^{(\lfloor rN\rfloor)}\otimes U_{A_N})$-strong covariant operations $\{\Phi_N\}$ satisfying \eqref{cond_1} and \eqref{cond_2}.
Due to \eqref{eq_SM:preserve_probability}, the $(U^{(N)},U^{(\lfloor rN\rfloor)}\otimes U_{A_N})$-strong covariant CPTP map $\Phi_N$ satisfy the following relation for any $\xi$ in $\calS(\calH^{\otimes N})$ and $e\in\mathbb{R}$:
\eq{
p^{(N)}_{\xi}(e)=p'^{(\lfloor rN\rfloor)}_{\Phi_N(\xi)}(e).\label{re:eq:pp2}
}
Here we used the definitions $p^{(m)}_{\chi}(f):=\Tr[\chi P^{(m)}_f]$ and $p'^{(m)}_{\chi}(f):=\Tr[\chi P'^{(m)}_f]$, where $P^{(m)}_f$ and $P'^{(m)}_f$ are the projections to the eigenspace of $H^{(m)}_{\tot}$ and $H^{(m)}_{\tot}\otimes I_{A_m}+I_{\calH^{\otimes m}}\otimes H_{A_m}$ whose eigenvalue is $f$.
Because of \eqref{re:eq:pp2}, for the weak symmetric initial state $\rho$ in $\calS(\calH)$,
\eq{
p^{(N)}_{\rho^{\otimes N}}(e)=p'^{(\lfloor rN\rfloor)}_{\Phi_N(\rho^{\otimes N})}(e).\label{re:equal_pre_after}
}
Furthermore, since $\taust(\rho)=\taust(\sigma)$ holds and since the smallest eigenvalue of $H$ is zero, any $E$ satisfying $p^{(N)}_{\rho^{\otimes N}}(E)\ne0$ and/or $p^{(\lfloor rN\rfloor)}_{\sigma^{\otimes \lfloor rN\rfloor}}(E)\ne0$ can be written as
\eq{
E=m\frac{2\pi}{\taust(\rho)}. \enskip m\in \mathbb{Z},
}
Therefore, we can rewrite the distributions $p^{(N)}_{\rho^{\otimes N}}$, $p'^{(\lfloor rN\rfloor)}_{\Phi_N(\rho^{\otimes N})}$ as the distributions on $\mathbb{Z}$:
\eq{
p_{\rho^{\otimes N}}(m)&:=p^{(N)}_{\rho^{\otimes N}}(E_m),\label{re:def_p}\\
p_{\Phi_N(\rho^{\otimes N})}(m)&:=p'^{(\lfloor rN\rfloor)}_{\Phi_N(\rho^{\otimes N})}(E_m),
}
where $E_m:=m\frac{2\pi}{\taust(\rho)}$.
Then, due to \eqref{re:equal_pre_after},
\eq{
p_{\rho^{\otimes N}}(m)=p_{\Phi_N(\rho^{\otimes N})}(m)\label{re:equal_pre_after'}
}
Furthermore, since $\eta_{A_N}$ is strong symmetric, it is an eigenstate of $H_{A_N}$ that we denote $a_N$, and thus the distribution $p'^{(\lfloor rN\rfloor)}_{\sigma^{\otimes\lfloor rN\rfloor}\otimes\eta_{A_N}}(F)\ne0$ holds only when $F$ can be written as
\eq{
F=m'\frac{2\pi}{\taust(\rho)}+a_N,\enskip m'\in\mathbb{Z}.
}
Hence, if $a_N$ cannot be written as $a_N=m''\frac{2\pi}{\taust(\rho)}$ with some integer $m''$, $p'^{(\lfloor rN\rfloor)}_{\Phi_N(\rho^{\otimes N})}$ has no overlap with $p'^{(\lfloor rN\rfloor)}_{\sigma^{\otimes\lfloor rN\rfloor}\otimes\eta_{A_N}}$.
Therefore, due to \eqref{cond_2}, there exists a natural number $N_0$ and for any $N\ge N_0$, $a_N$ can be written as 
\eq{
a_N=m''_N\frac{2\pi}{\taust(\rho)},
}
and thus we can rewrite $p'^{(\lfloor rN\rfloor)}_{\sigma^{\otimes\lfloor rN\rfloor}\otimes\eta_{A_N}}$ as the distribution on $\mathbb{Z}$: 
\eq{
p_{\sigma^{\otimes\lfloor rN\rfloor}\otimes \eta_{A_N}}(m):=p'^{(\lfloor rN\rfloor)}_{\sigma^{\otimes\lfloor rn\rfloor}\otimes \eta_{A_N}}(E_{m}).\label{re:def_p2}
}

Now, let us define the total variation distance between two distributions $p$ and $q$ on $\mathbb{Z}$ as
\eq{
d_{\mathrm{TV}}(p,q):=\frac{1}{2}\sum_{m}|p(m)-q(m)|.
}
Then, in the same way as the derivation of \eqref{eq:conv} and \eqref{eq:conv2}, we obtain:
\eq{
d_{\mathrm{TV}}\left(p_{\rho^{\otimes N}},TP\left(N\frac{\taust(\rho)}{2\pi}E_H(\rho),n\frac{\taust(\rho)}{2\pi}V_H(\rho)\right)\right)
&\le\frac{c}{\sqrt{Nb-1/2}}+\frac{2}{N\frac{\taust(\rho)}{2\pi}V_H(\rho)}\label{re:eq:conv}\\
d_{\mathrm{TV}}\left(p_{\sigma^{\otimes \lfloor rN\rfloor}\otimes\eta_{A_N}},TP\left(\lfloor rN\rfloor\frac{\taust(\rho)}{2\pi}E_H(\sigma)+m''_N,\lfloor rN\rfloor\frac{\taust(\rho)}{2\pi}V_H(\sigma)\right)\right)
&\le\frac{c'}{\sqrt{\lfloor rN\rfloor b'-1/2}}+\frac{2}{\lfloor rn\rfloor\frac{\taust(\rho)}{2\pi}V_H(\sigma)}\label{re:eq:conv2}
}
where $E_H(\rho):=\Tr[\rho H]$ and where $b$, $b'$, $c$ and $c'$ are finite real numbers independent of $n$ (and $b$ and $b'$ satisfy $b\ge1/2$ and $b'\ge1/2$), and $TP(x,y)$ is the translated Poisson distribution. 

As shown in the proof of Theorem \ref{thm_SM:U1_pure}, the translated Poisson distributions satisfy:
\eq{
\lim_{N\rightarrow \infty}d_{\mathrm{TV}}\left(TP\left(Nx,Ny\right),TP\left(\lfloor rN\rfloor x'_N,\lfloor rN\rfloor y'\right)\right)=0\enskip\Leftrightarrow \left(x=r\lim_{N\rightarrow\infty}x'_N\right)\land \left(y=ry'\right).\label{re:equi_tool}
}
Hence, if \eqref{re:equi_cond} does not holds, for any sequence $\{m''_N\}$, there exist a real number $\delta>0$ such that for any $N'\in \mathbb{N}$, there exists $N>N'$  such that 
\eq{
d_{\mathrm{TV}}\left(TP\left(N\frac{\taust(\rho)}{2\pi}E_H(\rho),N\frac{\taust(\rho)}{2\pi}V_H(\rho)\right),TP\left(\lfloor rN\rfloor\frac{\taust(\rho)}{2\pi}E_H(\sigma)+m''_N,\lfloor rN\rfloor\frac{\taust(\rho)}{2\pi}V_H(\sigma)\right)\right)
>\delta.\label{re:eq:conv3}
}
Due to \eqref{re:eq:conv}, \eqref{re:eq:conv2}, \eqref{re:eq:conv3} and the existence of $N_0$, for any sequence $\{\eta_{A_N}\}$, there exists $\delta>0$ such that for any $N'\in\mathbb{N}$, there exists $N>N'$ satisfying
\eq{
d_{\mathrm{TV}}\left(p_{\rho^{\otimes N}},p_{\sigma^{\otimes \lfloor rN\rfloor}\otimes\eta_{A_N}}\right)
>\frac{\delta}{2}.
}
Therefore, due to \eqref{re:equal_pre_after'},
\eq{
\frac{1}{2}\|\Phi_N(\rho^{\otimes N})-\sigma^{\otimes \lfloor rN\rfloor}\otimes\eta_{A_N}\|_1\ge d_{\mathrm{TV}}\left(p_{\Phi_N(\rho^{\otimes N})},p_{\sigma^{\otimes \lfloor rN\rfloor}}\right)>\frac{\delta}{2}.
}
This contradicts to \eqref{cond_2}.
Therefore, if $r$ is achievable, \eqref{re:equi_cond} must hold.

Next, we show that the direct part, i.e., if \eqref{re:equi_cond} holds, $r$ is achievable.
Let us assume that \eqref{re:equi_cond} holds.
Since $\taust(\rho)=\taust(\sigma)$, again we can define the distribution $p_{\rho^{\otimes N}}$ on $\mathbb{Z}$ satisfying \eqref{re:def_p}.
To define $\calH_{A_N}$, $H_{A_N}$ and $\eta_{A_N}$, we firstly define
\eq{
m^{\max}_N:=\max\{m|p^{(N)}_{\rho^{\otimes N}}(E_m)\ne0\lor p^{(\lfloor rN\rfloor)}_{\sigma^{\otimes \lfloor rN\rfloor}}(E_m)\ne0\}.
}
Then we define $\calH_{A'_N}$ as a $4m^{\max}_N+1$-dimensional Hilbert space and define $H_{A'_N}$ as
\eq{
H_{A'_N}:=\sum^{2m^{\max}_N}_{m=-2m^{\max}_N}E_m\ket{m}\bra{m}_{A'_N}.
}
Then we define $\eta_{A'_N}$ as a pure state 
\eq{
\ket{\eta_{A'_N}}&:=\ket{m''_N}_{A_N},\\
m''_N&:=\left\lfloor N\frac{\taust(\rho)}{2\pi}E_H(\rho)-\lfloor rN\rfloor \frac{\taust(\rho)}{2\pi}E_H(\sigma)\right\rfloor
}
We define $\calH_{A_N}$, $H_{A_N}$ and $\eta_{A_N}$ as
\eq{
\calH_{A_N}&=\left\{
\begin{array}{ll}
\calH_{A'_N} & (\lfloor rN\rfloor\ge N) \\
\calH_{A'_N}\otimes\calH^{\otimes N-\lfloor rN\rfloor} & (\lfloor rN\rfloor <N)
\end{array}
\right.\\
H_{A_N}&=\left\{
\begin{array}{ll}
H_{A'_N} & (\lfloor rN\rfloor\ge N) \\
H_{A'_N}\otimes I_{\calH^{\otimes N-\lfloor rN\rfloor}}+I_{A'_N}\otimes H^{(N-\lfloor rN\rfloor)}_{\tot} & (\lfloor rN\rfloor <N)
\end{array}
\right.\\
\eta_{A_N}&=
\left\{
\begin{array}{ll}
\eta_{A'_N} & (\lfloor rN\rfloor\ge N) \\
\eta_{A'_N}\otimes\ket{0}\bra{0}^{\otimes N-\lfloor rN\rfloor} & (\lfloor rN\rfloor <N)
\end{array}
\right.
}
Then, we can rewrite $p'^{(\lfloor rN\rfloor)}_{\sigma^{\otimes\lfloor rN\rfloor}\otimes\eta_{A_N}}$ as the distribution $p_{\sigma^{\otimes\lfloor rN\rfloor}\otimes \eta_{A_N}}$ on $\mathbb{Z}$ satisfying \eqref{re:def_p2}.
Then, due to \eqref{re:eq:conv}, \eqref{re:eq:conv2} and \eqref{re:equi_tool},
\eq{
\lim_{N\rightarrow\infty}d_{TV}(p_{\rho^{\otimes N}},p_{\sigma^{\lfloor rN\rfloor}\otimes\eta_{A_N}})=0.\label{re:eq:conv_lim}
}

Due to $\rho\not\in\calF_{G,\mathrm{strong}}$ and $\sigma\not\in\calF_{G,\mathrm{strong}}$, the variances $V_H(\rho)$ and $V_H(\sigma)$ are strictly larger than 0, and thus the following relations hold:
\eq{
0&<E_H(\rho)<m^{(1)},\\
0&<E_H(\sigma)<m^{(1)},
}
where $m^{(l)}$ is the maximum value of $m$ such that $E_m$ is an eigenvalue of $H^{(l)}_{\tot}$.
Therefore, we can define a real positive number $\Delta$ as
\eq{
\Delta:=\frac{1}{4}\min\left\{\frac{\taust(\rho)}{2\pi}E_H(\rho),\enskip r\frac{\taust(\rho)}{2\pi}E_H(\sigma),\enskip m^{(1)}-\frac{\taust(\rho)}{2\pi}E_H(\rho),\enskip rm^{(1)}-r\frac{\taust(\rho)}{2\pi}E_H(\sigma)\right\}.
}
Then, for any $N>2\max\left\{1,\frac{1}{r},\frac{1}{r}\frac{2m^{(1)}}{m^{(1)}-\frac{\taust(\rho)}{2\pi}E_H(\sigma)}\right\}$, the following relation holds:
\eq{
0&\le\min\left\{\left\lfloor \lfloor rN\rfloor \frac{\taust(\rho)}{2\pi}E_H(\sigma)-2\Delta N\right\rfloor,\left\lfloor N \frac{\taust(\rho)}{2\pi}E_H(\rho)-2\Delta N\right\rfloor\right\}\\
\left\lfloor \lfloor rN\rfloor \frac{\taust(\rho)}{2\pi}E_H(\sigma)+2\Delta N\right\rfloor&\le m^{(\lfloor rN\rfloor)}\\
\left\lfloor N \frac{\taust(\rho)}{2\pi}E_H(\rho)+2\Delta N\right\rfloor&\le m^{(N)}
}
We also define 
\eq{
M_N&:=\left\{m|\left\lfloor N \frac{\taust(\rho)}{2\pi}E_H(\rho)-\Delta N\right\rfloor\le m\le \left\lfloor N \frac{\taust(\rho)}{2\pi}E_H(\rho)+\Delta N\right\rfloor\right\},\\
\Pi^{(N)}_{M_N}&:=\sum_{m\in M_N}\Pi^{(N)}_{m},
}
where $\Pi^{(N)}_m$ is the projection to the eigenspace of $H^{(N)}_{\tot}$ whose eigenvalue is $E_m$.
Then, due to the large-deviation principle for the i.i.d. state, $\epsilon_N:=1-\sum_{m\in M_N}p_{\rho^{\otimes N}}(m)$ satisfies
\eq{
\lim_{N\rightarrow\infty}\epsilon_N\rightarrow0\label{re:eps_to_0}
}
We also define $\calH_{A''_N}$, $H_{A''_N}$ and $\ket{0}_{A''_N}$ as
\eq{
\calH_{A''_N}&=\left\{
\begin{array}{ll}
\calH_{A'_N}\otimes\calH^{\otimes \lfloor rN\rfloor-N} & (\lfloor rN\rfloor> N) \\
\calH_{A'_N} & (\lfloor rN\rfloor \le N)
\end{array}
\right.\\
H_{A''_N}&=\left\{
\begin{array}{ll}
H_{A'_N}\otimes I_{\calH^{\otimes \lfloor rN\rfloor-N}}+I_{A'_N}\otimes H^{(\lfloor rN\rfloor-N)}_{\tot} & (\lfloor rN\rfloor >N)\\
H_{A'_N} & (\lfloor rN\rfloor\le N) 
\end{array}
\right.\\
\ket{0}_{A''_N}&=
\left\{
\begin{array}{ll}
\ket{0}_{A'_N}\otimes\ket{0}^{\otimes \lfloor rN\rfloor-N}_{\calH} & (\lfloor rN\rfloor >N)\\
\ket{0}_{A'_N} & (\lfloor rN\rfloor\le N) 
\end{array}
\right.
}
Then, $\calH_{A''_N}$ and $H_{A''_N}$ satisfy
\eq{
\calH^{\otimes N}\otimes\calH_{A''_N}=\calH^{\otimes \lfloor rN\rfloor}\otimes \calH_{A_N},\\
H^{(N)}_{\tot}+H_{A''_N}=H^{(\lfloor rN\rfloor)}_{\tot}+H_{A_N}.
}

To define $\Phi_N$, note that we can decompose $\rho^{\otimes N}$ and 
$\sigma^{\otimes \lfloor rN\rfloor}$ as 
\eq{
\rho^{\otimes N}&=\sum_{m}p_{\rho^{\otimes \lfloor rN\rfloor}}(m)\rho_{N,m}\\
\sigma^{\otimes \lfloor rN\rfloor}&=\sum_{m}p_{\sigma^{\otimes \lfloor rN\rfloor}}(m)\sigma_{r,N,m},
}
where $\rho_{N,m}$ $\sigma_{r,N,m}$ is eigenstates of $H^{(N)}_{\tot}$ and $H^{(\lfloor rN\rfloor)}_{\tot}$ whose eigenvalue is $E_m$, respectively.
Because of the definitions of $m''_N$, $\Delta$ and $M_N$, for any $N>2\max\left\{1,\frac{1}{r},\frac{1}{r}\frac{2m^{(1)}}{m^{(1)}-\frac{\taust(\rho)}{2\pi}E_H(\sigma)}\right\}$, 
\eq{
m\in M_N
\Rightarrow
0\le m-m''_N\le m^{(\lfloor rN\rfloor)}.
}
Therefore, for any $N>2\max\left\{1,\frac{1}{r},\frac{1}{r}\frac{2m^{(1)}}{m^{(1)}-\frac{\taust(\rho)}{2\pi}E_H(\sigma)}\right\}$ and any $m\in M_N$, the state $\sigma_{r,N,m-m''_N}$ is well-defined.
Using these symbols, we define $\Phi_N$ as
\eq{
\Phi_N(\cdot):=\sum_{m\in M_N}\Tr[\Pi^{(N)}_m\cdot]\sigma_{r,N,m-m''_N}\otimes\eta_{A_N}
+\sum_{m\not\in M_N}\Tr[\Pi^{(N)}_m\cdot]\rho_{N,m}\otimes\ket{0}\bra{0}_{A''_N}
}
By definition, it is a strong covariant CPTP.
Furthermore, it satisfies
\eq{
\Phi_N(\Pi^{(N)}_{M_N}\rho^{\otimes N}\Pi^{(N)}_{M_N})=\sum_{m\in M_N}p_{\rho^{\otimes N}}(m)\sigma_{r,N,m-m''_N}\otimes\eta_{A_N}.
}
Because of  $[\Pi^{(N)}_{M_N}\rho^{\otimes N}\Pi^{(N)}_{M_N},\rho^{\otimes N}]=0$ and the definition of $\epsilon_N$,
\eq{
\|\Pi^{(N)}_{M_N}\rho^{\otimes N}\Pi^{(N)}_{M_N}-\rho^{\otimes N}\|_1=\epsilon_N
}
and thus
\eq{
\|\Phi_N(\Pi^{(N)}_{M_N}\rho^{\otimes N}\Pi^{(N)}_{M_N})-\Phi_N(\rho^{\otimes N})\|_1\le\epsilon_N.
}
We also evaluate $\|\Phi_N(\Pi^{(N)}_{M_N}\rho^{\otimes N}\Pi^{(N)}_{M_N})-\sigma^{\otimes \lfloor rN\rfloor}\otimes\eta_{A_N}\|_1$ as
\eq{
\|\Phi_N(\Pi^{(N)}_{M_N}\rho^{\otimes N}\Pi^{(N)}_{M_N})-\sigma^{\otimes \lfloor rN\rfloor}\otimes\eta_{A_N}\|_1&=\|\sigma^{\otimes \lfloor rN\rfloor}\otimes\eta_{A_N}-\sum_{m\in M_N}p_{\rho^{\otimes N}}(m)\sigma_{r,N,m-m''_N}\otimes\eta_{A_N}\|_1\nonumber\\
&=\sum_{m\in M_N}|p_{\rho^{\otimes N}}(m)-p_{\sigma^{\otimes\lfloor rN\rfloor}\otimes\eta_{A_N}}(m)|+\sum_{m\not\in M_N}p_{\sigma^{\otimes\lfloor rN\rfloor}\otimes\eta_{A_N}}(m)\nonumber\\
&\le\sum_{m\in M_N}|p_{\rho^{\otimes N}}(m)-p_{\sigma^{\otimes\lfloor rN\rfloor}\otimes\eta_{A_N}}(m)|+\sum_{m\not\in M_N}|p_{\rho^{\otimes N}}(m)-p_{\sigma^{\otimes\lfloor rN\rfloor}\otimes\eta_{A_N}}(m)|
\nonumber\\
&+\sum_{m\not\in M_N}p_{\rho^{\otimes N}}(m)\nonumber\\
&=2d_{TV}(p_{\rho^{\otimes N}},p_{\sigma^{\otimes\lfloor rN\rfloor}\otimes\eta_{A_N}})
+\epsilon_N
}
Therefore, we obtain
\eq{
\lim_{N\rightarrow\infty}\|\Phi_N(\rho^{\otimes N})-\sigma^{\otimes \lfloor rN\rfloor}\otimes\eta_{A_N}\|_1&\le\lim_{N\rightarrow\infty}(2d_{TV}(p_{\rho^{\otimes N}},p_{\sigma^{\otimes\lfloor rN\rfloor}\otimes\eta_{A_N}})
+2\epsilon_N)\nonumber\\
&=0,
}
which implies that $r$ is achievable.
\end{proofof}

\end{document}